\DeclareRobustCommand{\greektext}{%
  \fontencoding{LGR}\selectfont\def\encodingdefault{LGR}}
\DeclareRobustCommand{\textgreek}[1]{\leavevmode{\greektext #1}}
\numberwithin{equation}{section}
\numberwithin{figure}{section}
\newcommand{\lyxaddress}[1]{
\par {\raggedright #1
\vspace{1.4em}
\noindent\par}
}
  \theoremstyle{remark}
  \newtheorem*{rem*}{\protect\remarkname}
 \newlist{casenv}{enumerate}{4}
 \setlist[casenv]{leftmargin=*,align=left,widest={iiii}}
 \setlist[casenv,1]{label={{\itshape\ \casename} \arabic*.},ref=\arabic*}
 \setlist[casenv,2]{label={{\itshape\ \casename} \roman*.},ref=\roman*}
 \setlist[casenv,3]{label={{\itshape\ \casename\ \alph*.}},ref=\alph*}
 \setlist[casenv,4]{label={{\itshape\ \casename} \arabic*.},ref=\arabic*}
 \theoremstyle{definition}
 \newtheorem*{defn*}{\protect\definitionname}
  \theoremstyle{definition}
  \newtheorem{defn}{\protect\definitionname}[section]
  \theoremstyle{plain}
  \newtheorem{prop}{\protect\propositionname}[section]
  \theoremstyle{plain}
  \newtheorem{lem}{\protect\lemmaname}[section]
\newenvironment{customthm}[1]
  {\innercustomthm}
  {\endinnercustomthm}
\newenvironment{customprop}[1]
  {\innercustomprop}
  {\endinnercustomprop}
\theoremstyle{definition}
\newenvironment{customdef}[1]
  {\innercustomdef}
  {\endinnercustomdef}
  \providecommand{\definitionname}{Definition}
  \providecommand{\lemmaname}{Lemma}
  \providecommand{\propositionname}{Proposition}
  \providecommand{\remarkname}{Remark}
 \providecommand{\casename}{Case}
\begin{document}
\title{A proof of the instability of AdS\\ for the Einstein--null dust system with an inner mirror}

\author{Georgios Moschidis}

\maketitle

\lyxaddress{Princeton University, Department of Mathematics, Fine Hall, Washington
Road, Princeton, NJ 08544, United States, \tt gm6@math.princeton.edu}
\begin{abstract}
In 2006, Dafermos and Holzegel \cite{DafHol,DafermosTalk} formulated
the so-called AdS instability conjecture, stating that there exist
\emph{arbitrarily small} perturbations to AdS initial data which,
under evolution by the Einstein vacuum equations for $\Lambda<0$
with reflecting boundary conditions on conformal infinity $\mathcal{I}$,
lead to the formation of black holes. The numerical study of this
conjecture in the simpler setting of the spherically symmetric Einstein--scalar
field system was initiated by Bizon and Rostworowski \cite{bizon2011weakly},
followed by a vast number of numerical and heuristic works by several
authors.

In this paper, we provide the first rigorous proof of the AdS instability
conjecture in the simplest possible setting, namely for the spherically
symmetric Einstein--massless Vlasov system, in the case when the Vlasov
field is moreover supported only on radial geodesics. This system
is equivalent to the Einstein--null dust system, allowing for both
ingoing and outgoing dust. In order to overcome the break down of
this system occuring once the null dust reaches the centre $r=0$,
we place an inner mirror at $r=r_{0}>0$ and study the evolution of
this system on the exterior domain $\{r\ge r_{0}\}$. The structure
of the maximal development and the Cauchy stability properties of
general initial data in this setting are studied in our companion
paper \cite{MoschidisMaximalDevelopment}. 

The statement of the main theorem is as follows: We construct a family
of mirror radii $r_{0\text{\textgreek{e}}}>0$ and initial data $\mathcal{S}_{\text{\textgreek{e}}}$,
$\text{\textgreek{e}}\in(0,1]$, converging, as $\text{\textgreek{e}}\rightarrow0$,
to the AdS initial data $\mathcal{S}_{0}$ in a suitable norm, such
that, for any $\text{\textgreek{e}}\in(0,1]$, the maximal development
$(\mathcal{M}_{\text{\textgreek{e}}},g_{\text{\textgreek{e}}})$ of
$\mathcal{S}_{\text{\textgreek{e}}}$ contains a black hole region.
Our proof is based on purely physical space arguments and involves
the arrangement of the null dust into a large number of beams which
are successively reflected off $\{r=r_{0\text{\textgreek{e}}}\}$
and $\mathcal{I}$, in a configuration that forces the energy of a
certain beam to increase after each successive pair of reflections.
As $\text{\textgreek{e}}\rightarrow0$, the number of reflections
before a black hole is formed necessarily goes to $+\infty$. We expect
that this instability mechanism can be applied to the case of more
general matter fields.
\end{abstract}
\tableofcontents{}

\section{\label{sec:Introduction}Introduction}

Anti-de Sitter spacetime $(\mathcal{M}_{AdS}^{n+1},g_{AdS})$, $n\ge3$,
is the simplest solution of the \emph{Einstein vacuum equations} 
\begin{equation}
Ric_{\text{\textgreek{m}\textgreek{n}}}-\frac{1}{2}Rg_{\text{\textgreek{m}\textgreek{n}}}+\Lambda g_{\text{\textgreek{m}\textgreek{n}}}=0\label{eq:VacuumEinsteinEquations}
\end{equation}
with a negative cosmological constant $\Lambda$. In the standard
polar coordinate chart on $\mathcal{M}_{AdS}$, the AdS metric takes
the form 
\begin{equation}
g_{AdS}=-\big(1-\frac{2}{n(n-1)}\Lambda r^{2}\big)dt^{2}+\big(1-\frac{2}{n(n-1)}\Lambda r^{2}\big)^{-1}dr^{2}+r^{2}g_{\mathbb{S}^{n-1}},\label{eq:AdSMetricPolarCoordinates}
\end{equation}
where $g_{\mathbb{S}^{n-1}}$ is the round metric on the $n-1$ dimensional
sphere (see \cite{HawkingEllis1973}).

Despite being geodesically complete, $(\mathcal{M}_{AdS},g_{AdS})$
fails to be globally hyperbolic. In particular, it can be conformally
identified with the interior of $(\mathbb{R}\times\mathbb{S}_{+}^{n},g_{E})$,
where $\mathbb{S}_{+}^{n}$ is the closed upper hemisphere of $\mathbb{S}^{n}$
and $g_{E}$ is the metric 
\begin{equation}
g_{E}=-d\bar{t}^{2}+g_{\mathbb{S}^{n}}.\label{eq:EinsteinMetric}
\end{equation}
Through this identification, the timelike boundary 
\begin{equation}
\mathcal{I}^{n}=\mathbb{R}\times\partial\mathbb{S}_{+}^{n}\simeq\mathbb{R}\times\mathbb{S}^{n-1}\label{eq:ConformalBoundaryAtInfinity}
\end{equation}
 of $(\mathbb{R}\times\mathbb{S}_{+}^{n},g_{E})$ is naturally attached
to $(\mathcal{M}_{AdS},g_{AdS})$ as a ``conformal boundary at infinity''
(see \cite{HawkingEllis1973}).

In 1998, Maldacena, Gubser--Klebanov--Polyakov and Witten \cite{Maldacena,gubser1998gauge,witten1998anti}
proposed the \emph{AdS/CFT conjecture}, suggesting a correspondence
between certain conformal field theories defined on $\mathcal{I}^{n}$
(in the strongly coupled regime) and supergravity on spacetimes asymptotically
of the form $(\mathcal{M}_{AdS}^{n+1}\times S^{k},g_{AdS}+g_{S^{k}})$,
where $(S^{k},g_{S^{k}})$ is a suitable compact Riemannian manifold
of dimension $k$. Following the introduction of this conjecture,
asymptotically AdS spacetimes (i.\,e.~spacetimes $(\mathcal{M},g)$
with an asymptotic region with geometry resembling that of $(\mathcal{M}_{AdS},g_{AdS})$
in the vicinity of $\mathcal{I}$) became a subject of intense study
in the high energy physics literature (see e.\,g.~\cite{AGMOO2000,Hartnoll2009,AmmonErdmenger}
and references therein).

The correct setting for the study of the dynamics of asymptotically
AdS solutions $(\mathcal{M},g)$ to (\ref{eq:VacuumEinsteinEquations})
is that of an \emph{initial value problem} with appropriate \emph{boundary
conditions} prescribed asymptotically on $\mathcal{I}$. The issue
of the right boundary conditions on $\mathcal{I}$ leading to well
posedness for the resulting initial-boundary value problem for (\ref{eq:VacuumEinsteinEquations})
was first addressed by Friedrich in \cite{Friedrich1995}. Well posedness
for more general boundary conditions and matter fields in the spherically
symmetric case was obtained in \cite{HolzSmul2012,HolzWarn2015} (see
also \cite{HolzegelLukSmuleviciWarnick,Friedrich2014}). In general,
most physically interesting boundary conditions on $\mathcal{I}$
leading to a well posed initial-boundary value problem can be classified
as either \emph{reflecting} (for which an appropriate ``energy flux''
for $g$ through $\mathcal{I}$ vanishes) or \emph{dissipative }(allowing
for a non-vanishing outgoing ``energy flux'' for $g$ through $\mathcal{I}$),
with substantially different global dynamics associated to each case;
see the discussion in \cite{HolzegelLukSmuleviciWarnick}. 

In 2006, Dafermos and Holzegel \cite{DafHol,DafermosTalk} suggested
the following conjecture:

\medskip{}

\noindent \textbf{AdS instability conjecture.} \emph{There exist
arbitrarily small perturbations to the initial data of $(\mathcal{M}_{AdS},g_{AdS})$
for the vacuum Einstein equations (\ref{eq:VacuumEinsteinEquations})
with a reflecting boundary condition on $\mathcal{I}$ which lead
to the development of trapped surfaces and, thus, black hole regions.
In particular, $(\mathcal{M}_{AdS},g_{AdS})$ is non-linearly unstable. }

\medskip{}

\noindent This conjecture was motivated in \cite{DafHol} by the study
of asymptotically AdS solutions to (\ref{eq:VacuumEinsteinEquations})
with biaxial Bianchi IX symmetry in $4+1$ dimensions, a symmetry
class in which the vacuum Einstein equations \emph{(\ref{eq:VacuumEinsteinEquations})}
reduce to a $1+1$ hyperbolic system with non-trivial dynamics. This
model was introduced in \cite{BizonChmajSchmidt2005}. In this setting,
it was observed in \cite{DafHol} that perturbations of the initial
data of $(\mathcal{M}_{AdS},g_{AdS})$ (which, if not trivial, necessarily
have strictly positive ADM mass $M_{ADM}$, in view of \cite{GibbonsEtAl1983})
can not settle down to a horizonless static spacetime, since $M_{ADM}$
is conserved along $\mathcal{I}$ under reflecting boundary conditions
and no static asymptotically AdS solution of (\ref{eq:VacuumEinsteinEquations})
with $M_{ADM}>0$ exists (according to \cite{BoucherGibbonsHorowitz}).
This picture was supported by results of Anderson \cite{AndersonAdS}.

The following remarks should be made regarding the statement of the
AdS instability conjecture: 

\begin{itemize}

\item{ The perturbations referred to in the conjecture are assumed
to be small with respect to a norm for which (\ref{eq:VacuumEinsteinEquations})
is well-posed and $(\mathcal{M}_{AdS},g_{AdS})$ is \emph{Cauchy stable}
as a solution to (\ref{eq:VacuumEinsteinEquations}) (otherwise, the
conjecture is trivial).%
\footnote{\noindent Here, Cauchy stability of $(\mathcal{M}_{AdS},g_{AdS})$
refers to Cauchy stability of the conformal compactification of $(\mathcal{M}_{AdS},g_{AdS})$
(including, therefore, the timelike boundary $\mathcal{I}$); see
the discussion in the next section.%
} For such perturbations, Cauchy stability implies that the ``time''
elapsed before the formation of a trapped surface tends to $+\infty$
as the size of the initial perturbation shrinks to $0$.}

\item{ The AdS instability conjecture stands in contrast to the non-linear
stability of Minkowski space $(\mathbb{R}^{3+1},\text{\textgreek{h}})$,
in the case $\Lambda=0$ (see Christodoulou--Klainerman \cite{Christodoulou1993}),
or de~Sitter space $(\mathcal{M}_{dS},g_{dS})$, in the case $\Lambda>0$
(see Friedrich \cite{Friedrich1986}). The proof of the non-linear
stability of $(\mathbb{R}^{3+1},\text{\textgreek{h}})$ and $(\mathcal{M}_{dS},g_{dS})$
is based on a stability mechanism related to the fact that linear
fields on those spacetimes satisfy sufficiently strong decay rates.
The decay rates are, however, borderline in the case $\Lambda=0$,
and thus the stability of $(\mathbb{R}^{3+1},\text{\textgreek{h}})$
is a deep fact depending on the precise non-linear structure of the
system (\ref{eq:VacuumEinsteinEquations}), whereas, in the case $\Lambda>0$,
the decay is exponential and stability can be inferred relatively
easily. In contrast, on $(\mathcal{M}_{AdS},g_{AdS})$, it can be
shown that linear fields satisfying a reflecting boundary condition
on $\mathcal{I}$ remain bounded, but do \emph{not} decay in time.
It is precisely the lack of a sufficently fast decay rate at the linear
level which is associated to the possibility of non-linear instability.}

\item{ The prescription of a reflecting boundary condition on $\mathcal{I}$
is essential for the conjecture: For maximally dissipative boundary
conditions, it is expected that $(\mathcal{M}_{AdS},g_{AdS})$ is
non-linearly \emph{stable}, in view of the quantitative decay rates
obtained for the linearised vacuum Einstein equations (and other linear
fields) around $(\mathcal{M}_{AdS},g_{AdS})$ by Holzegel--Luk--Smulevici--Warnick
in \cite{HolzegelLukSmuleviciWarnick}. }

\item{ In the biaxial Bianchi IX symmetry class, all perturbations
of $(\mathcal{M}_{AdS},g_{AdS})$ leading to the formation of a trapped
surface can be shown to possess a complete conformal infinity $\mathcal{I}$
and are expected to settle down to a member of the Schwarzschild--AdS
family (see \cite{DafHol,DafHolStability2006}). However, in the absence
of any symmetry, the picture regarding the end state of the evolution
of general vacuum perturbations of $(\mathcal{M}_{AdS},g_{AdS})$
is complicated; see the discussion in the next section.}

\end{itemize}

Starting from the pioneering work \cite{bizon2011weakly} of Bizon
and Rostworoski in 2011, a plethora of numerical and heuristic results
have been obtained in the direction of establishing the AdS instability
conjecture, mainly in the context of the spherically symmetric Einstein--scalar
field system. See the discussion in Section \ref{sub:Numerics}.

In this paper, we will prove the AdS instability conjecture in the
simplest possible setting, namely for the Einstein--massless Vlasov
system in spherical symmetry, further reduced to the case when the
Vlasov field $f$ is supported only on radial geodesics. We will call
this system the \emph{spherically symmetric} \emph{Einstein--radial
massless Vlasov system}. In fact, this is a singular reduction; the
resulting system is equivalent to the \emph{spherically symmetric
Einstein--null dust system}, allowing for both ingoing and outgoing
dust. This system has been studied in the $\Lambda=0$ case by Poisson
and Israel \cite{PoissonIsrael1990}.

A serious problem with the spherically symmetric Einstein--null dust
system is that it suffers from a severe break down when the null dust
reaches the centre $r=0$. In particular, in any reasonable initial
data topology, the spherically symmetric Einstein--null dust system
is \underline{not} well posed and $(\mathcal{M}_{AdS},g_{AdS})$
is \underline{not} a Cauchy stable solution of it. One way to restore
the well posedness of this system (a necessary step for the study
of the AdS instability conjecture in this setting) is to place an\emph{
inner mirror} at some radius sphere $\{r=r_{0}\}$ with $r_{0}>0$
and study the evolution of the system in the exterior region $\{r\ge r_{0}\}$.
However, fixing the mirror radius $r_{0}$ results in a trivial global
stability statement for $(\mathcal{M}_{AdS},g_{AdS})$, as initial
data perturbations with total ADM mass $\tilde{m}_{ADM}<\frac{1}{2}r_{0}$
cannot form a black hole. Thus, it is necessary to allow the radius
$r_{0}$ to shrink to $0$ as the total ADM mass of the initial data
shrinks to $0$, in order to address the AdS instability conjecture
in this setting. See the discussion in Section \ref{sub:NeedOfAMirror}. 

A non-technical statement of our result is the following: 

\begin{customthm}{1}[rough version]\label{Theorem_Intro_Rough} The
AdS spacetime $(\mathcal{M}_{AdS}^{3+1},g_{AdS})$ is non-linearly
unstable under evolution by the spherically symmetric Einstein--radial
massless Vlasov system with a reflecting boundary condition on $\mathcal{I}$
and an inner mirror, in the following sense: 

There exists a one parameter family of spherically symmetric initial
data $\mathcal{S}_{\text{\textgreek{e}}}$, $\text{\textgreek{e}}\in(0,1]$
and a family of inner mirror radii $r=r_{0\text{\textgreek{e}}}$
(with $r_{0\text{\textgreek{e}}}\xrightarrow{\text{\textgreek{e}}\rightarrow0}0$)
satisfying the following properties:

\begin{enumerate}

\item As $\text{\textgreek{e}}\rightarrow0$, $||\mathcal{S}_{\text{\textgreek{e}}}||_{\mathcal{CS}}\rightarrow0$,
i.\,e.~$\mathcal{S}_{\text{\textgreek{e}}}$ converge to the initial
data $\mathcal{S}_{0}$ of $(\mathcal{M}_{AdS},g_{AdS})$. 

\item For any $\text{\textgreek{e}}>0$, the maximal future development
$(\mathcal{M}_{\text{\textgreek{e}}},g_{\text{\textgreek{e}}})$ of
$\mathcal{S}_{\text{\textgreek{e}}}$ contains a trapped surface and,
thus, a black hole region. Moreover, $(\mathcal{M}_{\text{\textgreek{e}}},g_{\text{\textgreek{e}}})$
possesses a complete conformal infinity $\mathcal{I}$.

\end{enumerate}

The norm $||\cdot||_{\mathcal{CS}}$ in 1 measures the concentation
of the energy of $\mathcal{S}_{\text{\textgreek{e}}}$ in annuli of
width $\sim r_{0\text{\textgreek{e}}}$ and has the property that
the radial Einstein--massless Vlasov system is well-posed and $(\mathcal{M}_{AdS},g_{AdS})$
is Cauchy stable with respect to $||\cdot||_{\mathcal{CS}}$ \underline{independently of the precise value of $r_{0\text{\textgreek{e}}}$}.

\end{customthm}

For progressively more detailed statements of Theorem \ref{Theorem_Intro_Rough},
see Sections \ref{sub:The-main-result:Intro} and \ref{sec:The-main-result:Details}.
For further discussion on the need of an inner mirror at $r\sim r_{0\text{\textgreek{e}}}$
and its relation to natural dispersive mechanisms appearing in other
matter models, see Section \ref{sub:NeedOfAMirror}.

We should also remark the following:
\begin{itemize}
\item Except for the condition $r_{0\text{\textgreek{e}}}<2(M_{ADM})_{\text{\textgreek{e}}}$
referred to earlier, where $(M_{ADM})_{\text{\textgreek{e}}}$ is
the ADM mass of $\mathcal{S}_{\text{\textgreek{e}}}$, there is considerable
flexibility in the choice of the mirror radii $r_{0\text{\textgreek{e}}}$
in the statement of Theorem \ref{Theorem_Intro_Rough} and this can
be exploited to one's advantage. For simplicity, we choose $r_{0\text{\textgreek{e}}}$
to satisfy $r_{0\text{\textgreek{e}}}\sim(M_{ADM})_{\text{\textgreek{e}}}$
(see also the discussion in Section \ref{sub:The-main-result:Intro}).
\item While we do not address the issue of the end state of the evolution
of $\mathcal{S}_{\text{\textgreek{e}}}$, it can be easily inferred
from our proof of Theorem \ref{Theorem_Intro_Rough} that the spacetimes
$(\mathcal{M}_{\text{\textgreek{e}}},g_{\text{\textgreek{e}}})$ settle
down to a member of the Schwarzschild--AdS family (see also \cite{MoschidisMaximalDevelopment}).
\end{itemize}
\begin{figure}[h] 
\centering 
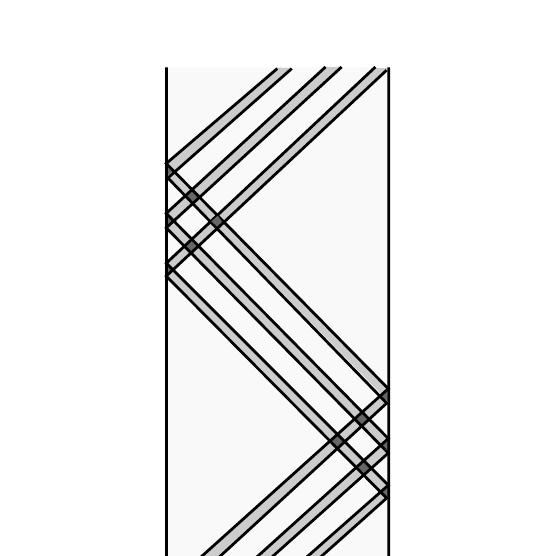 
\caption{The  family of initial data $\mathcal{S}_{\epsilon}$ that we construct for the proof of Theorem \ref{Theorem_Intro_Rough}  give rise to a large number of Vlasov beams, that are successively reflected off $\mathcal{I}$  and the inner mirror at $r=r_{0\epsilon}$. The Cauchy stability statement for $||\cdot ||_{CS}$ implies that the number of reflections necessarily goes to $+\infty $ as $\epsilon \rightarrow 0$ (cf.~the remark after the statement of the AdS instability conjecture). }
\end{figure}

The trivial instability at $r=0$ occuring for the spherically symmetric
Einstein--null dust system is absent in the case of smooth solutions
to the general spherically symmetric Einstein--massless Vlasov system
(not reduced to the radial case). In particular, the smooth initial
value problem for the spherically symmetric Einstein--massless Vlasov
system is well-posed, and placing an inner mirror at $r=r_{0}>0$
is not necessary.%
\footnote{In fact, well-posedness for the smooth initial value problem for the
Einstein--Vlasov system also holds outside spherical symmetry, see
\cite{ChoquetBruhat1971}. In the case $\Lambda=0$, the stability
of Minkowski spacetime for the Einstein--massless Vlasov system without
any symmetry assumptions was recently established by Taylor \cite{Taylor2017}.%
} For a proof of the AdS instability in this setting, see our forthcoming
\cite{MoschidisVlasov}.

\subsection{\label{sub:Numerics}Earlier numerical and heuristic works}

Restricted under spherical symmetry, all solutions to the Einstein
vacuum equations (\ref{eq:VacuumEinsteinEquations}) are locally isometric
to a member of the Schwarzschild--AdS family (see \cite{Eiesland1925}).
\textgreek{T}hus, any attempt to search for unstable vacuum perturbations
of $(\mathcal{M}_{AdS},g_{AdS})$ for (\ref{eq:VacuumEinsteinEquations})
in $3+1$ dimensions can not be reduced to a problem for a $1+1$
hyperbolic system (where the wide variety of available tools would
make the problem more tractable).%
\footnote{This problem is circumvented in 4+1 dimensions by the biaxial Bianchi
IX symmetry class referred to earlier (see \cite{BizonChmajSchmidt2005}).%
} For this reason, instead of (\ref{eq:VacuumEinsteinEquations}),
numerical and heuristic works on the AdS instability have so far mainly
focused on the \emph{Einstein--scalar field system }
\begin{equation}
\begin{cases}
Ric_{\text{\textgreek{m}\textgreek{n}}}-\frac{1}{2}Rg_{\text{\textgreek{m}\textgreek{n}}}+\Lambda g_{\text{\textgreek{m}\textgreek{n}}}=8\pi T_{\text{\textgreek{m}\textgreek{n}}}[\text{\textgreek{f}}],\\
\square_{g}\text{\textgreek{f}}=0,\\
T_{\text{\textgreek{m}\textgreek{n}}}[\text{\textgreek{f}}]\doteq\partial_{\text{\textgreek{m}}}\text{\textgreek{f}}\partial_{\text{\textgreek{n}}}\text{\textgreek{f}}-\frac{1}{2}g_{\text{\textgreek{m}\textgreek{n}}}\partial^{\text{\textgreek{a}}}\text{\textgreek{f}}\partial_{\text{\textgreek{a}}}\text{\textgreek{f}}.
\end{cases}\label{eq:EinsteinScalarField}
\end{equation}
The system (\ref{eq:EinsteinScalarField}), whose mathematical study
in the case $\Lambda=0$ was pioneered by Christodoulou \cite{Christodoulou1999},
admits non-trivial dynamics in spherical symmetry and spherically
symmetric solutions to (\ref{eq:EinsteinScalarField}) share many
qualitative properties with general solutions of (\ref{eq:VacuumEinsteinEquations}).
Reduced under spherical symmetry in a double null gauge $(u,v)$ in
$3+1$ dimensions, i.\,e.~a gauge where 
\begin{equation}
g=-\text{\textgreek{W}}^{2}dudv+r^{2}g_{\mathbb{S}^{2}},
\end{equation}
the system (\ref{eq:EinsteinScalarField}) takes the form 
\begin{equation}
\begin{cases}
\partial_{u}\partial_{v}(r^{2}) & =-\frac{1}{2}(1-\Lambda r^{2})\text{\textgreek{W}}^{2},\\
\partial_{u}\partial_{v}\log(\text{\textgreek{W}}^{2}) & =\frac{\text{\textgreek{W}}^{2}}{2r^{2}}\big(1+4\text{\textgreek{W}}^{-2}\partial_{u}r\partial_{v}r\big)-8\pi\partial_{u}\text{\textgreek{f}}\partial_{v}\text{\textgreek{f}},\\
\partial_{v}(\text{\textgreek{W}}^{-2}\partial_{v}r) & =-4\pi r\text{\textgreek{W}}^{-2}(\partial_{v}\text{\textgreek{f}})^{2},\\
\partial_{u}(\text{\textgreek{W}}^{-2}\partial_{u}r) & =-4\pi r\text{\textgreek{W}}^{-2}(\partial_{u}\text{\textgreek{f}})^{2},\\
\partial_{u}\partial_{v}(r\text{\textgreek{f}}) & =-\frac{\text{\textgreek{W}}^{2}-4\partial_{u}r\partial_{v}r}{4r^{2}}\cdot r\text{\textgreek{f}}.
\end{cases}\label{eq:EinsteinScalarFieldInDoubleNull}
\end{equation}
The well-posedness of the asymptotically AdS initial-boundary value
problem for the system (\ref{eq:EinsteinScalarFieldInDoubleNull})
with reflecting boundary conditions on $\mathcal{I}$ was established
by Holzegel and Smulevici in \cite{HolzSmul2012}.

Numerical results in the direction of establishing the AdS instability
conjecture were first obtained in 2011 by Bizon and Rostworowski in
\cite{bizon2011weakly}, who studied the evolution of spherically
symmetric perturbations of $(\mathcal{M}_{AdS},g_{AdS})$ for (\ref{eq:EinsteinScalarField})
in Schwarzschild-type coordinates. More precisely, \cite{bizon2011weakly}
numerically simulated the evolution of initial data for (\ref{eq:EinsteinScalarField})
with $\text{\textgreek{f}}$ initially arranged into small amplitude
wave packets. It was found that, for certain families of initial arrangements
of this form (of ``size'' $\text{\textgreek{e}}$), after a finite
number of reflections on $\mathcal{I}$ (proportional to $\text{\textgreek{e}}^{-2}$),
the energy of the wave packets becomes substantially concentrated,
leading to a break down of the coordinate system associated with the
threshold of trapped surface formation. 

Following \cite{bizon2011weakly}, a vast amount of numerical and
heuristic works have been dedicated to the understanding of the global
dynamics of perturbations of $(\mathcal{M}_{AdS},g_{AdS})$ for the
system (\ref{eq:EinsteinScalarField}) (see, e.\,g.,~\cite{DiasHorSantos,BuchelEtAl2012,DiasEtAl,MaliborskiEtAl,BalasubramanianEtAl,CrapsEtAl2014,CrapsEtAl2015,BizonMalib,DimitrakopoulosEtAl,GreenMailardLehnerLieb,HorowitzSantos,DimitrakopoulosEtAl2015,DimitrakopoulosEtAl2016}).
In these works, the picture that arises regarding the long time dynamics
of \emph{generic} spherically symmetric perturbations is rather complicated:
Apart from perturbations that lead to instability and trapped surface
formation (\cite{DiasHorSantos,BuchelEtAl2012}), it appears that
there exist certain types of perturbations (dubbed ``islands of stability'')
which remain close to $(\mathcal{M}_{AdS},g_{AdS})$ for long times;
see \cite{DiasEtAl,MaliborskiEtAl,BalasubramanianEtAl,DimitrakopoulosEtAl2015}.
Perturbations of the latter type might in fact occupy an open set
in the moduli space of spherically symmetric initial data for (\ref{eq:EinsteinScalarField})
(see \cite{BalasubramanianEtAl,DimitrakopoulosEtAl2015}). The question
of existence of open ``corners'' of initial data around $(\mathcal{M}_{AdS},g_{AdS})$
leading to trapped surface formation has also been studied (see, e.\,g.~\cite{DimitrakopoulosEtAl}). 

Another interesting problem in this context is the characterization
of the possible end states of the evolution of unstable perturbations
of $(\mathcal{M}_{AdS},g_{AdS})$. In \cite{HolSmul2013}, Holzegel--Smulevici
established that the Schwarzschild--AdS spacetime $(\mathcal{M}_{Sch},g_{Sch})$
is an asymptotically stable solution of the system (\ref{eq:EinsteinScalarField})
in spherical symmetry, with perturbations decaying at an exponential
rate.%
\footnote{A similar result can presumably also be deduced for the vacuum Einstein
equations (\ref{eq:VacuumEinsteinEquations}) reduced under the biaxial
Bianchi IX symmetry in $4+1$ dimensions, following by an amalgamation
of the proofs of \cite{Holzegel5d2010} and \cite{HolSmul2013}.%
} This result supports the expectation that all spherically symmetric
perturbations of $(\mathcal{M}_{AdS},g_{AdS})$ for the system (\ref{eq:EinsteinScalarField})
leading to the formation of a trapped surface eventually settle down
to a member of the Schwarzschild--AdS family (see \cite{DafHol,DafHolStability2006}).
However, beyond spherical symmetry, Holzegel--Smulevici \cite{Holzegel2013,Holzegel2013a}
showed that solutions to the linear scalar wave equation 
\begin{equation}
\square_{g_{Sch}}\text{\textgreek{f}}=0\label{eq:ScalarWaveEquation}
\end{equation}
 on $(\mathcal{M}_{Sch},g_{Sch})$ (and, more generally, on Kerr--AdS)
decay at a slow (logarithmic) rate, which is insufficient in itself
to yield the non-linear stability of $(\mathcal{M}_{Sch},g_{Sch})$
(cf.~our remark below the statement of the AdS instability conjecture).
Thus, \cite{Holzegel2013a} conjectured that $(\mathcal{M}_{Sch},g_{Sch})$
is non-linearly unstable. On the other hand, based on a detailed analysis
of quasinormal modes on $(\mathcal{M}_{Sch},g_{Sch})$, Dias--Horowitz--Marolf--Santos
\cite{DiasEtAl} suggested that sufficiently regular, non-linear perturbations
of $(\mathcal{M}_{Sch},g_{Sch})$ still remain small, at least for
long times. As a result, the picture regarding the end state of the
evolution of generic perturbations of $(\mathcal{M}_{AdS},g_{AdS})$
outside spherical symmetry remains unclear (see also \cite{HorowitzSantos,DiasSantos2016,Rostworowski2017}).

Following \cite{bizon2011weakly}, the bulk of heuristic works have
implemented a frequency space analysis in the study of the AdS instability
conjecture. A notable exception is the work \cite{DimitrakopoulosEtAl}
of Dimitrakopoulos--Freivogel--Lippert--Yang, where a physical space
mechanism possibly leading to instability for the system (\ref{eq:EinsteinScalarFieldInDoubleNull})
is suggested. We will revisit the mechanism of \cite{DimitrakopoulosEtAl}
and compare it with the results of this paper at the and of Section
\ref{sub:Sketch-of-the-proof}.

\subsection{\label{sub:NeedOfAMirror}The Einstein--null dust system in spherical
symmetry}

A spherically symmetric model for (\ref{eq:VacuumEinsteinEquations})
which is even simpler than (\ref{eq:EinsteinScalarField}) is the
\emph{Einstein--massless Vlasov} system (see \cite{Andreasson2011,Rein1995}).
The case where the Vlasov field is supported only on radial geodesics
is a singular reduction of this system which is equivalent to the
\emph{Einstein--null dust }system, allowing for both ingoing and outgoing
dust (see \cite{Rendall1997}). This system was studied by Poisson
and Israel in their seminal work on mass inflation \cite{PoissonIsrael1990}.
In $3+1$ dimensions, it takes the form (in double null coordinates
$(u,v)$): 
\begin{equation}
\begin{cases}
\partial_{u}\partial_{v}(r^{2}) & =-\frac{1}{2}(1-\Lambda r^{2})\text{\textgreek{W}}^{2},\\
\partial_{u}\partial_{v}\log(\text{\textgreek{W}}^{2}) & =\frac{\text{\textgreek{W}}^{2}}{2r^{2}}\big(1+4\text{\textgreek{W}}^{-2}\partial_{u}r\partial_{v}r\big),\\
\partial_{v}(\text{\textgreek{W}}^{-2}\partial_{v}r) & =-4\pi r^{-1}\text{\textgreek{W}}^{-2}\bar{\text{\textgreek{t}}},\\
\partial_{u}(\text{\textgreek{W}}^{-2}\partial_{u}r) & =-4\pi r^{-1}\text{\textgreek{W}}^{-2}\text{\textgreek{t}},\\
\partial_{u}\bar{\text{\textgreek{t}}} & =0,\\
\partial_{v}\text{\textgreek{t}} & =0.
\end{cases}\label{eq:EinsteinNullDust}
\end{equation}

In certain cases, the Einstein--null dust system (\ref{eq:EinsteinNullDust})
can be formally viewed as a high frequency limit of the Einstein--scalar
field system (\ref{eq:EinsteinScalarFieldInDoubleNull}) (as was already
discussed in \cite{PoissonIsrael1990}): Setting 
\[
\text{\textgreek{t}}\doteq r^{2}(\partial_{u}\text{\textgreek{f}})^{2},\mbox{ }\bar{\text{\textgreek{t}}}\doteq r^{2}(\partial_{v}\text{\textgreek{f}})^{2}
\]
in (\ref{eq:EinsteinScalarField}) and dropping all lower order terms
from the wave equation for $\text{\textgreek{f}}$, one formally obtains
(\ref{eq:EinsteinNullDust}) in the region where $\partial_{u}\text{\textgreek{f}}\partial_{v}\text{\textgreek{f}}$
is negligible, i.\,e\@. outside the intersection of the supports
of $\text{\textgreek{t}},\bar{\text{\textgreek{t}}}$. While this
formal limiting procedure can be rigorously justified away from $r=0$,
the dynamical similarities between (\ref{eq:EinsteinScalarFieldInDoubleNull})
and (\ref{eq:EinsteinNullDust}) break down close to $r=0$. A fundamental
difference between these systems is the fact that, while small data
asymptotically AdS solutions to (\ref{eq:EinsteinScalarFieldInDoubleNull})
satisfying a reflecting boundary condition at $\mathcal{I}$ remain
regular (and ``small'') for large times, all non-trivial solutions
to the system (\ref{eq:EinsteinNullDust}) break down once the support
of $\bar{\text{\textgreek{t}}}$ reaches the axis $\text{\textgreek{g}}$
(i.\,e.~the timelike portion of $\{r=0\}$), independently of the
boundary conditions imposed at $\mathcal{I}$. This is an ill-posedness
statement for (\ref{eq:EinsteinNullDust}), which needs to be addressed
before any attempt to study the AdS instability conjecture in the
setting of (\ref{eq:EinsteinNullDust}).

We will now proceed to discuss this difference of (\ref{eq:EinsteinScalarFieldInDoubleNull})
and (\ref{eq:EinsteinNullDust}) in more detail.

\subsubsection*{Cauchy stability for the Einstein--scalar field system}

The following Cauchy stability result holds for the system (\ref{eq:EinsteinScalarFieldInDoubleNull}):

\begin{customprop}{1}[Cauchy stability for \eqref{eq:EinsteinScalarFieldInDoubleNull}; see \cite{HolzSmul2012}]\label{prop:CauchyStabilityIntro}
For a suitable initial data norm $||\cdot||_{initial}$, $(\mathcal{M}_{AdS},g_{AdS})$
is Cauchy stable as a solution of the system (\ref{eq:EinsteinScalarFieldInDoubleNull})
with reflecting boundary conditions on $\mathcal{I}$. That is to
say, for all fixed times $T_{*}>0$, any perturbation of the initial
data of $(\mathcal{M}_{AdS},g_{AdS})$ which is small enough (when
measured in terms of $||\cdot||_{initial}$) with respect to $T_{*}$
gives rise to a solution of (\ref{eq:EinsteinScalarFieldInDoubleNull})
which is regular and close to $(\mathcal{M}_{AdS},g_{AdS})$ for times
up to $T_{*}$.

\end{customprop}
\begin{rem*}
In the statement of Proposition \ref{prop:CauchyStabilityIntro},
Cauchy stability of $(\mathcal{M}_{AdS},g_{AdS})$ refers to stability
over fixed compact subsets of the \emph{conformal compactification}
of $(\mathcal{M}_{AdS},g_{AdS})$, such as subsets of the form $\{0\le t\le T_{*}\}$
in the $(t,r,\text{\textgreek{j}},\text{\textgreek{f}})$ coordinate
chart. Any such subset contains, in particular, a compact subset of
the timelike boundary $\mathcal{I}$. 

The initial data norm $||\cdot||_{initial}$, for which the Cauchy
stability of $(\mathcal{M}_{AdS},g_{AdS})$ follows from \cite{HolzSmul2012},
is a higher order, suitably weighted $C^{k}$ norm. However, this
is not the only norm for which $(\mathcal{M}_{AdS},g_{AdS})$ can
be shown to be Cauchy stable: An additional, highly non-trivial example
of such a norm is the bounded variation norm of Christodoulou \cite{ChristodoulouBoundedVariation}
(modified with suitable $r$-weights near $r=\infty$). Similar low-regularity
norms will also play an important role in this paper (see Section
\ref{sub:NeedOfAMirror}).

Assuming, for simplicity, that initial data are prescribed on the
outgoing null hypersurface corresponding to $u=0$, for $0\le v\le v_{*}$,
a necessary condition for Cauchy stability of $(\mathcal{M}_{AdS},g_{AdS})$
for the system (\ref{eq:EinsteinScalarFieldInDoubleNull}) with respect
to an initial data norm $||\cdot||_{initial}$ is that, for any given
$R_{0}>0$, $||\cdot||_{initial}$ controls the quantity 
\begin{equation}
\mathscr{M}\doteq\sup_{\substack{0\le v_{1}<v_{2}\le v_{*},\\
\frac{r(0,v_{2})}{r(0,v_{1})}<\frac{3}{2},\mbox{ }r(0,v_{2})\le R_{0}
}
}\frac{\tilde{m}(0,v_{2})-\tilde{m}(0,v_{1})}{(r(0,v_{2})-r(0,v_{1}))\big|\log\big(\frac{r(0,v_{2})}{r(0,v_{1})}-1\big)\big|},\label{eq:EnergyConcentration}
\end{equation}
where $\tilde{m}$ is the \emph{renormalised Hawking mass}, defined
in terms of the Hawking mass $m$, 
\begin{equation}
m\doteq\frac{r}{2}\Big(1-4\text{\textgreek{W}}^{-2}\partial_{u}r\partial_{v}r\Big),\label{eq:HawkingMassIntro}
\end{equation}
by the relation 
\begin{equation}
\tilde{m}\doteq m-\frac{1}{6}\Lambda r^{3}.\label{eq:RenormalisedhawkingMassIntro}
\end{equation}
This is a consequence of the fact that, when $\mathscr{M}$ exceeds
a certain threshold (depending on $R_{0}$), there exists a $u_{\dagger}\in(0,v_{*})$
and a point $p=(u_{\dagger},v_{\dagger})$ in the development of the
initial data such that 
\begin{equation}
\frac{2m}{r}(u_{\dagger},v_{\dagger})>1,\label{eq:IndicatorBlackHole}
\end{equation}
 a result proven by Christodoulou in \cite{Christodoulou1991}.%
\footnote{The result of \cite{Christodoulou1991} was restricted to the case
$\Lambda=0$, but the proof can be readily modified to include the
case $\Lambda<0$.%
} The bound (\ref{eq:IndicatorBlackHole}) implies that 
\begin{equation}
\partial_{u}r(u_{\dagger},v_{\dagger})<0\mbox{, }\partial_{v}r(u_{\dagger},v_{\dagger})<0,\label{eq:TrappedSurface}
\end{equation}
i.\,e.~that the symmetry sphere associated to $(u_{\dagger},v_{\dagger})$
is a \emph{trapped surface}. In particular, $(u_{\dagger},v_{\dagger})$
is contained in a black hole.%
\footnote{We should remark that (\ref{eq:TrappedSurface}) follows from (\ref{eq:IndicatorBlackHole})
under the assumption that $\partial_{u}r<0$ (which always holds provided,
initially, $\partial_{u}r|_{u=0}<0$; see \cite{ChristodoulouBoundedVariation}). %
} 

As a corollary, it follows that the total ADM mass of the initial
data, though expressible as a coercive functional on the space of
initial data of (\ref{eq:EinsteinScalarFieldInDoubleNull}), does
not yield a norm for which $(\mathcal{M}_{AdS},g_{AdS})$ is Cauchy
stable for (\ref{eq:EinsteinScalarFieldInDoubleNull}), since the
ADM mass manifestly fails to control (\ref{eq:EnergyConcentration}).
\end{rem*}

\subsubsection*{Break down at $r=0$ and ``trivial'' Cauchy instability for the
Einstein--null dust system }

The following instability result holds for the system (\ref{eq:EinsteinNullDust})
(see \cite{MoschidisMaximalDevelopment}):

\begin{customprop}{2}[Cauchy instability for \eqref{eq:EinsteinNullDust}]\label{prop:C0InextendibilityIntroduction}
Any globally hyperbolic spherically symmetric solution $(\mathcal{M},g;\text{\textgreek{t}},\bar{\text{\textgreek{t}}})$
of (\ref{eq:EinsteinNullDust}) with non-empty axis $\text{\textgreek{g}}$
``breaks down'' at the first point when a radial geodesic in the
support of $\bar{\text{\textgreek{t}}}$ reaches $\text{\textgreek{g}}$:
Beyond that point, $(\mathcal{M},g;\text{\textgreek{t}},\bar{\text{\textgreek{t}}})$
is $C^{0}$ inextendible as a  spherically symmetric solution to (\ref{eq:EinsteinNullDust}).
As a result, $(\mathcal{M}_{AdS},g_{AdS})$ is not a Cauchy stable
solution of (\ref{eq:EinsteinNullDust}) for any ``reasonable''
initial data topology.

\end{customprop}

For the precise definition of the notion of $C^{0}$ inextendibility
as a  spherically symmetric solution to (\ref{eq:EinsteinNullDust}),
see \cite{MoschidisMaximalDevelopment}. Note that this a \emph{stronger}
statement than $(\mathcal{M},g;\text{\textgreek{t}},\bar{\text{\textgreek{t}}})$
breaking down as a smooth solution of (\ref{eq:EinsteinNullDust}).
We should also remark the following regarding Proposition \ref{prop:C0InextendibilityIntroduction}:
\begin{itemize}
\item Proposition \ref{prop:C0InextendibilityIntroduction} holds independently
of the value of the cosmological constant $\Lambda$. In particular,
Minkowski spacetime $(\mathbb{R}^{3+1},\text{\textgreek{h}})$ is
not Cauchy stable for (\ref{eq:EinsteinNullDust}) with $\Lambda=0$
for any ``reasonable'' initial data topology.
\item \noindent Proposition \ref{prop:C0InextendibilityIntroduction} yields
a uniform upper bound on the time of existence of solutions $(\mathcal{M},g)$
to (\ref{eq:EinsteinNullDust}) for any initial data set for which
$\bar{\text{\textgreek{t}}}$ is not identically equal to $0$, depending
only on the distance of the initial support of $\bar{\text{\textgreek{t}}}$
from the axis and, thus, independently of the proximity of the initial
data to the trivial data (in any reasonable initial data norm). We
should also highlight that the instability of Proposition \ref{prop:C0InextendibilityIntroduction}
has nothing to do with trapped surface formation: Up to the first
retarded time when a radial geodesic in the support of $\bar{\text{\textgreek{t}}}$
reaches $\text{\textgreek{g}}$, any solution $(\mathcal{M},g)$ to
(\ref{eq:EinsteinNullDust}) arising from smooth initial data close
to $(\mathcal{M}_{AdS},g_{AdS})$ remains smooth and close to $(\mathcal{M}_{AdS},g_{AdS})$,
and $(\mathcal{M},g)$ \emph{contains no trapped surface}. In fact,
in this case, despite being $C^{0}$ inextendible as a globally hyperbolic
spherically symmetric solution to (\ref{eq:EinsteinNullDust}), $(\mathcal{M},g)$
is globally $C^{\infty}$-extendible as a spherically symmetric Lorentzian
manifold; see \cite{MoschidisMaximalDevelopment}.
\item The Cauchy stability statement for $(\mathcal{M}_{AdS},g_{AdS})$
for the system (\ref{eq:EinsteinScalarFieldInDoubleNull}) stated
in Proposition \ref{prop:CauchyStabilityIntro} can be informally
interpreted as the result of a natural dispersive mechanism close
to the axis $\text{\textgreek{g}}$ displayed by the system (\ref{eq:EinsteinScalarFieldInDoubleNull}),
which does not allow the energy of $\text{\textgreek{f}}$ to concentrate
on scales smaller than $\tilde{m}$ in $O(1)$ time, provided a suitable
initial norm of $\text{\textgreek{f}}$ (controlling at least (\ref{eq:EnergyConcentration}))
is small enough. No such mechanism is present for the system (\ref{eq:EinsteinNullDust}),
as is illustrated by Proposition \ref{prop:C0InextendibilityIntroduction}. 
\end{itemize}

\subsubsection*{Resolution of the ``trivial'' instability of (\ref{eq:EinsteinNullDust})
through an inner mirror}

In order to turn the spherically symmetric Einstein--null dust system
(\ref{eq:EinsteinNullDust}) into a well-posed, Cauchy-stable system
(a necessary step for converting (\ref{eq:EinsteinNullDust}) into
an effective model of the vacuum Einstein equations (\ref{eq:VacuumEinsteinEquations})),
it is necessary to explicitly add to (\ref{eq:EinsteinNullDust})
a mechanism that prevents the break down at $r=0$ described by Proposition
\ref{prop:C0InextendibilityIntroduction}, so that, moreover, an analogue
of Proposition \ref{prop:CauchyStabilityIntro} holds for (\ref{eq:EinsteinNullDust}).
This can be achieved by by placing an \emph{inner mirror} at $r=r_{0}>0$,
i.\,e.~by restricting (\ref{eq:EinsteinNullDust}) on $\{r\ge r_{0}\}$,
for some $r_{0}>0$, and imposing a reflecting boundary condition
on the portion $\text{\textgreek{g}}_{0}$ of the set $\{r=r_{0}\}$
which is timelike.
\begin{rem*}
The reflecting boundary condition on $\text{\textgreek{g}}_{0}$ can
be motivated by the fact that, for smooth spherically symmetric solutions
$(\mathcal{M},g;\text{\textgreek{f}})$ to (\ref{eq:EinsteinScalarField}),
the function $\text{\textgreek{f}}$, viewed as a function on the
quotient of $(\mathcal{M},g)$ by the spheres of symmetry, satisfies
a reflecting boundary condition on the axis.
\end{rem*}
The well-posedness and the properties of the maximal development for
the system (\ref{eq:EinsteinNullDust}) with reflecting boundary conditions
on $\mathcal{I}$ and $\text{\textgreek{g}}_{0}$ are addressed in
the companion paper \cite{MoschidisMaximalDevelopment}. The following
result is established in \cite{MoschidisMaximalDevelopment}:

\begin{figure}[h] 
\centering 
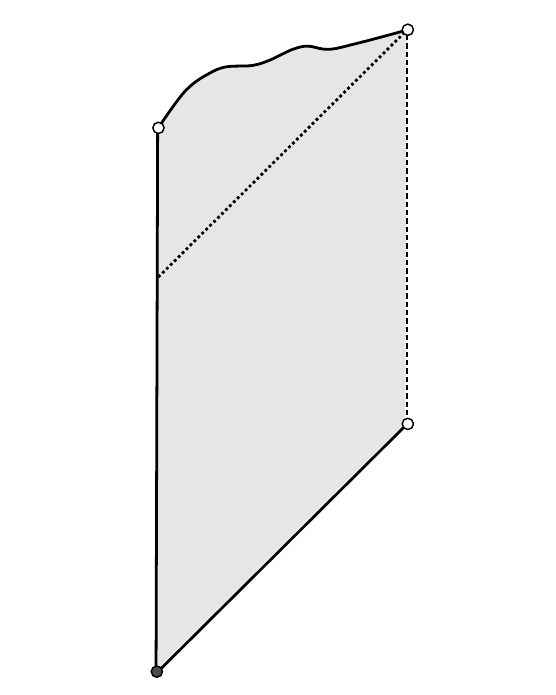 
\caption{Schematic depiction of the domain on which the maximal future development $(r,\Omega^2,\tau,\bar{\tau})$ of a smooth initial data set on $u=0$ (with reflecting boundary conditions on $\mathcal{I}$ and $\gamma_{0}$) is defined. A gauge conditions ensures that $\mathcal{I}$ and $\gamma_{0}$ are straight vertical lines. Conformal infinity $\mathcal{I}$ is always complete in this setting. In the case when the future event horizon $\mathcal{H}^{+}$ is non-empty,  it is smooth and has infinite affine length. In this case, apart from the mirror $\gamma_{0}$, the boundary of the domain has a spacelike portion on which $\{ r=r_{0}\}$.}
\end{figure}

\begin{customthm}{2}[Well posedness for \eqref{eq:EinsteinNullDust} with an inner mirror]\label{thm:MaximalDevelopmentIntro}
For any $r_{0}>0$ and any smooth asymptotically AdS initial data
set $(r,\text{\textgreek{W}}^{2},\text{\textgreek{t}},\bar{\text{\textgreek{t}}})|_{u=0}$
on $u=0$, there exists a unique smooth maximal future development
$(r,\text{\textgreek{W}}^{2},\text{\textgreek{t}},\bar{\text{\textgreek{t}}})$
on $\{r\ge r_{0}\}$, solving (\ref{eq:EinsteinNullDust}) with reflecting
boundary conditions on $\mathcal{I}$ and $\text{\textgreek{g}}_{0}$,
where $r|_{\text{\textgreek{g}}_{0}}=r_{0}$ and $\text{\textgreek{g}}_{0}$
coincides with the portion of the curve $\{r=r_{0}\}$ which is timelike
(fixing the gauge freedom by imposing a reflecting gauge condition
on both $\mathcal{I}$ and $\text{\textgreek{g}}_{0}$). For this
development, $\mathcal{I}$ is complete and $\{r=r_{0}\}$ is timelike
in the past of $\mathcal{I}$ (see Figure 1.2). 

In the case when the future event horizon $\mathcal{H}^{+}$ is non-empty,
it is smooth and future complete. A necessary condition for $\mathcal{H}^{+}$
to be non-empty is the existence of a point $(u_{\dagger},v_{\dagger})$
where (\ref{eq:IndicatorBlackHole}) holds. If the total mass $\tilde{m}|_{\mathcal{I}}$
and the mirror radius $r_{0}$ satisfy 
\begin{equation}
\frac{2\tilde{m}|_{\mathcal{I}}}{r_{0}}\le1-\frac{1}{3}\Lambda r_{0}^{2},\label{eq:BoundForTotalMassHorizon}
\end{equation}
then necessarily $\mathcal{H}^{+}=\emptyset$.

\end{customthm}

For a more detailed statement of Theorem \ref{thm:MaximalDevelopmentIntro},
see Section \ref{sec:ResultsFromTheOtherPaper} and \cite{MoschidisMaximalDevelopment}. 

In view of the fact that $\mathcal{H}^{+}=\emptyset$ in the case
when the total mass $\tilde{m}|_{\mathcal{I}}$ and the mirror radius
$r_{0}$ satisfy (\ref{eq:BoundForTotalMassHorizon}), in order to
address the AdS instability conjecture for the system (\ref{eq:EinsteinNullDust})
with reflecting boundary conditions on $\mathcal{I}$ and $\text{\textgreek{g}}_{0}$,
it is necessary to allow $r_{0}$ to shrink to $0$ with the size
of the data. Thus, addressing the AdS instability conjecture in this
setting requires establishing a Cauchy stability statement for $(\mathcal{M}_{AdS},g_{AdS})$
which is\emph{ independent of the precise value of the mirror radius
$r_{0}$}. This is the statement of the following result, proved in
our companion paper \cite{MoschidisMaximalDevelopment}:

\begin{customthm}{3}[Cauchy stability for \eqref{eq:EinsteinNullDust} uniformly in $r_0$]\label{thm:CauchyStabilityIntro}

Given $\text{\textgreek{e}}>0$, $u_{*}>0$, there exists a $\text{\textgreek{d}}>0$
such that the following statement holds: For \underline{any} $r_{0}>0$
and \underline{any} initial data set $(r,\text{\textgreek{W}}^{2},\text{\textgreek{t}},\bar{\text{\textgreek{t}}})|_{u=0}$
satisfying 
\begin{equation}
||(r,\text{\textgreek{W}}^{2},\text{\textgreek{t}},\bar{\text{\textgreek{t}}})||_{u=0}\doteq\sup_{\bar{v}}\int_{u=0}\frac{\bar{\text{\textgreek{t}}}(0,v)}{|\text{\textgreek{r}}(0,v)-\text{\textgreek{r}}(0,\bar{v})|+\tan^{-1}(\sqrt{-\Lambda}r_{0})}\,\frac{\sqrt{-\Lambda}dv}{\partial_{v}\text{\textgreek{r}}(0,v)}+\sup_{u=0}\Big(\Big|\big(1-\frac{2\tilde{m}}{r}\big)^{-1}-1\Big|+\sqrt{-\Lambda}\tilde{m}\Big)\le\text{\textgreek{e}},\label{eq:NormCauchyStabilityIntro}
\end{equation}
where 
\begin{equation}
\text{\textgreek{r}}(0,v)\doteq\tan^{-1}(\sqrt{-\Lambda}r)(0,v),
\end{equation}
the corresponding solution $(r,\text{\textgreek{W}}^{2},\text{\textgreek{t}},\bar{\text{\textgreek{t}}})$
to (\ref{eq:EinsteinNullDust}) with reflecting boundary conditions
on $\mathcal{I}$ and $\text{\textgreek{g}}_{0}$ will satisfy 
\begin{equation}
\sup_{0\le\bar{u}\le u_{*}}||(r,\text{\textgreek{W}}^{2},\text{\textgreek{t}},\bar{\text{\textgreek{t}}})||_{u=\bar{u}}\le\text{\textgreek{d}}.
\end{equation}

\end{customthm}

For a more detailed statement of Theorem \ref{thm:CauchyStabilityIntro},
see Section \ref{sec:ResultsFromTheOtherPaper} and \cite{MoschidisMaximalDevelopment}. 

Notice that the norm (\ref{eq:NormCauchyStabilityIntro}) vanishes
only for the trivial initial data $(r,\text{\textgreek{W}}^{2},0,0)$.
Informally, Theorem \ref{thm:CauchyStabilityIntro} implies that,
if the energy of the initial data concentrated on scales proportional
to the mirror radius $r_{0}$ is small enough, then the energy of
the solution to (\ref{eq:EinsteinNullDust}) (with reflecting boundary
conditions on $\mathcal{I}$ and $\text{\textgreek{g}}_{0}$) will
remain similarly dispersed for times less than any given constant.
In particular, no trapped surface can form in this timescale if $\text{\textgreek{d}}$
is chosen sufficiently small. 

In Section \ref{sec:ResultsFromTheOtherPaper}, we will also present
a Cauchy stability statement for general solutions of (\ref{eq:EinsteinNullDust})
with reflecting boundary conditions on $\mathcal{I}$ and $\text{\textgreek{g}}_{0}$,
which will be used in the proof of Theorem \ref{Theorem_Intro_Rough}
(see Theorem \ref{prop:CauchyStability}).

\subsection{\label{sub:The-main-result:Intro}Statement of Theorem \ref{Theorem_Intro_Rough}:
the non-linear instability of AdS}

According to Theorem \ref{thm:CauchyStabilityIntro}, a Cauchy stability
statement holds for $(\mathcal{M}_{AdS},g_{AdS})$ for time intervals
which are independent of the precise value of the mirror radius $r_{0}$,
depending only on the smallness of the initial data norm (\ref{eq:NormCauchyStabilityIntro}).
As a result, it is possible to study the AdS instability conjecture
for the system (\ref{eq:EinsteinNullDust}) with reflecting boundary
conditions on $\mathcal{I}$ and $\text{\textgreek{g}}_{0}$, for
perturbations which are small with respect to (\ref{eq:NormCauchyStabilityIntro}),
\emph{allowing the mirror radius $r_{0}$ to shrink with the size
of the data}. In this paper, we will prove the following result:

\begin{customthm}{1}[more precise version]\label{thm:TheoremDetailedIntro}
There exists a family of positive numbers $r_{0\text{\textgreek{e}}}$
(satisfying $r_{0\text{\textgreek{e}}}\xrightarrow{\text{\textgreek{e}}\rightarrow0}0$)
and smooth initial data $(r,\text{\textgreek{W}}^{2},\text{\textgreek{t}},\bar{\text{\textgreek{t}}})^{(\text{\textgreek{e}})}|_{u=0}$
for the system (\ref{eq:EinsteinNullDust}) satisfying the following
properties:

\begin{enumerate}

\item In the norm $||\cdot||_{u=0}$ defined by (\ref{eq:NormCauchyStabilityIntro}):
\begin{equation}
||(r,\text{\textgreek{W}}^{2},\text{\textgreek{t}},\bar{\text{\textgreek{t}}})^{(\text{\textgreek{e}})}||_{u=0}\xrightarrow{\text{\textgreek{e}}\rightarrow0}0.\label{eq:NormToZeroIntro}
\end{equation}

\item For any $\text{\textgreek{e}}>0$, the maximal development
$(r,\text{\textgreek{W}}^{2},\text{\textgreek{t}},\bar{\text{\textgreek{t}}})^{(\text{\textgreek{e}})}$
of $(r,\text{\textgreek{W}}^{2},\text{\textgreek{t}},\bar{\text{\textgreek{t}}})^{(\text{\textgreek{e}})}|_{u=0}$
for the system (\ref{eq:EinsteinNullDust}) with reflecting boundary
conditions on $\mathcal{I}$ and $\text{\textgreek{g}}_{0}$, $r|_{\text{\textgreek{g}}_{0}}=r_{0\text{\textgreek{e}}}$,
contains a trapped sphere, i.\,e.~there exists a point $(u_{\text{\textgreek{e}}},v_{\text{\textgreek{e}}})$
such that: 
\begin{equation}
\frac{2m^{(\text{\textgreek{e}})}}{r^{(\text{\textgreek{e}})}}(u_{\text{\textgreek{e}}},v_{\text{\textgreek{e}}})>1.\label{eq:TrappedSurfaceIntro}
\end{equation}
Thus, in view of Theorem \ref{thm:MaximalDevelopmentIntro}, $(r,\text{\textgreek{W}}^{2},\text{\textgreek{t}},\bar{\text{\textgreek{t}}})^{(\text{\textgreek{e}})}$
contains a non-empty, smooth and future complete event horizon $\mathcal{H}^{+}$
and a complete conformal infinity $\mathcal{I}$. 

\end{enumerate}

\end{customthm}

For the definitive statement of Theorem \ref{thm:TheoremDetailedIntro},
see Section \ref{sec:The-main-result:Details}. The following remarks
should be made concerning Theorem \ref{thm:TheoremDetailedIntro}:
\begin{itemize}
\item In view of the Cauchy stability of $(\mathcal{M}_{AdS},g_{AdS})$
with respect to (\ref{eq:NormCauchyStabilityIntro}) (see Theorem
\ref{thm:CauchyStabilityIntro}), the time%
\footnote{where time is measured with respect to the (dimensionless) coordinate
function $\bar{t}=\sqrt{-\Lambda}(u+v)$.%
} required to elapse before $\big(1-\frac{2m^{(\text{\textgreek{e}})}}{r^{(\text{\textgreek{e}})}}\big)$
becomes negative necessarily tends to $+\infty$ as $\text{\textgreek{e}}\rightarrow0$.
\item In view of the fact that $\mathcal{H}^{+}=\emptyset$ when (\ref{eq:BoundForTotalMassHorizon})
holds, in order for $(r,\text{\textgreek{W}}^{2},\text{\textgreek{t}},\bar{\text{\textgreek{t}}})^{(\text{\textgreek{e}})}$
to satisfy both (\ref{eq:NormToZeroIntro}) and (\ref{eq:TrappedSurfaceIntro}),
it is necessary that $r_{0\text{\textgreek{e}}}\rightarrow0$ as $\text{\textgreek{e}}\rightarrow0$,
at a rate which is at least as fast as that of $2\tilde{m}^{(\text{\textgreek{e}})}|_{\mathcal{I}}$,%
\footnote{Note that the renormalised Hawking mass $\tilde{m}^{(\text{\textgreek{e}})}$
is constant on $\mathcal{I}$ when imposing a reflecting boundary
condition.%
} i.\,e.: 
\begin{equation}
r_{0\text{\textgreek{e}}}\le2\tilde{m}^{(\text{\textgreek{e}})}|_{\mathcal{I}}.\label{eq:BoundForMirroRadius}
\end{equation}
In fact, we choose the family $r_{0\text{\textgreek{e}}}$, $(r,\text{\textgreek{W}}^{2},\text{\textgreek{t}},\bar{\text{\textgreek{t}}})^{(\text{\textgreek{e}})}|_{u=0}$
of Theorem \ref{thm:TheoremDetailedIntro} to saturate the bound (\ref{eq:BoundForMirroRadius})
in the limit $\text{\textgreek{e}}\rightarrow0$, i.\,e. 
\begin{equation}
\lim_{\text{\textgreek{e}}\rightarrow0}\frac{r_{0\text{\textgreek{e}}}}{2\tilde{m}^{(\text{\textgreek{e}})}|_{\mathcal{I}}}=1.\label{eq:SaturatedBound}
\end{equation}
 For the proof of Theorem \ref{thm:TheoremDetailedIntro}, (\ref{eq:SaturatedBound})
is not essential and can be relaxed; however, it is fundamental for
our proof that $r_{0\text{\textgreek{e}}}$ is bounded from below
by some small multiple of $\tilde{m}^{(\text{\textgreek{e}})}|_{\mathcal{I}}$.
\item It follows from the proof of Theorem \ref{thm:MaximalDevelopmentIntro}
that, in the case $\Lambda=0$, Minkowski spacetime $(\mathbb{R}^{3+1},\text{\textgreek{h}})$
is \emph{globally stable} (for the system (\ref{eq:EinsteinNullDust})
with reflecting boundary conditions on the inner mirror $\{r=r_{0}\}$)
to initial data perturbations which are small with respect to the
norm (\ref{eq:NormCauchyStabilityIntro}), independently of the precise
choice of $r_{0}$. This fact further justifies the choice of the
matter model and the norm (\ref{eq:NormCauchyStabilityIntro}) as
a setting for establishing the AdS instability conjecture.
\item The proof of Theorem \ref{thm:TheoremDetailedIntro} also applies
in the case $\Lambda=0$ when placing an \emph{outer mirror} at $r=R_{0}\gg r_{0}$
(in addition to the inner mirror at $r=r_{0}$), i.\,e.~restricting
the solutions of (\ref{eq:EinsteinNullDust}) in the region $\{r_{0}\le r\le R_{0}\}$
and imposing reflecting boundary conditions on both $\{r=r_{0}\}$
and $\{r=R_{0}\}$. This is in accordance with the numerical results
of \cite{BuchelEtAl2012} for the system (\ref{eq:EinsteinScalarField}).
\item It can be readily inferred by Cauchy stability (see Theorem \ref{prop:CauchyStability}
in Section \ref{sub:Cauchy-stability-inCauchyStability}) that, for
any $y_{\text{\textgreek{e}}}\doteq(r,\text{\textgreek{W}}^{2},\text{\textgreek{t}},\bar{\text{\textgreek{t}}})^{(\text{\textgreek{e}})}|_{u=0}$
in the family of initial data of Theorem \ref{thm:TheoremDetailedIntro},
there exists an open neighborhood $\mathcal{W}_{\text{\textgreek{e}}}$
of initial data around $y_{\text{\textgreek{e}}}$ such that, for
all $y\in\mathcal{W}_{\text{\textgreek{e}}}$, the maximal future
development of $y$ also contains a trapped surface. In particular,
the set of initial data leading to trapped surface formation is open.
An even stronger genericity statement would be the existence of an
\emph{open instability corner} in the space of initial data around
$(\mathcal{M}_{AdS},g_{AdS})$ (see \cite{DimitrakopoulosEtAl}),
i.\,e.~the existence of a $c_{1}>0$ such that $\big\{ y:\, dist(y,y_{\text{\textgreek{e}}})\le c_{1}||y_{\text{\textgreek{e}}}||_{u=0}\big\}\subset\mathcal{W}_{\text{\textgreek{e}}}$
for all $\text{\textgreek{e}}>0$ (with $dist(\cdot,\cdot)$ being
the distance function associated to (\ref{eq:NormCauchyStabilityIntro})
for $r_{0}=r_{0\text{\textgreek{e}}}$). While we have not addressed
the issue of genericity of the unstable initial data in this paper,
we expect that the proof of Theorem \ref{thm:TheoremDetailedIntro}
can be adapted to yield the existence of an instability corner around
$(\mathcal{M}_{AdS},g_{AdS})$. 
\item A plethora of numerical works (see, e.\,g.~\cite{bizon2011weakly,BuchelEtAl2012,BizonMalib,DimitrakopoulosEtAl})
suggest that, in the case of the Einstein--scalar field system (\ref{eq:EinsteinScalarField}),
for families of initial data $(\text{\textgreek{f}}_{\text{\textgreek{e}}}^{(0)},\text{\textgreek{f}}_{\text{\textgreek{e}}}^{(1)})$
for the scalar field $\text{\textgreek{f}}$ of the form 
\begin{equation}
(\text{\textgreek{f}}_{\text{\textgreek{e}}}^{(0)},\text{\textgreek{f}}_{\text{\textgreek{e}}}^{(1)})=(\text{\textgreek{e}}\text{\textgreek{f}}^{(0)},\text{\textgreek{e}}\text{\textgreek{f}}^{(1)})\label{eq:RescaledInitialData}
\end{equation}
(where $(\text{\textgreek{f}}^{(0)},\text{\textgreek{f}}^{(1)})$
is a fixed initial profile), trapped surface formation occurs at time
$\sim\text{\textgreek{e}}^{-2}$. However, any rigorous formulation
of this statement for general families of initial data, not necessarily
of the form (\ref{eq:RescaledInitialData}), requires fixing an initial
data norm for which the initial data size is measured, with different
choices of (scale invariant) norms possibly leading to different timescales
of trapped surface formation for initial data of size $\sim\text{\textgreek{e}}$.
For this reason, given that the initial data $(\text{\textgreek{t}},\bar{\text{\textgreek{t}}})^{(\text{\textgreek{e}})}|_{u=0}$
in Theorem \ref{thm:TheoremDetailedIntro} can not be viewed as a
rescaling of a fixed profile of the form (\ref{eq:RescaledInitialData}),
we have not tried to optimize the time required for trapped surface
formation in Theorem \ref{thm:TheoremDetailedIntro} in terms of the
initial norm (\ref{eq:NormCauchyStabilityIntro}).
\end{itemize}

\subsection{\label{sub:Sketch-of-the-proof}Sketch of the proof and remarks on
Theorem \ref{thm:TheoremDetailedIntro} }

We will now proceed to sketch the main arguments involved in the proof
of Theorem \ref{thm:TheoremDetailedIntro}.

\subsubsection*{Construction of the initial data}

\begin{figure}[t] 
\centering 
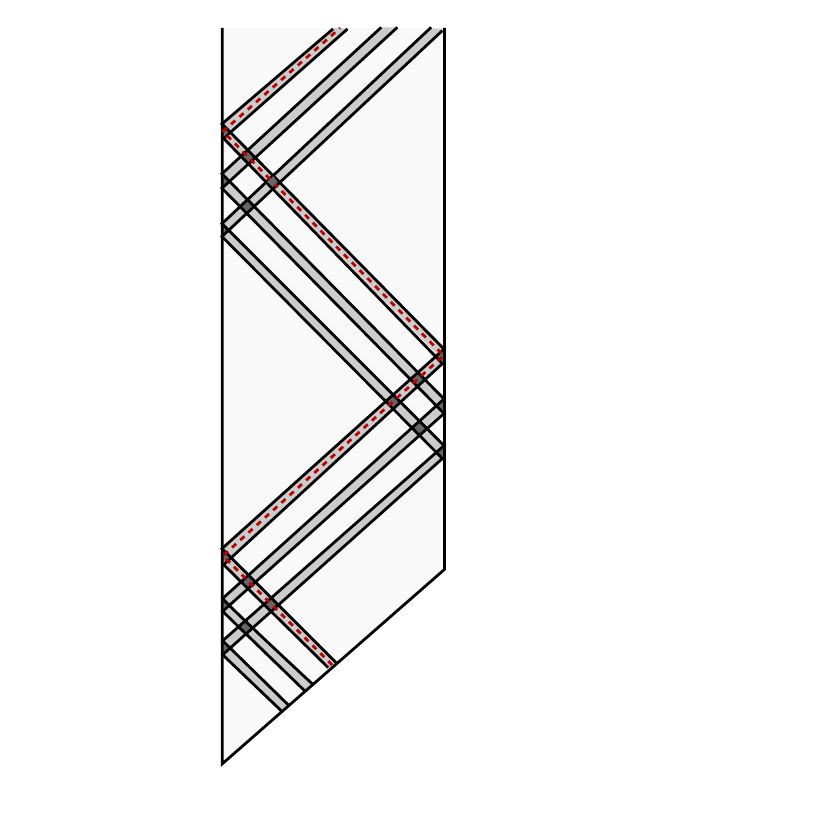 
\caption{The initial data $(r,\Omega^{2},\tau,\bar{\tau})^{(\epsilon)}|_{u=0}$ give rise to a bundle of ingoing beams which are  successively reflected off $\{r=r_{0\epsilon}\}$ and $\mathcal{I}=\{r=+\infty\}$. While the number of beams goes to infinity as $\epsilon\rightarrow 0$, for simplicity, we only depict here a bundle of three beams. As long as the total width of the bundle of beams remains small, the interaction set naturally splits into a part which lies close to $\{r=r_{0\epsilon}\}$ and a part near $\mathcal{I}$. We have also marked with a red dashed line the beam lying (initially) to the future of the rest.}
\end{figure}

The family of initial data $(r,\text{\textgreek{W}}^{2},\text{\textgreek{t}},\bar{\text{\textgreek{t}}})^{(\text{\textgreek{e}})}|_{u=0}$
in Theorem \ref{thm:TheoremDetailedIntro} is chosen so that its total
ADM mass $\tilde{m}^{(\text{\textgreek{e}})}|_{\mathcal{I}}$ and
the mirror radius $r_{0\text{\textgreek{e}}}$ satisfy (for $\text{\textgreek{e}}\ll1$)
\begin{equation}
r_{0\text{\textgreek{e}}},\tilde{m}^{(\text{\textgreek{e}})}|_{\mathcal{I}}\sim\text{\textgreek{e}}(-\Lambda)^{-\frac{1}{2}}.
\end{equation}
In particular, fixing a function $h(\text{\textgreek{e}})$ in terms
of $\text{\textgreek{e}}$ such that 
\[
\text{\textgreek{e}}\ll h(\text{\textgreek{e}})\ll1,
\]
the initial data $(r,\text{\textgreek{W}}^{2},\text{\textgreek{t}},\bar{\text{\textgreek{t}}})^{(\text{\textgreek{e}})}|_{u=0}$
are constructed so that the null dust initially forms a bundle of
narrow ingoing beams emanating from the region $r\sim1$; see Figure
1.3. The number of the beams is chosen to be large, i.\,e.~of order
$\sim(h(\text{\textgreek{e}}))^{-1}$, and the beams are initially
separated by gaps of $r$-width $\sim(h(\text{\textgreek{e}}))^{-1}\text{\textgreek{e}}(-\Lambda)^{-\frac{1}{2}}$.
The large number of beams and their initial separation are chosen
so that 
\begin{equation}
||(r,\text{\textgreek{W}}^{2},\text{\textgreek{t}},\bar{\text{\textgreek{t}}})^{(\text{\textgreek{e}})}||_{u=0}\sim h(\text{\textgreek{e}})\xrightarrow{\text{\textgreek{e}}\rightarrow0}0.\label{eq:InitialDataNorm}
\end{equation}

\subsubsection*{Remarks on the configuration of the null dust beams}

As the solution $(r,\text{\textgreek{W}}^{2},\text{\textgreek{t}},\bar{\text{\textgreek{t}}})^{(\text{\textgreek{e}})}$
arising fom the initial data set $(r,\text{\textgreek{W}}^{2},\text{\textgreek{t}},\bar{\text{\textgreek{t}}})^{(\text{\textgreek{e}})}|_{u=0}$
evolves according to equations (\ref{eq:EinsteinNullDust}), the null
dust beams are reflected successively off $\text{\textgreek{g}}_{0}=\{r=r_{0\text{\textgreek{e}}}\}$
and $\mathcal{I}$, as depicted in Figure 1.3. The beams separate
the spacetime into vacuum regions (the larger rectangular regions
between the beams in Figure 1.3), where the renormalised Hawking mass
$\tilde{m}^{(\text{\textgreek{e}})}$ is constant (recall the definition
of the Hawking mass and the renormalised Hawking mass by (\ref{eq:HawkingMassIntro})
and (\ref{eq:RenormalisedhawkingMassIntro}), respectively). The \emph{interaction
set} of the beams consists of all the points in the spacetime where
two different beams intersect (depicted in Figure 1.3 as the union
of all the smaller dark rectangles, lying in the intersection of any
two beams). As long as the total width of the bundle of beams remains
small, the interaction set can be split into two sets, one consisting
of the intersections occuring close to the mirror $\text{\textgreek{g}}_{0}$
and one consisting of the intersections near $\mathcal{I}$ (see Figure
1.3). 

Every beam is separated by the interaction set into several components.
To each such component, we can associate the \emph{mass difference}
$\mathfrak{D}\tilde{m}$ between the two vacuum regions which are
themselves separated by that beam component. The mass difference $\mathfrak{D}\tilde{m}$
measures the energy content of each beam component and, in view of
the non-linearity of the system (\ref{eq:EinsteinNullDust}), it is
not necessarily conserved along the beam after an intersection with
another beam. Precisely determining the resulting change in the mass
difference after the interaction of two beams will be the crux of
the proof of Theorem \ref{thm:TheoremDetailedIntro}.

\subsubsection*{Beam interactions and change in mass difference}

In Figure 1.4, the region around the intersection of an incoming null
dust beam $\text{\textgreek{z}}_{in}$ and an outgoing null dust beam
$\text{\textgreek{z}}_{out}$ is depicted. This region is separated
by the beams into 4 vacuum subregions $\mathcal{R}_{1},\ldots,\mathcal{R}_{4}$
with associated renormalised Hawking masses $\tilde{m}_{1},\ldots,\tilde{m}_{4}$
(see Figure 1.4). Before the intersection of the two beams, the mass
difference of the incoming beam $\text{\textgreek{z}}_{in}$ is 
\begin{equation}
\overline{\mathfrak{D}}_{-}\tilde{m}=\tilde{m}_{3}-\tilde{m}_{4},
\end{equation}
while the mass difference of the outgoing beam $\text{\textgreek{z}}_{out}$
is 
\begin{equation}
\mathfrak{D}_{-}\tilde{m}=\tilde{m}_{4}-\tilde{m}_{2}.
\end{equation}
After the intersection of the beams, the mass differences associated
to $\text{\textgreek{z}}_{in}$ and $\text{\textgreek{z}}_{out}$
become
\begin{equation}
\overline{\mathfrak{D}}_{+}\tilde{m}=\tilde{m}_{1}-\tilde{m}_{2}
\end{equation}
and 
\begin{equation}
\mathfrak{D}_{+}\tilde{m}=\tilde{m}_{3}-\tilde{m}_{1},
\end{equation}
respectively.

\begin{figure}[h] 
\centering 
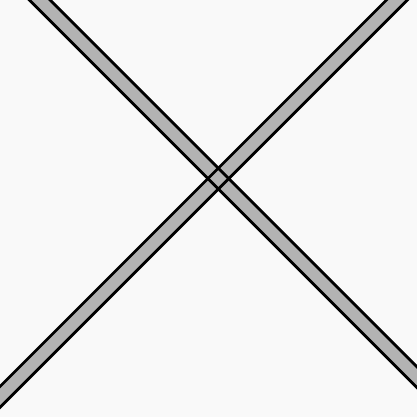 
\caption{The region in the $(u,v)$ plane around the intersection of an incoming beam $\zeta_{in}$ and an outgoing beam $\zeta_{out}$. The regions $\mathcal{R}_{i}$, $i=1,\ldots ,4$ are vacuum and the renormalised Hawking mass $\tilde{m}$ is constant (and equal to $\tilde{m}_{i}$) on each of the $\mathcal{R}_{i}$'s.}
\end{figure}

Assuming that 
\begin{equation}
\frac{2m}{r}<1
\end{equation}
and
\begin{equation}
\partial_{u}r<0<\partial_{v}r,
\end{equation}
 we can readily obtain the following differential relations for $r$
and $\tilde{m}$ from (\ref{eq:EinsteinNullDust}): 
\begin{align}
\partial_{u}\log\big(\frac{\partial_{v}r}{1-\frac{2m}{r}}\big)= & -\frac{4\pi}{r}\frac{\text{\textgreek{t}}}{-\partial_{u}r},\label{eq:IntroConstraints}\\
\partial_{v}\log\big(\frac{-\partial_{u}r}{1-\frac{2m}{r}}\big)= & \frac{4\pi}{r}\frac{\bar{\text{\textgreek{t}}}}{\partial_{v}r}\nonumber 
\end{align}
and 
\begin{align}
\partial_{u}\tilde{m}= & -2\pi\Big(\frac{1-\frac{2m}{r}}{-\partial_{u}r}\Big)\text{\textgreek{t}},\label{eq:IntroMassEquation}\\
\partial_{v}\tilde{m}= & 2\pi\Big(\frac{1-\frac{2m}{r}}{\partial_{v}r}\Big)\bar{\text{\textgreek{t} }}.\nonumber 
\end{align}
We will also assume that:

\begin{itemize}

\item{ The null dust beams $\text{\textgreek{z}}_{in}$ and $\text{\textgreek{z}}_{out}$
are sufficiently narrow so that, on their intersection $\text{\textgreek{z}}_{in}\cap\text{\textgreek{z}}_{out}$,
$r$ can be considered nearly constant:%
\footnote{This is possible in view of the fact that, for solutions $(r,\text{\textgreek{W}}^{2},\text{\textgreek{t}},\bar{\text{\textgreek{t}}})$
to (\ref{eq:EinsteinNullDust}), $r$ remains uniformly continuous
in the limit when $\text{\textgreek{t}},\bar{\text{\textgreek{t}}}$
tend to $\text{\textgreek{d}}$-functions in the $u,v$ variables,
respectively.%
}
\begin{equation}
\sup_{\text{\textgreek{z}}_{in}\cap\text{\textgreek{z}}_{out}}r-\inf_{\text{\textgreek{z}}_{in}\cap\text{\textgreek{z}}_{out}}r\ll\text{\textgreek{e}}(-\Lambda)^{-\frac{1}{2}},
\end{equation}
}

\item{ $\overline{\mathfrak{D}}_{+}\tilde{m}-\overline{\mathfrak{D}}_{-}\tilde{m}$
and $\mathfrak{D}_{+}\tilde{m}-\mathfrak{D}_{-}\tilde{m}$ are relatively
small.%
\footnote{Note that, necessarily, $\overline{\mathfrak{D}}_{+}\tilde{m}-\overline{\mathfrak{D}}_{-}\tilde{m}=-(\mathfrak{D}_{+}\tilde{m}-\mathfrak{D}_{-}\tilde{m})$.%
} }

\end{itemize}

Then, equations (\ref{eq:IntroConstraints})--(\ref{eq:IntroMassEquation}),
combined with the conservation laws 
\begin{align*}
\partial_{u}\bar{\text{\textgreek{t}}} & =0,\\
\partial_{v}\text{\textgreek{t}} & =0,
\end{align*}
 yield the following relations for the change in the mass difference
associated to $\text{\textgreek{z}}_{in}$ and $\text{\textgreek{z}}_{out}$
after their intersection: 
\begin{equation}
\overline{\mathfrak{D}}_{+}\tilde{m}=\overline{\mathfrak{D}}_{-}\tilde{m}\cdot\exp\Big(\frac{2}{r}\frac{\mathfrak{D}_{-}\tilde{m}}{1-\frac{2m_{2}}{r}}+\mathfrak{Err}_{in}\Big)\label{eq:MassDifferenceIncreaseIngoing}
\end{equation}
and 
\begin{equation}
\mathfrak{D}_{+}\tilde{m}=\mathfrak{D}_{-}\tilde{m}\cdot\exp\Big(-\frac{2}{r}\frac{\overline{\mathfrak{D}}_{-}\tilde{m}}{1-\frac{2m_{2}}{r}}+\mathfrak{Err}_{out}\Big),\label{eq:MassDecreaseOutgoing}
\end{equation}
where the error terms $\mathfrak{Err}_{in},\mathfrak{Err}_{out}$
are negligible compared to the other terms in (\ref{eq:MassDifferenceIncreaseIngoing}),
(\ref{eq:MassDecreaseOutgoing}) (see also the relations (\ref{eq:MassDifferenceIncreaseInUDirection})
and (\ref{eq:MassDifferenceDecreaseInVDirection}) in Section \ref{sub:Proof-of-Proposition}).
In particular, whenever an ingoing and an outgoing null dust beam
intersect, \emph{the mass difference of the ingoing beam increases,
while that of the outgoing beam decreases}. 
\begin{rem*}
Notice that, according to (\ref{eq:MassDifferenceIncreaseIngoing})
and (\ref{eq:MassDecreaseOutgoing}), the change in the mass difference
of each of the beams $\text{\textgreek{z}}_{in},\text{\textgreek{z}}_{out}$
after their intersection can be estimated in terms of the mass difference
of the other beam and the value of $r$ and $\inf(1-\frac{2m}{r})$
in the region of intersection. A relation for the change of the mass
difference of two infinitely thin, intersecting null dust beams was
also obtained in \cite{PoissonIsrael1990}.
\end{rem*}

\subsubsection*{The instability mechanism}

Let us now consider, among the null dust beams arising from the initial
data $(r,\text{\textgreek{W}}^{2},\text{\textgreek{t}},\bar{\text{\textgreek{t}}})^{(\text{\textgreek{e}})}|_{u=0}$,
the beam $\text{\textgreek{z}}_{0}$ which initially lies to the future
of the rest (this is the beam marked with a red dashed line in Figure
1.3). Denoting 
\begin{equation}
\mathcal{E}_{\text{\textgreek{z}}_{0}}[t_{*}]\doteq\mbox{ mass difference associated to }\text{\textgreek{z}}_{0}\mbox{ at }\text{\textgreek{z}}_{0}\cap\{u+v=t_{*}\},\label{eq:MassDifferenceIntro}
\end{equation}
 we will examine how $\mathcal{E}_{\text{\textgreek{z}}_{0}}$ changes
along $\text{\textgreek{z}}_{0}$, after each successive intersection
of $\text{\textgreek{z}}_{0}$ with the rest of the beams:

\begin{enumerate}

\item Starting from $u=0$ up to the first reflection of $\text{\textgreek{z}}_{0}$
off the inner mirror $\text{\textgreek{g}}_{0}$, the beam $\text{\textgreek{z}}_{0}$
is ingoing and intersects all the other beams \emph{after} they are
reflected off $\text{\textgreek{g}}_{0}$. Thus, applying (\ref{eq:MassDifferenceIncreaseIngoing})
successively at each intersection of $\text{\textgreek{z}}_{0}$ with
an outgoing beam, we infer that $\mathcal{E}_{\text{\textgreek{z}}_{0}}$
\underline{increases} at this step by a multiplicative factor 
\begin{equation}
A_{\text{\textgreek{g}}_{0}}\ge\exp\Bigg(\frac{2\big(\tilde{m}^{(\text{\textgreek{e}})}|_{\mathcal{I}}-\mathcal{E}_{\text{\textgreek{z}}_{0}}|_{u=0}\big)}{r_{\text{\textgreek{g}}_{0}}}(1-\text{\textgreek{e}})\Bigg),\label{eq:IncreaseFactor}
\end{equation}
where $r_{\text{\textgreek{g}}_{0}}$ is the value of $r$ at the
region of intersection of $\text{\textgreek{z}}_{0}$ with the first
beam which is reflected off $\{r=r_{0\text{\textgreek{e}}}\}$ (note
that $r_{\text{\textgreek{g}}_{0}}$ is also the $r$-width of the
bundle of beams when $\text{\textgreek{z}}_{0}$ first reaches the
mirror $\text{\textgreek{g}}_{0}$). In obtaining (\ref{eq:IncreaseFactor}),
we have assumed that $r_{0\text{\textgreek{e}}}\ll r_{\text{\textgreek{g}}_{0}}\ll(-\Lambda)^{-\frac{1}{2}}$,
$\tilde{m}^{(\text{\textgreek{e}})}|_{\mathcal{I}}\sim r_{0\text{\textgreek{e}}}$
and $\mathcal{E}_{\text{\textgreek{z}}_{0}}|_{u=0}\ll\tilde{m}^{(\text{\textgreek{e}})}|_{\mathcal{I}}$
(which holds in view of the way the initial data where chosen). 

\item The mass difference $\mathcal{E}_{\text{\textgreek{z}}_{0}}$
right before and right after the reflection of $\text{\textgreek{z}}_{0}$
off $\text{\textgreek{g}}_{0}$ is the same, in view of the reflecting
boundary conditions on $\text{\textgreek{g}}_{0}$.

\item From its first reflection off $\text{\textgreek{g}}_{0}$ up
to its first reflection off $\mathcal{I}$, the beam $\text{\textgreek{z}}_{0}$
is outgoing and intersects (again) the rest of the beams in the region
close to $\mathcal{I}$ (\emph{after} these beams are reflected off
$\mathcal{I}$). Applying (\ref{eq:MassDecreaseOutgoing}) successively
at each intersection, we infer that $\mathcal{E}_{\text{\textgreek{z}}_{0}}$
\underline{decreases} at this step, being multiplied by a factor
\begin{equation}
1>A_{out}\ge\exp\Bigg(-\frac{2\big(\tilde{m}^{(\text{\textgreek{e}})}|_{\mathcal{I}}-\mathcal{E}_{\text{\textgreek{z}}_{0}}|_{u=0}\big)}{r_{\mathcal{I}}}\Big(\frac{1}{(1-\text{\textgreek{e}}-\frac{1}{3}\Lambda r_{\mathcal{I}}^{2})}+\text{\textgreek{e}}\Big)\Bigg),\label{eq:DecreaseFactor}
\end{equation}
where $r_{\mathcal{I}}$ is the value of $r$ at the region of intersection
of $\text{\textgreek{z}}_{0}$ with the first beam which is reflected
off $\mathcal{I}$. In obtaining (\ref{eq:DecreaseFactor}), we have
asumed that $r_{\mathcal{I}}\gg(-\Lambda)^{-\frac{1}{2}}$ (which
holds in view of the way the initial data where chosen). 

\item  The mass difference $\mathcal{E}_{\text{\textgreek{z}}_{0}}$
right before and right after the reflection of $\text{\textgreek{z}}_{0}$
off $\mathcal{I}$ is the same, in view of the reflecting boundary
conditions on $\mathcal{I}$.

\end{enumerate}

Therefore, provided $r_{\text{\textgreek{g}}_{0}}\ll(-\Lambda)^{-\frac{1}{2}}\ll r_{\mathcal{I}}$,
we infer that, after the first reflection of $\text{\textgreek{z}}_{0}$
off $\text{\textgreek{g}}_{0}$ and $\mathcal{I}$, the mass difference
$\mathcal{E}_{\text{\textgreek{z}}_{0}}$ increases by a factor 
\begin{align}
A_{tot}=A_{in}\cdot A_{out}\ge & \exp\Bigg(\frac{2\big(\tilde{m}^{(\text{\textgreek{e}})}|_{\mathcal{I}}-\mathcal{E}_{\text{\textgreek{z}}_{0}}|_{u=0}\big)}{r_{\text{\textgreek{g}}_{0}}}(1-\text{\textgreek{e}})-\frac{2\big(\tilde{m}^{(\text{\textgreek{e}})}|_{\mathcal{I}}-\mathcal{E}_{\text{\textgreek{z}}_{0}}|_{u=0}\big)}{r_{\mathcal{I}}}\Big(\frac{1}{(1-\text{\textgreek{e}}-\frac{1}{3}\Lambda r_{\mathcal{I}}^{2})}+\text{\textgreek{e}}\Big)\Bigg)\label{eq:TotalInceaseFactorIntro}\\
 & \ge\exp\Big(\frac{\tilde{m}^{(\text{\textgreek{e}})}|_{\mathcal{I}}-\mathcal{E}_{\text{\textgreek{z}}_{0}}|_{u=0}}{r_{\text{\textgreek{g}}_{0}}}\Big).\nonumber 
\end{align}
The steps 1--4 in the above procedure can then be repeated for each
successive reflection of $\text{\textgreek{z}}_{0}$ off $\text{\textgreek{g}}_{0}$
and $\mathcal{I}$, as long as 
\begin{equation}
r_{0\text{\textgreek{e}}}\ll r_{\text{\textgreek{g}}_{0};n}\ll(-\Lambda)^{-\frac{1}{2}}\ll r_{\mathcal{I};n},\label{eq:BoundForNearAndAwayRegionIntro}
\end{equation}
where $r_{\text{\textgreek{g}}_{0};n},r_{\mathcal{I};n}$ are the
values of $r_{\text{\textgreek{g}}_{0}},r_{\mathcal{I};n}$ after
the $n$-th reflection of $\text{\textgreek{z}}_{0}$ on $\text{\textgreek{g}}_{0}$
and $\mathcal{I}$ (note that $r_{\text{\textgreek{g}}_{0};n}$ is
also the $r$-width of the bundle of beams at the $n$-th reflection
of $\text{\textgreek{z}}_{0}$ off $\text{\textgreek{g}}_{0}$). Thus,
as long as (\ref{eq:BoundForNearAndAwayRegionIntro}) holds, denoting
with $\mathcal{E}_{\text{\textgreek{z}}_{0};n}$ the value of $\mathcal{E}_{\text{\textgreek{z}}_{0}}$
at the $n$-th reflection of $\text{\textgreek{z}}_{0}$ off $\text{\textgreek{g}}_{0}$,
the following inductive bound holds: 
\begin{equation}
\mathcal{E}_{\text{\textgreek{z}}_{0};n}\ge A_{tot;n}\cdot\mathcal{E}_{\text{\textgreek{z}}_{0};n-1},\label{eq:InductiveEnIntro}
\end{equation}
where the multiplicative factor 
\begin{equation}
A_{tot;n}\doteq\exp\Big(\frac{\tilde{m}^{(\text{\textgreek{e}})}|_{\mathcal{I}}-\mathcal{E}_{\text{\textgreek{z}}_{0};n}}{r_{\text{\textgreek{g}}_{0};n}}\Big)\label{eq:MultiplicativeFactor}
\end{equation}
is always greater than $1$, since $\mathcal{E}_{\text{\textgreek{z}}_{0};n}<\tilde{m}^{(\text{\textgreek{e}})}|_{\mathcal{I}}$
(see also the relation (\ref{eq:BoundForMassIncrease}) in Section
\ref{sub:Inductive-bounds}).\emph{ This is the main mechanism driving
the instability, }and the proof of Theorem \ref{thm:TheoremDetailedIntro}
is aimed at showing that, for some large enough $n(\text{\textgreek{e}})$
depending on $\text{\textgreek{e}}$, 
\begin{equation}
\prod_{n=0}^{n(\text{\textgreek{e}})}A_{tot;n}>\frac{r_{0\text{\textgreek{e}}}}{2\mathcal{E}_{\text{\textgreek{z}}_{0}}|_{u=0}}.\label{eq:BigEnoughFactor}
\end{equation}
Inequality (\ref{eq:BigEnoughFactor}) implies (in view of (\ref{eq:InductiveEnIntro}))
that 
\begin{equation}
\frac{2\mathcal{E}_{\text{\textgreek{z}}_{0};n(\text{\textgreek{e}})}}{r_{0\text{\textgreek{e}}}}>1,\label{eq:BoundForTrappedSurfaceFormationIntro}
\end{equation}
 i.\,e.~that, after the $n(\text{\textgreek{e}})$'th successive
reflection of $\text{\textgreek{z}}_{0}$ on $\text{\textgreek{g}}_{0}$
and $\mathcal{I}$, the mass difference $\mathcal{E}_{\text{\textgreek{z}}_{0}}$
has become so large that a trapped surface (in particular, a point
where $\frac{2m}{r}>1$) necessarily forms before $\text{\textgreek{z}}_{0}$
reaches the mirror $\text{\textgreek{g}}_{0}=\{r=r_{0\text{\textgreek{e}}}\}$
for the $n(\text{\textgreek{e}})+1$-th time (provided $\text{\textgreek{z}}_{0}$
was initially chosen sufficiently ``narrow'').%
\footnote{We should remark that, once a trapped surface $\mathcal{S}$ has formed,
$\{r=r_{0\text{\textgreek{e}}}\}\cap J^{+}(\mathcal{S})$ (where $J^{+}(\mathcal{S})$
is the future of $\mathcal{S}$) will be spacelike and we will not
study the evolution of the spacetime beyond $\{r=r_{0\text{\textgreek{e}}}\}\cap J^{+}(\mathcal{S})$.
In particular, no more reflections of the beams will occur in the
future of $\mathcal{S}$. See Theorem  \ref{thm:MaximalDevelopmentIntro}.%
}

\subsubsection*{Control of $r_{\text{\textgreek{g}}_{0};n}$ and the final step before
trapped surface formation}

The main obstacle to establishing (\ref{eq:BigEnoughFactor}) (and,
thus, Theorem \ref{thm:TheoremDetailedIntro}) is the following: Once
$\mathcal{E}_{\text{\textgreek{z}}_{0}}$ exceeds $c\cdot r_{0\text{\textgreek{e}}}$
for some fixed (small) $c>0$, the total $r$-width of the bundle
of beams close to $\text{\textgreek{g}}_{0}$, i.\,e.~$r_{\text{\textgreek{g}}_{0};n}$
in (\ref{eq:MultiplicativeFactor}), increases after each successive
reflection off $\text{\textgreek{g}}_{0}$ and $\mathcal{I}$. Thus,
the multiplicative factor (\ref{eq:MultiplicativeFactor}) decreases
as $n$ grows. The increase in $r_{\text{\textgreek{g}}_{0};n}$ is
more dramatic when the spacetime is close to having a trapped surface,
i.\,e.~when $\frac{2m}{r}$ is close to $1$.%
\footnote{The example of two outgoing null rays in the exterior of Schwarzschild--AdS,
with mass $M\ll(-\Lambda)^{-\frac{1}{2}}$, serves to illustrate this
phenomenon: The $r$-separation of two rays emanating from the region
close to the future event horizon $\mathcal{H}^{+}$, where $\frac{2m}{r}\sim1$,
increases dramatically by the time they reach the region $r\sim(-\Lambda)^{-\frac{1}{2}}$.
This is, of course, nothing other than the celebrated \emph{red-shift}
effect.%
}

Controlling the growth of $r_{\text{\textgreek{g}}_{0};n}$ is achieved
by establishing an inductive bound of the following form: 
\begin{equation}
r_{\text{\textgreek{g}}_{0};n}\le r_{\text{\textgreek{g}}_{0};n-1}\cdot\Big(1+C_{0}\frac{r_{0\text{\textgreek{e}}}}{r_{\text{\textgreek{g}}_{0};n-1}}\big(\Big|\log\big(1-\frac{2\mathcal{E}_{\text{\textgreek{z}}_{0};n-1}}{r_{0\text{\textgreek{e}}}}\big)\Big|+1\big)\Big)\label{eq:InductiveBoundRinIntro}
\end{equation}
(see also the relation (\ref{eq:BoundForMaxBeamSeparation}) in Section
\ref{sub:Inductive-bounds}). Obtaining the bound (\ref{eq:InductiveBoundRinIntro})
is one of the most demanding parts in the proof of Theorem \ref{thm:TheoremDetailedIntro}
and requires controlling the $r$-distance $r_{\text{\textgreek{g}}_{0};n}^{(1)}$
of $\text{\textgreek{z}}_{0}$ from the second-to-top beam $\text{\textgreek{z}}_{1}$
at the $n$-th reflection off $\text{\textgreek{g}}_{0}$ for all
$n\le n(\text{\textgreek{e}})$, i.\,e.~establish a bound of the
form 
\begin{equation}
\frac{r_{\text{\textgreek{g}}_{0};n}^{(1)}}{r_{0\text{\textgreek{e}}}}\ge1+c_{0}(\mathcal{E}_{\text{\textgreek{z}}_{0};0}/r_{0\text{\textgreek{e}}}).\label{eq:LowerBoundSecondBeamIntro}
\end{equation}
 (see (\ref{eq:BoundSecondBeamchanged}) in Section \ref{sub:Inductive-bounds}).
The bound (\ref{eq:LowerBoundSecondBeamIntro}) is in turn obtained
by establishing an inductive bound of the form 
\begin{equation}
\log\Big(\frac{r_{\text{\textgreek{g}}_{0};n-1}^{(1)}}{r_{\text{\textgreek{g}}_{0};n}^{(1)}}\Big)\le C_{0}\log\Big(\frac{\mathcal{E}_{\text{\textgreek{z}}_{0};n}}{\mathcal{E}_{\text{\textgreek{z}}_{0};n-1}}\Big),\label{eq:InductiveBoundSecondBeam}
\end{equation}
estimating the decrease of $r_{\text{\textgreek{g}}_{0};n}^{(1)}$
by the increase of $\mathcal{E}_{\text{\textgreek{z}}_{0};n}$ at
each reflection (see (\ref{eq:UsefulBoundAlmostThere}) in Section
\ref{sub:Proof-of-Proposition}). The bound (\ref{eq:LowerBoundSecondBeamIntro})
is inferred from (\ref{eq:InductiveBoundSecondBeam}), in view of
the fact that $\mathcal{E}_{\text{\textgreek{z}}_{0};n}\ge\mathcal{E}_{\text{\textgreek{z}}_{0};n-1}$
and 
\begin{equation}
\sum_{n=1}^{n(\text{\textgreek{e}})}\log\Big(\frac{\mathcal{E}_{\text{\textgreek{z}}_{0};n}}{\mathcal{E}_{\text{\textgreek{z}}_{0};n-1}}\Big)=\log\Big(\frac{\mathcal{E}_{\text{\textgreek{z}}_{0};n(\text{\textgreek{e}})}}{\mathcal{E}_{\text{\textgreek{z}}_{0};0}}\Big)\le\log\Big(\frac{r_{0\text{\textgreek{e}}}}{2\mathcal{E}_{\text{\textgreek{z}}_{0};0}}\Big).
\end{equation}
At the level of the initial data, obtaining (\ref{eq:InductiveBoundRinIntro})
and (\ref{eq:InductiveBoundSecondBeam}) requires introducing a certain
hierarchy for the scales of the $r$-distances and mass differences
associated to the beams initially (see (\ref{eq:h_1_h_0_definition})
and (\ref{eq:h_2definition}) in Section \ref{sub:Parameters-and-auxiliary}).

Combining (\ref{eq:InductiveEnIntro}) and (\ref{eq:InductiveBoundRinIntro}),
we can show that there exists a large $n(\text{\textgreek{e}})$ such
that, after $n(\text{\textgreek{e}})$ reflections of $\text{\textgreek{z}}_{0}$
off $\text{\textgreek{g}}_{0}$ (but not earlier!), we have 
\begin{equation}
\frac{2\mathcal{E}_{\text{\textgreek{z}}_{0};n(\text{\textgreek{e}})}}{r_{0\text{\textgreek{e}}}}>1-c(\text{\textgreek{e}}),\label{eq:AlmostTrappdIntro}
\end{equation}
where $c(\text{\textgreek{e}})\ll h(\text{\textgreek{e}})$ is a fixed
function of $\text{\textgreek{e}}$. Note that, compared to (\ref{eq:BoundForTrappedSurfaceFormationIntro}),
(\ref{eq:AlmostTrappdIntro}) is a slightly weaker bound, which just
stops short of implying that a trapped surface is formed. In order
to complete the proof of Theorem \ref{thm:TheoremDetailedIntro},
we therefore have to consider two different scenarios for $\mathcal{E}_{\text{\textgreek{z}}_{0};n(\text{\textgreek{e}})}$:
\begin{casenv}
\item In the case when (\ref{eq:BoundForTrappedSurfaceFormationIntro})
holds, the proof of Theorem \ref{thm:TheoremDetailedIntro} follows
readily, since (\ref{eq:BoundForTrappedSurfaceFormationIntro}) implies
that, before $\text{\textgreek{z}}_{0}$ reaches $\{r=r_{0\text{\textgreek{e}}}\}$
for the $n(\text{\textgreek{e}})+1$-th time, a point arises where
$\frac{2m}{r}>1$.
\item In the case when (\ref{eq:AlmostTrappdIntro}) holds but (\ref{eq:BoundForTrappedSurfaceFormationIntro})
is violated, we can bound 
\begin{equation}
1-c(\text{\textgreek{e}})<\frac{2\mathcal{E}_{\text{\textgreek{z}}_{0};n(\text{\textgreek{e}})}}{r_{0\text{\textgreek{e}}}}\le1.\label{eq:SecondScenario}
\end{equation}
In this case, $\text{\textgreek{z}}_{0}$ reaches $\{r=r_{0\text{\textgreek{e}}}\}$
for the $n(\text{\textgreek{e}})+1$-th time before a trapped surface
has formed. One would be tempted to repeat the above procedure for
one more reflection, in an attempt to establish that a trapped surface
has formed before the $n(\text{\textgreek{e}})+2$-th reflection of
$\text{\textgreek{z}}_{0}$ off $\text{\textgreek{g}}_{0}$. However,
the bound (\ref{eq:SecondScenario}) implies that most of the bootstrap
assumptions needed for the proof of Theorem \ref{thm:TheoremDetailedIntro}
(which we have supressed in this sketch for the sake of simplicity)
are violated beyond the $n(\text{\textgreek{e}})+1$-th reflection
and, thus, the above procedure can not be repeated. For this reason,
we choose a different path: Applying a Cauchy stability statement
backwards in time (see Theorem \ref{prop:CauchyStability}), we show
that there exists a small perturbation $(r',(\text{\textgreek{W}}^{\prime})^{2},\text{\textgreek{t}}',\bar{\text{\textgreek{t}}}')^{(\text{\textgreek{e}})}|_{u=0}$
of the initial data $(r,\text{\textgreek{W}}^{2},\text{\textgreek{t}},\bar{\text{\textgreek{t}}})^{(\text{\textgreek{e}})}|_{u=0}$(satisfying
(\ref{eq:InitialDataNorm})), such that the perturbed solution $(r',(\text{\textgreek{W}}^{\prime})^{2},\text{\textgreek{t}}',\bar{\text{\textgreek{t}}}')^{(\text{\textgreek{e}})}$
to (\ref{eq:EinsteinNullDust}) satisfies (\ref{eq:BoundForTrappedSurfaceFormationIntro})
and, furthermore, 
\begin{equation}
\frac{2\mathcal{E}_{\text{\textgreek{z}}_{0};n(\text{\textgreek{e}})}^{\prime}}{r_{0\text{\textgreek{e}}}}>1
\end{equation}
(where $\mathcal{E}{}_{\text{\textgreek{z}}_{0}}^{\prime}$ is similarly
defined by the relation \ref{eq:MassDifferenceIntro} for $(r',(\text{\textgreek{W}}^{\prime})^{2},\text{\textgreek{t}}',\bar{\text{\textgreek{t}}}')^{(\text{\textgreek{e}})}$
in place of $(r,\text{\textgreek{W}}^{2},\text{\textgreek{t}},\bar{\text{\textgreek{t}}})^{(\text{\textgreek{e}})}$).
Thus, we end up in the scenario of Case 1, and the proof of Theorem
\ref{thm:TheoremDetailedIntro} follows readily.
\end{casenv}

\subsubsection*{Further remarks on the proof of Theorem \ref{thm:TheoremDetailedIntro}}

The proof of Theorem \ref{thm:TheoremDetailedIntro} involves many
technical issues related to the final step of the evolution before
a trapped surface is formed. Most of these technical issues simplify
considerably in the case when one restricts to showing a weaker instability
statement for $(\mathcal{M}_{AdS},g_{AdS})$, e.\,g.~by replacing
(\ref{eq:TrappedSurfaceIntro}) with 
\begin{equation}
\big(1-\frac{2m}{r}\big)^{(\text{\textgreek{e}})}\big|_{(u_{\text{\textgreek{e}}},v_{\text{\textgreek{e}}})}<\frac{1}{2}.
\end{equation}
See Sections \ref{sec:The-main-result:Details} and \ref{sub:Remark-on-Proposition}
for more details.

The mechanism leading to trapped surface formation in the proof of
Theorem \ref{thm:TheoremDetailedIntro} only made use of the fact
that we chose the initial data $(r,\text{\textgreek{W}}^{2},\text{\textgreek{t}},\bar{\text{\textgreek{t}}})^{(\text{\textgreek{e}})}|_{u=0}$
so that the matter was supported in narrow null beams, successively
reflected off $\text{\textgreek{g}}_{0}$ and $\mathcal{I}$, while
the matter model satisfied the condition 
\begin{equation}
T_{uv}=\text{\textgreek{W}}^{2}g^{AB}T_{AB}=0.
\end{equation}
Thus, we expect that the same mechanism can be adapted to the case
of more general matter fields, which allow for matter to be arranged
into narrow and suficiently localised null beams, satisfying (in a
region around the set of intersection of the beams) 
\begin{equation}
T_{uv},|\text{\textgreek{W}}^{2}g^{AB}T_{AB}|\ll T_{uu}+T_{vv},
\end{equation}
with such a configuration arising moreover from inital data which
are small in a norm for which $(\mathcal{M}_{AdS},g_{AdS})$ is Cauchy
stable. For an application of this mechanism in the case of the spherically
symmetric Einstein--massless Vlasov system (without reducing to the
radial case and without an inner mirror), see our forthcoming \cite{MoschidisVlasov}.

Finally, let us remark that the general mechanism of instability suggested
by the proof of Theorem \ref{thm:TheoremDetailedIntro} can be summarized
as follows: In a configuration consisting of a relatively narrow bundle
of nearly-null beams of matter that are successively reflected on
$\mathcal{I}$ and $r=0$ (on an approximately $(\mathcal{M}_{AdS},g_{AdS})$
background), the energy content of the ``top'' beam will increase
after each pair of reflections. A similar physical space mechanism
was described for the Einstein--scalar field system (\ref{eq:EinsteinScalarFieldInDoubleNull})
in \cite{DimitrakopoulosEtAl}, where it was suggested that, on a
nearly-null scalar field beam successively reflected off $\mathcal{I}$
and the center $r=0$, the energy density on the top part of the beam
tends to increase.

\subsection{Outline of the paper}

This paper is organised as follows:

In Section \ref{sec:The-Einstein--Vlasov-system}, we will introduce
the spherically symmetric Einstein--radial massless Vlasov system
in double null coordinates. We will also formulate the notion of reflecting
boundary conditions for this system on $\mathcal{I}$ and on timelike
hypersurfaces of the form $\{r=r_{0}\}$.

In Section \ref{sec:ResultsFromTheOtherPaper}, we will formulate
the asymptotically AdS characteristic initial-boundary value problem
for the spherically symmetric Einstein--radial massless Vlasov system.
We will then recall the main results established in \cite{MoschidisMaximalDevelopment}
regarding the structure of the maximal development and the Cauchy
stability properties for this system.

In Section \ref{sec:The-main-result:Details}, we will provide a technical
statement of the main result of this paper, namely the instability
of AdS for the Einstein--radial massless Vlasov system with reflecting
boundary conditions on $\{r=r_{0}\}$ and $\mathcal{I}$. The proof
of this result will occupy Sections \ref{sec:Preliminary-constructions}
and \ref{sec:Proof}.

\subsection{Acknowledgements}

I would like to thank my advisor Mihalis Dafermos for suggesting this
problem to me, as well for numerous fruitful discussions and crucial
suggestions. I would also like to thank Igor Rodnianski for many additional
comments and suggestions. This work was completed while the author
was a visitor at DPMMS and King's College, Cambridge.

\section{\label{sec:The-Einstein--Vlasov-system}The Einstein--massless Vlasov
system in spherical symmetry}

In this Section, we will review the basic properties of the spherically
symmetric Einstein--massless Vlasov system in $3+1$ dimensions, expressed
in double null coordinates, following the conventions introduced in
\cite{DafermosRendall}. We will also introduce the notion of the
reflecting boundary condition on timelike hypersurfaces for the radial
massless Vlasov equation. To this end, we will follow the conventions
adopted in our companion paper \cite{MoschidisMaximalDevelopment}.

\subsection{\label{sub:Spherically-symmetric-spacetimes}Spherically symmetric
spacetimes in double null coordinates}

Let $(\mathcal{M}^{3+1},g)$ be a smooth Lorentzian manifold, such
that $\mathcal{M}$ is of the form 
\begin{equation}
\mathcal{M}\simeq\mathcal{U}\times\mathbb{S}^{2}\label{eq:SphericallySymmetricmanifold}
\end{equation}
where $\mathcal{U}$ is an open domain of $\mathbb{R}^{2}$ with piecewise
Lipschitz boundary $\partial\mathcal{U}$ and, in the standard $(u,v)$
coordinates on $\mathcal{U}$, $g$ takes the form 
\begin{equation}
g=-\text{\textgreek{W}}^{2}(u,v)dudv+r^{2}(u,v)g_{\mathbb{S}^{2}},\label{eq:SphericallySymmetricMetric}
\end{equation}
where $g_{\mathbb{S}^{2}}$ is the standard round metric on $\mathbb{S}^{2}$
and $\text{\textgreek{W}},r:\mathcal{U}\rightarrow(0,+\infty)$ are
smooth functions. In addition, we will assume that 
\begin{equation}
\inf_{\mathcal{U}}r>0.
\end{equation}
We will also fix a time orientation on $\mathcal{M}$ by requiring
that the timelike vector field $N=\partial_{u}+\partial_{v}$ is future
directed.
\begin{rem*}
Notice that the action of $SO(3)$ on $(\mathcal{M},g)$ through rotations
of the $\mathbb{S}^{2}$ factor of (\ref{eq:SphericallySymmetricmanifold})
is an isometric action.
\end{rem*}
We will also define the \emph{Hawking mass} $m:\mathcal{M}\rightarrow\mathbb{R}$
by the expression 
\begin{equation}
m=\frac{r}{2}\big(1-g(\nabla r,\nabla r)\big).
\end{equation}
 Viewed as a function on $\mathcal{U}$, $m$ takes the form: 
\begin{equation}
m=\frac{r}{2}\big(1+4\text{\textgreek{W}}^{-2}\partial_{u}r\partial_{v}r\big).\label{eq:DefinitionHawkingMass}
\end{equation}
Equivalently, we have 
\begin{equation}
\text{\textgreek{W}}^{2}=4\frac{(-\partial_{u}r)\partial_{v}r}{1-\frac{2m}{r}}.\label{eq:RelationHawkingMass}
\end{equation}

In any local coordinate chart $(y^{1},y^{2})$ on $\mathbb{S}^{2}$,
the non-zero Christoffel symbols of (\ref{eq:SphericallySymmetricMetric})
in the $(u,v,y^{1},y^{2})$ local coordinate chart on $\mathcal{M}$
are computed as follows: 
\begin{gather}
\text{\textgreek{G}}_{uu}^{u}=\partial_{u}\log(\text{\textgreek{W}}^{2}),\hphantom{A}\text{\textgreek{G}}_{vv}^{v}=\partial_{u}\log(\text{\textgreek{W}}^{2}),\label{eq:ChristoffelSymbols}\\
\text{\textgreek{G}}_{AB}^{u}=\text{\textgreek{W}}^{-2}\partial_{v}(r^{2})(g_{\mathbb{S}^{2}})_{AB},\hphantom{\,}\text{\textgreek{G}}_{AB}^{v}=\text{\textgreek{W}}^{-2}\partial_{u}(r^{2})(g_{\mathbb{S}^{2}})_{AB},\nonumber \\
\text{\textgreek{G}}_{uB}^{A}=r^{-1}\partial_{u}r\text{\textgreek{d}}_{B}^{A},\hphantom{\,}\text{\textgreek{G}}_{vB}^{A}=r^{-1}\partial_{v}r\text{\textgreek{d}}_{B}^{A},\nonumber \\
\text{\textgreek{G}}_{BC}^{A}=(\text{\textgreek{G}}_{\mathbb{S}^{2}})_{BC}^{A},\nonumber 
\end{gather}
where the latin indices $A,B,C$ are associated to the spherical coordinates
$y^{1},y^{2}$, $\text{\textgreek{d}}_{B}^{A}$ is Kronecker delta
and $\text{\textgreek{G}}_{\mathbb{S}^{2}}$ are the Christoffel symbols
of the round sphere in the $(y^{1},y^{2})$ coordinate chart.

For any pair of smooth functions $f_{1},f_{2}:\mathbb{R}\rightarrow\mathbb{R}$
with $f_{1}^{\prime},f_{2}^{\prime}\neq0$, the coordinate transformation
\begin{equation}
(\bar{u},\bar{v})=(f_{1}(u),f_{2}(v)),\label{eq:GeneralCoordinateTransformation}
\end{equation}
mapping $\mathcal{U}$ to $\bar{\mathcal{U}}\subset\mathbb{R}^{2}$,
can be used to diffeomorphically identify $\mathcal{M}$ with $\bar{\mathcal{U}}\times\mathbb{S}^{2}$.
In these new coordinates, the metric $g$ takes the form 
\begin{equation}
g=-\bar{\text{\textgreek{W}}}^{2}(\bar{u},\bar{v})d\bar{u}d\bar{v}+r^{2}(\bar{u},\bar{v})g_{\mathbb{S}^{2}},\label{eq:SphericallySymmetricMetricNewGauge}
\end{equation}
where 
\begin{gather}
\bar{\text{\textgreek{W}}}^{2}(\bar{u},\bar{v})=\frac{1}{f_{1}^{\prime}f_{2}^{\prime}}\text{\textgreek{W}}^{2}(f_{1}^{-1}(\bar{u}),f_{2}^{-1}(\bar{v})),\label{eq:NewOmega}\\
r(\bar{u},\bar{v})=r(f_{1}^{-1}(\bar{u}),f_{2}^{-1}(\bar{v})).\label{eq:NewR}
\end{gather}
We will frequently make use of such coordinate transformations, without
renaming the coordinates each time. 

Note that $m$ is invariant under coordinate transformations of the
form $(u,v)\rightarrow(f_{1}(u),f_{2}(v))$, i.\,e.
\begin{equation}
m(\bar{u},\bar{v})=m(f_{1}^{-1}(\bar{u}),f_{2}^{-1}(\bar{v})).
\end{equation}

\subsection{\label{sub:VlasovEquations}The radial massless Vlasov equation}

Let $(\mathcal{M},g)$ be as in Section \ref{sub:Spherically-symmetric-spacetimes}.
Let $f\ge0$ be a measure on $T\mathcal{M}$ which is constant along
the geodesic flow, that is to say, in any local coordinate chart $(x^{0},x^{1},x^{2},x^{3})$
on $\mathcal{M}$ with associated momentum coordinates $(p^{0},p^{1},p^{2},p^{3})$
on the fibers of $T\mathcal{M}$, $f$ satisfies (as a distribution)
the first order equation 
\begin{equation}
p^{\text{\textgreek{a}}}\partial_{x^{\text{\textgreek{a}}}}f-\text{\textgreek{G}}_{\text{\textgreek{b}\textgreek{g}}}^{\text{\textgreek{a}}}p^{\text{\textgreek{b}}}p^{\text{\textgreek{g}}}\partial_{p^{\text{\textgreek{a}}}}f=0,\label{eq:VlasovEquation}
\end{equation}
where $\text{\textgreek{G}}_{\text{\textgreek{b}\textgreek{g}}}^{\text{\textgreek{a}}}$
are the Christoffel symbols of $g$ in the chart $(x^{0},x^{1},x^{2},x^{3})$.
We will call $f$ a \emph{massless Vlasov field} if it is supported
on the set $P\subset T\mathcal{M}$ of null vectors, i.\,e.~on the
set 
\begin{equation}
g_{\text{\textgreek{a}\textgreek{b}}}(x)p^{\text{\textgreek{a}}}p^{\text{\textgreek{b}}}=0.
\end{equation}

Associated to $f$ is a symmetric $(0,2)$-form on $\mathcal{M}$
(possibly defined only in the sense of distributions), the \emph{energy
momentum }tensor of $f$, given by the expression 
\begin{equation}
T_{\text{\textgreek{a}\textgreek{b}}}(x)=\int_{\text{\textgreek{p}}^{-1}(x)}p_{\text{\textgreek{a}}}p_{\text{\textgreek{b}}}f,\label{eq:EnergyMomentumTensor}
\end{equation}
where $\text{\textgreek{p}}^{-1}(x)$ denotes the fiber of $T\mathcal{M}$
over $x\in\mathcal{M}$ and the indices of the momentum coordinates
are lowered with the use of the metric $g$, i.\,e.
\begin{equation}
p_{\text{\textgreek{g}}}=g_{\text{\textgreek{g}}\text{\textgreek{d}}}(x)p^{\text{\textgreek{d}}}.
\end{equation}

\begin{rem*}
In this paper, we will only consider distributions $f$ for which
the expression (\ref{eq:EnergyMomentumTensor}) is finite for all
$x\in\mathcal{M}$ and depends smoothly on $x\in\mathcal{M}$. 
\end{rem*}
We will consider only distributions $f$ which are spherically symmetric,
i.\,e.~invariant under the action of $SO(3)$ on $\mathcal{M}$.
In that case, in any $(u,v,y^{1},y^{2})$ local coordinate chart as
in Section \ref{sub:Spherically-symmetric-spacetimes}, the energy-momentum
tensor $T$ is of the form 
\begin{equation}
T=T_{uu}(u,v)du^{2}+2T_{uv}(u,v)dudv+T_{vv}(u,v)dv^{2}+T_{AB}(u,v)dy^{A}dy^{B}.\label{eq:SphericallySymmetricTensor}
\end{equation}
 Furthermore, we will restrict to \emph{radial }Vlasov fields $f$,
i.\,e.~fields supported only on radial null vectors which are normal
to the orbits of the action of $SO(3)$ on $\mathcal{M}$. In any
$(u,v,y^{1},y^{2})$ local coordinate chart as in Section \ref{sub:Spherically-symmetric-spacetimes}
(with associated momentum coordinates $(p^{u},p^{v},p^{1},p^{2})$),
a spherically symmetric, radial massless Vlasov field $f$ has the
form 
\begin{equation}
f(u,v,y^{1},y^{2};p^{u},p^{v},p^{1},p^{2})=\big(\bar{f}_{in}(u,v;p^{u})+\bar{f}_{out}(u,v;p^{v})\big)\text{\textgreek{d}}\big(\sqrt{(g_{\mathbb{S}^{2}})_{AB}p^{A}p^{B}}\big)\text{\textgreek{d}}(\text{\textgreek{W}}^{2}p^{u}p^{v}),\label{eq:RadialVlasovField}
\end{equation}
where $\bar{f}_{in},\bar{f}_{out}\ge0$ and $\text{\textgreek{d}}$
is the Dirac delta funcion on $\mathbb{R}$. In this case, the only
non-zero components of the energy momentum tensor (\ref{eq:EnergyMomentumTensor})
are the $T_{uu}$ and $T_{vv}$ components. In particular, in terms
of $\bar{f}_{in},\bar{f}_{out}$, we (formally) compute that 
\begin{gather}
T_{uu}(u,v)=\int_{0}^{+\infty}\text{\textgreek{W}}^{4}(p^{v})^{2}\bar{f}_{out}(u,v;p^{v})\, r^{2}\frac{dp^{v}}{p^{v}},\label{eq:T_uuComponent}\\
T_{vv}(u,v)=\int_{0}^{+\infty}\text{\textgreek{W}}^{4}(p^{u})^{2}\bar{f}_{in}(u,v;p^{u})\, r^{2}\frac{dp^{u}}{p^{u}}.\label{eq:T_vvComponent}
\end{gather}

\begin{rem*}
In this paper, we will only consider the case when $\bar{f}_{in},\bar{f}_{out}$
are smooth and compactly supported in the $p^{u},p^{v}$ variables,
respectively.
\end{rem*}
In the case when $f$ is of the form (\ref{eq:RadialVlasovField}),
equation (\ref{eq:VlasovEquation}) is equivalent to the following
system for $\bar{f}_{in}$ and $\bar{f}_{out}$:
\begin{gather}
\partial_{u}(\text{\textgreek{W}}^{4}r^{4}p^{u}\bar{f}_{in})+p^{u}\partial_{p^{u}}(\text{\textgreek{W}}^{4}r^{4}p^{u}\bar{f}_{in})=0,\label{eq:IngoingEquation}\\
\partial_{v}(\text{\textgreek{W}}^{4}r^{4}p^{v}\bar{f}_{out})+p^{v}\partial_{p^{v}}(\text{\textgreek{W}}^{4}r^{4}p^{v}\bar{f}_{out})=0.\label{eq:OutgoingEquation}
\end{gather}
The equations (\ref{eq:IngoingEquation})--(\ref{eq:OutgoingEquation})
readily yield the following transport equations for $T_{uu}$, $T_{vv}$:
\begin{gather}
\partial_{v}(r^{2}T_{uu})=0,\label{eq:EquationT_uu}\\
\partial_{u}(r^{2}T_{vv})=0.\label{eq:EquationT_vv}
\end{gather}

\begin{rem*}
Under a coordinate transformation of the form (\ref{eq:GeneralCoordinateTransformation}),
$\bar{f}_{in},\bar{f}_{out}$ transform as 
\begin{equation}
\bar{f}_{in}^{(new)}(f_{1}(u),f_{2}(v);f_{1}^{\prime}(u)p)=\bar{f}_{in}\big(u,v;p\big)\label{eq:NewIngoingVlasov}
\end{equation}
and 
\begin{equation}
\bar{f}_{out}^{(new)}(f_{1}(u),f_{2}(v);f_{2}^{\prime}(v)p)=\bar{f}_{out}\big(u,v;p\big).\label{eq:NewOutgoingVlasov}
\end{equation}

\end{rem*}

\subsection{\label{sub:The-Einstein-equations}The spherically symmetric Einstein--radial
massless Vlasov system}

Let $(\mathcal{M},g)$ be a smooth Lorentzian manifold and let $\text{\textgreek{L}}<0$.
Let also $f$ be a non-negative measure on $T\mathcal{M}$. The \emph{Einstein--Vlasov}
system for $(\mathcal{M},g;f)$ with cosmological constant $\Lambda$
is 
\begin{equation}
\begin{cases}
Ric_{\text{\textgreek{m}\textgreek{n}}}(g)-\frac{1}{2}R(g)g_{\text{\textgreek{m}\textgreek{n}}}+\text{\textgreek{L}}g_{\text{\textgreek{m}\textgreek{n}}}=8\text{\textgreek{p}}T_{\text{\textgreek{m}\textgreek{n}}},\\
p^{\text{\textgreek{a}}}\partial_{x^{\text{\textgreek{a}}}}f-\text{\textgreek{G}}_{\text{\textgreek{b}\textgreek{g}}}^{\text{\textgreek{a}}}p^{\text{\textgreek{b}}}p^{\text{\textgreek{g}}}\partial_{p^{\text{\textgreek{a}}}}f=0,
\end{cases}\label{eq:EinsteinVlasovEquations}
\end{equation}
where $T_{\text{\textgreek{m}\textgreek{n}}}$ is expressed in terms
of $f$ by (\ref{eq:EnergyMomentumTensor}).

Restricting to the case where $(\mathcal{M},g)$ is a spherically
symmetric spacetime as in Section \ref{sub:Spherically-symmetric-spacetimes}
and $f$ is a radial massless Vlasov field (i.\,e.~has the form
(\ref{eq:RadialVlasovField})), the system (\ref{eq:EinsteinVlasovEquations})
is equivalent to the following system for $(r,\text{\textgreek{W}}^{2},\bar{f}_{in},\bar{f}_{out})$:
\begin{align}
\partial_{u}\partial_{v}(r^{2})= & -\frac{1}{2}(1-\Lambda r^{2})\text{\textgreek{W}}^{2},\label{eq:RequationFinal}\\
\partial_{u}\partial_{v}\log(\text{\textgreek{W}}^{2})= & \frac{\text{\textgreek{W}}^{2}}{2r^{2}}\big(1+4\text{\textgreek{W}}^{-2}\partial_{u}r\partial_{v}r\big),\label{eq:OmegaEquationFinal}\\
\partial_{v}(\text{\textgreek{W}}^{-2}\partial_{v}r)= & -4\pi rT_{vv}\text{\textgreek{W}}^{-2},\label{eq:ConstrainVFinal}\\
\partial_{u}(\text{\textgreek{W}}^{-2}\partial_{u}r)= & -4\pi rT_{uu}\text{\textgreek{W}}^{-2},\label{eq:ConstraintUFinal}\\
\partial_{u}(\text{\textgreek{W}}^{4}r^{4}p^{u}\bar{f}_{in})= & -p^{u}\partial_{p^{u}}(\text{\textgreek{W}}^{4}r^{4}p^{u}\bar{f}_{in}),\label{eq:IngoingVlasovFinal}\\
\partial_{v}(\text{\textgreek{W}}^{4}r^{4}p^{v}\bar{f}_{out})= & -p^{v}\partial_{p^{v}}(\text{\textgreek{W}}^{4}r^{4}p^{v}\bar{f}_{out}),\label{eq:OutgoingVlasovFinal}
\end{align}
where $T_{uu},T_{vv}$ are expressed in terms of $\bar{f}_{out},\bar{f}_{in}$
by (\ref{eq:T_uuComponent}), (\ref{eq:T_vvComponent}), respectively.
Notice that the system (\ref{eq:RequationFinal})--(\ref{eq:OutgoingVlasovFinal})
reduces to the following system for $(r,\text{\textgreek{W}}^{2},T_{uu},T_{vv})$:
\begin{align}
\partial_{u}\partial_{v}(r^{2})= & -\frac{1}{2}(1-\Lambda r^{2})\text{\textgreek{W}}^{2},\label{eq:RequationFinal-2}\\
\partial_{u}\partial_{v}\log(\text{\textgreek{W}}^{2})= & \frac{\text{\textgreek{W}}^{2}}{2r^{2}}\big(1+4\text{\textgreek{W}}^{-2}\partial_{u}r\partial_{v}r\big),\label{eq:OmegaEquationFinal-2}\\
\partial_{v}(\text{\textgreek{W}}^{-2}\partial_{v}r)= & -4\pi rT_{vv}\text{\textgreek{W}}^{-2},\label{eq:ConstrainVFinal-1}\\
\partial_{u}(\text{\textgreek{W}}^{-2}\partial_{u}r)= & -4\pi rT_{uu}\text{\textgreek{W}}^{-2},\label{eq:ConstraintUFinal-1}\\
\partial_{u}(r^{2}T_{vv})= & 0,\label{eq:IngoingConservationClosed}\\
\partial_{v}(r^{2}T_{uu})= & 0.\label{eq:OutgoingConservationClosed}
\end{align}

\begin{rem*}
The system (\ref{eq:RequationFinal-2})--(\ref{eq:OutgoingConservationClosed})
is the Einstein--null dust system with both ingoing and outgoing dust
(used as a model for self-gravitating radiation already in \cite{PoissonIsrael1990}).
In the notation of Section \ref{sub:NeedOfAMirror} of the introduction,
\begin{equation}
r^{2}T_{vv}=\bar{\text{\textgreek{t}}}
\end{equation}
 and 
\begin{equation}
r^{2}T_{uu}=\text{\textgreek{t}}.
\end{equation}

\end{rem*}
Defining the renormalised Hawking mass as 
\begin{equation}
\tilde{m}\doteq m-\frac{1}{6}\Lambda r^{3},\label{eq:RenormalisedHawkingMass}
\end{equation}
 and using the relation (\ref{eq:DefinitionHawkingMass}), equations
(\ref{eq:RequationFinal-2})--(\ref{eq:OutgoingConservationClosed})
formally give rise to the following system for $(r,\tilde{m},T_{uu},T_{vv})$
(valid in the region of $\mathcal{U}$ where $\partial_{v}r>0$, $\partial_{u}r<0$
and $1-\frac{2m}{r}>0$):

\begin{align}
\partial_{u}\log\big(\frac{\partial_{v}r}{1-\frac{2m}{r}}\big)= & -4\pi r^{-1}\frac{r^{2}T_{uu}}{-\partial_{u}r},\label{eq:DerivativeInUDirectionKappa}\\
\partial_{v}\log\big(\frac{-\partial_{u}r}{1-\frac{2m}{r}}\big)= & 4\pi r^{-1}\frac{r^{2}T_{vv}}{\partial_{v}r},\label{eq:DerivativeInVDirectionKappaBar}\\
\partial_{u}\partial_{v}r= & -\frac{2\tilde{m}-\frac{2}{3}\Lambda r^{3}}{r^{2}}\frac{(-\partial_{u}r)\partial_{v}r}{1-\frac{2m}{r}},\label{eq:EquationRForProof}\\
\partial_{u}\tilde{m}= & -2\pi\frac{\big(1-\frac{2m}{r}\big)}{-\partial_{u}r}r^{2}T_{uu},\label{eq:DerivativeTildeUMass}\\
\partial_{v}\tilde{m}= & 2\pi\frac{\big(1-\frac{2m}{r}\big)}{\partial_{v}r}r^{2}T_{vv}\label{eq:DerivativeTildeVMass}\\
\partial_{u}(r^{2}T_{vv})= & 0,\label{eq:ConservationT_vv}\\
\partial_{v}(r^{2}T_{uu})= & 0.\label{eq:ConservationT_uu}
\end{align}

\subsection{\label{sub:Reflective-boundary-conditions}The reflecting boundary
condition for the Vlasov equation}

Let $(\mathcal{M},g)$ be as in Section \ref{sub:Spherically-symmetric-spacetimes}.
Recall that $\mathcal{M}$ splits topologically as the product 
\[
\mathcal{M}\simeq\mathcal{U}\times\mathbb{S}^{2}.
\]
 Let $\partial_{tim}\mathcal{U}$ be the subset of the boundary $\partial\mathcal{U}$
of $\mathcal{U}\subset\mathbb{R}^{2}$ consisting of a union of connected,
timelike Lipschitz curves with respect to the comparison metric 
\begin{equation}
g_{comp}=-dudv\label{eq:ComparisonUVMetric}
\end{equation}
 on $\mathbb{R}^{2}$. Recall that a connected Lipschitz curve $\text{\textgreek{g}}$
in $\mathbb{R}^{2}$ is said to be timelike with respect to (\ref{eq:ComparisonUVMetric})
if, for every point $p=(u_{*},v_{*})\in\text{\textgreek{g}}$, we
have 
\begin{equation}
\text{\textgreek{g}}\backslash p\subset I^{+}(p)\cup I^{-}(p)\doteq\big(\{u>u_{*}\}\cap\{v>v_{*}\}\big)\cup\big(\{u<u_{*}\}\cap\{v<v_{*}\}\big).
\end{equation}

\begin{figure}[h] 
\centering 
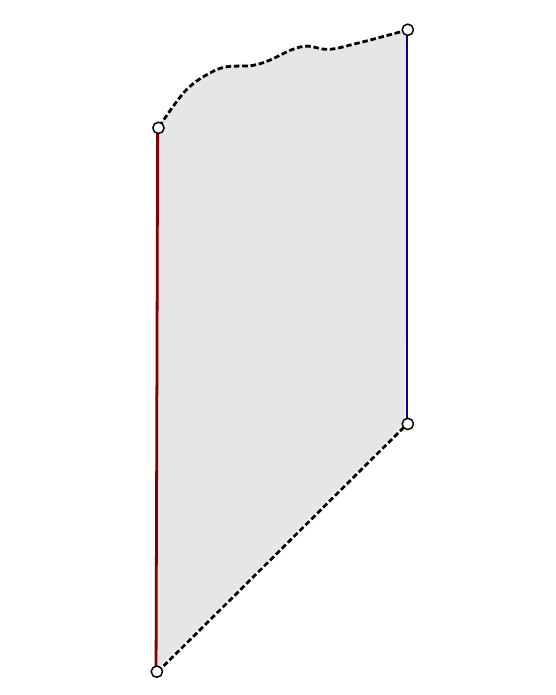 
\caption{For a domain $\mathcal{U}\subset \mathbb{R}^{2}$ as depicted above, the timelike portion $\partial_{tim}\mathcal{U}$  of the boundary $\partial \mathcal{U}$ splits as the union of a "left" component $\partial^{\vdash}_{tim}\mathcal{U}$ and a "right" component $\partial^{\dashv}_{tim}\mathcal{U}$. In general, $\partial^{\vdash}_{tim}\mathcal{U}$ and $\partial^{\dashv}_{tim}\mathcal{U}$ need not necessarily be straight line segments as depicted above. However, in the following sections, we will impose a gauge condition on the domains under consideration that will indeed fix $\partial^{\vdash}_{tim}\mathcal{U}$ and $\partial^{\dashv}_{tim}\mathcal{U}$ to be vertical line segments (see Definitions \ref{def:DevelopmentSets} and \ref{def:Development}).}
\end{figure}

Let us fix $w:\mathcal{U}\cup\partial_{tim}\mathcal{U}\rightarrow\mathbb{R}$
to be a smooth boundary defining function of $\partial_{tim}\mathcal{U}$,
i.\,e.~
\[
w|_{\partial_{tim}\mathcal{U}}=0,
\]
\[
dw|_{\partial_{tim}\mathcal{U}}\neq0
\]
 and 
\[
w|_{\mathcal{U}}>0.
\]
 We can split $\partial_{tim}\mathcal{U}$ into its ``left'' and
``right'' components as 
\begin{equation}
\partial_{tim}\mathcal{U}=\partial_{tim}^{\vdash}\mathcal{U}\cup\partial_{tim}^{\dashv}\mathcal{U},
\end{equation}
where 
\begin{gather*}
\partial_{tim}^{\vdash}\mathcal{U}=\big\{(u_{0},v_{0})\in\partial_{tim}\mathcal{U}\,:\,\partial_{v}w(u_{0},v_{0})>0\big\},\\
\partial_{tim}^{\dashv}\mathcal{U}=\big\{(u_{0},v_{0})\in\partial_{tim}\mathcal{U}\,:\,\partial_{v}w(u_{0},v_{0})<0\big\}.
\end{gather*}

\begin{rem*}
Notice that any future directed radial null geodesic of $\mathcal{M}=\mathcal{U}\times\mathbb{S}^{2}$
with a future limiting point on $\partial_{tim}^{\vdash}\mathcal{U}\times\mathbb{S}^{2}$
(in the ambient $\mathbb{R}^{2}\times\mathbb{S}^{2}$ topology of
$\bar{\mathcal{U}}\times\mathbb{S}^{2}$) is necessarily ingoing.
Similarly, future directed radial null geodesics ``terminating''
at $\partial_{tim}^{\dashv}\mathcal{U}\times\mathbb{S}^{2}$ are necessarily
outgoing.

In the next sections, we will only consider the reflection of radial
null geodesics on parts of $\partial_{tim}\mathcal{U}$ for which
either $r-r_{0}$ (for some constant $r_{0}>0$) or $1/r$ is a boundary
defining function.
\end{rem*}
Following \cite{MoschidisMaximalDevelopment}, we will define the
reflecting boundary condition on $\partial_{tim}\mathcal{U}$ for
the radial massless Vlasov equation as follows:
\begin{defn*}
A radial massless Vlasov field $f$ on $T\mathcal{M}$ will be said
to satisfy the \emph{reflecting boundary condition} on $\partial_{tim}\mathcal{U}\times\mathbb{S}^{2}$
if and only if 

\begin{itemize}

\item{For any $(u_{0},v_{0})\in\partial_{tim}^{\vdash}\mathcal{U}$
and any $p>0$:
\begin{equation}
\lim_{h\rightarrow0^{+}}\Bigg(\frac{\bar{f}_{out}\big(u_{0},v_{0}+h;\,\frac{-\partial_{u}w}{\partial_{v}w}(u_{0},v_{0})\cdot\text{\textgreek{W}}^{-2}(u_{0},v_{0}+h)\cdot p\big)}{\bar{f}_{in}\big(u_{0}-h,v_{0};\,\text{\textgreek{W}}^{-2}(u_{0}-h,v_{0})\cdot p\big)}\Bigg)=1.\label{eq:LeftBoundaryCondition}
\end{equation}
}

\item{For any $(u_{1},v_{1})\in\partial_{tim}^{\dashv}\mathcal{U}$
and any $p>0$:
\begin{equation}
\lim_{h\rightarrow0^{+}}\Bigg(\frac{\bar{f}_{in}\big(u_{1}+h,v_{1};\,\frac{-\partial_{v}w}{\partial_{u}w}(u_{1},v_{1})\cdot\text{\textgreek{W}}^{-2}(u_{1}+h,v_{1})\cdot p\big)}{\bar{f}_{out}\big(u_{1},v_{1}-h;\,\text{\textgreek{W}}^{-2}(u_{1},v_{1}-h)\cdot p\big)}\Bigg)=1.\label{eq:RightBoundaryCondition}
\end{equation}
}

\end{itemize}
\end{defn*}
Note that the relations (\ref{eq:LeftBoundaryCondition}) and (\ref{eq:RightBoundaryCondition})
for $\bar{f}_{in},\bar{f}_{out}$ imply the following boundary relations
for the components (\ref{eq:T_uuComponent})--(\ref{eq:T_vvComponent})
of the energy momentum tensor $T$: 
\begin{itemize}
\item For any $(u_{0},v_{0})\in\partial_{tim}^{\vdash}\mathcal{U}$:
\begin{equation}
\lim_{h\rightarrow0^{+}}\frac{r^{2}T_{uu}(u_{0},v_{0}+h)}{r^{2}T_{vv}(u_{0}-h,v_{0})}=\Big(\frac{-\partial_{u}w}{\partial_{v}w}(u_{0},v_{0})\Big)^{2}.\label{eq:LeftBoundaryConditionT}
\end{equation}

\item For any $(u_{1},v_{1})\in\partial_{tim}^{\dashv}\mathcal{U}$:
\begin{equation}
\lim_{h\rightarrow0^{+}}\frac{r^{2}T_{vv}(u_{1}+h,v_{1})}{r^{2}T_{uu}(u_{1},v_{1}-h)}=\Big(\frac{-\partial_{v}w}{\partial_{u}w}(u_{0},v_{0})\Big)^{2}.\label{eq:RightBoundaryConditionT}
\end{equation}

\end{itemize}

\section{\label{sec:ResultsFromTheOtherPaper}The boundary--characteristic
initial value problem: well-posedness and Cauchy stability }

In this Section, we will formulate the asymptotically AdS initial
value problem for the system (\ref{eq:RequationFinal})--(\ref{eq:OutgoingVlasovFinal})
with reflecting boundary conditions on $\{r=r_{0}\}$ and $\mathcal{I}$,
for some $r_{0}>0$. We will then recall the main results established
in \cite{MoschidisMaximalDevelopment}, regarding the well-posedness
and the structure of the maximal development for this system.

\subsection{Asymptotically AdS characteristic initial data}

The following definition was introduced in \cite{MoschidisMaximalDevelopment}:

\begin{customdef}{3.1}[Definition 3.1 in \cite{MoschidisMaximalDevelopment}]\label{def:TypeII}

For any $v_{1}<v_{2}$ and any $r_{0}>0$, let $r_{/}:[v_{1},v_{2})\rightarrow[r_{0},+\infty)$,
$\text{\textgreek{W}}_{/}:[v_{1},v_{2})\rightarrow(0,+\infty)$ and
$\bar{f}_{in/},\bar{f}_{out/}:[v_{1},v_{2})\times(0,+\infty)\rightarrow[0,+\infty)$
be $C^{\infty}$ functions, such that 
\begin{equation}
r_{/}(v_{1})=r_{0}
\end{equation}
and 
\begin{equation}
\lim_{v\rightarrow v_{2}}r_{/}(v)=+\infty.\label{eq:RGoesToInfinity}
\end{equation}
Let us define $(\partial_{u}r)_{/}:[v_{1},v_{2})\rightarrow(-\infty,0)$
by the relation 
\begin{equation}
(\partial_{u}r)_{/}(v)=\frac{1}{r_{/}(v)}\Big(-r_{/}\partial_{v}r_{/}(v_{1})-\frac{1}{4}\int_{v1}^{v}(1-\Lambda r_{/}^{2}(\bar{v}))\text{\textgreek{W}}_{/}^{2}(\bar{v})\, d\bar{v}\Big).\label{eq:TransversalDerivativeU-1}
\end{equation}
 We will call $(r_{/},\text{\textgreek{W}}_{/}^{2},\bar{f}_{in/},\bar{f}_{out/})$
an \emph{asymptotically AdS boundary-characteristic initial data set}
on $[v_{1},v_{2})$ for the system (\ref{eq:RequationFinal})--(\ref{eq:OutgoingVlasovFinal})
satisfying the reflecting gauge condition at $r=r_{0},+\infty$ if:

\begin{itemize}

\item{ $(r_{/},\text{\textgreek{W}}_{/})$ satisfies the constraint
equation 
\begin{equation}
\partial_{v}(\text{\textgreek{W}}_{/}^{-2}\partial_{v}r_{/})=-4\pi r_{/}(T_{vv})_{/}\text{\textgreek{W}}_{/}^{-2},\label{eq:ConstraintVDef}
\end{equation}
 where 
\begin{equation}
(T_{vv})_{/}(v)\doteq\int_{0}^{+\infty}\text{\textgreek{W}}_{/}^{4}(v)(p^{u})^{2}\bar{f}_{in/}(v;p^{u})\, r_{/}^{2}(v)\frac{dp^{u}}{p^{u}}.\label{eq:EnergyMomentumIntialRight}
\end{equation}
}

\item{ $\bar{f}_{out/}$ solves the massless radial Vlasov equation
\begin{gather}
\partial_{v}\big(\text{\textgreek{W}}_{/}^{4}(v)r_{/}^{4}(v)p^{v}\bar{f}_{out/}(v,p^{v})\big)+p^{v}\partial_{p^{v}}\big(\text{\textgreek{W}}_{/}^{4}(v)r_{/}^{4}(v)p^{v}\bar{f}_{out/}(v,p^{v})\big)=0.\label{eq:OutgoingEquationCombatibility}
\end{gather}
}

\item{ $(\partial_{u}r)_{/}$ satisfies 
\begin{equation}
\lim_{v\rightarrow v_{2}^{-}}\frac{(\partial_{u}r)_{/}}{\partial_{v}r_{/}}=1.\label{eq:GaugeInfinityInitialData}
\end{equation}
}

\item{ $\bar{f}_{out/},\bar{f}_{in/}$ satisfy the following compatibility
conditions at $v=v_{1},v_{2}$ for any $p>0$:
\begin{equation}
\frac{\bar{f}_{out/}\big(v_{1};\,\frac{-(\partial_{u}r)_{/}}{\partial_{v}r_{/}}(v_{1})\cdot\text{\textgreek{W}}_{/}^{-2}(v_{1})\cdot p\big)}{\bar{f}_{in/}\big(v_{1};\,\text{\textgreek{W}}_{/}^{-2}(v_{1})\cdot p\big)}=1\label{eq:LeftBoundaryConditionInitialData}
\end{equation}
and
\begin{equation}
\lim_{h\rightarrow0^{+}}\Bigg(\frac{\bar{f}_{in/}\big(v_{2}-h;\,\frac{\partial_{v}r_{/}}{-(\partial_{u}r)_{/}}(v_{2}-h)\cdot\text{\textgreek{W}}_{/}^{-2}(v_{2}-h)\cdot p\big)}{\bar{f}_{out/}\big(v_{2}-h;\,\text{\textgreek{W}}_{/}^{-2}(v_{2}-h)\cdot p\big)}\Bigg)=1.\label{eq:RightBoundaryConditionInitialData}
\end{equation}
}

\end{itemize}

\end{customdef}
\begin{rem*}
Notice that the constraint equation (\ref{eq:ConstraintVDef}) implies
\begin{equation}
\partial_{v}(\text{\textgreek{W}}_{/}^{-2}\partial_{v}r_{/})\le0.
\end{equation}
Thus, (\ref{eq:RGoesToInfinity}) yields 
\begin{equation}
\partial_{v}r_{/}>0\label{eq:NonTrappedInitialData}
\end{equation}
everywhere on $[v_{1},v_{2})$.
\end{rem*}
Given any asymptotically AdS boundary-characteristic initial data
set $(r_{/},\text{\textgreek{W}}_{/}^{2},\bar{f}_{in/},\bar{f}_{out/})$
on $[v_{1},v_{2})$ with reflecting gauge conditions at $r=r_{0},+\infty$,
we will also define the initial Hawking mass $m_{/}$ and initial
renormalised Hawking mass $\tilde{m}_{/}$ on $[v_{1}.v_{2})$ by
the relations 
\begin{equation}
m_{/}\doteq\frac{r_{/}}{2}\big(1-4\text{\textgreek{W}}_{/}^{-2}(\partial_{u}r)_{/}\partial_{v}r_{/}\big),\label{eq:DefinitionHawkingMassCharacteristic}
\end{equation}
and 
\begin{equation}
\tilde{m}_{/}\doteq m_{/}-\frac{1}{6}\Lambda r_{/}^{3},
\end{equation}
in accordance with (\ref{eq:RelationHawkingMass}), (\ref{eq:RenormalisedHawkingMass}).

\subsection{Developments with reflecting boundary conditions on $r=r_{0},+\infty$}

We will only consider solutions $(r,\text{\textgreek{W}}^{2},\bar{f}_{in},\bar{f}_{out})$
to (\ref{eq:RequationFinal})--(\ref{eq:OutgoingVlasovFinal}) satisfying
a reflecting gauge condition on $\partial_{tim}\mathcal{U}$, which
fixes $\partial_{tim}\mathcal{U}$ to be a union of vertical straight
lines in the $(u,v)$-plane. This motivates defining the following
class of domains $\mathcal{U}$ in the plane (see \cite{MoschidisMaximalDevelopment}):

\begin{figure}[h] 
\centering 
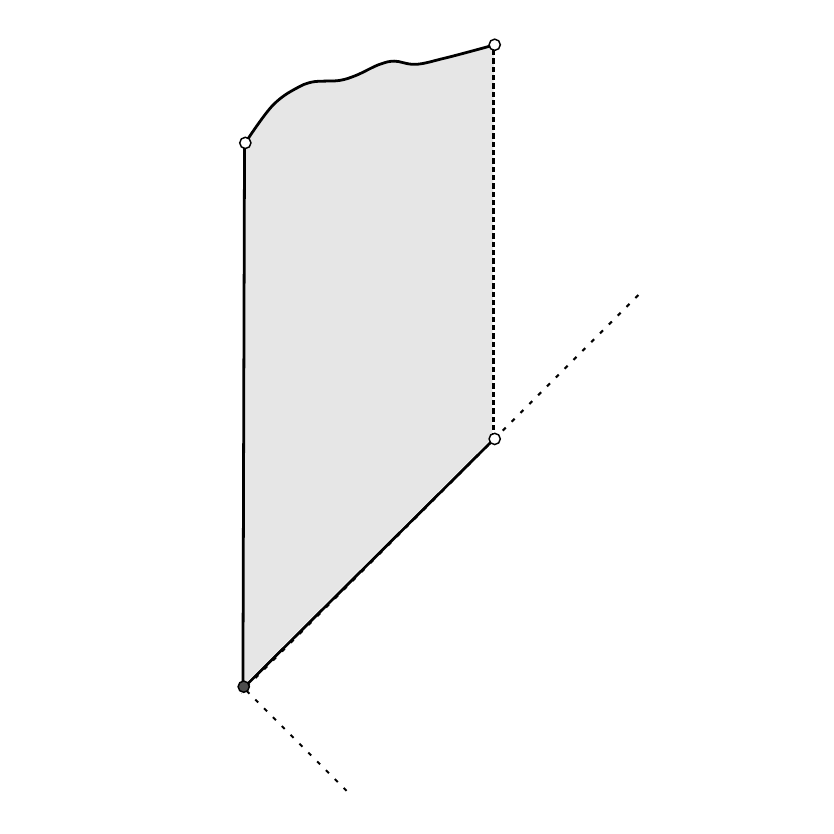 
\caption{A typical domain $\mathcal{U}\in\mathscr{U}_{v_{0}}$ would be as depicted above. In the case when the boundary set $\gamma$ is empty, it is necessary that both $\gamma_{0}$ and $\mathcal{I}$ are unbounded (i.\,e.~extend all the way to $u+v=\infty$).}
\end{figure}

\begin{customdef}{3.2}[Definition 3.3 in \cite{MoschidisMaximalDevelopment}]\label{def:DevelopmentSets}

For any $v_{0}>0$, let $\mathscr{U}_{v_{0}}$ be the set of all connected
open domains $\mathcal{U}$ of the $(u,v)$-plane with piecewise Lipschitz
boundary $\partial\mathcal{U}$, with the property that $\partial\mathcal{U}$
splits as the union 
\begin{equation}
\partial\mathcal{U}=\text{\textgreek{g}}_{0}\cup\mathcal{I}\cup\mathcal{S}_{v_{0}}\cup clos(\text{\textgreek{g}}),\label{eq:BoundaryOfU}
\end{equation}
where, for some $0<u_{\text{\textgreek{g}}_{0}},u_{\mathcal{I}}\le+\infty$,
\begin{equation}
\text{\textgreek{g}}_{0}=\{u=v\}\cap\{0\le u<u_{\text{\textgreek{g}}_{0}}\},\label{eq:AxisForm}
\end{equation}
\begin{equation}
\mathcal{I}=\{u=v-v_{0}\}\cap\{0\le u<u_{\mathcal{I}}\},\label{eq:InfinityForm}
\end{equation}
\begin{equation}
\mathcal{S}_{v_{0}}=\{0\}\times[0,v_{0}],
\end{equation}
 and $\text{\textgreek{g}}:(x_{1},x_{2})\rightarrow\mathbb{R}^{2}$
is a Lipschitz, achronal (with respect to the reference Lorentzian
metric (\ref{eq:ComparisonUVMetric})) curve, which is allowed to
be empty (the closure $clos(\text{\textgreek{g}})$ of $\text{\textgreek{g}}$
in (\ref{eq:BoundaryOfU}) is considered with respect to the standard
topology of $\mathbb{R}^{2}$).

\end{customdef}
\begin{rem*}
It follows readily from Definition \ref{def:DevelopmentSets} that
$\mathcal{U}$ is necessarily contained in the future domain of dependence
of $\mathcal{S}_{v_{0}}\cup\text{\textgreek{g}}_{0}\cup\mathcal{I}$
(with respect to the comparison metric (\ref{eq:ComparisonUVMetric})).
In the case when $\text{\textgreek{g}}=\emptyset$ in (\ref{eq:BoundaryOfU}),
it is necessary that both $\text{\textgreek{g}}_{0}$ and $\mathcal{I}$
extend all the way to $u+v=+\infty$.
\end{rem*}
A development of an asymptotically AdS boundary-characteristic initial
data set for the system (\ref{eq:RequationFinal})--(\ref{eq:OutgoingVlasovFinal})
with reflecting boundary conditions on $r=r_{0},+\infty$ is defined
as follows (see \cite{MoschidisMaximalDevelopment}):

\begin{customdef}{3.3}[Definition 3.4 in \cite{MoschidisMaximalDevelopment}]\label{def:Development}

For any $v_{0}>0$ and $r_{0}>0$, let $(r_{/},\text{\textgreek{W}}_{/}^{2},\bar{f}_{in/},\bar{f}_{out/})$
be a smooth asymptotically AdS boundary-characteristic initial data
set on $[0,v_{0})$ for the system (\ref{eq:RequationFinal})--(\ref{eq:OutgoingVlasovFinal})
satisfying the reflecting gauge condition at $r=r_{0},+\infty$, according
to Definition \ref{def:TypeII}. A \underline{future development}
of $(r_{/},\text{\textgreek{W}}_{/}^{2},\bar{f}_{in/},\bar{f}_{out/})$
will consist of an open set $\mathcal{U}\in\mathscr{U}_{v_{0}}$ (see
Definition \ref{def:DevelopmentSets}) and smooth functions $r:\mathcal{U}\rightarrow(r_{0},+\infty)$,
$\text{\textgreek{W}}^{2}:\mathcal{U}\rightarrow(0,+\infty)$ and
$\bar{f}_{in},\bar{f}_{out}:\mathcal{U}\times(0,+\infty)\rightarrow[0,+\infty)$
satisfying the following properties:

\begin{enumerate}

\item The functions $r,\text{\textgreek{W}}^{2},\bar{f}_{in},\bar{f}_{out}$
solve the system (\ref{eq:RequationFinal})--(\ref{eq:OutgoingVlasovFinal})
on $\mathcal{U}$.

\item The functions $r,\text{\textgreek{W}}^{2},\bar{f}_{in},\bar{f}_{out}$
satisfy the given initial conditions on $\mathcal{S}_{v_{0}}=\{0\}\times[0,v_{0})$,
i.\,e.: 
\begin{equation}
(r,\text{\textgreek{W}}^{2},\bar{f}_{in},\bar{f}_{out})|_{\mathcal{S}_{v_{0}}}=(r_{/},\text{\textgreek{W}}_{/}^{2},\bar{f}_{in/},\bar{f}_{out/}).\label{eq:InitialDataRightMaximal}
\end{equation}

\item The functions $(r,\bar{f}_{in},\bar{f}_{out})$ satisfy on
$\text{\textgreek{g}}_{0}$ the boundary conditions 
\begin{equation}
r|_{\text{\textgreek{g}}_{0}}=r_{0}\label{eq:MirrorRMaximal}
\end{equation}
 and 
\begin{equation}
\bar{f}_{out}\big(u_{*},v_{*};\, p\big)=\bar{f}_{in}\big(u_{*},v_{*};\, p\big),\label{eq:ReflectionMirrorMaximal}
\end{equation}
for all $(u_{*},v_{*})\in\text{\textgreek{g}}_{0}$ and $p>0$, and
on $\mathcal{I}$ the boundary conditions 
\begin{equation}
(1/r)|_{\mathcal{I}}=0\label{eq:InfinityRMaximal}
\end{equation}
 and 
\begin{equation}
\lim_{h\rightarrow0^{+}}\Bigg(\frac{\bar{f}_{in}\big(u_{*}+h,v_{*};\,\text{\textgreek{W}}^{-2}(u_{*}+h,v_{*})\cdot p\big)}{\bar{f}_{out}\big(u_{*},v_{*}-h;\,\text{\textgreek{W}}^{-2}(u_{*},v_{*}-h)\cdot p\big)}\Bigg)=1,\label{eq:ReflectionInfinityMaximal}
\end{equation}
for all $(u_{*},v_{*})\in\mathcal{I}$ and $p>0$.

\item The following are satisfied on $\text{\textgreek{g}}_{0}$
and $\mathcal{I}$: 
\begin{equation}
\partial_{u}r|_{\text{\textgreek{g}}_{0}}=-\partial_{v}r|_{\text{\textgreek{g}}_{0}}\label{eq:GaugeMirrorMaximal}
\end{equation}
 and 
\begin{equation}
\partial_{u}(1/r)|_{\mathcal{I}}=-\partial_{v}(1/r)|_{\mathcal{I}}.\label{eq:GaugeInfinityMaximal}
\end{equation}

\end{enumerate}

\end{customdef}
\begin{rem*}
Notice that the boundary conditions (\ref{eq:MirrorRMaximal}) and
(\ref{eq:InfinityRMaximal}), combined with the form (\ref{eq:AxisForm})
and (\ref{eq:InfinityForm}) of $\text{\textgreek{g}}_{0}$ and $\mathcal{I}$,
respectively, imply the relations (\ref{eq:GaugeMirrorMaximal}) and
(\ref{eq:GaugeInfinityMaximal}). However, the relations (\ref{eq:GaugeMirrorMaximal})
and (\ref{eq:GaugeInfinityMaximal}) should be viewed as \emph{gauge
conditions} fixing, in conjuction with (\ref{eq:MirrorRMaximal})
and (\ref{eq:InfinityRMaximal}), the form (\ref{eq:AxisForm}) and
(\ref{eq:InfinityForm}) of $\text{\textgreek{g}}_{0}$ and $\mathcal{I}$.
\end{rem*}
If $\mathscr{D}=(\mathcal{U};r,\text{\textgreek{W}}^{2},\bar{f}_{in},\bar{f}_{out})$
and $\mathscr{D}^{\prime}=(\mathcal{U}^{\prime};r^{\prime},(\text{\textgreek{W}}^{\prime})^{2},\bar{f}_{in}^{\prime},\bar{f}_{out}^{\prime})$
are two future developments of the same initial data $(r_{/},\text{\textgreek{W}}_{/}^{2},\bar{f}_{in/},\bar{f}_{out/})$,
we will say that $\mathscr{D}^{\prime}$ is an extension of $\mathscr{D}$,
writing $\mathscr{D}\subseteq\mathscr{D}^{\prime}$, if $\mathcal{U}\subseteq\mathcal{U}^{\prime}$
and the restriction of $(r^{\prime},(\text{\textgreek{W}}^{\prime})^{2},\bar{f}_{in}^{\prime},\bar{f}_{out}^{\prime})$
on $\mathcal{U}$ coincides with $(r,\text{\textgreek{W}}^{2},\bar{f}_{in},\bar{f}_{out})$.
\begin{rem*}
If $\mathscr{D}=(\mathcal{U};r,\text{\textgreek{W}}^{2},\bar{f}_{in},\bar{f}_{out})$
and $\mathscr{D}^{\prime}=(\mathcal{U}^{\prime};r^{\prime},(\text{\textgreek{W}}^{\prime})^{2},\bar{f}_{in}^{\prime},\bar{f}_{out}^{\prime})$
are two future developments of the same initial data $(r_{/},\text{\textgreek{W}}_{/}^{2},\bar{f}_{in/},\bar{f}_{out/})$,
then 
\begin{equation}
(r,\text{\textgreek{W}}^{2},\bar{f}_{in},\bar{f}_{out})|_{\mathcal{U}\cap\mathcal{U}^{\prime}}=(r^{\prime},(\text{\textgreek{W}}^{\prime})^{2},\bar{f}_{in}^{\prime},\bar{f}_{out}^{\prime})|_{\mathcal{U}\cap\mathcal{U}^{\prime}}
\end{equation}
(see \cite{MoschidisMaximalDevelopment}).
\end{rem*}

\subsection{The maximal development}

The following result was established in \cite{MoschidisMaximalDevelopment}:

\begin{customthm}{3.1}[Theorem 1 in \cite{MoschidisMaximalDevelopment}]\label{thm:maximalExtension}

For any $v_{0}>0$ and $r_{0}>0$, let $(r_{/},\text{\textgreek{W}}_{/}^{2},\bar{f}_{in/},\bar{f}_{out/})$
be a smooth asymptotically AdS boundary-characteristic initial data
set on $[0,v_{0})$ for the system (\ref{eq:RequationFinal})--(\ref{eq:OutgoingVlasovFinal})
satisfying the reflecting gauge condition at $r=r_{0},+\infty$, according
to Definition \ref{def:TypeII}, such that the quantities $\frac{\text{\textgreek{W}}_{/}^{2}}{1-\frac{1}{3}\Lambda r_{/}^{2}},r_{/}^{2}(T_{vv})_{/}$
and $\tan^{-1}r_{/}$ extend smoothly on $v=v_{0}$. Then, there exists
a unique, smooth future development $(\mathcal{U};r,\text{\textgreek{W}}^{2},\bar{f}_{in},\bar{f}_{out})$
of $(r_{/},\text{\textgreek{W}}_{/}^{2},\bar{f}_{in/},\bar{f}_{out/})$
which is \underline{maximal}, i.\,e.~any other future development
$(\mathcal{U}^{\prime};r^{\prime},(\text{\textgreek{W}}^{\prime})^{2},\bar{f}_{in}^{\prime},\bar{f}_{out}^{\prime})$
of $(r_{/},\text{\textgreek{W}}_{/}^{2},\bar{f}_{in/},\bar{f}_{out/})$
with $r^{\prime}\ge r_{0}$ everywhere on $\mathcal{U}^{\prime}$satisfies
$\mathcal{U}^{\prime}\subseteq\mathcal{U}$ and $r^{\prime},(\text{\textgreek{W}}^{\prime})^{2},\bar{f}_{in}^{\prime},\bar{f}_{out}^{\prime}$
are the restrictions of $r,\text{\textgreek{W}}^{2},\bar{f}_{in},\bar{f}_{out}$
on $\mathcal{U}^{\prime}$.

The maximal future development $(\mathcal{U};r,\text{\textgreek{W}}^{2},\bar{f}_{in},\bar{f}_{out})$
satisfies the following properties (for the definition of the curves
$\text{\textgreek{g}}_{0},\mathcal{I},\text{\textgreek{g}}$, see
Definition \ref{def:DevelopmentSets}):

\begin{enumerate}

\item The renormalised Hawking mass $\tilde{m}$ is conserved on
$\text{\textgreek{g}}_{0}$ and $\mathcal{I}$, i.\,e.: 
\begin{equation}
\tilde{m}|_{\text{\textgreek{g}}_{0}}=\tilde{m}|_{\text{\textgreek{g}}_{0}\cap\{u=0\}}\label{eq:ConstantMassMirror}
\end{equation}
and 
\begin{equation}
\tilde{m}|_{\mathcal{I}}=\tilde{m}|_{\mathcal{I}\cap\{u=0\}}.\label{eq:ConstantMassInfinity}
\end{equation}

\item The curve $\mathcal{I}$ is conformally complete, i.\,e. $\text{\textgreek{W}}^{2}/(1-\frac{1}{3}\Lambda r^{2})$
has a finite limit on $\mathcal{I}$ and: 
\begin{equation}
\int_{\mathcal{I}}\sqrt{\frac{\text{\textgreek{W}}^{2}}{1-\frac{1}{3}\Lambda r^{2}}}\Big|_{\mathcal{I}}\, du=+\infty.\label{eq:CompleConformalInfinity}
\end{equation}

\item We have 
\begin{equation}
\partial_{u}r<0,\label{eq:NegativeDerivativeRMaximal}
\end{equation}
\begin{equation}
\big(1-\frac{2m}{r}\big)\big|_{J^{-}(\mathcal{I})\cup J^{-}(\text{\textgreek{g}}_{0})}>0\label{eq:NonTrappigMaximal}
\end{equation}
 and 
\begin{equation}
\partial_{v}r|_{J^{-}(\mathcal{I})\cup J^{-}(\text{\textgreek{g}}_{0})}>0,\label{eq:AlternativeNonTrappingMaximal}
\end{equation}
where 
\begin{equation}
J^{-}(\mathcal{I})=\big\{0\le u<\sup_{\mathcal{I}}u\big\}\cap\mathcal{U}\label{eq:PastOfInfinity}
\end{equation}
is the causal past of $\mathcal{I}$ and 
\begin{equation}
J^{-}(\text{\textgreek{g}}_{0})=\big\{0\le v<\sup_{\text{\textgreek{g}}_{0}}v\big\}\cap\mathcal{U}
\end{equation}
is the causal past of $\text{\textgreek{g}}_{0}$ (with respect to
the reference Lorenztian metric (\ref{eq:ComparisonUVMetric})).

\item In the case $\mathcal{U}\backslash J^{-}(\mathcal{I})\neq\emptyset$,
the future event horizon 
\begin{equation}
\mathcal{H}^{+}\doteq\mathcal{U}\cap\partial J^{-}(\mathcal{I})=\big\{ u=\sup_{\mathcal{I}}u\big\}\cap\mathcal{U}\label{eq:DefinitionHorizon}
\end{equation}
has the following properties:

\begin{enumerate}

\item $\mathcal{H}^{+}$ has infinite affine length, i.\,e.: 
\begin{equation}
\int_{\mathcal{H}^{+}}\text{\textgreek{W}}^{2}\, dv=+\infty.\label{eq:InfiniteLengthHorizon}
\end{equation}

\item We have 
\begin{equation}
\sup_{\mathcal{H}^{+}}r=r_{S}\label{eq:UpperBoundRHorizon}
\end{equation}
and
\begin{equation}
\inf_{\mathcal{H}^{+}}\Big(1-\frac{2m}{r}\Big)=0,\label{eq:TrappingAsymptoticallyHorizon}
\end{equation}
where $r_{S}$ defined by the relation 
\begin{equation}
1-2\frac{\lim_{v\rightarrow v_{0}^{-}}\tilde{m}_{/}(v)}{r_{S}}-\frac{1}{3}\Lambda r_{S}^{2}=0.\label{eq:DefinitionRs}
\end{equation}

\end{enumerate}

\item In the case $\mathcal{H}^{+}\neq\emptyset$, the curve $\text{\textgreek{g}}_{0}$
is bounded and satisfies 
\begin{equation}
\text{\textgreek{g}}_{0}\nsubseteq J^{-}(\mathcal{I}),\label{eq:MirrorExtendsBeyondHorizon}
\end{equation}
i.\,e.~$\text{\textgreek{g}}_{0}$ contains points lying to the
future of $\mathcal{H}^{+}$. 

\item In the case $\mathcal{H}^{+}\neq\emptyset$, the curve $\text{\textgreek{g}}$
is non-empty, piecewise smooth and $r$ extends continuously on $\text{\textgreek{g}}$
with $r|_{\text{\textgreek{g}}_{0}}=r_{0}$. Furthermore, for any
point $(u_{1},v_{1})\in\text{\textgreek{g}}$, the line $\{v=v_{1}\}$
intersects $\mathcal{I}$.%
\footnote{In other words, there is no point in $\text{\textgreek{g}}$ which
lies on the curve $\{v=v_{\mathcal{I}}\}$, where $(u_{\mathcal{I}},v_{\mathcal{I}})$
is the future limit point of $\mathcal{I}$.%
}

\end{enumerate}

\end{customthm}
\begin{rem*}
In the case when $\mathcal{U}\backslash J^{-}(\mathcal{I})\neq\emptyset$
(and thus $\mathcal{H}^{+}\neq\emptyset$), in view of (\ref{eq:UpperBoundRHorizon}),
(\ref{eq:DefinitionRs}) and the fact that $r>r_{0}$ on $\mathcal{U}$,
it is necessary that 
\begin{equation}
2\frac{\lim_{v\rightarrow v_{0}^{-}}\tilde{m}_{/}(v)}{r_{0}}>1-\frac{1}{3}\Lambda r_{0}^{2}.\label{eq:LowerBoundMass}
\end{equation}

\end{rem*}
In a similar way, we can uniquely define the \emph{maximal past development}
$(\mathcal{U};r,\text{\textgreek{W}}^{2},\bar{f}_{in},\bar{f}_{out})$
of $(r_{/},\text{\textgreek{W}}_{/}^{2},\bar{f}_{in/},\bar{f}_{out/})$,
satisfying the properties outlined by Theorem \ref{thm:maximalExtension}
after performing a ``time reversal'' transformation $(u,v)\rightarrow(-v,-u)$.
Notice that such a coordinate transformation turns an asymptotically
AdS boundary-characteristic initial data set on $u=0$ into an asymptotically
AdS boundary-characteristic initial data set on $v=0$. However, Theorem
\ref{thm:maximalExtension} also holds (with exactly the same proof)
for such initial data sets.

\subsection{\label{sub:Cauchy-stability-inCauchyStability}Cauchy stability in
a rough norm, uniformly in $r_{0}$}

In \cite{MoschidisMaximalDevelopment}, the following ``norm'' was
introduced for smooth asymptotically AdS boundary-characteristic initial
data sets $(r_{/},\text{\textgreek{W}}_{/}^{2},\bar{f}_{in/},\bar{f}_{out/})$
on $[0,v_{0})$ for the system (\ref{eq:RequationFinal})--(\ref{eq:OutgoingVlasovFinal}):
\begin{align}
||(r_{/},\text{\textgreek{W}}_{/}^{2},\bar{f}_{in/},\bar{f}_{out/})||_{\mathcal{C\mathcal{S}}}\doteq\sqrt{-\Lambda} & \sup_{0\le v<v_{0}}|\tilde{m}_{/}(v)|+(-\Lambda)\sup_{0\le v<v_{0}}\int_{0}^{v_{0}}\frac{1}{\text{\textgreek{r}}_{/}(v)-\text{\textgreek{r}}_{/}(\bar{v})+\text{\textgreek{r}}_{/}(0)}\Big(\frac{r_{/}^{2}(T_{vv})_{/}}{\partial_{v}\text{\textgreek{r}}_{/}}\Big)(\bar{v})\, d\bar{v}+\label{eq:GeometricNormForCauchyStability}\\
 & +\sup_{0\le v<v_{0}}\max\big\{\frac{2m_{/}}{r_{/}},0\big\},\nonumber 
\end{align}
where 
\begin{equation}
\text{\textgreek{r}}_{/}(v)\doteq\tan^{-1}\big(\sqrt{-\Lambda}r_{/}(v)\big).
\end{equation}

\begin{rem*}
Note that, in (\ref{eq:GeometricNormForCauchyStability}), 
\[
\text{\textgreek{r}}_{/}(0)=\tan^{-1}\big(\sqrt{-\Lambda}r_{0}\big).
\]

The expression (\ref{eq:GeometricNormForCauchyStability}) is invariant
under gauge transformations, as well as scale transformations of the
form $(u,v)\rightarrow(\text{\textgreek{l}}u,\text{\textgreek{l}}v)$,
$(r,\tilde{m},\Lambda)\rightarrow(\text{\textgreek{l}}r,\text{\textgreek{l}}\tilde{m},\text{\textgreek{l}}^{-2}\Lambda)$,
$r_{0}\rightarrow\text{\textgreek{l}}r_{0}$, $(\bar{f}_{in},\bar{f}_{out})\rightarrow(\text{\textgreek{l}}^{-4}\bar{f}_{in},\text{\textgreek{l}}^{-4}\bar{f}_{out})$.
Moreover, $||(r_{/},\text{\textgreek{W}}_{/}^{2},\bar{f}_{in/},\bar{f}_{out/})||_{\mathcal{C\mathcal{S}}}=0$
if and only if $(r_{/},\text{\textgreek{W}}_{/}^{2},\bar{f}_{in/},\bar{f}_{out/})$
are the initial data for pure AdS spacetime on $\{r\ge r_{0}\}$,
i.\,e.~if $\bar{f}_{in/}=\bar{f}_{out/}=0$ and $\tilde{m}=0$.
\end{rem*}
The following Cauchy stability result for the trivial initial data
was established in \cite{MoschidisMaximalDevelopment}:

\begin{customprop}{3.1}[Corollary 1 in \cite{MoschidisMaximalDevelopment}]\label{prop:CauchyStabilityOfAdS}

For any (possibly large) $l_{*}>0$, there exists a (small) $\text{\textgreek{e}}_{0}>0$
and a constant $C_{l_{*}}>0$ depending only on $l_{*}$, so that
the following statement holds: For any $v_{0}>0$ and $0<r_{0}<(-\Lambda)^{-\frac{1}{2}}$,
if $(r_{/},\text{\textgreek{W}}_{/}^{2},\bar{f}_{in/},\bar{f}_{out/})$
is a smooth asymptotically AdS boundary-characteristic initial data
set on $[0,v_{0})$ for the system (\ref{eq:RequationFinal})--(\ref{eq:OutgoingVlasovFinal})
satisfying the reflecting gauge condition at $r=r_{0},+\infty$, according
to Definition \ref{def:TypeII}, such that the quantities $\frac{\text{\textgreek{W}}_{/}^{2}}{1-\frac{1}{3}\Lambda r_{/}^{2}},r_{/}^{2}(T_{vv})_{/}$
and $\tan^{-1}r_{/}$ extend smoothly on $v=v_{0}$ and moreover 
\begin{equation}
||(r_{/},\text{\textgreek{W}}_{/}^{2},\bar{f}_{in/},\bar{f}_{out/})||_{\mathcal{C}\mathcal{S}}<\text{\textgreek{e}}\label{eq:SmallnessForCauchyStability}
\end{equation}
for some $0<\text{\textgreek{e}}\le\text{\textgreek{e}}_{0}$, then
the maximal development $(\mathcal{U};r,\text{\textgreek{W}}^{2},\bar{f}_{in},\bar{f}_{out})$
satisfies 
\begin{equation}
\mathcal{W}_{l_{*}}\doteq\{0<u\le l_{*}v_{0}\}\cap\{u<v<u+v_{0}\}\subset\mathcal{U}\label{eq:InclusionInMaximalDomain}
\end{equation}
and 
\begin{equation}
\sqrt{-\Lambda}\sup_{\mathcal{W}_{l_{*}}}|\tilde{m}|+\sup_{\mathcal{W}_{l_{*}}}\log\Bigg(\frac{1-\frac{1}{3}\Lambda r^{2}}{1-\max\{\frac{2m}{r},0\}}\Bigg)+\sup_{\bar{u}}\int_{\{u=\bar{u}\}\cap\mathcal{W}_{l_{*}}}\frac{rT_{vv}}{\partial_{v}r}\, dv+\sup_{\bar{v}}\int_{\{v=\bar{v}\}\cap\mathcal{W}_{l_{*}}}\frac{rT_{uu}}{(-\partial_{u}r)}\, du<C_{l_{*}}\text{\textgreek{e}}.\label{eq:SmallnessCauchyStability}
\end{equation}

\end{customprop}
\begin{rem*}
Proposition \ref{prop:CauchyStabilityOfAdS} should be interpreted
as a Cauchy stability statement for the pure AdS initial data set
with respect to the topology defined by (\ref{eq:GeometricNormForCauchyStability})
which is independent of the radius $r_{0}$ of the reflecting boundary. 

Considering the spherically symmetric Einstein--scalar field system
(\ref{eq:EinsteinScalarFieldInDoubleNull}) with an inner mirror placed
at $\{r=r_{0}\}$, the analogue of the initial data norm (\ref{eq:GeometricNormForCauchyStability})
(obtained using the substitution $(T_{vv})_{/}\rightarrow(\partial_{v}\text{\textgreek{f}})|_{u=0}$)
is rougher compared to the bounded variation norm of Christodoulou
(see \cite{ChristodoulouBoundedVariation}). It is not known whether
(\ref{eq:EinsteinScalarFieldInDoubleNull}), restricted to the exterior
of an inner mirror at $\{r=r_{0}\}$, satisfies a Cauchy stability
estimate with respect to the analogue of the initial data norm (\ref{eq:GeometricNormForCauchyStability})
which is independent of $r_{0}$ (although local existence and uniqueness
follow trivially in this case for fixed $r_{0}$). 
\end{rem*}
In fact, Proposition \ref{prop:CauchyStabilityOfAdS} is a special
case of the following Cauchy stability estimate established in \cite{MoschidisMaximalDevelopment}:

\begin{customthm}{3.2}[Theorem 2 in \cite{MoschidisMaximalDevelopment}]\label{prop:CauchyStability}

For any $v_{1}<v_{2}$ and $0<r_{0}<(-\Lambda)^{-1/2}$, let $(r_{/i},\text{\textgreek{W}}_{/i}^{2},\bar{f}_{in/i},\bar{f}_{out/i})$,
$i=1,2$, be two smooth  asymptotically AdS boundary-characteristic
initial data sets on $[v_{1},v_{2})$ for the system (\ref{eq:RequationFinal})--(\ref{eq:OutgoingVlasovFinal})
satisfying the reflective gauge condition at $r=r_{0},+\infty$, according
to Definition \ref{def:TypeII}, such that the quantities $\frac{\text{\textgreek{W}}_{/i}^{2}}{1-\frac{1}{3}\Lambda r_{/i}^{2}},r_{/i}^{2}(T_{vv})_{/i}$
and $\tan^{-1}r_{/i}$ extend smoothly on $v=v_{2}$. Assume, also,
the following conditions:

\begin{enumerate}

\item For some $u_{0}>0$, the maximal future development $(\mathcal{U}_{1};r_{1},\text{\textgreek{W}}_{1}^{2},\bar{f}_{in1},\bar{f}_{out1})$
of $(r_{/1},\text{\textgreek{W}}_{/1}^{2},\bar{f}_{in/1},\bar{f}_{out/1})$
satisfies 
\begin{equation}
\mathcal{W}_{u_{0}}\doteq\{0<u<u_{0}\}\cap\{u+v_{1}<v<u+v_{2}\}\subset\mathcal{U}_{1}
\end{equation}
and 
\begin{align}
\sup_{\mathcal{W}_{u_{0}}}\Bigg\{\Big|\log\big(\frac{\text{\textgreek{W}}_{1}^{2}}{1-\frac{1}{3}\Lambda r_{1}^{2}}\big)\Big|+\Big|\log\Big(\frac{2\partial_{v}r_{1}}{1-\frac{2m_{1}}{r_{1}}}\Big)\Big|+\Big|\log\Big(\frac{1-\frac{2m_{1}}{r_{1}}}{1-\frac{1}{3}\Lambda r_{1}^{2}}\Big)\Big|+\sqrt{-\Lambda}|\tilde{m}_{1}|\Bigg\}+\label{eq:UpperBoundNonTrappingForCauchyStability}\\
+\sup_{\bar{u}}\int_{\{u=\bar{u}\}\cap\mathcal{W}_{u_{0}}}r_{1}\frac{(T_{vv})_{1}}{\partial_{v}r_{1}}\, dv+\sup_{\bar{v}}\int_{\{v=\bar{v}\}\cap\mathcal{W}_{u_{0}}}r_{1}\frac{(T_{uu})_{1}}{-\partial_{u}r_{1}}\, du & =C_{0}<+\infty.\nonumber 
\end{align}

\item The $(r_{/i},\text{\textgreek{W}}_{/i}^{2},\bar{f}_{in/i},\bar{f}_{out/i})$,
$i=1,2$, are $\text{\textgreek{d}}$-close in the following sense:
\begin{align}
\sup_{v\in[v_{1},v_{2})}\Bigg\{\Big|\log\big(\frac{\text{\textgreek{W}}_{/1}^{2}}{1-\frac{1}{3}\Lambda r_{/1}^{2}}\big)-\log\big(\frac{\text{\textgreek{W}}_{/2}^{2}}{1-\frac{1}{3}\Lambda r_{/2}^{2}}\big)\Big|+\Big|\log\Big(\frac{2\partial_{v}r_{/1}}{1-\frac{2m_{/1}}{r_{/1}}}\Big)-\log\Big(\frac{2\partial_{v}r_{/2}}{1-\frac{2m_{/2}}{r_{/2}}}\Big)\Big|+\label{eq:GaugeDifferenceBoundCauchystability}\\
+\Big|\log\Big(\frac{1-\frac{2m_{/_{1}}}{r_{/1}}}{1-\frac{1}{3}\Lambda r_{/1}^{2}}\Big)-\log\Big(\frac{1-\frac{2m_{/_{2}}}{r_{/2}}}{1-\frac{1}{3}\Lambda r_{/2}^{2}}\Big)\Big|+\sqrt{-\Lambda}|\tilde{m}_{/1}-\tilde{m}_{/2}|\Bigg\}(v) & \le\text{\textgreek{d}}\nonumber 
\end{align}
and 
\begin{equation}
\sup_{v\in[v_{1},v_{2}]}(-\Lambda)\int_{v_{1}}^{v_{2}}\frac{\big|r_{/1}^{2}\frac{(T_{vv})_{/1}}{\partial_{v}\text{\textgreek{r}}_{1}}(\bar{v})-r_{/2}^{2}\frac{(T_{vv})_{/2}}{\partial_{v}\text{\textgreek{r}}_{2}}(\bar{v})\big|}{|\text{\textgreek{r}}_{/}(v)-\text{\textgreek{r}}_{/}(\bar{v})|+\text{\textgreek{r}}_{/}(v_{1})}\, d\bar{v}\le\text{\textgreek{d}},\label{eq:DifferenceBoundCauchyStability}
\end{equation}
where $C_{1}$ is a large fixed absolute constant, $\text{\textgreek{d}}$
satisfies 
\begin{equation}
0\le\text{\textgreek{d}}\le\text{\textgreek{d}}_{0}\doteq\exp\big(-\exp\big(C_{1}(1+C_{0})\frac{u_{0}}{v_{2}-v_{1}}\big)\big)\label{eq:SmallnessDeltaForCauchyStability}
\end{equation}
and $\text{\textgreek{r}}_{/}$ is defined by the relation 
\begin{equation}
\text{\textgreek{r}}_{/}(v)\doteq\tan^{-1}\big(\sqrt{-\frac{\Lambda}{3}}r_{/}(v)\big).
\end{equation}

\end{enumerate}

Then, the maximal development $(\mathcal{U}_{2};r_{2},\text{\textgreek{W}}_{2}^{2},\bar{f}_{in2},\bar{f}_{out2})$
of $(r_{/2},\text{\textgreek{W}}_{/2}^{2},\bar{f}_{in/2},\bar{f}_{out/2})$
satisfies 
\begin{equation}
\mathcal{W}_{u_{0}}\subset\mathcal{U}_{2}
\end{equation}
and 
\begin{align}
\sup_{\mathcal{W}_{u_{0}}}\Bigg\{\Big|\log\big(\frac{\text{\textgreek{W}}_{1}^{2}}{1-\frac{1}{3}\Lambda r_{1}^{2}}\big)-\log\big(\frac{\text{\textgreek{W}}_{2}^{2}}{1-\frac{1}{3}\Lambda r_{2}^{2}}\big)\Big|+\Big|\log\Big(\frac{2\partial_{v}r_{1}}{1-\frac{2m_{1}}{r_{1}}}\Big)-\log\Big(\frac{2\partial_{v}r_{2}}{1-\frac{2m_{2}}{r_{2}}}\Big)\Big|+\label{eq:UpperBoundNonTrappingForCauchyStability-1}\\
+\Big|\log\Big(\frac{1-\frac{2m_{1}}{r_{1}}}{1-\frac{1}{3}\Lambda r_{1}^{2}}\Big)-\log\Big(\frac{1-\frac{2m_{2}}{r_{2}}}{1-\frac{1}{3}\Lambda r_{2}^{2}}\Big)\Big|+\sqrt{-\Lambda}|\tilde{m}_{1}-\tilde{m}_{2}|\Bigg\}+\nonumber \\
+\sup_{\bar{u}}\int_{\{u=\bar{u}\}\cap\mathcal{W}_{u_{0}}}\big|r_{1}(T_{vv})_{1}-r_{2}(T_{vv})_{2}\big|\, dv+\sup_{\bar{v}}\int_{\{v=\bar{v}\}\cap\mathcal{W}_{u_{0}}}\big|r_{1}(T_{uu})_{1}-r_{2}(T_{uu})_{2}\big|\, du\nonumber \\
\le\exp\big(\exp\big( & C_{1}(1+C_{0})\big)\frac{u_{0}}{v_{2}-v_{1}}\big)\text{\textgreek{d}}.\nonumber 
\end{align}

\end{customthm}
\begin{rem*}
By repeating the proof of Theorem \ref{prop:CauchyStability}, the
Cauchy stability estimate (\ref{eq:UpperBoundNonTrappingForCauchyStability-1})
also holds in the case when $(\mathcal{U}_{i};r_{i},\text{\textgreek{W}}_{i}^{2},\bar{f}_{in;i},\bar{f}_{out;i})$,
$i=1,2$, are the maximal \underline{past} developments of $(r_{/i},\text{\textgreek{W}}_{/i}^{2},\bar{f}_{in/i},\bar{f}_{out/i})$,
i.\,e.~when $\mathcal{W}_{u_{0}}$ is replaced by 
\begin{equation}
\mathcal{W}_{u_{0}}^{(-)}\doteq\{-u_{0}\le u<0\}\cap\{u+v_{1}<v<u+v_{2}\}
\end{equation}
and (\ref{eq:UpperBoundNonTrappingForCauchyStability}) holds on $\mathcal{W}_{u_{0}}^{(-)}$
in place of $\mathcal{W}_{u_{0}}$. 
\end{rem*}

\section{\label{sec:The-main-result:Details}Final statement of Theorem \ref{thm:TheoremDetailedIntro}:
the non-linear instability of AdS}

The main result of this paper is the following: 

\begin{customthm}{1}[final version]\label{thm:TheTheorem} For any
$\text{\textgreek{e}}\in(0,1]$, there exist $r_{0\text{\textgreek{e}}}$,
$v_{0\text{\textgreek{e}}}$ depending smoothly on $\text{\textgreek{e}}$
such that 
\begin{equation}
r_{0\text{\textgreek{e}}}\xrightarrow{\text{\textgreek{e}}\rightarrow0}0
\end{equation}
and 
\begin{equation}
\sqrt{-\Lambda}v_{0\text{\textgreek{e}}}\xrightarrow{\text{\textgreek{e}}\rightarrow0}\frac{\pi}{\sqrt{3}},
\end{equation}
 as well as a family $(r_{/}^{(\text{\textgreek{e}})},(\text{\textgreek{W}}_{/}^{(\text{\textgreek{e}})})^{2},\bar{f}_{in/}^{(\text{\textgreek{e}})},\bar{f}_{out/}^{(\text{\textgreek{e}})})$
of smooth asymptotically AdS boundary-characteristic initial data
sets for the system (\ref{eq:RequationFinal})--(\ref{eq:OutgoingVlasovFinal})
satisfying the reflecting gauge condition at $r=r_{0},+\infty$, such
that the following hold:

\begin{enumerate}

\item The family $(r_{/}^{(\text{\textgreek{e}})},(\text{\textgreek{W}}_{/}^{(\text{\textgreek{e}})})^{2},\bar{f}_{in/}^{(\text{\textgreek{e}})},\bar{f}_{out/}^{(\text{\textgreek{e}})})$
satisfies
\begin{equation}
||(r_{/}^{(\text{\textgreek{e}})},(\text{\textgreek{W}}_{/}^{(\text{\textgreek{e}})})^{2},\bar{f}_{in/}^{(\text{\textgreek{e}})},\bar{f}_{out/}^{(\text{\textgreek{e}})})||_{\mathcal{C}\mathcal{S}}\xrightarrow{\text{\textgreek{e}}\rightarrow0}0,\label{eq:DecayInInitialDataNorm}
\end{equation}
where $||\cdot||_{CS}$ is the norm defined by (\ref{eq:GeometricNormForCauchyStability}). 

\item There exists a trapped sphere, i.\,e.~point $(u_{\dagger},v_{\dagger})$
in the maximal future development $(\mathcal{U}_{\text{\textgreek{e}}};r_{\text{\textgreek{e}}},\text{\textgreek{W}}_{\text{\textgreek{e}}}^{2},\bar{f}_{in\text{\textgreek{e}}},\bar{f}_{out\text{\textgreek{e}}})$
of $(r_{/}^{(\text{\textgreek{e}})},(\text{\textgreek{W}}_{/}^{(\text{\textgreek{e}})})^{2},\bar{f}_{in/}^{(\text{\textgreek{e}})},\bar{f}_{out/}^{(\text{\textgreek{e}})})$
such that 
\begin{equation}
\frac{2m}{r}(u_{\dagger},v_{\dagger})>1.\label{eq:TrappedSurfaceOccurs}
\end{equation}
In particular, in view of Theorem \ref{thm:maximalExtension}, $(\mathcal{U}_{\text{\textgreek{e}}};r_{\text{\textgreek{e}}},\text{\textgreek{W}}_{\text{\textgreek{e}}}^{2},\bar{f}_{in\text{\textgreek{e}}},\bar{f}_{out\text{\textgreek{e}}})$
has a non-empty future event horizon $\mathcal{H}^{+}$ (defined by
(\ref{eq:FutureEventHorizon})), satisfying the properties 4.a and
4.b of Theorem \ref{thm:maximalExtension}, and a complete conformal
infinity $\mathcal{I}$ (satisfying (\ref{eq:CompleConformalInfinity})).

\end{enumerate}

\end{customthm}
\begin{rem*}
If 
\begin{equation}
\text{\textgreek{d}}(\text{\textgreek{e}})\doteq||(r_{/}^{(\text{\textgreek{e}})},(\text{\textgreek{W}}_{/}^{(\text{\textgreek{e}})})^{2},\bar{f}_{in/}^{(\text{\textgreek{e}})},\bar{f}_{out/}^{(\text{\textgreek{e}})})||_{\mathcal{C}\mathcal{S}},
\end{equation}
the point $(u_{\dagger},v_{\dagger})$ satisfies the upper bound 
\begin{equation}
u_{\dagger}\le\exp\big(\exp(\text{\textgreek{d}}^{6}(\text{\textgreek{e}}))\big)v_{0}.\label{eq:UpperBoundTrappedSurface}
\end{equation}
On the other hand, in view of Proposition \ref{prop:CauchyStabilityOfAdS},
we necessarily have 
\begin{equation}
u_{\dagger}\xrightarrow{\text{\textgreek{e}}\rightarrow0}+\infty.
\end{equation}

In the simpler case when one is interested in a weaker instability
statement, such as the existence of a point $(u_{\dagger},v_{\dagger})$
where 
\begin{equation}
\frac{2m}{r}\Big|_{(u_{\dagger},v_{\dagger})}>\frac{1}{2}\label{eq:WeakerinstabilitySatatement}
\end{equation}
(instead of the stronger bound (\ref{eq:TrappedSurfaceOccurs})),
the proof of Theorem \ref{thm:TheTheorem} can be substantially simplified.
In the case of (\ref{eq:WeakerinstabilitySatatement}), the upper
bound (\ref{eq:UpperBoundTrappedSurface}) can be improved into a
polynomial bound 
\begin{equation}
u_{\dagger}\le(\text{\textgreek{d}}(\text{\textgreek{e}}))^{-C_{1}}v_{0},
\end{equation}
for some fixed $C_{1}>0$.
\end{rem*}

\section{\label{sec:Preliminary-constructions}Construction of the initial
data and notation}

As described already in Section \ref{sub:Sketch-of-the-proof} of
the introduction, the initial data family in Theorem\inputencoding{latin1}{~}\inputencoding{latin9}\ref{thm:TheTheorem}
will be such that their development consists of a large number of
initially ingoing Vlasov beams. In this section, we will construct
such a family $(r_{/\text{\textgreek{e}}},\text{\textgreek{W}}_{/\text{\textgreek{e}}}^{2},\bar{f}_{in/\text{\textgreek{e}}},\bar{f}_{out/\text{\textgreek{e}}})$
of asymptotically AdS boundary-characteristic initial data for (\ref{eq:RequationFinal})--(\ref{eq:OutgoingVlasovFinal}).
The family $(r_{/}^{(\text{\textgreek{e}})},(\text{\textgreek{W}}_{/}^{(\text{\textgreek{e}})})^{2},\bar{f}_{in/}^{(\text{\textgreek{e}})},\bar{f}_{out/}^{(\text{\textgreek{e}})})$
in the statement of Theorem\inputencoding{latin1}{~}\inputencoding{latin9}\ref{thm:TheTheorem}
will be eventually obtained from $(r_{/\text{\textgreek{e}}},\text{\textgreek{W}}_{/\text{\textgreek{e}}}^{2},\bar{f}_{in/\text{\textgreek{e}}},\bar{f}_{out/\text{\textgreek{e}}})$
after possibly adding a suitable perturbation (see Section \ref{sec:Proof}). 

This section is organised as follows: In Section \ref{sub:Parameters-and-auxiliary},
we will introduce a certain hierarchy of parameters that will be necessary
for the construction of $(r_{/\text{\textgreek{e}}},\text{\textgreek{W}}_{/\text{\textgreek{e}}}^{2},\bar{f}_{in/\text{\textgreek{e}}},\bar{f}_{out/\text{\textgreek{e}}})$
in Section \ref{sub:Construction-of-the-initial-data}. In Section
\ref{sub:Notational-conventions-andNotational5}, we will introduce
some basic notation related to the maximal future development $(\mathcal{U}_{\text{\textgreek{e}}};r_{\text{\textgreek{e}}},\text{\textgreek{W}}_{\text{\textgreek{e}}}^{2},\bar{f}_{in\text{\textgreek{e}}},\bar{f}_{out\text{\textgreek{e}}})$
of $(r_{/\text{\textgreek{e}}},\text{\textgreek{W}}_{/\text{\textgreek{e}}}^{2},\bar{f}_{in/\text{\textgreek{e}}},\bar{f}_{out/\text{\textgreek{e}}})$.
Finally, in Section \ref{sub:Some-geometric-constructions}, we will
perform some basic geometric constructions on $(\mathcal{U}_{\text{\textgreek{e}}};r_{\text{\textgreek{e}}},\text{\textgreek{W}}_{\text{\textgreek{e}}}^{2},\bar{f}_{in\text{\textgreek{e}}},\bar{f}_{out\text{\textgreek{e}}})$,
related to the separation of $\mathcal{U}_{\text{\textgreek{e}}}$
into various subregions by the Vlasov beams arising from the initial
data.

\subsection{\label{sub:Parameters-and-auxiliary}Parameters and auxiliary functions}

Let us fix some smooth and strictly increasing functions $h_{0},h_{1},h_{2}:(0,1)\rightarrow(0,1)$,
so that 
\begin{equation}
\lim_{\text{\textgreek{e}}\rightarrow0^{+}}h_{0}(\text{\textgreek{e}})=\lim_{\text{\textgreek{e}}\rightarrow0^{+}}h_{1}(\text{\textgreek{e}})=\lim_{\text{\textgreek{e}}\rightarrow0^{+}}h_{2}(\text{\textgreek{e}})=0,
\end{equation}
\begin{equation}
\lim_{\text{\textgreek{e}}\rightarrow0^{+}}\text{\textgreek{e}}\cdot\exp(\frac{1}{(h_{1}(\text{\textgreek{e}}))^{6}})=\lim_{\text{\textgreek{e}}\rightarrow0^{+}}h_{1}(\text{\textgreek{e}})\cdot\exp\Big(\exp(\frac{1}{(h_{0}(\text{\textgreek{e}}))^{6}})\Big)=0\label{eq:h_1_h_0_definition}
\end{equation}
and 
\begin{equation}
\lim_{\text{\textgreek{e}}\rightarrow0^{+}}h_{2}(\text{\textgreek{e}})\cdot\exp(\text{\textgreek{e}}^{-2})=0.\label{eq:h_2definition}
\end{equation}
In particular, the following relations hold for $\text{\textgreek{e}}\ll1$:
\begin{equation}
h_{2}(\text{\textgreek{e}})\ll\text{\textgreek{e}}\ll h_{1}(\text{\textgreek{e}})\ll h_{0}(\text{\textgreek{e}})\ll1.\label{eq:SchematicRelationParameters}
\end{equation}

Let $\text{\textgreek{q}}:\mathbb{R}\rightarrow[0,1]$ be a smooth
cut-off function, satisfying $\text{\textgreek{q}}|_{[-1,1]}=1$,
$\text{\textgreek{q}}|_{\mathbb{R}\backslash[-2,2]}=0$ and 
\begin{equation}
\text{\textgreek{q}}|_{(-2,2)}>0,\label{eq:SupportedCutOff}
\end{equation}
and let $\text{\textgreek{e}}_{0}\ll1$ be a small enough absolute
constant. For any $0<\text{\textgreek{e}}<\text{\textgreek{e}}_{0}$,
any $r_{0}>0$ satisfying 
\begin{equation}
1-\exp\big(-2(h_{0}(\text{\textgreek{e}}))^{-4}\big)<\frac{r_{0}}{\frac{2}{\sqrt{-\Lambda}}\text{\textgreek{e}}-\frac{1}{3}\Lambda r_{0}^{3}}<1-\frac{1}{2}\exp\big(-2(h_{0}(\text{\textgreek{e}}))^{-4}\big)\label{eq:BoundMirror}
\end{equation}
(note that (\ref{eq:BoundMirror}) implies that $\frac{r_{0}\sqrt{-\Lambda}}{2\text{\textgreek{e}}}=1+O\big(\exp\big(-2(h_{0}(\text{\textgreek{e}}))^{-4}\big)\big)$
as $\text{\textgreek{e}}\rightarrow0$), we will define the following
function on $[0,+\infty)\times(0,+\infty)$:
\begin{equation}
\bar{f}_{\text{\textgreek{e}}}(v,p^{u})\doteq C_{\text{\textgreek{e}}r_{0}}\sum_{j=0}^{\lceil1/h_{1}(\text{\textgreek{e}})\rceil}\text{\textgreek{q}}\Big(p^{u}-3\Big)\cdot\frac{1}{h_{2}(\text{\textgreek{e}})}\text{\textgreek{q}}\Big(\frac{(v-v^{(j)})\sqrt{-\Lambda}-2h_{2}(\text{\textgreek{e}})}{h_{2}(\text{\textgreek{e}})}\Big)\cdot h_{(j)}(\text{\textgreek{e}})\cdot\text{\textgreek{e}},\label{eq:TheIngoingVlasovInitially}
\end{equation}
for some constant $C_{\text{\textgreek{e}}r_{0}}$ to be specified
in terms of $\text{\textgreek{e}},r_{0}$ later, where $\lceil1/h_{1}(\text{\textgreek{e}})\rceil$
denotes the least integer greater than or equal to $1/h_{1}(\text{\textgreek{e}})$,
\begin{equation}
v^{(j)}=\frac{\pi}{\sqrt{-\Lambda}}-j\frac{\text{\textgreek{e}}}{h_{1}(\text{\textgreek{e}})\sqrt{-\Lambda}}\label{eq:DefinitionV_j}
\end{equation}
 for any $0\le j\le\lceil1/h_{1}(\text{\textgreek{e}})\rceil$, 
\begin{equation}
h_{(0)}=h_{0},\label{eq:TopBeamWeight}
\end{equation}
 and 
\begin{equation}
h_{(j)}=h_{1}
\end{equation}
 for all $1\le j\le\lceil1/h_{1}(\text{\textgreek{e}})\rceil$.

\subsection{\label{sub:Construction-of-the-initial-data}Construction of the
initial data family}

For any $0<\text{\textgreek{e}}<\text{\textgreek{e}}_{0}$, any $r_{0}$
satisfying (\ref{eq:BoundMirror}), we will define the following asymptotically
AdS boundary-characteristic initial data set according to Definition
\ref{def:TypeII}: 
\begin{defn}
\label{def:ThefamilyOfInitialData} For any $0<\text{\textgreek{e}}<\text{\textgreek{e}}_{0}$,
any $r_{0}$ satisfying (\ref{eq:BoundMirror}), we define $v_{0}=v_{0}(r_{0},\text{\textgreek{e}})>0$
and the set of smooth functions $r_{/\text{\textgreek{e}}}:[0,v_{0})\rightarrow[r_{0},+\infty)$,
$\text{\textgreek{W}}_{/\text{\textgreek{e}}}^{2}:[0,v_{0})\rightarrow(0,+\infty)$,
$\bar{f}_{in/\text{\textgreek{e}}}:[0,v_{0})\times(0,+\infty)\rightarrow[0,+\infty)$
and $\bar{f}_{out/\text{\textgreek{e}}}:[0,v_{0})\times(0,+\infty)\rightarrow[0,+\infty)$
by the requirement that $(r_{/\text{\textgreek{e}}},\text{\textgreek{W}}_{/\text{\textgreek{e}}}^{2},\bar{f}_{in/\text{\textgreek{e}}},\bar{f}_{out/\text{\textgreek{e}}})$
is an asymptotically AdS boundary-characteristic initial data set
on $[0,v_{0})$ for the system (\ref{eq:RequationFinal})--(\ref{eq:OutgoingVlasovFinal})
satisfying the reflecting gauge condition at $r=r_{0},+\infty$ so
that 
\begin{equation}
\frac{\partial_{v}r_{/\text{\textgreek{e}}}}{1-\frac{2m_{/\text{\textgreek{e}}}}{r_{/_{\text{\textgreek{e}}}}}}=\frac{1}{2}\label{eq:ConditionOnDvRInitiallyFamily}
\end{equation}
(where $m_{/\text{\textgreek{e}}}$ is defined in terms of $r_{/\text{\textgreek{e}}},\text{\textgreek{W}}_{/\text{\textgreek{e}}}^{2}$
by (\ref{eq:DefinitionHawkingMassCharacteristic})), 
\begin{equation}
\bar{f}_{out/\text{\textgreek{e}}}=0\label{eq:OutgoingInitialdataFamily}
\end{equation}
and 
\begin{equation}
\bar{f}_{in/\text{\textgreek{e}}}(v,p^{u})=\bar{f}_{\text{\textgreek{e}}}(v,p^{u})\label{eq:IngoingInitialDataFamily}
\end{equation}
for all $0\le v\le v_{0}$ and $p^{u}>0$. The constant $C_{\text{\textgreek{e}}r_{0}}$
in (\ref{eq:TheIngoingVlasovInitially}) is fixed in terms of $\text{\textgreek{e}},r_{0}$
by the requirement that 
\begin{equation}
\lim_{v\rightarrow_{v_{0}}}\tilde{m}_{/\text{\textgreek{e}}}=\frac{\text{\textgreek{e}}}{\sqrt{-\Lambda}}\label{eq:MassAtinfinityFixedInitially}
\end{equation}
(in particular, there exists some fixed (large) $C_{0}>1$, independent
of $\text{\textgreek{e}},r_{0}$, so that $C_{\text{\textgreek{e}}r_{0}}\in[C_{0}^{-1},C_{0}]$
for any $0<\text{\textgreek{e}}<\text{\textgreek{e}}_{0}$, any $r_{0}$
satisfying (\ref{eq:BoundMirror}).%
\footnote{In fact, it suffices to choose $C_{0}=50$.%
}
\end{defn}
\begin{figure}[h] 
\centering 
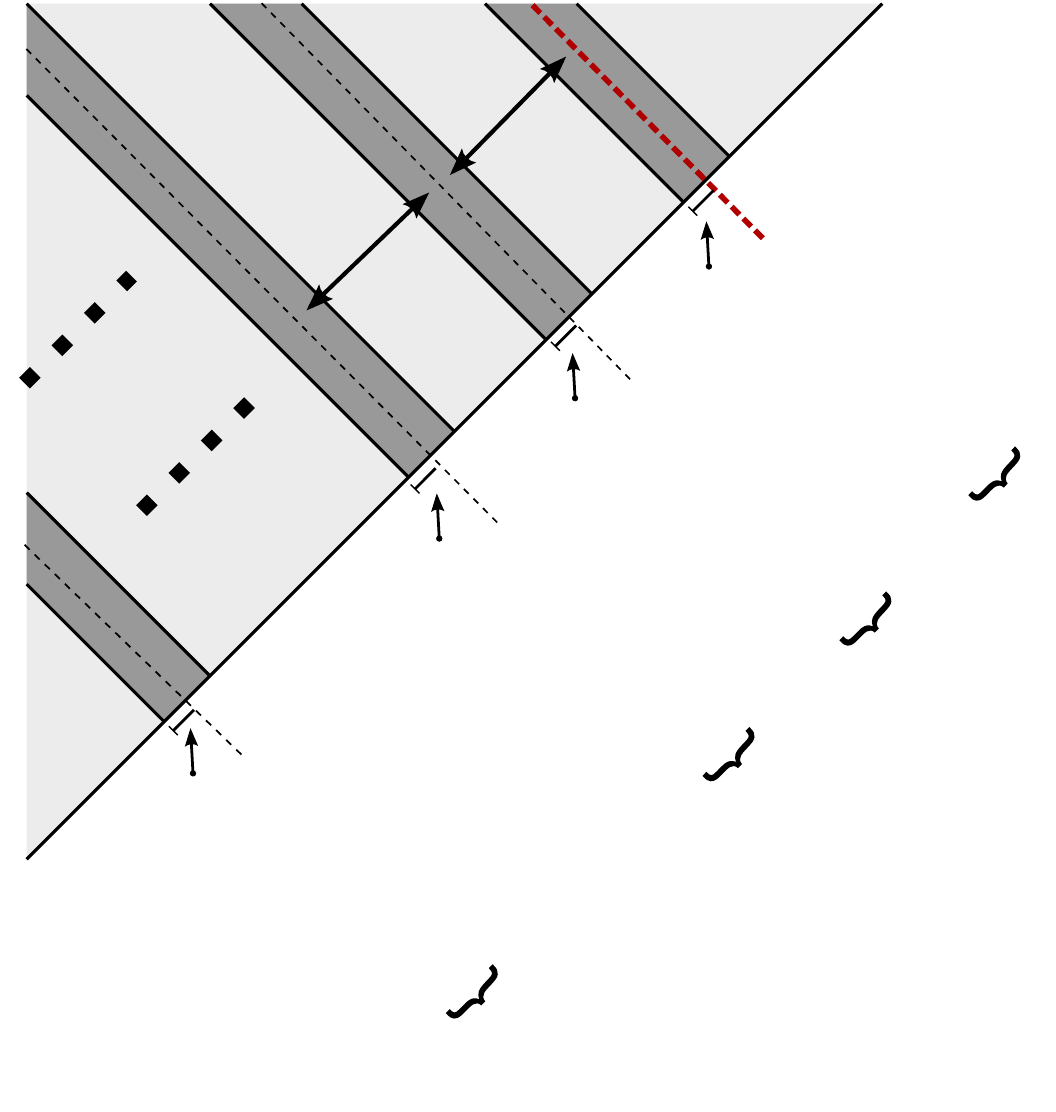 
\caption{The initial data $(r_{\epsilon /},\Omega^{2}_{\epsilon /},\bar{f}_{in \epsilon /},\bar{f}_{out \epsilon /})$ give rise to $k +1$ Vlasov beams, initially arranged as depicted above (using, for convenience, the abbreviation $k=\lceil 1/h_{1}(\epsilon ) \rceil $ and $l=(-\Lambda)^{-1/2}$). For any $0\le j \le k$, we can associate to the beam centered around $v=v^{(j)}+2h_{2}(\epsilon )l$ the initial mass difference $\mathfrak{D} \tilde{m}^{(j)}_{/}$, defined as the renormalised Hawking mass difference between the two vacuum regions surrounding the beam. The top beam (centered initially around $v=v^{0}+2h_{2}(\epsilon )l$) has a greater initial mass difference than the rest of the beams. }
\end{figure}
\begin{rem*}
The conditions (\ref{eq:ConditionOnDvRInitiallyFamily})--(\ref{eq:IngoingInitialDataFamily})
determine $v_{0}$ and $r_{/\text{\textgreek{e}}},\text{\textgreek{W}}_{/\text{\textgreek{e}}}^{2}$
uniquely in terms of $\text{\textgreek{e}},r_{0}$. While $(r_{/\text{\textgreek{e}}},\text{\textgreek{W}}_{/\text{\textgreek{e}}}^{2},\bar{f}_{in/\text{\textgreek{e}}},\bar{f}_{out/\text{\textgreek{e}}})$
depend on both $\text{\textgreek{e}}$ and $r_{0}$, we will only
use the subscript $\text{\textgreek{e}}$ in their notation, since
most of the estimates that we will later establish for their maximal
development will depend only on $\text{\textgreek{e}}$.

The initial data $(r_{/}^{(\text{\textgreek{e}})},(\text{\textgreek{W}}_{/}^{(\text{\textgreek{e}})})^{2},\bar{f}_{in/}^{(\text{\textgreek{e}})},\bar{f}_{out/}^{(\text{\textgreek{e}})})$
in the statement of Theorem \ref{thm:TheTheorem} will eventually
be chosen to be small perturbations of $(r_{/\text{\textgreek{e}}},\text{\textgreek{W}}_{/\text{\textgreek{e}}}^{2},\bar{f}_{in/\text{\textgreek{e}}},\bar{f}_{out/\text{\textgreek{e}}})$
(see Section \ref{sec:Proof}).
\end{rem*}

\subsection{\label{sub:Notational-conventions-andNotational5}Notational conventions
and basic computations}

For any $0<\text{\textgreek{e}}<\text{\textgreek{e}}_{0}$ and any
$r_{0}$ satisfying (\ref{eq:BoundMirror}), let $(r_{/\text{\textgreek{e}}},\text{\textgreek{W}}_{/\text{\textgreek{e}}}^{2},\bar{f}_{in/\text{\textgreek{e}}},\bar{f}_{out/\text{\textgreek{e}}})$
be the initial data set defined by Definition~\ref{def:ThefamilyOfInitialData}.
Assuming that $\text{\textgreek{e}}_{0}$ is fixed small enough, for
any $0<\text{\textgreek{e}}<\text{\textgreek{e}}_{0}$ and any $r_{0}$
satisfying (\ref{eq:BoundMirror}), the initial data set $(r_{/\text{\textgreek{e}}},\text{\textgreek{W}}_{/\text{\textgreek{e}}}^{2},\bar{f}_{in/\text{\textgreek{e}}},\bar{f}_{out/\text{\textgreek{e}}})$
satisfies the following estimate depending only on $\text{\textgreek{e}}$:
\begin{equation}
||(r_{/\text{\textgreek{e}}},\text{\textgreek{W}}_{/\text{\textgreek{e}}}^{2},\bar{f}_{in/\text{\textgreek{e}}},\bar{f}_{out/\text{\textgreek{e}}})||_{\mathcal{C}\mathcal{S}}\le Ch_{0}(\text{\textgreek{e}}),\label{eq:CSnormFamily}
\end{equation}
where $||\cdot||_{\mathcal{C}\mathcal{S}}$ is defined by (\ref{eq:GeometricNormForCauchyStability})
and $C>0$ is a fixed constant. 

Let $(\mathcal{U}_{\text{\textgreek{e}}};r_{\text{\textgreek{e}}},\text{\textgreek{W}}_{\text{\textgreek{e}}}^{2},\bar{f}_{in\text{\textgreek{e}}},\bar{f}_{out\text{\textgreek{e}}})$
be the maximal future development of $(r_{/\text{\textgreek{e}}},\text{\textgreek{W}}_{/\text{\textgreek{e}}}^{2},\bar{f}_{in/\text{\textgreek{e}}},\bar{f}_{out/\text{\textgreek{e}}})$
(see Theorem \ref{thm:maximalExtension}). In view of Proposition~\ref{prop:CauchyStabilityOfAdS},
the bound (\ref{eq:CSnormFamily}) implies that, for any fixed $u_{*}>0$
and any $\text{\textgreek{d}}>0$, there exists an $\text{\textgreek{e}}_{\text{\textgreek{d}}u_{*}}>0$
sufficiently small depending only on $\text{\textgreek{d}}$ and $u_{*}$
so that, for any $0\le\text{\textgreek{e}}<\text{\textgreek{e}}_{\text{\textgreek{d}}u_{*}}$:
\begin{equation}
\mathcal{W}_{u_{*}}\doteq\{0<u<u_{*}\}\cap\{u<v<u+v_{0}\}\subset\mathcal{U}\label{eq:InclusionInMaximalDomain-1}
\end{equation}
and 
\begin{equation}
\sqrt{-\Lambda}\sup_{\mathcal{W}_{u_{*}}}|\tilde{m}_{\text{\textgreek{e}}}|+\sup_{\mathcal{W}_{u_{*}}}\log\Bigg(\frac{1-\frac{1}{3}\Lambda r_{\text{\textgreek{e}}}^{2}}{1-\max\{\frac{2m_{\text{\textgreek{e}}}}{r_{\text{\textgreek{e}}}},0\}}\Bigg)+\sup_{\bar{u}}\int_{\{u=\bar{u}\}\cap\mathcal{W}_{u_{*}}}\frac{r_{\text{\textgreek{e}}}(T_{vv})_{\text{\textgreek{e}}}}{\partial_{v}r_{\text{\textgreek{e}}}}\, dv+\sup_{\bar{v}}\int_{\{v=\bar{v}\}\cap\mathcal{W}_{u_{*}}}\frac{r_{\text{\textgreek{e}}}(T_{uu})_{\text{\textgreek{e}}}}{(-\partial_{u}r_{\text{\textgreek{e}}})}\, du<\text{\textgreek{d}},\label{eq:SmallnessCauchyStability-1}
\end{equation}
where $m_{\text{\textgreek{e}}},\tilde{m}_{\text{\textgreek{e}}},(T_{uu})_{\text{\textgreek{e}}},(T_{vv})_{\text{\textgreek{e}}}$
are defined in terms of $r_{\text{\textgreek{e}}},\text{\textgreek{W}}_{\text{\textgreek{e}}}^{2}\bar{f}_{in\text{\textgreek{e}}},\bar{f}_{out\text{\textgreek{e}}}$
by (\ref{eq:DefinitionHawkingMass}), (\ref{eq:RenormalisedHawkingMass}),
(\ref{eq:T_uuComponent}) and (\ref{eq:T_vvComponent}). In particular,
if 
\begin{equation}
g_{AdS}=-\text{\textgreek{W}}_{AdS,r_{0},v_{0}}^{2}dudv+r_{AdS,r_{0},v_{0}}g_{\mathbb{S}^{2}}
\end{equation}
is the pure AdS metric in a spherically symmetric coordinate chart
$(u,v)$ such that $r_{AdS,r_{0},v_{0}}=r_{0}$ on $\{u=v\}$ and
$r_{AdS,r_{0},v_{0}}=+\infty$ on $\{u=v-v_{0}\}$,%
\footnote{Note that such a coordinate chart is not unique.%
} then $(\mathcal{U}_{\text{\textgreek{e}}};r_{\text{\textgreek{e}}},\text{\textgreek{W}}_{\text{\textgreek{e}}}^{2},\bar{f}_{in\text{\textgreek{e}}},\bar{f}_{out\text{\textgreek{e}}})$,
when restricted on $\mathcal{W}_{u_{*}}$, is $\text{\textgreek{d}}$-close
to $(\mathcal{W}_{u_{*}};r_{AdS,r_{0},v_{0}},\text{\textgreek{W}}_{AdS,r_{0},v_{0}}^{2},0,0)$
with respect to the (gauge invariant) distance defined by (\ref{eq:GeometricNormForCauchyStability}).
Notice also that (\ref{eq:SmallnessCauchyStability-1}) implies that,
provided $\text{\textgreek{d}}$ is small enough, the spacetime $(\mathcal{W}_{u_{*}}\times\mathbb{S}^{2},g_{\text{\textgreek{e}}})$
does not contain any trapped surface, where 
\begin{equation}
g_{\text{\textgreek{e}}}=-\text{\textgreek{W}}_{\text{\textgreek{e}}}^{2}dudv+r_{\text{\textgreek{e}}}^{2}g_{\mathbb{S}^{2}}.
\end{equation}

Notice that, in view of the conservation of $\tilde{m}$ on $\text{\textgreek{g}}_{0}$
and $\mathcal{I}$ (see (\ref{eq:ConstantMassMirror}) and (\ref{eq:ConstantMassInfinity})),
we have: 
\begin{equation}
\tilde{m}_{\text{\textgreek{e}}}|_{\text{\textgreek{g}}_{0}}=0\label{eq:MassAxis}
\end{equation}
and 
\begin{equation}
\tilde{m}_{\text{\textgreek{e}}}|_{\mathcal{I}}=\lim_{v\rightarrow v_{0}}\tilde{m}_{/\text{\textgreek{e}}}(v)=\frac{\text{\textgreek{e}}}{\sqrt{-\Lambda}}.\label{eq:MassInfinity}
\end{equation}

For each $0\le j\le\lceil1/h_{1}(\text{\textgreek{e}})\rceil$, we
can associate to the beam centered at $v=v^{(j)}+\frac{2}{\sqrt{-\Lambda}}h_{2}(\text{\textgreek{e}})$
the mass difference 
\begin{equation}
\mathfrak{D}\tilde{m}_{/}^{(j)}\doteq\tilde{m}_{/\text{\textgreek{e}}}\big(v^{(j)}+\frac{4}{\sqrt{-\Lambda}}h_{2}(\text{\textgreek{e}})\big)-\tilde{m}_{/\text{\textgreek{e}}}\big(v^{(j)}\big).\label{eq:InitialMassDifferenceBeams}
\end{equation}
Notice that 
\begin{equation}
\mathcal{D}\tilde{m}_{/}^{(0)}\simeq\frac{h_{0}(\text{\textgreek{e}})\text{\textgreek{e}}}{\sqrt{-\Lambda}}
\end{equation}
and, for all $1\le j\le\lceil1/h_{1}(\text{\textgreek{e}})\rceil$:
\begin{equation}
\mathcal{D}\tilde{m}_{/}^{(j)}\simeq\frac{h_{1}(\text{\textgreek{e}})\text{\textgreek{e}}}{\sqrt{-\Lambda}}.
\end{equation}
Furthermore: 
\begin{equation}
\sum_{j=0}^{\lceil1/h_{1}(\text{\textgreek{e}})\rceil}\mathcal{D}\tilde{m}_{/}^{(j)}=\lim_{v\rightarrow v_{0}}\tilde{m}_{/\text{\textgreek{e}}}(v)=\frac{\text{\textgreek{e}}}{\sqrt{-\Lambda}}.
\end{equation}

\subsection{\label{sub:Some-geometric-constructions}Some geometric constructions
on $\mathcal{U}_{\epsilon}$}

For any $0<\text{\textgreek{e}}<\text{\textgreek{e}}_{0}$ and any
$r_{0}$ satisfying (\ref{eq:BoundMirror}), we will define some special
subsets of the domain $\mathcal{U}_{\text{\textgreek{e}}}$ of the
maximal future development $(\mathcal{U}_{\text{\textgreek{e}}};r_{\text{\textgreek{e}}},\text{\textgreek{W}}_{\text{\textgreek{e}}}^{2},\bar{f}_{in\text{\textgreek{e}}},\bar{f}_{out\text{\textgreek{e}}})$
of the initial data set $(r_{/\text{\textgreek{e}}},\text{\textgreek{W}}_{/\text{\textgreek{e}}}^{2},\bar{f}_{in/\text{\textgreek{e}}},\bar{f}_{out/\text{\textgreek{e}}})$. 
\begin{rem*}
In the rest of this section, we will adopt the convention that the
boundary $\partial A$ of a subset $A\subseteq\mathcal{U}_{\text{\textgreek{e}}}$
is the boundary of $A$ as a subset of $\mathbb{R}^{2}$ (with respect
to the ambient topology of $\mathbb{R}^{2}$)
\end{rem*}
Let us define the domain of outer communications $\mathcal{D}_{\text{\textgreek{e}}}$
of $\mathcal{U}_{\text{\textgreek{e}}}$ as 
\begin{equation}
\mathcal{D}_{\text{\textgreek{e}}}\doteq J^{-}(\mathcal{I})\cap\mathcal{U}_{\text{\textgreek{e}}},\label{eq:DomainOfOuterCommunications}
\end{equation}
where $J^{-}(\mathcal{I})$ is the causal past of $\mathcal{I}$ with
respect to the reference metric (\ref{eq:ComparisonUVMetric}) (see
(\ref{eq:PastOfInfinity})). In accordance with Theorem \ref{thm:maximalExtension},
we will also define the future event horizon $\mathcal{H}_{\text{\textgreek{e}}}^{+}$
of $\mathcal{U}_{\text{\textgreek{e}}}$ as 
\begin{equation}
\mathcal{H}_{\text{\textgreek{e}}}^{+}\doteq\partial\mathcal{D}_{\text{\textgreek{e}}}\cap\mathcal{U}_{\text{\textgreek{e}}}.\label{eq:FutureEventHorizon}
\end{equation}
Note that we allow $\mathcal{H}_{\text{\textgreek{e}}}^{+}$ to be
empty. In view of Theorem \ref{thm:maximalExtension}, in the case
when $\mathcal{H}_{\text{\textgreek{e}}}^{+}$ is non-empty, it is
necessarily of the form 
\begin{equation}
\mathcal{H}_{\text{\textgreek{e}}}^{+}=\{u=u_{\mathcal{H}_{\text{\textgreek{e}}}^{+}}\}\cap\mathcal{U}_{\text{\textgreek{e}}}
\end{equation}
and has infinite affine length.

We will also define 
\begin{equation}
\mathcal{J}_{\text{\textgreek{e}}}\doteq J^{-}(\text{\textgreek{g}}_{0})\cap\mathcal{U}_{\text{\textgreek{e}}}.
\end{equation}
Notice that, as a consequence of Theorem \ref{thm:maximalExtension},
on $\mathcal{J}_{\text{\textgreek{e}}}\cup\mathcal{D}_{\text{\textgreek{e}}}$
we have 
\begin{equation}
1-\frac{2m}{r}>0,
\end{equation}
i.\,e.~trapped spheres can only appear in the region $\mathcal{U}_{\text{\textgreek{e}}}\backslash(\mathcal{J}_{\text{\textgreek{e}}}\cup\mathcal{D}_{\text{\textgreek{e}}})$.
In the case $\mathcal{H}_{\text{\textgreek{e}}}^{+}\neq\emptyset$,
Theorem \ref{thm:maximalExtension} also implies that $\mathcal{J}_{\text{\textgreek{e}}}\backslash\mathcal{D}_{\text{\textgreek{e}}}\neq\emptyset$. 

For any $v_{*}\in[0,v_{0}]$ and any integer $n\ge1$, we will define
\begin{equation}
U_{n}(v_{*})\doteq v_{*}+(n-1)v_{0}
\end{equation}
and 
\begin{equation}
V_{n}(v_{*})=v_{*}+nv_{0}.
\end{equation}
We will also set 
\begin{equation}
V_{0}(v_{*})\doteq v_{*}.
\end{equation}
Notice that the segment $\{u=U_{n}(v_{*})\}\cap\mathcal{U}_{\text{\textgreek{e}}}$
is the image of the ingoing null geodesic of $\mathcal{U}_{\text{\textgreek{e}}}$
emanating from the point $(0,v_{*})$ after $n$ reflections off $\text{\textgreek{g}}_{0}$
and $n-1$ reflections off $\mathcal{I}$, while the segment $\{v=V_{n}(v_{*})\}\cap\mathcal{U}_{\text{\textgreek{e}}}$
is the image of the same null geodesic after $n$ reflections off
$\text{\textgreek{g}}_{0}$ and $n$ reflections off $\mathcal{I}$. 

\begin{figure}[h] 
\centering 
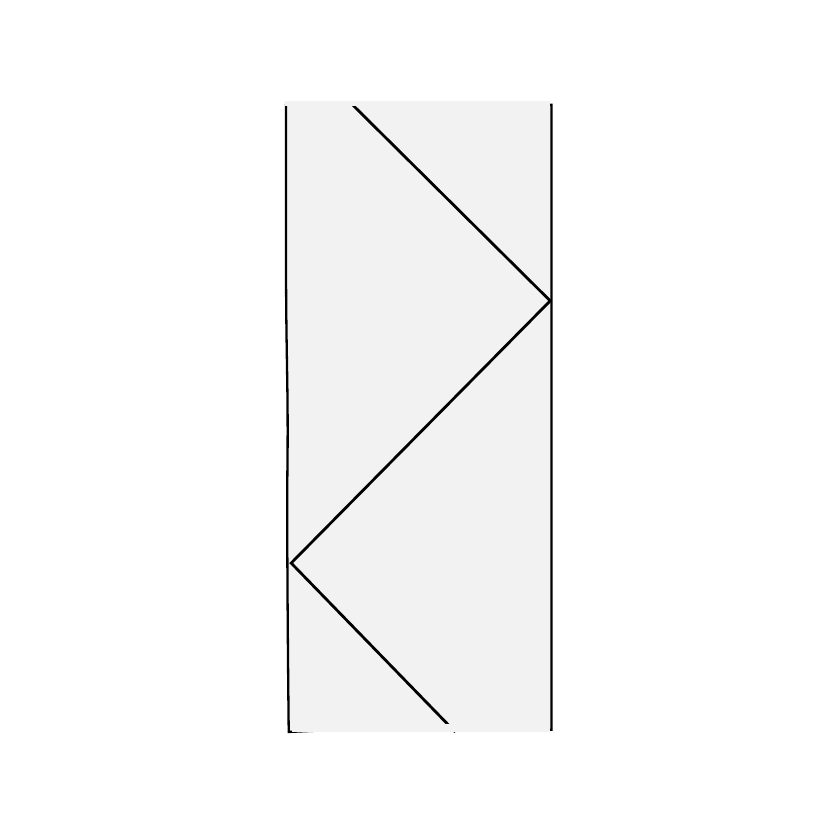 
\caption{Schematic depiction of the lines $v=V_{n-1}(v_{*})$, $u=U_{n}(v_{*})$ and $v=V_{n}(v_{*})$.}
\end{figure}

Let us define the domains $\mathcal{R}_{\text{\textgreek{e}}n}^{(i,j)}\subset\mathcal{U}_{\text{\textgreek{e}}}$
for any $n\in\mathbb{N}$, $0\le i\le\lceil1/h_{1}(\text{\textgreek{e}})\rceil$
and $i\le j\le\lceil1/h_{1}(\text{\textgreek{e}})\rceil+i+1$ by the
relation 
\begin{equation}
\mathcal{R}_{\text{\textgreek{e}}n}^{(i,j)}=\Big\{ U_{n}\big(v^{(i)}+\frac{4}{\sqrt{-\Lambda}}h_{2}(\text{\textgreek{e}})\big)<u<U_{n}\big(v^{(i-1)}\big)\Big\}\cap\Big\{ V_{n}\big(v^{(j)}+\frac{4}{\sqrt{-\Lambda}}h_{2}(\text{\textgreek{e}})\big)<v<V_{n}\big(v^{(j-1)}\big)\Big\}\cap\mathcal{U}_{\text{\textgreek{e}}},\label{eq:Domains_D}
\end{equation}
where we have used the following conventions in the expression (\ref{eq:Domains_D}):
\begin{enumerate}
\item $U_{n}\big(v^{(-1)}\big)\doteq U_{n+1}\big(v^{(\lceil1/h_{1}(\text{\textgreek{e}})\rceil)}\big)$. 
\item $V_{n}\big(v^{(\lceil1/h_{1}(\text{\textgreek{e}})\rceil+l)}+c\big)\doteq V_{n-1}\big(v^{(l-1)}+c\big)$
for any integer $1\le l\le\lceil1/h_{1}(\text{\textgreek{e}})\rceil$
and any $c\ge0$.
\end{enumerate}
\begin{figure}[h!] 
\centering 
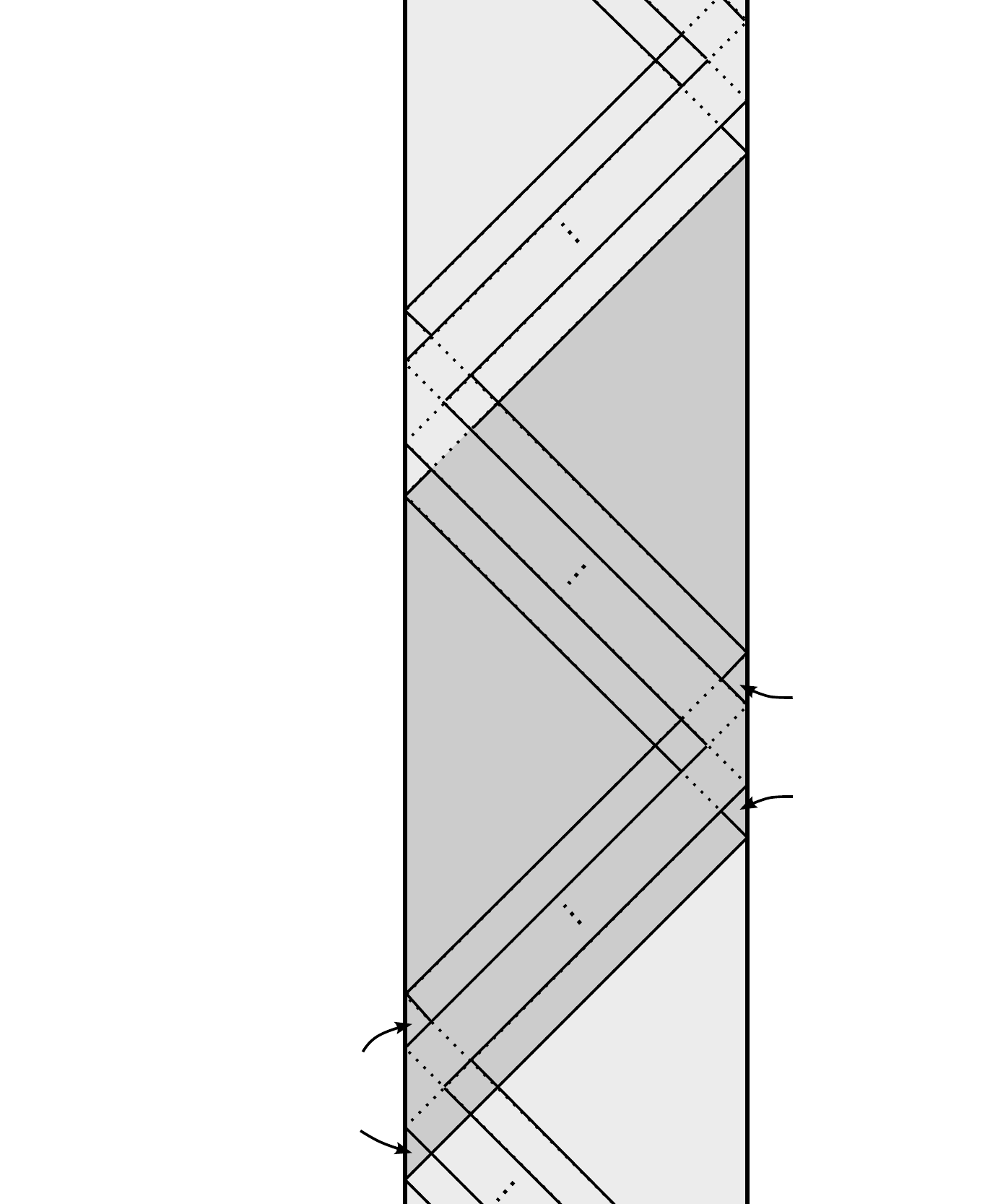 
\caption{Schematic depiction of the domains $\mathcal{R}_{\epsilon n}^{(i,j)}$ for $0\le i\le k$ and $i\le j \le k+i+1$ (where $k=\lceil1/h_{1}(\text{\textgreek{e}})\rceil$). We have used the shorthand notation $u_{n}^{(j)}=U_{n}\big( v^{(j)}\big) $ and $v_{n}^{(j)}=V_{n}\big( v^{(j)}\big) $. Having assumed that $h_{2}(\text{\textgreek{e}})\ll \text{\textgreek{e}})$, all the beams of the form $\Big\{U_{n}\big(v^{(i)}\big)\le u\le U_{n}\big(v^{(i)}+\frac{4}{\sqrt{-\Lambda}}h_{2}(\text{\textgreek{e}})\big) \Big\} $ and $\Big\{V_{n}\big(v^{(j)}\big)\le v\le V_{n}\big(v^{(j)}+\frac{4}{\sqrt{-\Lambda}}h_{2}(\text{\textgreek{e}})\big) \Big\} $, which separate the domains $\mathcal{R}_{\epsilon n}^{(i,j)}$, are depicted as straight line segments.}
\end{figure}
\begin{rem*}
The boundary of the domains $\mathcal{R}_{\text{\textgreek{e}}n}^{(i,i)}$,
$0\le i\le\lceil1/h_{1}(\text{\textgreek{e}})\rceil$, contains a
segment $\mathcal{I}$, while the boundary of the domains $\mathcal{R}_{\text{\textgreek{e}}n}^{(i,\lceil1/h_{1}(\text{\textgreek{e}})\rceil+1+i)}$,
$0\le i\le\lceil1/h_{1}(\text{\textgreek{e}})\rceil$, contains a
segment in $\text{\textgreek{g}}_{0}$. 
\end{rem*}
\begin{figure}[h] 
\centering 
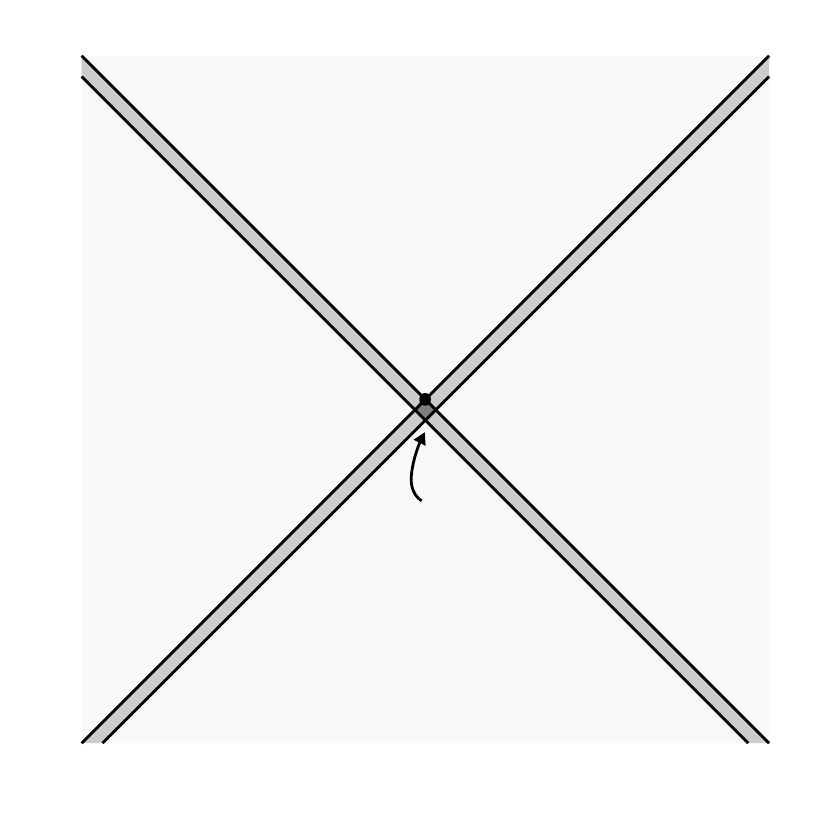 
\caption{Typical arrangement of neighboring vacuum domains not intersecting $\text{\textgreek{g}}_{0}$ or $\mathcal{I}$. The point $A$ at the lower corner of $\mathcal{R}_{n}^{(i,j)}$ satisfies $r(A)=r_{n}^{(i,j)}$. For simplicity, we have used the shorthand notation $l=(-\Lambda)^{-1/2}$.}
\end{figure}

Notice that $T_{uu}=T_{vv}=0$ in $\mathcal{R}_{\text{\textgreek{e}}n}^{(i,j)}$.
In particular, all the domains $(\mathcal{R}_{\text{\textgreek{e}}n}^{(i,j)}\times\mathbb{S}^{2},g_{\text{\textgreek{e}}})$
are isometric to a region of a member of the Schwarzschild-AdS family
(or to a region of pure AdS spacetime), and the renormalised mass
function $\tilde{m}_{\text{\textgreek{e}}}$ is constant on them.
We will define for any $0\le i\le\lceil1/h_{1}(\text{\textgreek{e}})\rceil$,
$i\le j\le\lceil1/h_{1}(\text{\textgreek{e}})\rceil+i+1$ and $n\in\mathbb{N}$
such that $\mathcal{R}_{\text{\textgreek{e}}n}^{(i,j)}\neq\emptyset$:
\begin{equation}
\tilde{m}_{\text{\textgreek{e}}n}^{(i,j)}\doteq\tilde{m}_{\text{\textgreek{e}}}|_{\mathcal{R}_{\text{\textgreek{e}}n}^{(i,j)}}.\label{eq:Renormalised_Mass_D_n}
\end{equation}
In view of (\ref{eq:MassAxis}) and (\ref{eq:MassInfinity}), we immediately
calculate that for all $0\le i\le\lceil1/h_{1}(\text{\textgreek{e}})\rceil$
and all $n\in\mathbb{N}$ such that $\mathcal{R}_{\text{\textgreek{e}}n}^{(i,i)},\mathcal{R}_{\text{\textgreek{e}}n}^{(i,\lceil1/h_{1}(\text{\textgreek{e}})\rceil+1+i)}\neq\emptyset$:
\begin{equation}
\tilde{m}_{\text{\textgreek{e}}n}^{(i,\lceil1/h_{1}(\text{\textgreek{e}})\rceil+1+i)}=0\label{eq:ZeroMassNearAxis}
\end{equation}
and 
\begin{equation}
\tilde{m}_{\text{\textgreek{e}}n}^{(i,i)}=\tilde{m}_{\text{\textgreek{e}}0}^{(i,i)}=\frac{\text{\textgreek{e}}}{\sqrt{-\Lambda}}.\label{eq:MassAtInfinity}
\end{equation}

For any $n\in\mathbb{N}$, $0\le i\le\lceil1/h_{1}(\text{\textgreek{e}})\rceil$
and $i+1\le j\le\lceil1/h_{1}(\text{\textgreek{e}})\rceil+i$, we
will define the interaction regions: 
\begin{equation}
\mathcal{N}_{\text{\textgreek{e}}n}^{(i,j)}\doteq\Big\{ U_{n}\big(v^{(i)}\big)\le u\le U_{n}\big(v^{(i)}+\frac{4}{\sqrt{-\Lambda}}h_{2}(\text{\textgreek{e}})\big)\Big\}\cap\Big\{ V_{n}\big(v^{(j)}\big)\le u\le V_{n}\big(v^{(j)}+\frac{4}{\sqrt{-\Lambda}}h_{2}(\text{\textgreek{e}})\big)\Big\}\cap\mathcal{U}_{\text{\textgreek{e}}},\label{eq:InteractionRegion}
\end{equation}
where the conventions stated below (\ref{eq:Domains_D}) hold regarding
indices smaller than $0$ or larger than $\lceil1/h_{1}(\text{\textgreek{e}})\rceil$. 

Let us define for any $0\le i\le\lceil1/h_{1}(\text{\textgreek{e}})\rceil$,
$i\le j\le\lceil1/h_{1}(\text{\textgreek{e}})\rceil+i+1$ and $n\in\mathbb{N}$
such that $\mathcal{R}_{\text{\textgreek{e}}n}^{(i,j)}\neq\emptyset$:
\begin{equation}
r_{\text{\textgreek{e}}n}^{(i,j)}\doteq r_{\text{\textgreek{e}}}\big(U_{n}(v^{(i)}+\frac{4}{\sqrt{-\Lambda}}h_{2}(\text{\textgreek{e}})),V_{n}(v^{(j)}+\frac{4}{\sqrt{-\Lambda}}h_{2}(\text{\textgreek{e}}))\big).\label{eq:r_n}
\end{equation}
Note that $r_{\text{\textgreek{e}}n}^{(i,i)}=+\infty$ and $r_{\text{\textgreek{e}}n}^{(i,\lceil1/h_{1}(\text{\textgreek{e}})\rceil+i+1)}=r_{0\text{\textgreek{e}}}$.

Finally, let us remark that, in view of property (\ref{eq:SupportOfCutOff})
of the cut-off used in the construction of the initial data and equations
(\ref{eq:ConservationT_vv})--(\ref{eq:ConservationT_uu}), for any
$1\le n\le n_{f}$, $0\le i\le\lceil1/h_{1}(\text{\textgreek{e}})\rceil$,
$i\le j\le\lceil1/h_{1}(\text{\textgreek{e}})\rceil+i+1$, we have
\begin{equation}
T_{uu}>0\mbox{ on }\Big\{ U_{n}(v^{(i)})<u<U_{n}(v^{(i)}+\frac{4}{\sqrt{-\Lambda}}h_{2}(\text{\textgreek{e}}))\Big\}\label{eq:TuuSupport}
\end{equation}
and 
\begin{equation}
T_{vv}>0\mbox{ on }\Big\{ V_{n}(v^{(j)})<u<V_{n}(v^{(j)}+\frac{4}{\sqrt{-\Lambda}}h_{2}(\text{\textgreek{e}}))\Big\}.\label{eq:TvvSupport}
\end{equation}

\section{\label{sec:Proof}Proof of Theorem \ref{thm:TheTheorem}}

In this Section, we will prove Theorem \ref{thm:TheTheorem}. In order
to simplify our notation, from now on, we will often drop the subscripts
$\text{\textgreek{e}}$ in notations related to the maximal future
development $(\mathcal{U}_{\text{\textgreek{e}}};r_{\text{\textgreek{e}}},\text{\textgreek{W}}_{\text{\textgreek{e}}}^{2},\bar{f}_{in\text{\textgreek{e}}},\bar{f}_{out\text{\textgreek{e}}})$
of the initial data $(r_{/\text{\textgreek{e}}},\text{\textgreek{W}}_{/\text{\textgreek{e}}}^{2},\bar{f}_{in/\text{\textgreek{e}}},\bar{f}_{out/\text{\textgreek{e}}})$
(see Definition \ref{def:ThefamilyOfInitialData}).

For any $0<\text{\textgreek{e}}<\text{\textgreek{e}}_{0}$ (provided
$\text{\textgreek{e}}_{0}$ is fixed sufficiently small), any $r_{0}>0$
satisfying (\ref{eq:BoundMirror}), let $(\mathcal{U}_{\text{\textgreek{e}}};r,\text{\textgreek{W}}^{2},\bar{f}_{in},\bar{f}_{out})$
be the maximal future development of $(r_{/\text{\textgreek{e}}},\text{\textgreek{W}}_{/\text{\textgreek{e}}}^{2},\bar{f}_{in/\text{\textgreek{e}}},\bar{f}_{out/\text{\textgreek{e}}})$,
and let us define
\begin{equation}
u_{+}\doteq\sup\big\{ u_{*}>0:\mbox{ }1-\frac{2m}{r}>h_{3}(\text{\textgreek{e}})\mbox{ on }\mathcal{U}_{\text{\textgreek{e}}}\cap\{u<u_{*}\}\big\}\label{eq:UpperUNonTrapping}
\end{equation}
and 
\begin{equation}
\mathcal{U}_{\text{\textgreek{e}}}^{+}=\mathcal{U}_{\text{\textgreek{e}}}\cap\big\{ u<\min\{u_{+},(h_{1}(\text{\textgreek{e}}))^{-2}v_{0\text{\textgreek{e}}}\}\big\},\label{eq:DefinitionUntrappedRegion}
\end{equation}
where 
\begin{equation}
h_{3}(\text{\textgreek{e}})=\exp\Big\{-\exp\Big((h_{1}(\text{\textgreek{e}}))^{-5}\exp\big(-2(h_{0}(\text{\textgreek{e}}))^{-4}\big)\Big)\Big\}.\label{eq:h_3definition}
\end{equation}
Let us also set 
\begin{equation}
k\doteq\lceil1/h_{1}(\text{\textgreek{e}})\rceil\label{eq:kappa}
\end{equation}
and 
\begin{equation}
n_{f}\doteq\lfloor(u_{+}-v^{(0)})/v_{0}\rfloor,\label{eq:defNf}
\end{equation}
where $\lceil x\rceil$ denotes the least integer greater than or
equal to $x$, while $\lfloor x\rfloor$ denotes the largest integer
less than or equal to $x$. 

The proof of Theorem \ref{thm:TheTheorem} will follow in two steps:
First, in Section \ref{sub:NearlyTrapped}, we will show that: 
\begin{equation}
\sup_{\mathcal{U}_{\text{\textgreek{e}}}^{+}}\big(1-\frac{2m}{r}\big)=h_{3}(\text{\textgreek{e}}),\label{eq:NearTrappingIsAchieved}
\end{equation}
i.\,e.~that $\mathcal{U}_{\text{\textgreek{e}}}^{+}$ contains a
nearly-trapped sphere. Then, in Section \ref{sub:FinalStep}, we will
show that, at the final step of the evolution, either a trapped sphere
is formed, or there exists a small perturbation of the initial data
$(r_{/\text{\textgreek{e}}},\text{\textgreek{W}}_{/\text{\textgreek{e}}}^{2},\bar{f}_{in/\text{\textgreek{e}}},\bar{f}_{out/\text{\textgreek{e}}})$
giving rise to a trapped sphere. 

Before proving (\ref{eq:NearTrappingIsAchieved}), we will need to
establish some necessary bounds for the evolution of $(r,\text{\textgreek{W}}^{2},\bar{f}_{in},\bar{f}_{out})$
in the region $\mathcal{U}_{\text{\textgreek{e}}}^{+}$. These bounds,
which will be obtained in Section \ref{sub:Inductive-bounds}, will
be used both in Section \ref{sub:NearlyTrapped} and in Section \ref{sub:FinalStep}.

\subsection{\label{sub:Inductive-bounds}Inductive bounds for the evolution in
the region $\mathcal{U}_{\text{\textgreek{e}}}^{+}$}

In this Section, we will establish a number of useful bounds for $(\mathcal{U}_{\text{\textgreek{e}}}^{+};r,\text{\textgreek{W}}^{2},\bar{f}_{in},\bar{f}_{out})$.
These bounds will include a number of inductive bounds for the quantities
$\tilde{m}_{n}^{(1,k+1)}$, $r_{n}^{(k,k+1)}$ and $r_{n}^{(1,k+1)}$
(with $k$ defined by (\ref{eq:kappa})), that will be of fundamental
significance in the proof of Theorem \ref{thm:TheTheorem}.

In particular, we will prove the following result:
\begin{prop}
\label{prop:TheMainBootstrapBeforeTrapping} For any $0<\text{\textgreek{e}}<\text{\textgreek{e}}_{0}$,
the following bounds hold for $(\mathcal{U}_{\text{\textgreek{e}}}^{+};r,\text{\textgreek{W}}^{2},\bar{f}_{in},\bar{f}_{out})$: 

\begin{enumerate}

\item On $\mathcal{U}_{\text{\textgreek{e}}}^{+}$, we can estimate:
\begin{equation}
\Big|\log\Big(\frac{-\partial_{u}r}{1-\frac{1}{3}\Lambda r^{2}}\Big)\Big|+\Big|\log\Big(\frac{\partial_{v}r}{1-\frac{2m}{r}}\Big)\Big|\le\big(h_{1}(\text{\textgreek{e}})\big)^{-4}\log\big((h_{3}(\text{\textgreek{e}}))^{-1}\big).\label{eq:RoughBoundGeometry}
\end{equation}

\item For any $1\le n\le n_{f}$: 
\begin{equation}
r_{n}^{(0,k)}\ge\frac{\text{\textgreek{e}}^{-\frac{1}{2}}}{\sqrt{-\Lambda}},\label{eq:BoundForRAwayInteractionProp}
\end{equation}
\begin{equation}
r_{n}^{(k,k+1)}\le\frac{\text{\textgreek{e}}^{\frac{1}{2}}}{\sqrt{-\Lambda}},\label{eq:UpperBoundForAxisInteractionProp}
\end{equation}
\begin{equation}
\frac{2(\tilde{m}|_{\mathcal{I}}-\tilde{m}_{n}^{(1,k+1)})}{r_{0}}\ge\exp\big(-2(h_{0}(\text{\textgreek{e}}))^{-4}\big),\label{eq:EnoughMassBehind}
\end{equation}
\begin{equation}
\frac{2(\tilde{m}|_{\mathcal{I}}-\tilde{m}_{n}^{(1,k+1)})}{r_{0}}\le1-\frac{1}{C_{0}}h_{0}(\text{\textgreek{e}})\label{eq:NotEnoughMassBehind}
\end{equation}
and 
\begin{equation}
\frac{r_{n}^{(1,k+1)}}{r_{0}}-1\ge\exp\Big(-(h_{0}(\text{\textgreek{e}}))^{-4}\Big),\label{eq:BoundSecondBeamchanged}
\end{equation}
where $C_{0}>1$ is a large fixed constant (independent of all the
parameters). 

\item For any $2\le n\le n_{f}$: 
\begin{equation}
\frac{\tilde{m}_{n}^{(1,k+1)}}{\tilde{m}_{n-1}^{(1,k+1)}}\ge1+\frac{1}{4}\exp\big(-2(h_{0}(\text{\textgreek{e}}))^{-4}\big)\frac{r_{0}}{r_{n}^{(k,k+1)}}\label{eq:BoundForMassIncrease}
\end{equation}
and 
\begin{equation}
\frac{r_{n}^{(k,k+1)}-r_{0}}{r_{n-1}^{(k,k+1)}-r_{0}}\le1+2C_{0}\frac{r_{0}}{r_{n-1}^{(k,k+1)}}\Big(\Big|\log\big(1-\frac{2\tilde{m}_{n-1}^{(1,k+1)}}{r_{0}}\big)\Big|+(h_{0}(\text{\textgreek{e}}))^{-4}\Big).\label{eq:BoundForMaxBeamSeparation}
\end{equation}

\end{enumerate}
\end{prop}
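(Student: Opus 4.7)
The plan is to establish Proposition~\ref{prop:TheMainBootstrapBeforeTrapping} by a finite induction on $n$ running from $n=1$ up to $n=n_{f}$, with a nested bootstrap at each step. At step $n$, I postulate weakened versions of (\ref{eq:RoughBoundGeometry})--(\ref{eq:BoundForMaxBeamSeparation}) on the set $\mathcal{U}_{\text{\textgreek{e}}}^{+}\cap\{u\le U_{n}(v^{(k)})\}$ (and more precisely, successively on $\{u\le U_{n}(v^{(\ell)})\}$ for $\ell=0,1,\ldots,k$), and then improve them to the sharper constants stated. The base case $n=1$ follows from Theorem~\ref{prop:CauchyStability}: since the initial data satisfy $||(r_{/\text{\textgreek{e}}},\text{\textgreek{W}}_{/\text{\textgreek{e}}}^{2},\bar{f}_{in/\text{\textgreek{e}}},\bar{f}_{out/\text{\textgreek{e}}})||_{\mathcal{CS}}\lesssim h_{0}(\text{\textgreek{e}})$, the first reflection cycle is well-approximated by the exact evolution of the beam configuration on a pure-AdS background, yielding all six estimates with room to spare.

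The first key ingredient in the inductive step is the \emph{beam interaction formula}: in each interaction region $\mathcal{N}_{n}^{(i,j)}$, the narrowness $h_{2}(\text{\textgreek{e}})$ of the beams ensures that $r$ is essentially constant across $\mathcal{N}_{n}^{(i,j)}$, so integrating (\ref{eq:DerivativeInUDirectionKappa})--(\ref{eq:DerivativeTildeVMass}) against the conservation laws for $r^{2}T_{uu}$ and $r^{2}T_{vv}$ produces the discrete mass-difference jumps described in Section~\ref{sub:Sketch-of-the-proof},
\[
\overline{\mathfrak{D}}_{+}\tilde{m}=\overline{\mathfrak{D}}_{-}\tilde{m}\cdot\exp\!\Bigl(\tfrac{2}{r}\tfrac{\mathfrak{D}_{-}\tilde{m}}{1-2m/r}+\mathrm{Err}_{in}\Bigr),\qquad\mathfrak{D}_{+}\tilde{m}=\mathfrak{D}_{-}\tilde{m}\cdot\exp\!\Bigl(-\tfrac{2}{r}\tfrac{\overline{\mathfrak{D}}_{-}\tilde{m}}{1-2m/r}+\mathrm{Err}_{out}\Bigr),
\]
with error terms controlled via the bootstrap. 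Applying this to the top beam $\text{\textgreek{z}}_{0}$ crossing each of the $k$ other beams, first in the near-mirror region (where $r\sim r_{n}^{(k,k+1)}$) and then near $\mathcal{I}$ (where $r\sim r_{n}^{(0,k)}$), and summing the contributions, produces (\ref{eq:BoundForMassIncrease}). The geometric bound (\ref{eq:RoughBoundGeometry}) is obtained in parallel by integrating (\ref{eq:DerivativeInUDirectionKappa})--(\ref{eq:DerivativeInVDirectionKappaBar}) along characteristics, using the defining property $1-2m/r\ge h_{3}(\text{\textgreek{e}})$ of $\mathcal{U}_{\text{\textgreek{e}}}^{+}$ together with the total energy flux through each hypersurface, which is controlled inductively via $\tilde{m}|_{\mathcal{I}}-\tilde{m}_{n}^{(1,k+1)}$.

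The structural radius bounds (\ref{eq:BoundForRAwayInteractionProp}) and (\ref{eq:UpperBoundForAxisInteractionProp}) encode the geometric fact that the near-$\mathcal{I}$ interaction occurs in the regime $r\gg(-\Lambda)^{-1/2}$ while the near-mirror interaction occurs in $r\ll(-\Lambda)^{-1/2}$. They follow by integrating $\partial_{v}r$ and $\partial_{u}r$ using (\ref{eq:RoughBoundGeometry}), combined with the reflecting gauge conditions at $\text{\textgreek{g}}_{0}$ and $\mathcal{I}$ and the fact that the angular width $\sim h_{2}(\text{\textgreek{e}})(-\Lambda)^{-1/2}$ of the beam bundle is essentially preserved under each reflection. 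The bounds (\ref{eq:EnoughMassBehind}) and (\ref{eq:NotEnoughMassBehind}) on the gap $\tilde{m}|_{\mathcal{I}}-\tilde{m}_{n}^{(1,k+1)}$ then follow by monotonically aggregating (\ref{eq:BoundForMassIncrease}) across all previous steps and invoking the defining condition $u<u_{+}$ of $\mathcal{U}_{\text{\textgreek{e}}}^{+}$ together with the choice (\ref{eq:BoundMirror}) of $r_{0}$.

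The main obstacle is the joint proof of (\ref{eq:BoundSecondBeamchanged}) and (\ref{eq:BoundForMaxBeamSeparation}), which together control the growth of the near-mirror width $r_{n}^{(k,k+1)}$. This growth is driven by a red-shift-like expansion of outgoing null rays passing near $\{2m/r\approx1\}$, and must be sharp enough that the product $\prod_{n}(1+2C_{0}\tfrac{r_{0}}{r_{n-1}^{(k,k+1)}}(|\log(1-2\tilde{m}_{n-1}^{(1,k+1)}/r_{0})|+(h_{0}(\text{\textgreek{e}}))^{-4}))$ remains bounded up to $n=n_{f}$. My plan is to derive (\ref{eq:BoundForMaxBeamSeparation}) by integrating the constraint equations along the outgoing null rays bounding $\mathcal{R}_{\text{\textgreek{e}}n}^{(k,k+1)}$ and comparing with those bounding $\mathcal{R}_{\text{\textgreek{e}}n-1}^{(k,k+1)}$, then to couple this with (\ref{eq:BoundForMassIncrease}) through an inductive bound of the schematic form $\log(r_{n-1}^{(1,k+1)}/r_{n}^{(1,k+1)})\le C_{0}\log(\tilde{m}_{n}^{(1,k+1)}/\tilde{m}_{n-1}^{(1,k+1)})$; telescoping the latter and invoking $\tilde{m}_{n}^{(1,k+1)}\le r_{0}/2$ (which holds on $\mathcal{U}_{\text{\textgreek{e}}}^{+}$ by the no-trapping hypothesis) will yield (\ref{eq:BoundSecondBeamchanged}) uniformly in $n$. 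The hierarchy $h_{2}(\text{\textgreek{e}})\ll\text{\textgreek{e}}\ll h_{1}(\text{\textgreek{e}})\ll h_{0}(\text{\textgreek{e}})$ encoded in (\ref{eq:h_1_h_0_definition})--(\ref{eq:h_2definition}) together with the choice (\ref{eq:h_3definition}) of $h_{3}(\text{\textgreek{e}})$ is precisely what is needed to absorb all error terms in these estimates.
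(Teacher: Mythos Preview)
Your overall strategy is close to the paper's, and you correctly identify the beam-interaction formulas, the role of the defining bound $1-2m/r\ge h_{3}(\text{\textgreek{e}})$ on $\mathcal{U}_{\text{\textgreek{e}}}^{+}$, and the telescoping structure $\sum_{n}\log(\tilde{m}_{n}^{(1,k+1)}/\tilde{m}_{n-1}^{(1,k+1)})\le\log(r_{0}/2\tilde{m}_{1}^{(1,k+1)})$ needed for (\ref{eq:BoundSecondBeamchanged}). However, there is a genuine gap in your derivation of the key coupling inequality
\[
\log\Bigl(\frac{r_{n-1}^{(1,k+1)}}{r_{n}^{(1,k+1)}}\Bigr)\le C_{0}\log\Bigl(\frac{\tilde{m}_{n}^{(1,k+1)}}{\tilde{m}_{n-1}^{(1,k+1)}}\Bigr).
\]
To pass from the change in $\partial_{v}r/(1-2m/r)$ across an interaction region $\mathcal{N}_{n}^{(i,j)}$ (which is controlled by $(\mathfrak{D}_{-}\tilde{m})_{n}^{(i,j)}/\bar{r}_{n}^{(i,j)}$) to the log-ratio $\log\bigl((\overline{\mathfrak{D}}_{+}\tilde{m})_{n}^{(i,j)}/(\overline{\mathfrak{D}}_{-}\tilde{m})_{n}^{(i,j)}\bigr)$, you need a \emph{lower} bound on the exponent in the ingoing mass-difference formula. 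But that exponent contains the factor $(1-\mathfrak{Err}_{\backslash n}^{(i,j)})\ge(\mathfrak{D}_{+}\tilde{m})_{n}^{(i,j)}/(\mathfrak{D}_{-}\tilde{m})_{n}^{(i,j)}$, and the outgoing formula only gives $(\mathfrak{D}_{+}\tilde{m})/(\mathfrak{D}_{-}\tilde{m})\ge\exp\bigl(-2(\overline{\mathfrak{D}}_{+}\tilde{m})/[\bar{r}(1-2m/r)]\bigr)$. In the near-trapped regime $1-2m/r\sim h_{3}(\text{\textgreek{e}})$ this lower bound is useless, and your coupling inequality does not close with a uniform constant $C_{0}$.

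The paper resolves this by an auxiliary comparison argument: it introduces a modified mass function $\bar{m}$ on each $\mathcal{N}_{n}^{(i,j)}$ solving $\partial_{u}\partial_{v}\bar{m}=-\bar{F}(r,\bar{m})\partial_{u}\bar{m}\partial_{v}\bar{m}$ with $\bar{F}(r,y)=2/(r-y-\tfrac{2}{3}\Lambda r^{2})$ (note the coefficient $1$ in front of $y$ rather than $2$), same characteristic data as $\tilde{m}$. A hyperbolic maximum principle (Lemma~\ref{lem:HyperbolicMaximumPrinciple}) gives $\bar{m}\le\tilde{m}$, hence $(\overline{\mathfrak{D}}_{+}\bar{m})\le(\overline{\mathfrak{D}}_{+}\tilde{m})$. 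The point is that for $\bar{F}$ the denominator $r-\bar{m}$ is bounded below by a fixed multiple of $r_{0}$ (since $\bar{m}\le\tilde{m}|_{\mathcal{I}}\le\tfrac{2}{3}r_{0}$), so the analogous error factor for $\bar{m}$ satisfies $1-\overline{\mathfrak{Err}}_{\backslash}\ge e^{-16/3}$ \emph{independently of how close $1-2m/r$ is to zero}. This yields the uniform bound $(\overline{\mathfrak{D}}_{+}\tilde{m})\ge(\overline{\mathfrak{D}}_{-}\tilde{m})\exp\bigl(\tfrac{1}{5C_{0}}(\mathfrak{D}_{-}\tilde{m})/\bar{r}\bigr)$, from which your coupling inequality follows. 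Without this device (or an equivalent one) your argument for (\ref{eq:BoundSecondBeamchanged}) does not go through, and since the paper's proof of (\ref{eq:BoundForMaxBeamSeparation}) \emph{uses} (\ref{eq:BoundSecondBeamchanged}) to bound $(1-2m/r)^{-1}$ in the region between the top two beams, your proposed order of deriving (\ref{eq:BoundForMaxBeamSeparation}) first is also problematic.
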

Before presenting the proof of Proposition \ref{prop:TheMainBootstrapBeforeTrapping}
(in Section \ref{sub:Proof-of-Proposition}), we will briefly comment
on the nature of the bounds (\ref{eq:RoughBoundGeometry})--(\ref{eq:BoundForMaxBeamSeparation})
and their relation with the specific choice of the parameters (\ref{eq:h_1_h_0_definition})--(\ref{eq:h_2definition}).

\subsubsection{\label{sub:Remark-on-Proposition}Remarks on Proposition \ref{prop:TheMainBootstrapBeforeTrapping}}

The bounds (\ref{eq:RoughBoundGeometry})--(\ref{eq:BoundForMaxBeamSeparation})
in Proposition \ref{prop:TheMainBootstrapBeforeTrapping} lie at the
heart of the proof of Theorem \ref{thm:TheTheorem}. The precise form
of the initial data (\ref{eq:TheIngoingVlasovInitially}), the range
(\ref{eq:BoundMirror}) for the mirror radius $r_{0}$ and the asymptotic
bounds (\ref{eq:h_1_h_0_definition})--(\ref{eq:h_2definition}) on
the parameters $h_{0},h_{1},h_{2}$ were carefuly chosen so that (\ref{eq:RoughBoundGeometry})--(\ref{eq:BoundForMaxBeamSeparation})
can be obtained. We will now proceed to briefly comment on the role
of the bounds (\ref{eq:RoughBoundGeometry})--(\ref{eq:BoundForMaxBeamSeparation})
in the proof of Theorem \ref{thm:TheTheorem}. The reader is advised
to review first the sketch of the proof in Section \ref{sub:Sketch-of-the-proof}
of the introduction. Let us remark that, in the notation of Section
\ref{sub:Sketch-of-the-proof}, 
\begin{equation}
\mathcal{E}_{\text{\textgreek{z}}_{0};n}=\tilde{m}_{n}^{(1,k+1)},
\end{equation}
 
\begin{equation}
r_{\text{\textgreek{g}}_{0};n}=r_{n}^{(k,k+1)}
\end{equation}
and
\begin{equation}
r_{\text{\textgreek{g}}_{0};n}^{(1)}=r_{n}^{(1,k+1)}.
\end{equation}

The bound (\ref{eq:RoughBoundGeometry}) is a ``trivial'' bound
controlling quantities related to the chosen gauge. The right hand
side of (\ref{eq:RoughBoundGeometry}), upon integration across any
specific beam (in a direction transversal to the beam), will yield
a small quantity, in view of the fact that the width of the null beams
emanating from $u=0$, $v\sim v^{(j)}$ was chosen to be $\sim h_{2}(\text{\textgreek{e}})$
and, moreover, $h_{2}(\text{\textgreek{e}})$ was chosen in (\ref{eq:h_2definition})
to be small compared to the right hand side of (\ref{eq:RoughBoundGeometry}).
This fact will prove convenient for the proof of Proposition \ref{prop:TheMainBootstrapBeforeTrapping}
and Theorem \ref{thm:TheTheorem}, as it will enable us to ``ignore''
the variation of certain quantities across the width of any specific
beam. That is to say, the bound (\ref{eq:RoughBoundGeometry}) will
enable us to frequently treat the null beams as line segments having
negligible width.

The bounds (\ref{eq:BoundForRAwayInteractionProp})--(\ref{eq:UpperBoundForAxisInteractionProp})
are quantitative expressions of the fact that the set of interactions
of the beams splits into two portions, one close to $r=r_{0}$ and
one close to $\mathcal{I}$. 

The lower bound (\ref{eq:EnoughMassBehind}) is necessary in order
to establish (\ref{eq:BoundForMassIncrease}). In order to obtain
(\ref{eq:EnoughMassBehind}), it is necessary that $r_{0}$ satisfies
the upper bound of (\ref{eq:BoundMirror}).

The upper bound (\ref{eq:NotEnoughMassBehind}) implies that a trapped
sphere (i.\,e.~a sphere where $\frac{2m}{r}>1$) can not be formed
at $\mathcal{R}_{\text{\textgreek{e}}n}^{(i,j)}$ for any $j>k+1$,
since one can also show that $\tilde{m}\le\tilde{m}|_{\mathcal{I}}-\tilde{m}_{n}^{(1,k+1)}$
in those regions. In order to obtain (\ref{eq:NotEnoughMassBehind}),
it is necessary that the mirror radius $r_{0}$ satisfies the lower
bound of (\ref{eq:BoundMirror}). 

In the language of Section \ref{sub:Sketch-of-the-proof} of the introduction,
the bound (\ref{eq:BoundSecondBeamchanged}) states that, when $\text{\textgreek{z}}_{0}$
reaches $\{r=r_{0}\}$ for the $n$-th time, the $r$-distance of
the top beam $\text{\textgreek{z}}_{0}$ from the second-to-top beam
$\text{\textgreek{z}}_{1}$, i.\,e.~$r_{n}^{(1,k+1)}-r_{0}$, can
be bounded from below by a small multiple of $r_{0}$ which is large
compared to $h_{1}(\text{\textgreek{e}})r_{0}$. As a consequence
of (\ref{eq:BoundSecondBeamchanged}) and the bound (\ref{eq:BoundMirror})
for $r_{0}$, for any $i\neq1$, $\mathcal{R}_{\text{\textgreek{e}}n}^{(i,k+1)}$
does not contain a trapped sphere. As a result, combining (\ref{eq:NotEnoughMassBehind}),
(\ref{eq:BoundForRAwayInteractionProp}) and (\ref{eq:BoundSecondBeamchanged}),
we infer that, among all regions $\mathcal{R}_{\text{\textgreek{e}}n}^{(i,j)}$,
a trapped sphere can only appear for $i=1$, $j=k+1$. This fact serves
to simplify the proof of Theorem \ref{thm:TheTheorem}, by avoiding
considering multiple scenarios of trapped surface formation. Furthermore,
it is crucial in obtaining (\ref{eq:BoundForMaxBeamSeparation}).

Establishing (\ref{eq:BoundSecondBeamchanged}) is the most demanding
part of the proof of Proposition \ref{prop:TheMainBootstrapBeforeTrapping}.
It requires obtaining a lower bound in the rate of decrease of $r_{n}^{(1,k+1)}$
in terms of the rate of increase of $\tilde{m}_{n}^{(1,k+1)}$, using
also the fact that $\tilde{m}_{n}^{(1,k+1)}\lesssim r_{0}$ before
a trapped sphere is formed (see the relations (\ref{eq:UsefulBoundAlmostThere})
and (\ref{eq:ControlInMassration}) in the next section).

The bound (\ref{eq:BoundForMassIncrease}) is a technical version
of the bound (\ref{eq:InductiveEnIntro}), and its proof follows from
the ideas outlined in Section \ref{sub:Sketch-of-the-proof}. In obtaining
(\ref{eq:BoundForMassIncrease}), the lower bound of (\ref{eq:EnoughMassBehind})
is necessary.

Finally, the bound (\ref{eq:BoundForMaxBeamSeparation}) is a technical
version of the bound (\ref{eq:InductiveBoundRinIntro}) in Section
\ref{sub:Sketch-of-the-proof} and provides an estimate for the decrease
of the multiplicative factor in the right hand side of (\ref{eq:BoundForMassIncrease}).
In obtaining (\ref{eq:BoundForMaxBeamSeparation}) when $\frac{2m}{r}\simeq1$,
the fact that $\frac{2m}{r}$ is bounded away from $1$ everywhere
but on $\mathcal{R}_{\text{\textgreek{e}}n}^{(i,j)}$ is crucially
used (in particular, the bound (\ref{eq:BoundSecondBeamchanged})
is necessary for (\ref{eq:BoundForMaxBeamSeparation})).
\begin{rem*}
As is evident from the above discussion, most of the technical difficulties
in the proof of Proposition \ref{prop:TheMainBootstrapBeforeTrapping}
are associated to issues related with the near-trapped regime $\frac{2m}{r}\simeq1$.
In the case when, instead of the stronger bound (\ref{eq:TrappedSurfaceOccurs}),
one is merely interested in establishing the weaker instability estimate
(\ref{eq:WeakerinstabilitySatatement}), the proof of Proposition
\ref{prop:TheMainBootstrapBeforeTrapping} simplifies substantially:
In that case, it is not necessary to demand that the worst instability
scenario takes place in $\mathcal{R}_{\text{\textgreek{e}}n}^{(1,k+1)}$.
In particular, the bounds (\ref{eq:NotEnoughMassBehind}) and (\ref{eq:BoundSecondBeamchanged})
can be omitted from the proof. Moreover, the lower bound for $r_{0}$
in (\ref{eq:BoundMirror}) can be relaxed, and the exponentials in
the relations (\ref{eq:h_1_h_0_definition}) between $h_{0}(\text{\textgreek{e}}),h_{1}(\text{\textgreek{e}})$
can be replaced by polynomial functions.
\end{rem*}

\subsubsection{\label{sub:Proof-of-Proposition}Proof of Proposition \ref{prop:TheMainBootstrapBeforeTrapping}}

In this section, we will make use of the $O(\cdot)$ convention: For
any pair of functions $\mathcal{F},\mathcal{G}$ defined on the same
domain, with $\mathcal{G}\ge0$, the notation 
\[
\mathcal{F}=O(\mathcal{G})
\]
 will imply that 
\[
|\mathcal{F}|\le C\cdot\mathcal{G}
\]
 for some universal constant $C>0$ which is independent of all the
parameters in the statement of Theorem \ref{thm:TheTheorem}. We should
also remark that, throughout this proof, we will adopt the convention
on the indices stated under (\ref{eq:Domains_D}), i.\,e.: 

\begin{enumerate}

\item $U_{n}\big(v^{(-1)}\big)\doteq U_{n+1}\big(v^{(k)}\big)$. 

\item $V_{n}\big(v^{(k+l)}+c\big)\doteq V_{n-1}\big(v^{(l-1)}+c\big)$
for any integer $1\le l\le k$ and any $c\ge0$.

\end{enumerate}

In view of (\ref{eq:UpperUNonTrapping}), on $\mathcal{U}_{\text{\textgreek{e}}}^{+}$
we have 
\begin{equation}
\partial_{u}r<0<\partial_{v}r\label{eq:NonTrappingQualitativ}
\end{equation}
and 
\begin{equation}
\partial_{u}\tilde{m}\le0\le\partial_{v}\tilde{m}.\label{eq:NonTrappingMassSign}
\end{equation}

We will split the proof of Theorem \ref{thm:TheTheorem} into two
parts: In the first (and shortest) part, we will establish the bound
(\ref{eq:RoughBoundGeometry}) through a standard continuity argument.
The proof of (\ref{eq:RoughBoundGeometry}) will also yield (\ref{eq:BoundForRAwayInteractionProp})
and (\ref{eq:UpperBoundForAxisInteractionProp}). In the second (and
more extended) part, we will establish the bounds (\ref{eq:EnoughMassBehind})--(\ref{eq:BoundForMaxBeamSeparation})
by induction on $n$.

\subsubsection*{Part I: Proof of (\ref{eq:RoughBoundGeometry})--(\ref{eq:UpperBoundForAxisInteractionProp})}

Let $u_{*}>0$ be such, so that on 
\begin{equation}
\mathcal{U}_{\text{\textgreek{e}}}^{*}\doteq\mathcal{U}_{\text{\textgreek{e}}}^{+}\cap\{u<u_{*}\},
\end{equation}
 we can bound 
\begin{equation}
\Big|\log\Big(\frac{-\partial_{u}r}{1-\frac{1}{3}\Lambda r^{2}}\Big)\Big|+\Big|\log\Big(\frac{\partial_{v}r}{1-\frac{2m}{r}}\Big)\Big|\le2(h_{1}(\text{\textgreek{e}}))^{-4}\log\big((h_{3}(\text{\textgreek{e}}))^{-1}\big).\label{eq:RoughBoundGeometryBootstrap}
\end{equation}
 By showing that (\ref{eq:RoughBoundGeometry}) holds on $\mathcal{U}_{\text{\textgreek{e}}}^{*}$,
it will follow (by applying a standard continuity argument) that (\ref{eq:RoughBoundGeometry})
holds on the whole of $\mathcal{U}_{\text{\textgreek{e}}}^{+}$.

\paragraph*{\noindent Inductive formulas for $\partial_{u}r$ and $\partial_{v}r$
and proof of (\ref{eq:RoughBoundGeometry}).\emph{ }}

\noindent From equation (\ref{eq:EquationRForProof}), we can readily
derive the following renormalised equation: 
\begin{equation}
\partial_{v}\partial_{u}\Big\{\sqrt{-\frac{3}{\Lambda}}\tan^{-1}\Big(\sqrt{-\frac{\Lambda}{3}}r\Big)\Big\}=-2\frac{\tilde{m}}{r^{2}}\frac{(1-\Lambda r^{2})}{(1-\frac{1}{3}\Lambda r^{2})}\Big(\frac{-\partial_{u}r}{1-\frac{1}{3}\Lambda r^{2}}\Big)\Big(\frac{\partial_{v}r}{1-\frac{2m}{r}}\Big)\label{eq:RenormalisedREquation}
\end{equation}
Let $n\ge1$, $0\le i\le k$, $i\le j\le k+i$, $\bar{u}<u_{*}$ and
$v_{b}$ be such that 
\[
U_{n}\big(v^{(i)}+\frac{4}{\sqrt{-\Lambda}}h_{2}(\text{\textgreek{e}})\big)\le\bar{u}\le U_{n}\big(v^{(i-1)}\big)
\]
and 
\[
V_{n}\big(v^{(j)}\big)\le v_{b}\le V_{n}\big(v^{(j)}+\frac{4}{\sqrt{-\Lambda}}h_{2}(\text{\textgreek{e}})\big).
\]
 Integrating equation (\ref{eq:RenormalisedREquation}) for $u=\bar{u}$
from $v=V_{n}\big(v^{(j)}\big)$ up to $v=v_{b}$, using also the
fact that 
\[
\partial_{u}\Big\{\sqrt{-\frac{3}{\Lambda}}\tan^{-1}\Big(\sqrt{-\frac{\Lambda}{3}}r\Big)\Big\}=\frac{\partial_{u}r}{1-\frac{1}{3}\Lambda r^{2}},
\]
we obtain: 
\begin{align}
\frac{-\partial_{u}r}{1-\frac{1}{3}\Lambda r^{2}}\Bigg|_{(\bar{u},V_{n}(v^{(j)}))}=\frac{-\partial_{u}r}{1-\frac{1}{3}\Lambda r^{2}} & \Bigg|_{(\bar{u},v_{b})}+\label{eq:SimpleIntegrationInVOverBeam}\\
 & +O\Big(\sup_{\{u=\bar{u}\}\cap\{V_{n}(v^{(j)}+\frac{4}{\sqrt{-\Lambda}}h_{2}(\text{\textgreek{e}}))\le v\le V_{n}(v^{(j-1)})\}}\Big|\frac{\tilde{m}}{r^{2}}\Big(\frac{-\partial_{u}r}{1-\frac{1}{3}\Lambda r^{2}}\Big)\Big(\frac{\partial_{v}r}{1-\frac{2m}{r}}\Big)\Big|h_{2}(\text{\textgreek{e}})\Big).\nonumber 
\end{align}
Using the bootstrap bound (\ref{eq:RoughBoundGeometryBootstrap}),
combined with the trivial bounds 
\begin{equation}
\tilde{m}\le\tilde{m}|_{\mathcal{I}}=\frac{\text{\textgreek{e}}}{\sqrt{-\Lambda}}
\end{equation}
and 
\begin{equation}
r\ge r_{0}
\end{equation}
(following from (\ref{eq:DerivativeTildeUMass}), (\ref{eq:DerivativeTildeVMass})
and (\ref{eq:NonTrappingQualitativ})), as well as the relation (\ref{eq:h_2definition})
for $h_{2}(\text{\textgreek{e}})$, the relation (\ref{eq:SimpleIntegrationInVOverBeam})
yields: 
\begin{equation}
\frac{-\partial_{u}r}{1-\frac{1}{3}\Lambda r^{2}}\Bigg|_{(\bar{u},V_{n}(v^{(j)}))}=\frac{-\partial_{u}r}{1-\frac{1}{3}\Lambda r^{2}}\Bigg|_{(\bar{u},v_{b})}+O((h_{2}(\text{\textgreek{e}}))^{1/2}).\label{eq:D_uRdoesntchangemuchOverBeam}
\end{equation}
Similarly, integrating (\ref{eq:RenormalisedREquation}) for $v=\bar{v}$
from $u=U_{n}\big(v^{(i)}\big)$ up to $u=u_{b}$ for any $V_{n}\big(v^{(j)}+\frac{4}{\sqrt{-\Lambda}}h_{2}(\text{\textgreek{e}})\big)\le\bar{v}\le V_{n}\big(v^{(j-1)}\big)$
and any $U_{n}\big(v^{(i)}\big)\le u_{b}\le U_{n}\big(v^{(i)}+\frac{4}{\sqrt{-\Lambda}}h_{2}(\text{\textgreek{e}})\big)$
(assuming that $u_{b}<u_{*}$), we infer: 
\begin{equation}
\frac{\partial_{v}r}{1-\frac{1}{3}\Lambda r^{2}}\Bigg|_{(U_{n}(v^{(i)}),\bar{v})}=\frac{\partial_{v}r}{1-\frac{1}{3}\Lambda r^{2}}\Bigg|_{(u_{b},\bar{v})}+O((h_{2}(\text{\textgreek{e}}))^{1/2}).\label{eq:D_vRdoesntchangemuchOverBeam}
\end{equation}
By multiplying and dividing each factor with $1-\frac{2m}{r}=1-\frac{2\tilde{m}}{r}-\frac{1}{3}\Lambda r^{2}$,
the relations (\ref{eq:D_uRdoesntchangemuchOverBeam}) and (\ref{eq:D_vRdoesntchangemuchOverBeam})
are equivalent to 
\begin{align}
\frac{-\partial_{u}r}{1-\frac{2m}{r}}\Bigg|_{(\bar{u},V_{n}(v^{(j)}))}= & \frac{-\partial_{u}r}{1-\frac{2m}{r}}\Bigg|_{(\bar{u},v_{b})}\cdot\Bigg(\frac{1-\frac{2\tilde{m}}{r(1-\frac{1}{3}\Lambda r^{2})}\Big|_{(\bar{u},v_{b})}}{1-\frac{2\tilde{m}}{r(1-\frac{1}{3}\Lambda r^{2})}\Big|_{(\bar{u},V_{n}(v^{(j)}))}}+O((h_{2}(\text{\textgreek{e}}))^{1/2})\Bigg)\label{eq:KappaBarChangeOverBeam}
\end{align}
and 
\begin{align}
\frac{\partial_{v}r}{1-\frac{2m}{r}}\Bigg|_{(U_{n}(v^{(i)}),\bar{v})}= & \frac{\partial_{v}r}{1-\frac{2m}{r}}\Bigg|_{(u_{b},\bar{v})}\cdot\Bigg(\frac{1-\frac{2\tilde{m}}{r(1-\frac{1}{3}\Lambda r^{2})}\Big|_{(u_{b},\bar{v})}}{1-\frac{2\tilde{m}}{r(1-\frac{1}{3}\Lambda r^{2})}\Big|_{(U_{n}(v^{(i)}),\bar{v})}}+O((h_{2}(\text{\textgreek{e}}))^{1/2})\Bigg).\label{eq:KappaChangeOverBeam}
\end{align}

\begin{rem*}
In the vacuum case, where $\tilde{m}$ is contant, the factors in
the right hand side of (\ref{eq:KappaBarChangeOverBeam}) and (\ref{eq:KappaChangeOverBeam})
become identically $1$. In our case, however, where matter is present,
by relaxing our definition of $h_{2}$ and considering the limit $h_{2}\rightarrow0$
for fixed $\text{\textgreek{e}}$, the dominant terms in the factors
in the right hand side of (\ref{eq:KappaBarChangeOverBeam}) and (\ref{eq:KappaChangeOverBeam}),
i.\,e.~the first summands, do \emph{not} converge to $1$. This
is because, in this limit, while the function $r$ remains $C^{1}$,
the renormalised Hawking mass $\tilde{m}$ has a jump discontinuity
across the beam. 
\end{rem*}
Since $T_{uu}=T_{vv}=0$ on $\mathcal{R}_{\text{\textgreek{e}}n}^{(i,j)}$
for any $n\ge1$, any $0\le i\le k$ and any $i\le j\le k+i+1$, the
relations (\ref{eq:DerivativeInUDirectionKappa})--(\ref{eq:DerivativeInVDirectionKappaBar})
imply that: 
\begin{equation}
\partial_{v}\Big(\frac{-\partial_{u}r}{1-\frac{2m}{r}}\Big)\Bigg|_{\mathcal{R}_{\text{\textgreek{e}}n}^{(i,j)}}=\partial_{u}\Big(\frac{\partial_{v}r}{1-\frac{2m}{r}}\Big)\Bigg|_{\mathcal{R}_{\text{\textgreek{e}}n}^{(i,j)}}=0.
\end{equation}
 In particular, along lines of the form $\{u=\bar{u}\}$, the quantity
$\frac{-\partial_{u}r}{1-\frac{2m}{r}}$ remains constant on $\{u=\bar{u}\}\cap\mathcal{R}_{\text{\textgreek{e}}n}^{(i,j)}$
for any $\bar{u}<u_{*}$ such that $\{u=\bar{u}\}\cap\mathcal{R}_{\text{\textgreek{e}}n}^{(i,j)}$
is non-trivial. In view of (\ref{eq:KappaBarChangeOverBeam}), the
quantities $\frac{-\partial_{u}r}{1-\frac{2m}{r}}\Bigg|_{\{u=\bar{u}\}\cap\mathcal{R}_{\text{\textgreek{e}}n}^{(i,j+1)}}$
and $\frac{-\partial_{u}r}{1-\frac{2m}{r}}\Bigg|_{\{u=\bar{u}\}\cap\mathcal{R}_{\text{\textgreek{e}}n}^{(i,j)}}$
(for any $\bar{u}<u_{*}$ such that $\{u=\bar{u}\}\cap\mathcal{R}_{\text{\textgreek{e}}n}^{(i,j)}$
is non-trivial) are related by 
\begin{equation}
\frac{-\partial_{u}r}{1-\frac{2m}{r}}\Bigg|_{\{u=\bar{u}\}\cap\mathcal{R}_{\text{\textgreek{e}}n}^{(i,j)}}=\frac{-\partial_{u}r}{1-\frac{2m}{r}}\Bigg|_{\{u=\bar{u}\}\cap\mathcal{R}_{\text{\textgreek{e}}n}^{(i,j+1)}}\cdot\Bigg(\frac{1-\frac{2\tilde{m}}{r(1-\frac{1}{3}\Lambda r^{2})}\Big|_{(\bar{u},V_{n}(v^{(j)}))}}{1-\frac{2\tilde{m}}{r(1-\frac{1}{3}\Lambda r^{2})}\Big|_{(\bar{u},V_{n}(v^{(j)}+\frac{4}{\sqrt{-\Lambda}}h_{2}(\text{\textgreek{e}})))}}+O((h_{2}(\text{\textgreek{e}}))^{1/2})\Bigg).\label{eq:KappaBarChangeAdjacentDomains}
\end{equation}
Similarly, the quantity $\frac{\partial_{v}r}{1-\frac{2m}{r}}$ remains
constant along segments of the form $\{v=\bar{v}\}\cap\mathcal{R}_{\text{\textgreek{e}}n}^{(i,j)}$,
and $\frac{\partial_{v}r}{1-\frac{2m}{r}}\Bigg|_{\{v=\bar{v}\}\cap\mathcal{R}_{\text{\textgreek{e}}n}^{(i+1,j)}}$
and $\frac{\partial_{v}r}{1-\frac{2m}{r}}\Bigg|_{\{v=\bar{v}\}\cap\mathcal{R}_{\text{\textgreek{e}}n}^{(i,j)}}$
are related (in view of (\ref{eq:KappaChangeOverBeam})) by
\begin{align}
\frac{\partial_{v}r}{1-\frac{2m}{r}}\Bigg|_{\{v=\bar{v}\}\cap\mathcal{R}_{\text{\textgreek{e}}n}^{(i,j)}}= & \frac{\partial_{v}r}{1-\frac{2m}{r}}\Bigg|_{\{v=\bar{v}\}\cap\mathcal{R}_{\text{\textgreek{e}}n}^{(i+1,j)}}\cdot\Bigg(\frac{1-\frac{2\tilde{m}}{r(1-\frac{1}{3}\Lambda r^{2})}\Big|_{(U_{n}(v^{(i)}),\bar{v})}}{1-\frac{2\tilde{m}}{r(1-\frac{1}{3}\Lambda r^{2})}\Big|_{(U_{n}(v^{(i)}+\frac{4}{\sqrt{-\Lambda}}h_{2}(\text{\textgreek{e}})),\bar{v})}}+O((h_{2}(\text{\textgreek{e}}))^{1/2})\Bigg).\label{eq:KappaChangeAdjacentRegions}
\end{align}

We infer, therefore, that for any point $(\bar{u},\bar{v})\in\mathcal{R}_{\text{\textgreek{e}},n}^{(i,j)}$
for some $n\ge2$, $0\le i\le k$ and $i\le j\le k+i+1$ such that
$\bar{u}<u_{*}$, the following relations hold between $(\bar{u},\bar{v})$
and $(\bar{u}-v_{0},\bar{v}-v_{0})\in\mathcal{R}_{\text{\textgreek{e}},n-1}^{(i,j)}$:
\begin{align}
\frac{-\partial_{u}r}{1-\frac{2m}{r}}\Bigg|_{(\bar{u},\bar{v})}=\frac{-\partial_{u}r}{1-\frac{2m}{r}}\Bigg|_{(\bar{u}-v_{0},\bar{v}-v_{0})} & \times\prod_{\bar{j}=j}^{k+i}\Bigg(\frac{1-\frac{2\tilde{m}}{r(1-\frac{1}{3}\Lambda r^{2})}\Big|_{(\bar{u},V_{n}(v^{(\bar{j})}))}}{1-\frac{2\tilde{m}}{r(1-\frac{1}{3}\Lambda r^{2})}\Big|_{(\bar{u},V_{n}(v^{(\bar{j})}+\frac{4}{\sqrt{-\Lambda}}h_{2}(\text{\textgreek{e}})))}}+O((h_{2}(\text{\textgreek{e}}))^{1/2})\Bigg)\times\label{eq:TotalChangeKappaBarEachIteration}\\
 & \times\prod_{\bar{i}=i}^{k+i}\Bigg(\frac{1-\frac{2\tilde{m}}{r(1-\frac{1}{3}\Lambda r^{2})}\Big|_{(U_{n}(v^{(\bar{i})}),\bar{u})}}{1-\frac{2\tilde{m}}{r(1-\frac{1}{3}\Lambda r^{2})}\Big|_{(U_{n}(v^{(\bar{i})}+\frac{4}{\sqrt{-\Lambda}}h_{2}(\text{\textgreek{e}})),\bar{u})}}+O((h_{2}(\text{\textgreek{e}}))^{1/2})\Bigg)\times\nonumber \\
 & \times\prod_{\bar{j}=k+i+1}^{k+j}\Bigg(\frac{1-\frac{2\tilde{m}}{r(1-\frac{1}{3}\Lambda r^{2})}\Big|_{(\bar{u}-v_{0},V_{n}(v^{(\bar{j})}))}}{1-\frac{2\tilde{m}}{r(1-\frac{1}{3}\Lambda r^{2})}\Big|_{(\bar{u}-v_{0},V_{n}(v^{(\bar{j})}+\frac{4}{\sqrt{-\Lambda}}h_{2}(\text{\textgreek{e}})))}}+O((h_{2}(\text{\textgreek{e}}))^{1/2})\Bigg)\nonumber 
\end{align}
and 
\begin{align}
\frac{\partial_{v}r}{1-\frac{2m}{r}}\Bigg|_{(\bar{u},\bar{v})}=\frac{\partial_{v}r}{1-\frac{2m}{r}}\Bigg|_{(\bar{u}-v_{0},\bar{v}-v_{0})} & \times\prod_{\bar{i}=i}^{j-1}\Bigg(\frac{1-\frac{2\tilde{m}}{r(1-\frac{1}{3}\Lambda r^{2})}\Big|_{(U_{n}(v^{(\bar{i})}),\bar{v})}}{1-\frac{2\tilde{m}}{r(1-\frac{1}{3}\Lambda r^{2})}\Big|_{(U_{n}(v^{(\bar{i})}+\frac{4}{\sqrt{-\Lambda}}h_{2}(\text{\textgreek{e}})),\bar{v})}}+O((h_{2}(\text{\textgreek{e}}))^{1/2})\Bigg)\times\label{eq:TotalChangeKappaEachIteration}\\
 & \times\prod_{\bar{j}=j}^{k+j}\Bigg(\frac{1-\frac{2\tilde{m}}{r(1-\frac{1}{3}\Lambda r^{2})}\Big|_{(\bar{u}-v_{0},V_{n}(v^{(\bar{j})}))}}{1-\frac{2\tilde{m}}{r(1-\frac{1}{3}\Lambda r^{2})}\Big|_{(\bar{u}-v_{0},V_{n}(v^{(\bar{j})}+\frac{4}{\sqrt{-\Lambda}}h_{2}(\text{\textgreek{e}})))}}+O((h_{2}(\text{\textgreek{e}}))^{1/2})\Bigg)\times\nonumber \\
 & \times\prod_{\bar{i}=j}^{k+i}\Bigg(\frac{1-\frac{2\tilde{m}}{r(1-\frac{1}{3}\Lambda r^{2})}\Big|_{(U_{n}(v^{(\bar{i})}),\bar{v}-v_{0})}}{1-\frac{2\tilde{m}}{r(1-\frac{1}{3}\Lambda r^{2})}\Big|_{(U_{n}(v^{(\bar{i})}+\frac{4}{\sqrt{-\Lambda}}h_{2}(\text{\textgreek{e}})),\bar{v}-v_{0})}}+O((h_{2}(\text{\textgreek{e}}))^{1/2})\Bigg).\nonumber 
\end{align}

\begin{figure}[h] 
\centering 
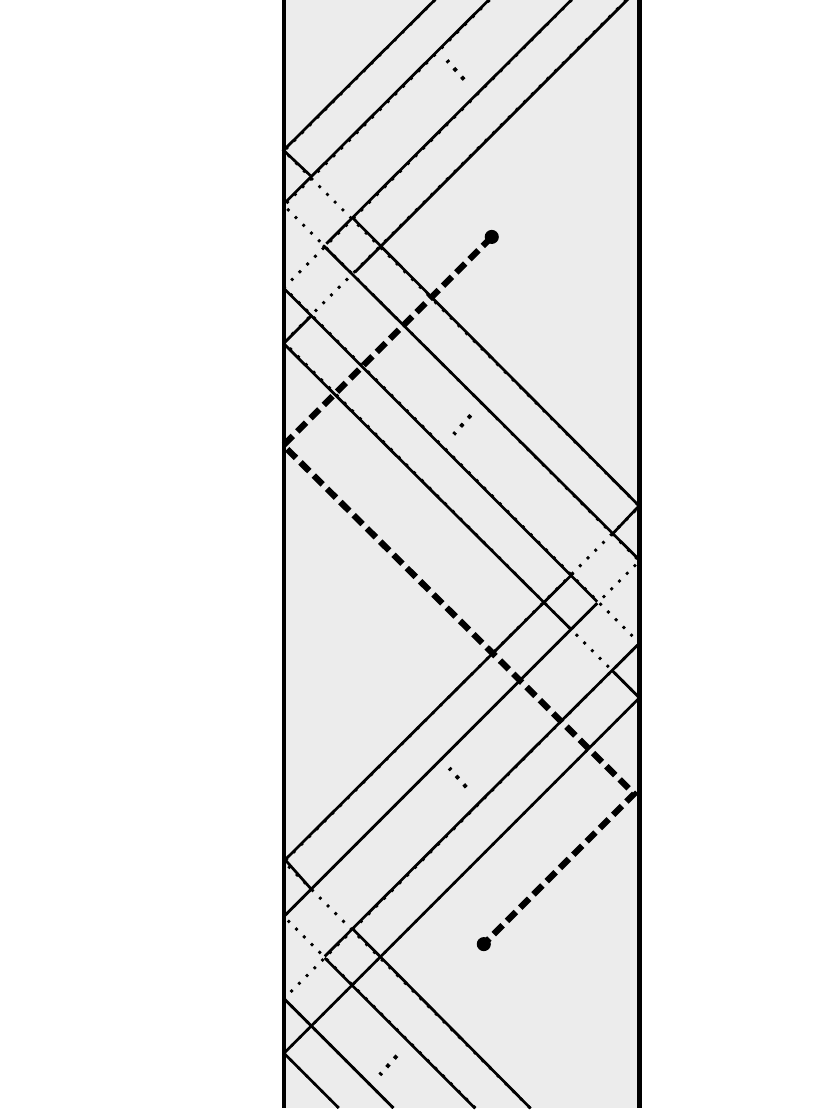 
\caption{In order to obtain the formula \eqref{eq:TotalChangeKappaBarEachIteration} relating $\frac{-\partial_{u}r}{1-\frac{2m}{r}}$ at the point $C=(\bar{u},\bar{v})$ with the same quantity at the point $C'=(\bar{u}-v_{0},\bar{v}-v_{0})$, we apply the relations \eqref{eq:KappaBarChangeAdjacentDomains} and \eqref{eq:KappaChangeAdjacentRegions} along the dashed path depicted above, using also the reflecting gauge condition on $\gamma_{0}$ and $\mathcal{I}$. }
\end{figure}

The relation (\ref{eq:TotalChangeKappaBarEachIteration}) is obtained
as follows (see also Figure 6.1): First, (\ref{eq:KappaBarChangeAdjacentDomains})
determines the evolution of $\frac{-\partial_{u}r}{1-\frac{2m}{r}}$
(according to (\ref{eq:KappaBarChangeAdjacentDomains})) along the
line $\{u=\bar{u}\}$ in the past direction, from $(\bar{u},\bar{v})$
up to $\text{\textgreek{g}}_{0}$. Then, using the boundary relation
\begin{equation}
\frac{-\partial_{u}r}{1-\frac{2m}{r}}\Big|_{\text{\textgreek{g}}_{0}}=\frac{\partial_{v}r}{1-\frac{2m}{r}}\Big|_{\text{\textgreek{g}}_{0}},
\end{equation}
one repeats the same procedure for $\frac{\partial_{v}r}{1-\frac{2m}{r}}$
along $\{v=\bar{u}\}$ from $\text{\textgreek{g}}_{0}$ up to $\mathcal{I}$.
Finally, using 
\begin{equation}
\frac{-\partial_{u}r}{1-\frac{2m}{r}}\Big|_{\mathcal{I}}=\frac{\partial_{v}r}{1-\frac{2m}{r}}\Big|_{\mathcal{I}},
\end{equation}
and following the evolution of $\frac{-\partial_{u}r}{1-\frac{2m}{r}}$
along $\{u=\bar{u}-v_{0}\}$ from $\mathcal{I}$ up to $(\bar{u}-v_{0},\bar{v}-v_{0})$,
one arrives at (\ref{eq:TotalChangeKappaBarEachIteration}). The relation
(\ref{eq:TotalChangeKappaEachIteration}) is similarly obtained by
following the same procedure along the lines $\{v=\bar{v}\}$ (up
to $\mathcal{I}$), $\{u=\bar{v}-v_{0}\}$ (from $\mathcal{I}$ up
to $\text{\textgreek{g}}_{0}$) and $\{v=\bar{v}-v_{0}\}$ (from $\mathcal{I}$
up to $(\bar{u}-v_{0},\bar{v}-v_{0})$).

In view of the bound 
\begin{equation}
1-\frac{2m}{r}\ge h_{3}(\text{\textgreek{e}})\label{eq:LowerBoundTrappingParameter}
\end{equation}
 on $\mathcal{U}_{\text{\textgreek{e}}}^{+}$ (see (\ref{eq:UpperUNonTrapping})),
we can estimate in the region $\{r\le\text{\textgreek{e}}^{1/2}(-\Lambda)^{1/2}\}\cap\mathcal{U}_{\text{\textgreek{e}}}^{+}$:
\begin{equation}
1-\frac{2\tilde{m}}{r(1-\frac{1}{3}\Lambda r^{2})}=\frac{1-\frac{2m}{r}}{1-\frac{1}{3}\Lambda r^{2}}\ge\frac{1}{2}h_{1}(\text{\textgreek{e}}).\label{eq:FirstTrivial}
\end{equation}
On the other hand, in the region $\{r\ge\text{\textgreek{e}}^{1/2}(-\Lambda)^{1/2}\}\cap\mathcal{U}_{\text{\textgreek{e}}}^{+}$,
using (\ref{eq:MassInfinity}) to bound $\tilde{m}$ we can trivially
estimate (in view also of (\ref{eq:h_1_h_0_definition})): 
\begin{equation}
1-\frac{2\tilde{m}}{r(1-\frac{1}{3}\Lambda r^{2})}\ge1-\frac{2\text{\textgreek{e}}}{\text{\textgreek{e}}^{\frac{1}{2}}}\ge h_{1}(\text{\textgreek{e}}).\label{eq:SecondTrivial}
\end{equation}
Combining (\ref{eq:FirstTrivial}) and (\ref{eq:SecondTrivial}),
using also the fact that $\tilde{m}\ge\tilde{m}|_{\text{\textgreek{g}}_{0}}=0$
on $\mathcal{U}_{\text{\textgreek{e}}}^{+}$, we can bound $1-\frac{2\tilde{m}}{r(1-\frac{1}{3}\Lambda r^{2})}$
from above and below everywhere on $\mathcal{U}_{\text{\textgreek{e}}}^{+}$
as: 
\begin{equation}
\frac{1}{2}h_{3}(\text{\textgreek{e}})\le1-\frac{2\tilde{m}}{r(1-\frac{1}{3}\Lambda r^{2})}\le1.\label{eq:TrivialFactor}
\end{equation}
Thus, by considering the logarithm of the relations (\ref{eq:TotalChangeKappaBarEachIteration})--(\ref{eq:TotalChangeKappaEachIteration})
and noting that the resulting right hand side contains $\sim k=\lceil1/h_{1}(\text{\textgreek{e}})\rceil$
summands, each controlled with the help of (\ref{eq:TrivialFactor}),
we readily obtain for any $n\ge2$, $0\le i\le k$ and $i\le j\le k+i+1$
and any point $(\bar{u},\bar{v})\in\mathcal{R}_{\text{\textgreek{e}}n}^{(i,j)}$
with $\bar{u}<u_{*}$: 
\begin{equation}
\Bigg|\log\Big(\frac{-\partial_{u}r}{1-\frac{2m}{r}}\Big)\Big|_{(\bar{u},\bar{v})}-\log\Big(\frac{-\partial_{u}r}{1-\frac{2m}{r}}\Big)\Big|_{(\bar{u}-v_{0},\bar{v}-v_{0})}\Bigg|\le\frac{C}{h_{1}(\text{\textgreek{e}})}\log\big((h_{3}(\text{\textgreek{e}}))^{-1}\big)\label{eq:RoughBoundForIteration}
\end{equation}
and 
\begin{equation}
\Bigg|\log\Big(\frac{\partial_{v}r}{1-\frac{2m}{r}}\Big)\Big|_{(\bar{u},\bar{v})}-\log\Big(\frac{\partial_{v}r}{1-\frac{2m}{r}}\Big)\Big|_{(\bar{u}-v_{0},\bar{v}-v_{0})}\Bigg|\le\frac{C}{h_{1}(\text{\textgreek{e}})}\log\big((h_{3}(\text{\textgreek{e}}))^{-1}\big).\label{eq:RoughBoundForIteration-1}
\end{equation}
In view of (\ref{eq:KappaBarChangeOverBeam})--(\ref{eq:KappaChangeOverBeam}),
the bounds (\ref{eq:RoughBoundForIteration}) and (\ref{eq:RoughBoundForIteration-1})
(stated in the case when $(\bar{u},\bar{v})$ belongs to a vacuum
region $\mathcal{R}_{\text{\textgreek{e}}n}^{(i,j)}$) also hold when
$(\bar{u},\bar{v})$ belongs to a beam, i.\,e.~when $U_{n}\big(v^{(i)}\big)\le\bar{u}\le U_{n}\big(v^{(i-1)}+\frac{4}{\sqrt{-\Lambda}}h_{2}(\text{\textgreek{e}})\big)$
or $V_{n}\big(v^{(j)}\big)\le\bar{v}\le V_{n}\big(v^{(i-1)}+\frac{4}{\sqrt{-\Lambda}}h_{2}(\text{\textgreek{e}})\big)$
for some $n\ge2$, $0\le i\le k$ and $i\le j\le k+i+1$. Therefore,
for any $n\ge2$, the bounds (\ref{eq:KappaBarChangeOverBeam})--(\ref{eq:KappaChangeOverBeam})
hold on the whole of 
\begin{equation}
\mathcal{U}_{\text{\textgreek{e}};n}^{*}\doteq\{U_{n}(v^{(k)})\le u\le U_{n+1}(v^{(k)})\}\cap\mathcal{U}_{\text{\textgreek{e}}}^{*}.
\end{equation}

From (\ref{eq:DefinitionUntrappedRegion}) and the definition (\ref{eq:defNf}),
it follows that 
\begin{equation}
n_{f}\le(h_{1}(\text{\textgreek{e}}))^{-2}.\label{eq:UpperBoundNf}
\end{equation}
Since $n\le n_{f}$ (because $\mathcal{U}_{\text{\textgreek{e}}}^{*}\subset\mathcal{U}_{\text{\textgreek{e}}}^{+}$),
by substituting $(\bar{u},\bar{v})\rightarrow(\bar{u}-v_{0},\bar{v}-v_{0})$
in (\ref{eq:RoughBoundForIteration})--(\ref{eq:RoughBoundForIteration-1})
$n-2$ times and using (\ref{eq:UpperBoundNf}), (\ref{eq:ConditionOnDvRInitiallyFamily}),
(\ref{eq:TrivialFactor}) as well as the Cauchy stability estimate
of Proposition \ref{prop:CauchyStabilityOfAdS} for the region $\{0\le u\le2v_{0}\}$,
we readily obtain 
\begin{equation}
\sup_{\mathcal{U}_{\text{\textgreek{e}}}^{*}}\Bigg\{\Big|\log\Big(\frac{-\partial_{u}r}{1-\frac{1}{3}\Lambda r^{2}}\Big)\Big|+\Big|\log\Big(\frac{\partial_{v}r}{1-\frac{2m}{r}}\Big)\Big|\Bigg\}\le\frac{C}{(h_{1}(\text{\textgreek{e}}))^{3}}\log\big((h_{3}(\text{\textgreek{e}}))^{-1}\big).\label{eq:ImprovedRoughBoundBootstrap}
\end{equation}
Thus, (\ref{eq:RoughBoundGeometry}) holds on $\mathcal{U}_{\text{\textgreek{e}}}^{*}$
in view of the relation (\ref{eq:h_1_h_0_definition}) for the parameter
$h_{1}(\text{\textgreek{e}})$ (provided $\text{\textgreek{e}}_{0}$
is small enough). Therefore (as explained in the beginning of the
proof), a standard continuity argument yields that (\ref{eq:RoughBoundGeometry})
actually holds on the whole of $\mathcal{U}_{\text{\textgreek{e}}}^{+}$.

\paragraph*{\noindent Proof of (\ref{eq:BoundForRAwayInteractionProp}) and (\ref{eq:UpperBoundForAxisInteractionProp}).\emph{ }}

\noindent For any $1\le n\le n_{f}$, we can bound in view of the
definition (\ref{eq:DefinitionV_j}) of $v^{(j)}$ and the bound (\ref{eq:RoughBoundGeometry}):
\begin{align}
\Bigg|\tan^{-1}\Big(\sqrt{-\frac{\Lambda}{3}}r\Big)\Big|_{\big(U_{n}(v^{(0)}+\frac{4}{\sqrt{-\Lambda}}h_{2}(\text{\textgreek{e}})),V_{n}(v^{(k)}+\frac{4}{\sqrt{-\Lambda}}h_{2}(\text{\textgreek{e}}))\big)}- & \tan^{-1}\Big(\sqrt{-\frac{\Lambda}{3}}r\Big)\Big|_{\mathcal{I}\cap\{v=V_{n}(v^{(k)}+\frac{4}{\sqrt{-\Lambda}}h_{2}(\text{\textgreek{e}}))\}}\Bigg|\label{eq:BoundInteractionRegionAway}\\
= & \sqrt{-\frac{\Lambda}{3}}\int_{U_{n}(v^{(k)}+\frac{4}{\sqrt{-\Lambda}}h_{2}(\text{\textgreek{e}}))}^{U_{n}(v^{(0)}+\frac{4}{\sqrt{-\Lambda}}h_{2}(\text{\textgreek{e}}))}\frac{-\partial_{u}r}{1-\frac{1}{3}\Lambda r^{2}}\Big|_{\big(u,V_{n}(v^{(k)}+\frac{4}{\sqrt{-\Lambda}}h_{2}(\text{\textgreek{e}}))\big)}\, du\nonumber \\
\le & \frac{C\sqrt{-\Lambda}}{(h_{1}(\text{\textgreek{e}}))^{4}}\log\big((h_{3}(\text{\textgreek{e}}))^{-1}\big)\big|v^{(k)}-v^{(0)}\big|\nonumber \\
\le & \frac{C\text{\textgreek{e}}}{(h_{1}(\text{\textgreek{e}}))^{6}}\log\big((h_{3}(\text{\textgreek{e}}))^{-1}\big)\nonumber 
\end{align}
and 
\begin{align}
\Bigg|\tan^{-1}\Big(\sqrt{-\frac{\Lambda}{3}}r\Big)\Big|_{\big(U_{n}(v^{(k)}+\frac{4}{\sqrt{-\Lambda}}h_{2}(\text{\textgreek{e}})),V_{n}(v^{(k+1)}+\frac{4}{\sqrt{-\Lambda}}h_{2}(\text{\textgreek{e}}))\big)}- & \tan^{-1}\Big(\sqrt{-\frac{\Lambda}{3}}r\Big)\Big|_{\text{\textgreek{g}}_{0}\cap\{v=V_{n}(v^{(k+1)}+\frac{4}{\sqrt{-\Lambda}}h_{2}(\text{\textgreek{e}}))\}}\Bigg|\label{eq:BoundInteractionRegionNear}\\
= & \sqrt{-\frac{\Lambda}{3}}\int_{U_{n}(v^{(k)}+\frac{4}{\sqrt{-\Lambda}}h_{2}(\text{\textgreek{e}}))}^{U_{n}(v^{(0)}+\frac{4}{\sqrt{-\Lambda}}h_{2}(\text{\textgreek{e}}))}\frac{-\partial_{u}r}{1-\frac{1}{3}\Lambda r^{2}}\Big|_{\big(u,V_{n}(v^{(k+1)}+\frac{4}{\sqrt{-\Lambda}}h_{2}(\text{\textgreek{e}}))\big)}\, du\nonumber \\
\le & \frac{C\sqrt{-\Lambda}}{(h_{1}(\text{\textgreek{e}}))^{4}}\log\big((h_{3}(\text{\textgreek{e}}))^{-1}\big)\big|v^{(k)}-v^{(0)}\big|\nonumber \\
\le & \frac{C\text{\textgreek{e}}}{(h_{1}(\text{\textgreek{e}}))^{6}}\log\big((h_{3}(\text{\textgreek{e}}))^{-1}\big).\nonumber 
\end{align}
From (\ref{eq:BoundInteractionRegionAway}) and (\ref{eq:BoundInteractionRegionNear})
we readily obtain (\ref{eq:BoundForRAwayInteractionProp}) and (\ref{eq:UpperBoundForAxisInteractionProp}),
respectively, in view of the relations (\ref{eq:h_1_h_0_definition})
and (\ref{eq:h_3definition}) for $h_{1},h_{3}$, respectively, and
the fact that $r|_{\text{\textgreek{g}}_{0}}=r_{0}$, $r|_{\mathcal{I}}=+\infty$.

\subsubsection*{Part II: Proof of (\ref{eq:EnoughMassBehind})--(\ref{eq:BoundForMaxBeamSeparation})}

We will now proceed to establish the bounds (\ref{eq:EnoughMassBehind})--(\ref{eq:BoundForMaxBeamSeparation}).
To this end, we will first derive some useful estimates for the differences
of the renormalied masses $\tilde{m}_{n}^{(i,j)}$ associated to the
vacuum regions around each interaction region $\mathcal{N}_{\text{\textgreek{e}}n}^{(i,j)}$.%
\footnote{A relation for the change the mass differences of two intersecting,
infinitely thin null dust beams was also obtained in \cite{PoissonIsrael1990}.%
}

\paragraph*{\noindent Relations for the change in the mass difference of the
beams. }

\noindent Let us introduce the notion of the mass difference for the
beams $\{U_{n}(v^{(i)})\le u\le U_{n}(v^{(i)}+\frac{4}{\sqrt{-\Lambda}}h_{2}(\text{\textgreek{e}}))\}$
and $\{V_{n}(v^{(j)})\le v\le V_{n}(v^{(j)}+\frac{4}{\sqrt{-\Lambda}}h_{2}(\text{\textgreek{e}}))\}$
around their interaction region $\mathcal{N}_{\text{\textgreek{e}}n}^{(i,j)}$:
For any $1\le n\le n_{f}$, $0\le i\le k$ and $i+1\le j\le k+i$,
we define the initial mass differences 
\begin{align}
(\mathfrak{D}_{-}\tilde{m})_{n}^{(i,j)} & \doteq\tilde{m}_{n}^{(i+1,j+1)}-\tilde{m}_{n}^{(i,j+1)}\label{eq:DefinitionIncomingEnergy}\\
(\overline{\mathfrak{D}}_{-}\tilde{m})_{n}^{(i,j)} & \doteq\tilde{m}_{n}^{(i+1,j)}-\tilde{m}_{n}^{(i+1,j+1)}\nonumber 
\end{align}
and the final mass differences 
\begin{align}
(\mathfrak{D}_{+}\tilde{m})_{n}^{(i,j)} & \doteq\tilde{m}_{n}^{(i+1,j)}-\tilde{m}_{n}^{(i,j)}\label{eq:DefinitionOutcomingEnergy}\\
(\overline{\mathfrak{D}}_{+}\tilde{m})_{n}^{(i,j)} & \doteq\tilde{m}_{n}^{(i,j)}-\tilde{m}_{n}^{(i,j+1)}.\nonumber 
\end{align}
Note that $(\mathfrak{D}_{-}\tilde{m})_{n}^{(i,j)}$ and $(\mathfrak{D}_{+}\tilde{m})_{n}^{(i,j)}$
are the mass differences around the outgoing beam $\{U_{n}(v^{(i)})\le u\le U_{n}(v^{(i)}+\frac{4}{\sqrt{-\Lambda}}h_{2}(\text{\textgreek{e}}))\}$
before and after crossing the region $\mathcal{N}_{\text{\textgreek{e}}n}^{(i,j)}$,
respectively, while $(\overline{\mathfrak{D}}_{-}\tilde{m})_{n}^{(i,j)}$
and $(\overline{\mathfrak{D}}_{+}\tilde{m})_{n}^{(i,j)}$ are the
mass differences around the ingoing beam $\{V_{n}(v^{(j)})\le v\le U_{n}(v^{(j)}+\frac{4}{\sqrt{-\Lambda}}h_{2}(\text{\textgreek{e}}))\}$
before and after crosssing the region $\mathcal{N}_{\text{\textgreek{e}}n}^{(i,j)}$.
Note the trivial identity 
\begin{equation}
(\mathfrak{D}_{-}\tilde{m})_{n}^{(i,j)}+(\overline{\mathfrak{D}}_{-}\tilde{m})_{n}^{(i,j)}=(\mathfrak{D}_{+}\tilde{m})_{n}^{(i,j)}+(\overline{\mathfrak{D}}_{+}\tilde{m})_{n}^{(i,j)}.\label{eq:ConservationOfMassDifference}
\end{equation}
\begin{figure}[h] 
\centering 
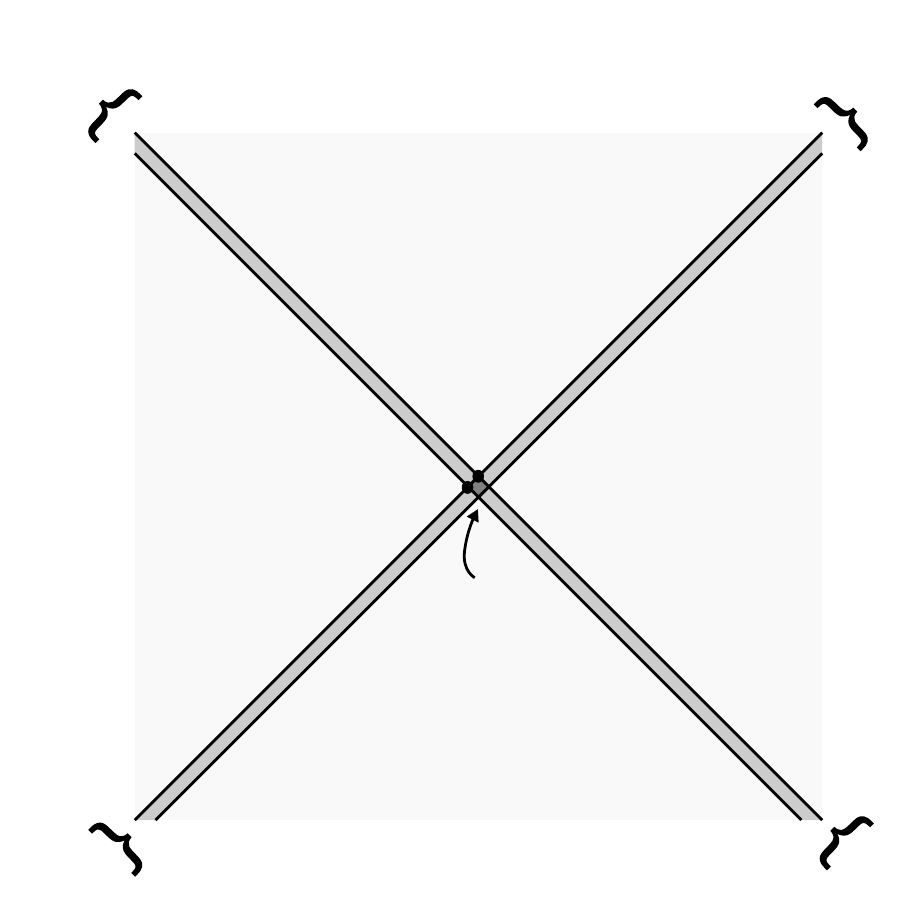 
\caption{Schematic depiction of two intersecting beams, with associated incoming and outcoming mass differences $(\mathfrak{D}_{-}\tilde{m})_{n}^{(i,j)}$, $(\overline{\mathfrak{D}}_{-}\tilde{m})_{n}^{(i,j)}$ and $(\mathfrak{D}_{+}\tilde{m})_{n}^{(i,j)}$, $(\overline{\mathfrak{D}}_{+}\tilde{m})_{n}^{(i,j)}$, respectively. The point $A$  satisfies $r(A)=r_{n}^{(i,j)}$, while the point B satisfies $r(B)=\bar{r}_{n}^{(i,j)}$. For simplicity, we have used the shorthand notation $l=(-\Lambda)^{-1/2}$.}
\end{figure}

We will establish the following bounds for any $1\le n\le n_{f}$,
$1\le i\le k$ and $i+1\le j\le k+i$: 
\begin{equation}
(\overline{\mathfrak{D}}_{+}\tilde{m})_{n}^{(i,j)}=(\overline{\mathfrak{D}}_{-}\tilde{m})_{n}^{(i,j)}\cdot\exp\Bigg(\frac{2}{\bar{r}_{n}^{(i,j)}}\frac{(\mathfrak{D}_{-}\tilde{m})_{n}^{(i,j)}}{1-\frac{2\tilde{m}_{n}^{(i+1,j)}}{\bar{r}_{n}^{(i,j)}}-\frac{1}{3}\Lambda(\bar{r}_{n}^{(i,j)})^{2}}\big(1-\mathfrak{Err}_{1,n}^{(i,j)}\big)\big(1-\mathfrak{Err}_{\backslash n}^{(i,j)}\big)\Bigg)\label{eq:MassDifferenceIncreaseInUDirection}
\end{equation}
and: 
\begin{equation}
(\mathfrak{D}_{+}\tilde{m})_{n}^{(i,j)}=(\mathfrak{D}_{-}\tilde{m})_{n}^{(i,j)}\cdot\exp\Bigg(-\frac{2}{\bar{r}_{n}^{(i,j)}}\frac{(\overline{\mathfrak{D}}_{+}\tilde{m})_{n}^{(i,j)}}{1-\frac{2\tilde{m}_{n}^{(i+1,j)}}{\bar{r}_{n}^{(i,j)}}-\frac{1}{3}\Lambda(\bar{r}_{n}^{(i,j)})^{2}}\big(1-\mathfrak{Err}_{1,n}^{(i,j)}\big)\big(1-\mathfrak{Err}_{/n}^{(i,j)}\big)\Bigg),\label{eq:MassDifferenceDecreaseInVDirection}
\end{equation}
where the terms $\mathfrak{Err}_{1,n}^{(i,j)}$ in (\ref{eq:MassDifferenceIncreaseInUDirection})
and (\ref{eq:MassDifferenceDecreaseInVDirection}) are allowed to
be different from each other, but they both satisfy the bound 
\begin{equation}
0\le\mathfrak{Err}_{1,n}^{(i,j)}\le1-\frac{\bar{r}_{n}^{(i,j)}-2\tilde{m}_{n}^{(i+1,j)}-\frac{1}{3}\Lambda(\bar{r}_{n}^{(i,j)})^{3}}{r\big|_{\big(U_{n}(v^{(i)}),V_{n}(v^{(j)}+\frac{4}{\sqrt{-\Lambda}}h_{2}(\text{\textgreek{e}}))\big)}-2\tilde{m}_{n}^{(i,j+1)}-\frac{1}{3}\Lambda r^{3}\big|_{\big(U_{n}(v^{(i)}),V_{n}(v^{(j)}+\frac{4}{\sqrt{-\Lambda}}h_{2}(\text{\textgreek{e}}))\big)}}\label{eq:BoundErrorTermForInteraction}
\end{equation}
and $\mathfrak{Err}_{\backslash n}^{(i,j)}$, $\mathfrak{Err}_{/n}^{(i,j)}$
satisfy the bounds 
\begin{equation}
0\le\mathfrak{Err}_{\backslash n}^{(i,j)}\le1-\frac{(\mathfrak{D}_{+}\tilde{m})_{n}^{(i,j)}}{(\mathfrak{D}_{-}\tilde{m})_{n}^{(i,j)}}\label{eq:BoundErrorTermIngoingInteraction}
\end{equation}
and 
\begin{equation}
0\le\mathfrak{Err}_{/n}^{(i,j)}\le1-\frac{(\overline{\mathfrak{D}}_{-}\tilde{m})_{n}^{(i,j)}}{(\overline{\mathfrak{D}}_{+}\tilde{m})_{n}^{(i,j)}}.\label{eq:BoundErrorTermOugoingInteraction}
\end{equation}
Moreover, the following estimate will be useful in the proof of (\ref{eq:BoundSecondBeamchanged}):
For any $1\le n\le n_{f}$, $1\le i\le k$ and $k+1\le j\le k+i$,
\begin{equation}
(\overline{\mathfrak{D}}_{+}\tilde{m})_{n}^{(i,j)}\ge(\overline{\mathfrak{D}}_{-}\tilde{m})_{n}^{(i,j)}\cdot\exp\Bigg(\frac{1}{5C_{0}}\frac{(\mathfrak{D}_{-}\tilde{m})_{n}^{(i,j)}}{\bar{r}_{n}^{(i,j)}}\Bigg).\label{eq:UsefulEstimate}
\end{equation}

\begin{rem*}
Notice that, as a consequence of (\ref{eq:MassDifferenceIncreaseInUDirection})
and (\ref{eq:MassDifferenceDecreaseInVDirection}), during the interaction
of the two beams at $\mathcal{N}_{\text{\textgreek{e}}n}^{(i,j)}$,
the mass difference $\overline{\mathfrak{D}}\tilde{m}$ of the ingoing
beam increases, while the mass difference $\mathfrak{D}\tilde{m}$
of the outgoing beam decrases.
\end{rem*}
\medskip{}

\noindent \emph{Proof of (\ref{eq:MassDifferenceIncreaseInUDirection})
and (\ref{eq:MassDifferenceDecreaseInVDirection}).} By differentiating
(\ref{eq:DerivativeTildeVMass}) in $u$ and using (\ref{eq:DerivativeInUDirectionKappa})
and (\ref{eq:ConservationT_vv}), we readily obtain the following
wave-type equation for $\tilde{m}$: 
\begin{equation}
\partial_{u}\partial_{v}\tilde{m}=-F(r,\tilde{m})\partial_{u}\tilde{m}\partial_{v}\tilde{m},\label{eq:WaveEquationMass}
\end{equation}
where 
\begin{equation}
F(r,\tilde{m})\doteq\frac{2}{r-2\tilde{m}-\frac{1}{3}\Lambda r^{3}}.\label{eq:OriginalFMassEquation}
\end{equation}
Note that, formally, equation (\ref{eq:WaveEquationMass}) can be
rewritten as 
\begin{equation}
\partial_{v}\log(-\partial_{u}\tilde{m})=-F(r,\tilde{m})\partial_{v}\tilde{m}\label{eq:OutgoingEquationMass}
\end{equation}
or 
\begin{equation}
\partial_{u}\log(\partial_{v}\tilde{m})=F(r,\tilde{m})(-\partial_{u}\tilde{m})\label{eq:IngoingEquationMass}
\end{equation}
(note, however, that $\log(-\partial_{u}\tilde{m})$, $\log(\partial_{v}\tilde{m})$
will not be well defined when $\partial_{u}\tilde{m}=0$ or $\partial_{v}\tilde{m}=0$).

For any $1\le n\le n_{f}$, $1\le i\le k$ and $i+1\le j\le k+i$,
integrating equation (\ref{eq:WaveEquationMass}) first in $u$, for
$U_{n}(v^{(i)}))\le u\le U_{n}(v^{(i)}+\frac{4}{\sqrt{-\Lambda}}h_{2}(\text{\textgreek{e}}))$,
and then in $v$, for $V_{n}(v^{(j)})\le v\le V_{n}(v^{(j)}+\frac{4}{\sqrt{-\Lambda}}h_{2}(\text{\textgreek{e}}))$,
we obtain: 
\begin{equation}
\tilde{m}_{n}^{(i,j)}-\tilde{m}_{n}^{(i,j+1)}=\int_{V_{n}(v^{(j)})}^{V_{n}(v^{(j)}+\frac{4}{\sqrt{-\Lambda}}h_{2}(\text{\textgreek{e}}))}\partial_{v}\tilde{m}|_{(U_{n}(v^{(i)})),v)}\cdot\exp\Big(2\int_{U_{n}(v^{(i)})}^{U_{n}(v^{(i)}+\frac{4}{\sqrt{-\Lambda}}h_{2}(\text{\textgreek{e}}))}\frac{-\partial_{u}\tilde{m}}{r-2m}\Big|_{(u,v)}\, du\Big)\, dv.\label{eq:BeforeIngoingDifference}
\end{equation}

\begin{rem*}
Note that, at the formal level, the derivation of (\ref{eq:BeforeIngoingDifference})
is easiest seen by integrating equation (\ref{eq:IngoingEquationMass})
first in $u$, then exponentiating, and then integrating in $v$.
This procedure can actually be done rigorously, since $\partial_{u}\tilde{m}<0<\partial_{v}\tilde{m}$
in the interior of $\mathcal{N}_{\text{\textgreek{e}}n}^{(i,j)}$,
in view of (\ref{eq:DerivativeTildeUMass}), (\ref{eq:DerivativeTildeVMass})
and (\ref{eq:TuuSupport})--(\ref{eq:TvvSupport}). 
\end{rem*}
In view of (\ref{eq:NonTrappingQualitativ})--(\ref{eq:NonTrappingMassSign}),
we can bound for any $U_{n}(v^{(i)})\le u\le U_{n}(v^{(i)}+\frac{4}{\sqrt{-\Lambda}}h_{2}(\text{\textgreek{e}}))$
and any $V_{n}(v^{(j)})\le v\le V_{n}(v^{(j)}+\frac{4}{\sqrt{-\Lambda}}h_{2}(\text{\textgreek{e}}))$:
\begin{align}
\bar{r}_{n}^{(i,j)}-2\tilde{m}_{n}^{(i+1,j)}-\frac{1}{3}\Lambda(\bar{r}_{n}^{(i,j)})^{3}\le\big(r- & 2m\big)\Big|_{(u,v)}\le\label{eq:TrivialBoundR}\\
 & \le r\big|_{\big(U_{n}(v^{(i)}),V_{n}(v^{(j)}+\frac{4}{\sqrt{-\Lambda}}h_{2}(\text{\textgreek{e}}))\big)}-2\tilde{m}_{n}^{(i,j+1)}-\frac{1}{3}\Lambda r^{3}\big|_{\big(U_{n}(v^{(i)}),V_{n}(v^{(j)}+\frac{4}{\sqrt{-\Lambda}}h_{2}(\text{\textgreek{e}}))\big)},\nonumber 
\end{align}
where
\[
\bar{r}_{n}^{(i,j)}\doteq r|_{(U_{n}(v^{(i)}+\frac{4}{\sqrt{-\Lambda}}h_{2}(\text{\textgreek{e}})),V_{n}(v^{(j)}))}.
\]
Therefore, using (\ref{eq:TrivialBoundR}) to estimate $\frac{1}{r-2m}$,
from (\ref{eq:BeforeIngoingDifference}) we readily infer that: 
\begin{equation}
\tilde{m}_{n}^{(i,j)}-\tilde{m}_{n}^{(i,j+1)}=\int_{V_{n}(v^{(j)})}^{V_{n}(v^{(j)}+\frac{4}{\sqrt{-\Lambda}}h_{2}(\text{\textgreek{e}}))}\partial_{v}\tilde{m}|_{(U_{n}(v^{(i)})),v)}\cdot\exp\Bigg(\frac{2}{\bar{r}_{n}^{(i,j)}}\frac{\tilde{m}|_{(U_{n}(v^{(i)}),v)}-\tilde{m}|_{(U_{n}(v^{(i)}+\frac{4}{\sqrt{-\Lambda}}h_{2}(\text{\textgreek{e}})),v)}}{1-\frac{2\tilde{m}_{n}^{(i+1,j)}}{\bar{r}_{n}^{(i,j)}}-\frac{1}{3}\Lambda(\bar{r}_{n}^{(i,j)})^{2}}\big(1-\mathfrak{Err}_{1,n}^{(i,j)}(v)\big)\Bigg)\, dv,\label{eq:bla}
\end{equation}
where, for any $V_{n}(v^{(j)})\le v\le V_{n}(v^{(j)}+\frac{4}{\sqrt{-\Lambda}}h_{2}(\text{\textgreek{e}}))$,
$\mathfrak{Err}_{1,n}^{(i,j)}(v)$ satisfies the bound (\ref{eq:BoundErrorTermForInteraction}). 

Equations (\ref{eq:DerivativeTildeUMass}), (\ref{eq:DerivativeInVDirectionKappaBar})
and (\ref{eq:EquationT_uu}) imply that, for any $V_{n}(v^{(j)})\le v\le V_{n}(v^{(j)}+\frac{4}{\sqrt{-\Lambda}}h_{2}(\text{\textgreek{e}}))$,
\begin{equation}
\partial_{v}\big(\tilde{m}|_{(U_{n}(v^{(i)}),v)}-\tilde{m}|_{(U_{n}(v^{(i)}+\frac{4}{\sqrt{-\Lambda}}h_{2}(\text{\textgreek{e}})),v)}\big)\le0
\end{equation}
and, therefore, for any $V_{n}(v^{(j)})\le v\le V_{n}(v^{(j)}+\frac{4}{\sqrt{-\Lambda}}h_{2}(\text{\textgreek{e}}))$:
\begin{equation}
(\mathfrak{D}_{+}\tilde{m})_{n}^{(i,j)}\le\tilde{m}|_{(U_{n}(v^{(i)}),v)}-\tilde{m}|_{(U_{n}(v^{(i)}+\frac{4}{\sqrt{-\Lambda}}h_{2}(\text{\textgreek{e}})),v)}\le(\mathfrak{D}_{-}\tilde{m})_{n}^{(i,j)}.\label{eq:MonotonicitymassDifference}
\end{equation}
The bound (\ref{eq:MonotonicitymassDifference}) implies that (\ref{eq:bla})
can be expressed as 
\begin{equation}
\tilde{m}_{n}^{(i,j)}-\tilde{m}_{n}^{(i,j+1)}=(\tilde{m}_{n}^{(i+1,j)}-\tilde{m}_{n}^{(i+1,j+1)})\cdot\exp\Bigg(\frac{2}{\bar{r}_{n}^{(i,j)}}\frac{(\mathfrak{D}_{-}\tilde{m})_{n}^{(i,j)}}{1-\frac{2\tilde{m}_{n}^{(i+1,j)}}{\bar{r}_{n}^{(i,j)}}-\frac{1}{3}\Lambda(\bar{r}_{n}^{(i,j)})^{2}}\big(1-\mathfrak{Err}_{1,n}^{(i,j)}\big)\big(1-\mathfrak{Err}_{\backslash n}^{(i,j)}\big)\Bigg)\label{eq:bla-1}
\end{equation}
where $\mathfrak{Err}_{1,n}^{(i,j)}$ satisfies the bound (\ref{eq:BoundErrorTermForInteraction})
and $\mathfrak{Err}_{\backslash n}^{(i,j)}$ satisfies the bound (\ref{eq:BoundErrorTermIngoingInteraction}).
In view of (\ref{eq:DefinitionIncomingEnergy}) and (\ref{eq:DefinitionOutcomingEnergy}),
(\ref{eq:bla-1}) is equivalent to (\ref{eq:MassDifferenceIncreaseInUDirection}). 

Similarly, integrating equation (\ref{eq:WaveEquationMass}) first
in $v$, for $V_{n}(v^{(j)})\le v\le V_{n}(v^{(j)}+\frac{4}{\sqrt{-\Lambda}}h_{2}(\text{\textgreek{e}}))$,
and then in $u$, for $U_{n}(v^{(i)}))\le u\le U_{n}(v^{(i)}+\frac{4}{\sqrt{-\Lambda}}h_{2}(\text{\textgreek{e}}))$
(see also (\ref{eq:OutgoingEquationMass})), we obtain (\ref{eq:MassDifferenceDecreaseInVDirection}). 

\medskip{}

\noindent \emph{Proof of (\ref{eq:UsefulEstimate}).} Recall $F$
defined by (\ref{eq:OriginalFMassEquation}) and let us define the
function $\bar{F}:\mathcal{D}_{\bar{F}}\rightarrow(0,+\infty)$, where
\begin{equation}
\mathcal{D}_{\bar{F}}=\big\{(x,y)\in\mathbb{R}^{2}\mbox{ }x>0\mbox{ and }x-y-\frac{2}{3}\Lambda x^{2}>0\big\},
\end{equation}
by the relation 
\begin{equation}
\bar{F}(x,y)\doteq\frac{2}{x-y-\frac{2}{3}\Lambda x^{2}}.\label{eq:ModifiedF}
\end{equation}
Note that, in view of (\ref{eq:RoughBoundGeometry}), (\ref{eq:UpperBoundForAxisInteractionProp}),
(\ref{eq:BoundMirror}) and (\ref{eq:h_2definition}), for any $\text{\textgreek{m}}\ge0$
for which 
\begin{equation}
\inf_{(u,v)\in\mathcal{N}_{\text{\textgreek{e}}n}^{(i,j)}}\big\{ r(u,v)-2\text{\textgreek{m}}-\frac{1}{3}\Lambda r^{2}(u,v)\big\}>h_{3}(\text{\textgreek{e}}),\label{eq:ConditionFWellDefined}
\end{equation}
we can readily bound: 
\begin{equation}
\max_{(u,v)\in\mathcal{N}_{\text{\textgreek{e}}n}^{(i,j)}}\bar{F}(r(u,v),\text{\textgreek{m}})<\min_{(u,v)\in\mathcal{N}_{\text{\textgreek{e}}n}^{(i,j)}}F(r(u,v),\text{\textgreek{m}})\label{eq:BoundForLemma}
\end{equation}
and 
\begin{equation}
\partial_{\text{\textgreek{m}}}\bar{F}(r(u,v),\text{\textgreek{m}}),\mbox{ }\partial_{\text{\textgreek{m}}}F(r(u,v),\text{\textgreek{m}})>0\label{eq:IncreasingForLemma}
\end{equation}
(note that $F(r|_{\mathcal{N}_{\text{\textgreek{e}}n}^{(i,j)}},\text{\textgreek{m}})$
and $\bar{F}(r|_{\mathcal{N}_{\text{\textgreek{e}}n}^{(i,j)}},\text{\textgreek{m}})$
are well-defined and positive under the condition (\ref{eq:ConditionFWellDefined})).

For any $1\le n\le n_{f}$, $1\le i\le k$ and $k+1\le j\le k+i$,
let us consider the following characteristic initial value problem
on $\mathcal{N}_{\text{\textgreek{e}}n}^{(i,j)}$:
\begin{equation}
\begin{cases}
\partial_{u}\partial_{v}\bar{m}=-\bar{F}(r,\bar{m})\partial_{u}\bar{m}\partial_{v}\bar{m} & \mbox{on }\mathcal{N}_{\text{\textgreek{e}}n}^{(i,j)},\\
\bar{m}=\tilde{m} & \mbox{on }[U_{n}(v^{(i)})),U_{n}(v^{(i)}+\frac{4}{\sqrt{-\Lambda}}h_{2}(\text{\textgreek{e}}))]\times\{V_{n}(v^{(j)}))\}\cup\\
 & \hphantom{\mbox{on }\cup}\cup\{U_{n}(v^{(i)}))\}\times[V_{n}(v^{(j)})),V_{n}(v^{(j)}+\frac{4}{\sqrt{-\Lambda}}h_{2}(\text{\textgreek{e}}))].
\end{cases}\label{eq:ModifiedInitialValueProblem}
\end{equation}
Note that $\tilde{m}$ satisfies the same characteristic initial value
problem with $F(r,\tilde{m})$ in place of $\bar{F}(r,\bar{m})$.
Notice also that, in view of (\ref{eq:DerivativeTildeUMass})--(\ref{eq:DerivativeTildeVMass})
and (\ref{eq:TuuSupport})--(\ref{eq:TvvSupport}), the initial data
for $\tilde{m}$ and $\bar{m}$ satisfy: 
\begin{equation}
\partial_{u}\tilde{m}<0\mbox{ on}\Big\{ U_{n}(v^{(i)})<u<U_{n}(v^{(i)}+\frac{4}{\sqrt{-\Lambda}}h_{2}(\text{\textgreek{e}}))\Big\}\label{eq:TuuSupport-1}
\end{equation}
and 
\begin{equation}
\partial_{v}\tilde{m}>0\mbox{ on}\Big\{ V_{n}(v^{(j)})<u<V_{n}(v^{(j)}+\frac{4}{\sqrt{-\Lambda}}h_{2}(\text{\textgreek{e}}))\Big\}.\label{eq:TvvSupport-1}
\end{equation}
Therefore, in view of (\ref{eq:BoundForLemma}), (\ref{eq:IncreasingForLemma})
and (\ref{eq:TuuSupport-1})--(\ref{eq:TvvSupport-1}), an application
of Lemma \ref{lem:HyperbolicMaximumPrinciple} (see Section \ref{sub:Auxiliary-lemmas})
with $\tilde{m},\bar{m}$ in place of $z_{2},z_{1}$, respectively,
yields the following a priori bounds for a solution $\bar{m}$ of
(\ref{eq:ModifiedInitialValueProblem}): 
\begin{equation}
\bar{m}\le\tilde{m}\mbox{ on }\mathcal{N}_{\text{\textgreek{e}}n}^{(i,j)}\label{eq:AprioriBoundMtilde'}
\end{equation}
and 
\begin{equation}
\partial_{u}\bar{m}<0<\partial_{v}\bar{m}\mbox{ in the interior of }\mathcal{N}_{\text{\textgreek{e}}n}^{(i,j)}.\label{eq:MonotonicityFromLemma}
\end{equation}
Notice that the a priori bound (\ref{eq:AprioriBoundMtilde'}) and
the initial data in (\ref{eq:ModifiedInitialValueProblem}) imply
that $\bar{m}\ge0$ and that (\ref{eq:ConditionFWellDefined}) holds
for $\text{\textgreek{m}}=\tilde{m}$ and $\text{\textgreek{m}}=\bar{m}$;
in particular, $\bar{F}(r,\bar{m})$ is well defined and positive
on $\mathcal{N}_{\text{\textgreek{e}}n}^{(i,j)}$. Thus, it readily
follows (using standard arguments) that (\ref{eq:ModifiedInitialValueProblem})
indeed has a unique smooth solution $\bar{m}$ satisfying (\ref{eq:AprioriBoundMtilde'}).

With $\bar{m}$ defined on $\mathcal{N}_{\text{\textgreek{e}}n}^{(i,j)}$
as above for any $1\le n\le n_{f}$, $1\le i\le k$ and $k+1\le j\le k+i$,
we will define the following modified versions of (\ref{eq:DefinitionIncomingEnergy})
and (\ref{eq:DefinitionOutcomingEnergy}):
\begin{align}
(\mathfrak{D}_{-}\bar{m})_{n}^{(i,j)} & \doteq\bar{m}|_{(U_{n}(v^{(i)}),V_{n}(v^{(j)}))}-\bar{m}|_{(U_{n}(v^{(i)}+\frac{4}{\sqrt{-\Lambda}}h_{2}(\text{\textgreek{e}})),V_{n}(v^{(j)}))}\label{eq:DefinitionIncomingEnergyModified}\\
(\overline{\mathfrak{D}}_{-}\bar{m})_{n}^{(i,j)} & \doteq\bar{m}|_{(U_{n}(v^{(i)}),V_{n}(v^{(j)}+\frac{4}{\sqrt{-\Lambda}}h_{2}(\text{\textgreek{e}})))}-\bar{m}|_{(U_{n}(v^{(i)}),V_{n}(v^{(j)}))}\nonumber 
\end{align}
and 
\begin{align}
(\mathfrak{D}_{+}\bar{m})_{n}^{(i,j)} & \doteq\bar{m}|_{(U_{n}(v^{(i)}),V_{n}(v^{(j)}+\frac{4}{\sqrt{-\Lambda}}h_{2}(\text{\textgreek{e}})))}-\bar{m}|_{(U_{n}(v^{(i)}+\frac{4}{\sqrt{-\Lambda}}h_{2}(\text{\textgreek{e}})),V_{n}(v^{(j)}+\frac{4}{\sqrt{-\Lambda}}h_{2}(\text{\textgreek{e}})))}\label{eq:DefinitionOutcomingEnergyModified}\\
(\overline{\mathfrak{D}}_{+}\bar{m})_{n}^{(i,j)} & \doteq\bar{m}|_{(U_{n}(v^{(i)}+\frac{4}{\sqrt{-\Lambda}}h_{2}(\text{\textgreek{e}})),V_{n}(v^{(j)}+\frac{4}{\sqrt{-\Lambda}}h_{2}(\text{\textgreek{e}})))}-\bar{m}|_{(U_{n}(v^{(i)}+\frac{4}{\sqrt{-\Lambda}}h_{2}(\text{\textgreek{e}})),V_{n}(v^{(j)}))}.\nonumber 
\end{align}
Note that, in view of the initial data for (\ref{eq:ModifiedInitialValueProblem}):
\begin{align}
(\mathfrak{D}_{-}\bar{m})_{n}^{(i,j)} & =(\mathfrak{D}_{-}\tilde{m})_{n}^{(i,j)}\label{eq:EqualityTwoEnergiesInitially}\\
(\overline{\mathfrak{D}}_{-}\bar{m})_{n}^{(i,j)} & =(\overline{\mathfrak{D}}_{-}\tilde{m})_{n}^{(i,j)},\nonumber 
\end{align}
while, in view of the bound (\ref{eq:AprioriBoundMtilde'}) (and the
initial data for (\ref{eq:ModifiedInitialValueProblem})):
\begin{align}
(\mathfrak{D}_{+}\bar{m})_{n}^{(i,j)} & \ge(\mathfrak{D}_{+}\tilde{m})_{n}^{(i,j)}\label{eq:InequalityOutcomingEnergies}\\
(\overline{\mathfrak{D}}_{+}\bar{m})_{n}^{(i,j)} & \le(\overline{\mathfrak{D}}_{+}\tilde{m})_{n}^{(i,j)}.\nonumber 
\end{align}

By repeating exactly the same steps that led to (\ref{eq:MassDifferenceIncreaseInUDirection})
and (\ref{eq:MassDifferenceDecreaseInVDirection}) but using (\ref{eq:ModifiedInitialValueProblem})
instead of (\ref{eq:WaveEquationMass}), we obtain for any $1\le n\le n_{f}$,
$1\le i\le k$ and $k+1\le j\le k+i$: 
\begin{equation}
(\overline{\mathfrak{D}}_{+}\bar{m})_{n}^{(i,j)}=(\overline{\mathfrak{D}}_{-}\bar{m})_{n}^{(i,j)}\cdot\exp\Bigg(\frac{2}{\bar{r}_{n}^{(i,j)}}\frac{(\mathfrak{D}_{-}\bar{m})_{n}^{(i,j)}}{1-\frac{\tilde{m}_{n}^{(i+1,j)}}{\bar{r}_{n}^{(i,j)}}-\frac{2}{3}\Lambda(\bar{r}_{n}^{(i,j)})^{2}}\big(1-\mathfrak{\overline{Err}}_{1,n}^{(i,j)}\big)\big(1-\mathfrak{\overline{Err}}_{\backslash n}^{(i,j)}\big)\Bigg)\label{eq:MassDifferenceIncreaseInUDirectionModified}
\end{equation}
and 
\begin{equation}
(\mathfrak{D}_{+}\bar{m})_{n}^{(i,j)}=(\mathfrak{D}_{-}\bar{m})_{n}^{(i,j)}\cdot\exp\Bigg(-\frac{2}{\bar{r}_{n}^{(i,j)}}\frac{(\overline{\mathfrak{D}}_{+}\bar{m})_{n}^{(i,j)}}{1-\frac{\tilde{m}_{n}^{(i+1,j)}}{\bar{r}_{n}^{(i,j)}}-\frac{2}{3}\Lambda(\bar{r}_{n}^{(i,j)})^{2}}\big(1-\mathfrak{\overline{Err}}_{1,n}^{(i,j)}\big)\big(1-\mathfrak{\overline{Err}}_{/n}^{(i,j)}\big)\Bigg),\label{eq:MassDifferenceDecreaseInVDirectionModified}
\end{equation}
where 
\begin{equation}
0\le\mathfrak{\overline{Err}}_{1,n}^{(i,j)}\le1-\frac{\bar{r}_{n}^{(i,j)}-\tilde{m}_{n}^{(i+1,j)}-\frac{2}{3}\Lambda(\bar{r}_{n}^{(i,j)})^{3}}{r\big|_{\big(U_{n}(v^{(i)}),V_{n}(v^{(j)}+\frac{4}{\sqrt{-\Lambda}}h_{2}(\text{\textgreek{e}}))\big)}-\tilde{m}_{n}^{(i,j+1)}-\frac{2}{3}\Lambda r^{3}\big|_{\big(U_{n}(v^{(i)}),V_{n}(v^{(j)}+\frac{4}{\sqrt{-\Lambda}}h_{2}(\text{\textgreek{e}}))\big)}},\label{eq:BoundErrorTermForInteraction-1}
\end{equation}
\begin{equation}
0\le\mathfrak{\overline{Err}}_{\backslash n}^{(i,j)}\le1-\frac{(\mathfrak{D}_{+}\bar{m})_{n}^{(i,j)}}{(\mathfrak{D}_{-}\bar{m})_{n}^{(i,j)}}\label{eq:BoundErrorTermIngoingInteraction-1-1}
\end{equation}
and 
\begin{equation}
0\le\mathfrak{\overline{Err}}_{/n}^{(i,j)}\le1-\frac{(\overline{\mathfrak{D}}_{-}\bar{m})_{n}^{(i,j)}}{(\overline{\mathfrak{D}}_{+}\bar{m})_{n}^{(i,j)}}\label{eq:BoundErrorTermOugoingInteraction-1}
\end{equation}
(and, as before, we allow the terms $\mathfrak{\overline{Err}}_{1,n}^{(i,j)}$
in (\ref{eq:MassDifferenceIncreaseInUDirectionModified}) and (\ref{eq:MassDifferenceDecreaseInVDirectionModified})
to be different). 

Because 
\begin{equation}
\tilde{m}_{n}^{(i+1,j)}\le\tilde{m}|_{\mathcal{I}}\le\frac{2}{3}r_{0}\le\frac{2}{3}\bar{r}_{n}^{(i,j)}\label{eq:UpperBoundMassFromR}
\end{equation}
(in view of (\ref{eq:MassInfinity}), (\ref{eq:BoundMirror}), (\ref{eq:NonTrappingQualitativ})
and (\ref{eq:NonTrappingMassSign})), from (\ref{eq:MassDifferenceDecreaseInVDirectionModified})
(using also (\ref{eq:UpperBoundForAxisInteractionProp}) and the fact
that $\mathfrak{\overline{Err}}{}_{1,n}^{(i,j)},\mathfrak{\overline{Err}}{}_{/n}^{(i,j)}\ge0$)
we can estimate for any $1\le n\le n_{f}$, $1\le i\le k$ and $k+1\le j\le k+i$:
\begin{align}
(\mathfrak{D}_{+}\bar{m})_{n}^{(i,j)} & =(\mathfrak{D}_{-}\bar{m})_{n}^{(i,j)}\cdot\exp\Bigg(-\frac{2}{\bar{r}_{n}^{(i,j)}}\frac{(\overline{\mathfrak{D}}_{+}\bar{m})_{n}^{(i,j)}}{1-\frac{\tilde{m}_{n}^{(i+1,j)}}{\bar{r}_{n}^{(i,j)}}-\frac{2}{3}\Lambda(\bar{r}_{n}^{(i,j)})^{2}}\big(1-\mathfrak{\overline{Err}}{}_{1,n}^{(i,j)}\big)\big(1-\mathfrak{\overline{Err}}{}_{/n}^{(i,j)}\big)\Bigg)\label{eq:OneMoreSillyEstimate}\\
 & \ge(\mathfrak{D}_{-}\bar{m})_{n}^{(i,j)}\cdot\exp\Bigg(-\frac{8(\overline{\mathfrak{D}}_{+}\bar{m})_{n}^{(i,j)}}{\bar{r}_{n}^{(i,j)}}\Bigg)\nonumber 
\end{align}
In view of the fact that 
\[
(\overline{\mathfrak{D}}_{+}\bar{m})_{n}^{(i,j)}\le(\overline{\mathfrak{D}}_{+}\tilde{m})_{n}^{(i,j)}=\tilde{m}_{n}^{(i,j)}-\tilde{m}_{n}^{(i,j+1)}\le\tilde{m}|_{\mathcal{I}}-0\le\frac{2}{3}r_{0}\le\frac{2}{3}\bar{r}_{n}^{(i,j)}
\]
(following from (\ref{eq:InequalityOutcomingEnergies})), (\ref{eq:OneMoreSillyEstimate})
yields 
\begin{equation}
(\mathfrak{D}_{+}\bar{m})_{n}^{(i,j)}\ge e^{-\frac{16}{3}}(\mathfrak{D}_{-}\bar{m})_{n}^{(i,j)}.\label{eq:DonAlready}
\end{equation}
In view of (\ref{eq:BoundErrorTermIngoingInteraction-1-1}), (\ref{eq:DonAlready})
implies that
\begin{equation}
1-\mathfrak{\overline{Err}}{}_{\backslash n}^{(i,j)}\ge\frac{1}{C_{0}}.\label{eq:BoundForErrorInteraction}
\end{equation}

Using (\ref{eq:BoundForErrorInteraction}) in (\ref{eq:MassDifferenceIncreaseInUDirectionModified}),
we obtain: 
\begin{equation}
(\overline{\mathfrak{D}}_{+}\bar{m})_{n}^{(i,j)}\ge(\overline{\mathfrak{D}}_{-}\bar{m})_{n}^{(i,j)}\cdot\exp\Bigg(\frac{2}{C_{0}\bar{r}_{n}^{(i,j)}}\frac{(\mathfrak{D}_{-}\bar{m})_{n}^{(i,j)}}{1-\frac{\tilde{m}_{n}^{(i+1,j)}}{\bar{r}_{n}^{(i,j)}}-\frac{2}{3}\Lambda(\bar{r}_{n}^{(i,j)})^{2}}\big(1-\mathfrak{\overline{Err}}_{1,n}^{(i,j)}\big)\Bigg).\label{eq:MassDifferenceIncreaseInUDirectionModified-1}
\end{equation}
In view of (\ref{eq:UpperBoundMassFromR}) and (\ref{eq:UpperBoundForAxisInteractionProp}),
we can also estimate 
\begin{equation}
\frac{1-\mathfrak{\overline{Err}}{}_{1,n}^{(i,j)}}{1-\frac{\tilde{m}_{n}^{(i+1,j)}}{\bar{r}_{n}^{(i,j)}}-\frac{2}{3}\Lambda(\bar{r}_{n}^{(i,j)})^{2}}\ge\frac{1}{10}
\end{equation}
and, thus, (\ref{eq:MassDifferenceIncreaseInUDirectionModified-1})
yields: 
\begin{equation}
(\overline{\mathfrak{D}}_{+}\bar{m})_{n}^{(i,j)}\ge(\overline{\mathfrak{D}}_{-}\bar{m})_{n}^{(i,j)}\cdot\exp\Bigg(\frac{1}{5C_{0}}\frac{(\mathfrak{D}_{-}\bar{m})_{n}^{(i,j)}}{\bar{r}_{n}^{(i,j)}}\Bigg).\label{eq:MassDifferenceIncreaseInUDirectionModified-1-1}
\end{equation}

From (\ref{eq:MassDifferenceIncreaseInUDirectionModified-1-1}) and
the relations (\ref{eq:EqualityTwoEnergiesInitially}) and (\ref{eq:InequalityOutcomingEnergies}),
we readily obtain (\ref{eq:UsefulEstimate}). \qed

\paragraph*{Proof of (\ref{eq:NotEnoughMassBehind}).}

For any $1\le n\le n_{f}$, from (\ref{eq:MassDifferenceIncreaseInUDirection})
we readily obtain that, for any $1\le i\le k$: 
\begin{equation}
(\overline{\mathfrak{D}}_{+}\tilde{m})_{n}^{(i,k+1)}\ge(\overline{\mathfrak{D}}_{-}\tilde{m})_{n}^{(i,k+1)}.\label{eq:Increasemassingoing}
\end{equation}
Applying (\ref{eq:Increasemassingoing}) successively for $i=1,2,\ldots k$,
using also the identity 
\begin{equation}
(\overline{\mathfrak{D}}_{-}\tilde{m})_{n}^{(i,j)}=(\overline{\mathfrak{D}}_{+}\tilde{m})_{n}^{(i+1,j)}\label{eq:EqualMassDifferenceWhennoIntraction}
\end{equation}
(which follows from the fact that $\tilde{m}$ is constant over each
$\mathcal{R}_{\text{\textgreek{e}}n}^{(i,j)}$), we thus infer that,
for any $1\le i\le k$: 
\begin{equation}
(\overline{\mathfrak{D}}_{+}\tilde{m})_{n}^{(i,k+1)}\ge(\overline{\mathfrak{D}}_{-}\tilde{m})_{n}^{(k,k+1)}=(\tilde{m}_{n-1}^{(0,0)}-\tilde{m}_{n-1}^{(0,1)}).\label{eq:TotalWeakBoundIngoing}
\end{equation}

Since 
\begin{equation}
\tilde{m}_{n-1}^{(0,0)}=\tilde{m}_{n-1}^{(1,1)}=\tilde{m}|_{\mathcal{I}}
\end{equation}
and 
\begin{equation}
(\mathfrak{D}_{+}\tilde{m})_{n-1}^{(0,1)}=\tilde{m}_{n-1}^{(1,1)}-\tilde{m}_{n-1}^{(0,1)}=(\overline{\mathfrak{D}}_{-}\tilde{m})_{n}^{(k,k+1)},\label{eq:EqualMassDifferenceSfterReflection}
\end{equation}
from (\ref{eq:TotalWeakBoundIngoing}) we infer that, for any $1\le i\le k$:
\begin{equation}
(\overline{\mathfrak{D}}_{+}\tilde{m})_{n}^{(i,k+1)}\ge(\mathfrak{D}_{+}\tilde{m})_{n-1}^{(0,1)}.\label{eq:WeakBoundIngoingFinalOutgoing}
\end{equation}

Similarly as for the derivation of (\ref{eq:Increasemassingoing}),
applying the relation (\ref{eq:MassDifferenceDecreaseInVDirection})
successively for $i=0$ and $j=1,2,\ldots k$ (with $n-1$ in place
of $n$), we infer: 
\begin{equation}
(\mathfrak{D}_{+}\tilde{m})_{n-1}^{(0,1)}=(\mathfrak{D}_{-}\tilde{m})_{n-1}^{(0,k)}\cdot\exp\Bigg(-\sum_{j=1}^{k}\frac{2}{\bar{r}_{n-1}^{(0,j)}}\frac{(\overline{\mathfrak{D}}_{-}\tilde{m})_{n-1}^{(0,j)}}{1-\frac{2\tilde{m}_{n-1}^{(0,j)}}{\bar{r}_{n-1}^{(0,j)}}-\frac{1}{3}\Lambda(\bar{r}_{n-1}^{(0,j)})^{2}}\big(1-\mathfrak{Err}_{1,n-1}^{(0,j)}\big)\big(1-\mathfrak{Err}_{/n-1}^{(0,j)}\big)\Bigg).\label{eq:MassDifferenceDecreaseInVDirection-1-1}
\end{equation}
In view of the bound (\ref{eq:MassInfinity}) for the total mass $\tilde{m}|_{\mathcal{I}}$,
the lower bound (\ref{eq:BoundForRAwayInteractionProp}) for $r_{n}^{(0,k)}$
and the fact that 
\[
\bar{r}_{n-1}^{(0,j)}\le r_{n-1}^{(0,k)}\le\bar{r}_{n-1}^{(0,j)}\big(1+(h_{2}(\text{\textgreek{e}}))^{1/2}\big)
\]
 for $1\le j\le k$ (following from (\ref{eq:RoughBoundGeometry}),
and (\ref{eq:h_2definition})), we can estimate 
\begin{equation}
\sum_{j=1}^{k}\frac{2}{\bar{r}_{n-1}^{(0,j)}}\frac{(\overline{\mathfrak{D}}_{-}\tilde{m})_{n-1}^{(0,j)}}{1-\frac{2\tilde{m}_{n-1}^{(0,j)}}{\bar{r}_{n-1}^{(0,j)}}-\frac{1}{3}\Lambda(\bar{r}_{n-1}^{(0,j)})^{2}}\big(1-\mathfrak{Err}_{1,n-1}^{(0,j)}\big)\big(1-\mathfrak{Err}_{/n-1}^{(0,j)}\big)\le\text{\textgreek{e}}^{\frac{3}{2}}.\label{eq:BoundForDecrease}
\end{equation}
Therefore, (\ref{eq:MassDifferenceDecreaseInVDirection-1-1}) yields:
\begin{equation}
\log\frac{(\mathfrak{D}_{+}\tilde{m})_{n-1}^{(0,1)}}{(\mathfrak{D}_{-}\tilde{m})_{n-1}^{(0,k)}}\ge-\text{\textgreek{e}}^{3/2}.\label{eq:TotalMassDecreaseTopInteraction-1}
\end{equation}
From (\ref{eq:WeakBoundIngoingFinalOutgoing}) and (\ref{eq:TotalMassDecreaseTopInteraction-1})
we thus infer that, for any $1\le i\le k$ and any $2\le n\le n_{f}$:
\begin{equation}
\log\frac{(\overline{\mathfrak{D}}_{+}\tilde{m})_{n}^{(i,k+1)}}{\tilde{m}_{n-1}^{(1,k+1)}}\ge-\text{\textgreek{e}}^{3/2}.\label{eq:LowerBoundGeneralInteraction}
\end{equation}
From (\ref{eq:LowerBoundGeneralInteraction}) for $i=1$ and the fact
that, for any $1\le n\le n_{f}$: 
\begin{equation}
(\overline{\mathfrak{D}}_{+}\tilde{m})_{n}^{(1,k+1)}=(\mathfrak{D}_{-}\tilde{m})_{n}^{(0,k)}=\tilde{m}_{n}^{(1,k+1)},
\end{equation}
we thus infer that, for all $2\le n\le n_{f}$: 
\begin{equation}
\log\frac{\tilde{m}_{n}^{(1,k+1)}}{\tilde{m}_{n-1}^{(1,k+1)}}\ge-\text{\textgreek{e}}^{3/2}.\label{eq:LowerboundQuotientMass}
\end{equation}
Applying (\ref{eq:LowerboundQuotientMass}) successively $n-1$ times,
we thus infer for any $2\le n\le n_{f}$: 
\begin{equation}
\log\frac{\tilde{m}_{n}^{(1,k+1)}}{\tilde{m}_{1}^{(1,k+1)}}\ge-\text{\textgreek{e}}^{3/2}(n-1).\label{eq:FinalLowerBoundTopMass}
\end{equation}

The bound (\ref{eq:BoundMirror}) for $r_{0}$ and the form (\ref{eq:TheIngoingVlasovInitially})
of the initial data imply that 
\begin{equation}
\frac{2(\tilde{m}_{/}(v^{(0)}+\frac{4}{\sqrt{-\Lambda}}h_{2}(\text{\textgreek{e}}))-\tilde{m}_{/}(v^{(0)}))}{r_{0}}\ge\frac{4}{C_{0}}h_{0}(\text{\textgreek{e}}).\label{eq:BoundForTheMirror-1}
\end{equation}
Therefore, from (\ref{eq:TotalWeakBoundIngoing}) and (\ref{eq:BoundForTheMirror-1})
we infer that, for all $1\le i\le1$: 
\begin{equation}
\frac{2(\overline{\mathfrak{D}}_{+}\tilde{m})_{1}^{(i,k+1)}}{r_{0}}\ge\frac{4}{C_{0}}h_{0}(\text{\textgreek{e}}).\label{eq:TrivialBoundMassFirst}
\end{equation}
From (\ref{eq:FinalLowerBoundTopMass}) and (\ref{eq:TrivialBoundMassFirst})
for $i=1$ (when $(\overline{\mathfrak{D}}_{+}\tilde{m})_{1}^{(1,k+1)}=\tilde{m}_{1}^{(1,k+1)}$),
using also the fact that $n_{f}\le(h_{1}(\text{\textgreek{e}}))^{-2}$,
we thus deduce that, for all $1\le n\le n_{f}$: 
\begin{equation}
\frac{2\tilde{m}_{n}^{(1,k+1)}}{r_{0}}\ge\frac{2}{C_{0}}h_{0}(\text{\textgreek{e}}).\label{eq:AllSoTrivialBounds}
\end{equation}
 The relations (\ref{eq:MassInfinity}) and (\ref{eq:AllSoTrivialBounds})
readily yield (\ref{eq:NotEnoughMassBehind}).

\paragraph*{Proof of (\ref{eq:EnoughMassBehind}).}

In view of the bound (\ref{eq:LowerBoundTrappingParameter}), we infer
that, for any $1\le n\le n_{f}$: 
\begin{equation}
1-\frac{2\tilde{m}_{n}^{(1,k+1)}}{r|_{\big(U_{n}(v^{(1)}),V_{n}(v^{(k+1)}+\frac{4}{\sqrt{-\Lambda}}h_{2}(\text{\textgreek{e}}))\big)}}-\frac{1}{3}\Lambda r^{2}|_{\big(U_{n}(v^{(1)}),V_{n}(v^{(k+1)}+\frac{4}{\sqrt{-\Lambda}}h_{2}(\text{\textgreek{e}}))\big)}\ge h_{3}(\text{\textgreek{e}}).\label{eq:FromTrivialBoundForTrapping}
\end{equation}
Using the bounds 
\begin{equation}
\frac{r|_{\big(U_{n}(v^{(1)}),V_{n}(v^{(k+1)}+\frac{4}{\sqrt{-\Lambda}}h_{2}(\text{\textgreek{e}}))\big)}}{r_{0}}\le1+(h_{2}(\text{\textgreek{e}}))^{1/2}
\end{equation}
(derived from (\ref{eq:RoughBoundGeometry}), (\ref{eq:UpperBoundForAxisInteractionProp})
and (\ref{eq:h_2definition})) and 
\begin{equation}
\frac{r_{0}}{\frac{2}{\sqrt{-\Lambda}}\text{\textgreek{e}}-\frac{1}{3}\Lambda r_{0}^{3}}<1-\frac{1}{2}\exp\big(-2(h_{0}(\text{\textgreek{e}}))^{-4}\big)\label{eq:UpperboundMirror}
\end{equation}
(from (\ref{eq:BoundMirror})), as well as the relation (\ref{eq:MassInfinity})
for $\tilde{m}|_{\mathcal{I}}$, we can readily derive from (\ref{eq:FromTrivialBoundForTrapping})
that: 
\begin{equation}
\frac{2(\tilde{m}|_{\mathcal{I}}-\tilde{m}_{n}^{(1,k+1)})}{r_{0}}\ge2\big(1+(h_{2}(\text{\textgreek{e}}))^{1/2}\big)^{-1}\exp\big(-2(h_{0}(\text{\textgreek{e}}))^{-4}\big)+\frac{1}{3}\Lambda r_{0}^{2}\big(1+(h_{2}(\text{\textgreek{e}}))^{1/2}\big).\label{eq:AlmostThereTrivial}
\end{equation}
The bound (\ref{eq:EnoughMassBehind}) follows readily from (\ref{eq:UpperboundMirror})
and (\ref{eq:AlmostThereTrivial}).

\paragraph*{\noindent Proof of (\ref{eq:BoundForMassIncrease}).\emph{ }}

\noindent For any $2\le n\le n_{f}$, applying the relation (\ref{eq:MassDifferenceIncreaseInUDirection})
successively for $j=k+1$ and $i=1,2,\ldots k$, using also the identity
(\ref{eq:EqualMassDifferenceWhennoIntraction}) and the trivial bound
\begin{equation}
(\mathfrak{D}_{-}\tilde{m})_{n}^{(i,j)}\big(1-\mathfrak{Err}_{\backslash n}^{(i,k+1)}\big)\ge(\mathfrak{D}_{+}\tilde{m})_{n}^{(i,k+1)}
\end{equation}
(following directly from (\ref{eq:BoundErrorTermIngoingInteraction})),
we obtain: 
\begin{align}
(\overline{\mathfrak{D}}_{+}\tilde{m})_{n}^{(1,k+1)} & =(\tilde{m}_{n-1}^{(0,0)}-\tilde{m}_{n-1}^{(0,1)})\cdot\exp\Bigg(\sum_{i=1}^{k}\frac{2}{\bar{r}_{n}^{(i,k+1)}}\frac{(\mathfrak{D}_{-}\tilde{m})_{n}^{(i,k+1)}}{1-\frac{2\tilde{m}_{n}^{(i+1,k+1)}}{\bar{r}_{n}^{(i,j)}}-\frac{1}{3}\Lambda(\bar{r}_{n}^{(i,k+1)})^{2}}\big(1-\mathfrak{Err}_{1,n}^{(i,k+1)}\big)\big(1-\mathfrak{Err}_{\backslash n}^{(i,k+1)}\big)\Bigg)\label{eq:MassDifferenceIncreaseInUDirection-1}\\
 & \ge(\tilde{m}_{n-1}^{(0,0)}-\tilde{m}_{n-1}^{(0,1)})\cdot\exp\Bigg(\sum_{i=1}^{k}\frac{2}{\bar{r}_{n}^{(i,k+1)}}\frac{(\mathfrak{D}_{+}\tilde{m})_{n}^{(i,k+1)}}{1-\frac{2\tilde{m}_{n}^{(i+1,k+1)}}{\bar{r}_{n}^{(i,j)}}-\frac{1}{3}\Lambda(\bar{r}_{n}^{(i,k+1)})^{2}}\big(1-\mathfrak{Err}_{1,n}^{(i,k+1)}\big)\Bigg).\nonumber 
\end{align}

\noindent In view of (\ref{eq:RoughBoundGeometry}), (\ref{eq:h_2definition}),
(\ref{eq:BoundErrorTermForInteraction}), (\ref{eq:UpperBoundForAxisInteractionProp})
and the fact that 
\[
r_{0}\le\bar{r}_{n}^{(i,k+1)}\le\bar{r}_{n}^{(k,k+1)},
\]
 for $1\le i\le k$ (following from (\ref{eq:NonTrappingQualitativ})),
we can bound for any $1\le i\le k$:
\begin{multline}
\frac{2}{\bar{r}_{n}^{(i,k+1)}}\frac{1}{1-\frac{2\tilde{m}_{n}^{(i+1,k+1)}}{\bar{r}_{n}^{(i,j)}}-\frac{1}{3}\Lambda(\bar{r}_{n}^{(i,k+1)})^{2}}\big(1-\mathfrak{Err}_{1,n}^{(i,k+1)}\big)\\
\ge2\min\Bigg\{\frac{1}{r\big|_{\big(U_{n}(v^{(i)}),V_{n}(v^{(j)}+\frac{4}{\sqrt{-\Lambda}}h_{2}(\text{\textgreek{e}}))\big)}-2\tilde{m}_{n}^{(i,j+1)}-\frac{1}{3}\Lambda r^{3}\big|_{\big(U_{n}(v^{(i)}),V_{n}(v^{(j)}+\frac{4}{\sqrt{-\Lambda}}h_{2}(\text{\textgreek{e}}))\big)}},\frac{1}{\bar{r}_{n}^{(i,j)}-2\tilde{m}_{n}^{(i+1,k+1)}-\frac{1}{3}\Lambda(\bar{r}_{n}^{(i,k+1)})^{3}}\Bigg\}\\
\ge\frac{2-O(\text{\textgreek{e}})}{r_{n}^{(k,k+1)}}.\label{eq:LowerBoundNearRInteraction}
\end{multline}
 Furthermore, 
\begin{equation}
\sum_{i=1}^{k}(\mathfrak{D}_{+}\tilde{m})_{n}^{(i,k+1)}=\sum_{i=1}^{k}(\tilde{m}_{n}^{(i+1,k+1)}-\tilde{m}_{n}^{(i,k+1)})=\tilde{m}|_{\mathcal{I}}-\tilde{m}_{n}^{(1,k+1)}.\label{eq:TotalMassDifferenceTopInteraction}
\end{equation}
Therefore, in view of (\ref{eq:EnoughMassBehind}), (\ref{eq:LowerBoundNearRInteraction}),
(\ref{eq:TotalMassDifferenceTopInteraction}) and the fact that 
\begin{equation}
(\overline{\mathfrak{D}}_{+}\tilde{m})_{n}^{(1,k+1)}=\tilde{m}_{n}^{(1,k+1)}-\tilde{m}_{n}^{(1,k+2)}=\tilde{m}_{n}^{(1,k+1)},
\end{equation}
the bound (\ref{eq:MassDifferenceIncreaseInUDirection-1}) yields
\begin{align}
\tilde{m}_{n}^{(1,k+1)} & \ge(\tilde{m}_{n-1}^{(0,0)}-\tilde{m}_{n-1}^{(0,1)})\cdot\exp\Big(\frac{2-O(\text{\textgreek{e}})}{r_{n}^{(k,k+1)}}\big(\tilde{m}|_{\mathcal{I}}-\tilde{m}_{n}^{(1,k+1)}\big)\Big)\label{eq:TotalMassIncreaseTopInteraction}\\
 & \ge(\tilde{m}_{n-1}^{(0,0)}-\tilde{m}_{n-1}^{(0,1)})\cdot\exp\Big(\frac{r_{0}}{2r_{n}^{(k,k+1)}}\exp\big(-2(h_{0}(\text{\textgreek{e}}))^{-4}\big)\Big).\nonumber 
\end{align}

Using the bound (\ref{eq:TotalMassDecreaseTopInteraction-1}) and
the fact that 
\[
(\mathfrak{D}_{+}\tilde{m})_{n-1}^{(0,1)}=\tilde{m}_{n-1}^{(1,1)}-\tilde{m}_{n-1}^{(0,1)}=\tilde{m}_{n-1}^{(0,0)}-\tilde{m}_{n-1}^{(0,1)}
\]
and 
\[
(\mathfrak{D}_{-}\tilde{m})_{n-1}^{(0,k)}=\tilde{m}_{n-1}^{(1,k+1)},
\]
we can estimate: 
\begin{equation}
(\tilde{m}_{n-1}^{(0,0)}-\tilde{m}_{n-1}^{(0,1)})\ge e^{-\text{\textgreek{e}}^{3/2}}\tilde{m}_{n-1}^{(1,k+1)}.\label{eq:TotalDecreaseOnceMore}
\end{equation}
From (\ref{eq:TotalMassIncreaseTopInteraction}) and (\ref{eq:TotalDecreaseOnceMore})
we thus obtain (in view also of (\ref{eq:UpperBoundForAxisInteractionProp})
and the properties (\ref{eq:h_1_h_0_definition}) of $h_{0}(\text{\textgreek{e}})$):
\begin{equation}
\tilde{m}_{n}^{(1,k+1)}\ge\tilde{m}_{n-1}^{(1,k+1)}\exp\Big(\frac{r_{0}}{4r_{n}^{(k,k+1)}}\exp\big(-2(h_{0}(\text{\textgreek{e}}))^{-4}\big)\Big).\label{eq:BoundToShowForMass}
\end{equation}
In particular, (\ref{eq:BoundForMassIncrease}) holds for all $2\le n\le n_{f}$.

\paragraph*{\noindent Proof of (\ref{eq:BoundSecondBeamchanged}).\emph{ }}

Combining (\ref{eq:LowerBoundGeneralInteraction}) and (\ref{eq:NotEnoughMassBehind})
(using also (\ref{eq:TrivialBoundMassFirst}) in the case $n=1$,
as well as (\ref{eq:MassInfinity}) and (\ref{eq:BoundMirror}) for
$\tilde{m}|_{\mathcal{I}}$, $r_{0}$), we can readily estimate for
any $1\le n\le n_{f}$ and any $1\le i\le k$: 
\begin{equation}
\frac{2(\overline{\mathfrak{D}}_{+}\tilde{m})_{n}^{(i,k+1)}}{r_{0}}\ge\frac{1}{C_{0}}h_{0}(\text{\textgreek{e}}).\label{eq:BoundForEnergyRemainingForTheOtherBemas}
\end{equation}
Similarly, in view of (\ref{eq:EqualMassDifferenceWhennoIntraction}),
(\ref{eq:EqualMassDifferenceSfterReflection}) and (\ref{eq:TotalMassDecreaseTopInteraction-1}),
we can bound for any $1\le n\le n_{f}$ and any $1\le i\le k$: 
\begin{equation}
\frac{2(\overline{\mathfrak{D}}_{-}\tilde{m})_{n}^{(i,k+1)}}{r_{0}}\ge\frac{1}{C_{0}}h_{0}(\text{\textgreek{e}}).\label{eq:BoundForInitialMassesInteraction}
\end{equation}
Using the relation 
\begin{equation}
(\overline{\mathfrak{D}}_{+}\tilde{m})_{n}^{(i,k+1)}=\tilde{m}_{n}^{(i,k+1)}-\tilde{m}_{n}^{(i,k+2)}
\end{equation}
and the trivial bounds 
\begin{equation}
\tilde{m}_{n}^{(i,k+1)}\le\tilde{m}|_{\mathcal{I}}
\end{equation}
and 
\begin{equation}
\max_{k+2\le j\le k+i+1}\tilde{m}_{n}^{(i,j)}=\tilde{m}_{n}^{(i,k+2)}
\end{equation}
(following from the monotonicity properties (\ref{eq:NonTrappingMassSign})
of $\tilde{m}$), from (\ref{eq:BoundForEnergyRemainingForTheOtherBemas})
we obtain for any $1\le n\le n_{f}$: 
\begin{equation}
\min_{\substack{1\le i\le k,\\
k+2\le j\le k+i+1
}
}\frac{2(\tilde{m}|_{\mathcal{I}}-\tilde{m}_{n}^{(i,j)})}{r_{0}}\ge\frac{1}{C_{0}}h_{0}(\text{\textgreek{e}}).\label{eq:ActualNotEnoughEnergy}
\end{equation}

In view of (\ref{eq:MassInfinity}) and (\ref{eq:BoundMirror}) and
the properties (\ref{eq:h_1_h_0_definition}) of $h_{0}(\text{\textgreek{e}})$,
from (\ref{eq:ActualNotEnoughEnergy}) we infer that, for any $1\le n\le n_{f}$:
\begin{equation}
\max_{\substack{1\le i\le k,\\
k+2\le j\le k+i+1
}
}\frac{2\tilde{m}_{n}^{(i,j)}}{r_{0}}\le1-\frac{1}{3C_{0}}h_{0}(\text{\textgreek{e}}).\label{eq:AwayFromTrappingBehindTheFirstBeam}
\end{equation}
In particular, for any $1\le n\le n_{f}$, (\ref{eq:AwayFromTrappingBehindTheFirstBeam})
implies that: 
\begin{equation}
\sup_{\{u\le v\le V_{n}(v^{(k+1)})\}\cap\{u\ge U_{n}(v^{(k)})\}}\frac{1}{1-\frac{2m}{r}}\le4C_{0}(h_{0}(\text{\textgreek{e}}))^{-1}.\label{eq:UpperBoundForTrappingBetween TwoBeams}
\end{equation}

The main estimate that will be used in the proof of (\ref{eq:BoundSecondBeamchanged})
is the following bound: For any $1\le n_{1}<n_{2}\le n_{f}$ and any
$V_{n_{2}}(v^{(k+2)}+\frac{4}{\sqrt{-\Lambda}}h_{2}(\text{\textgreek{e}}))\le v\le V_{n_{2}}(v^{(k+1)})$:

\begin{align}
\partial_{v}r\big|_{(U_{n_{2}}(v^{(1)}+\frac{4}{\sqrt{-\Lambda}}h_{2}(\text{\textgreek{e}})),v)}\ge & \partial_{v}r\big|_{(U_{n_{1}-1}(v^{(1)}+\frac{4}{\sqrt{-\Lambda}}h_{2}(\text{\textgreek{e}})),v-(n_{2}-n_{1}-1)v_{0})}\times\label{eq:UsefulBoundAlmostThere}\\
 & \hphantom{\partial_{v}r}\times\exp\Big(-C_{0}^{5}(h_{0}(\text{\textgreek{e}}))^{-3}\max_{n_{1}\le n\le n_{2}}\Big\{\frac{r_{n}^{(1,k+1)}}{r_{0}}\Big\}\log\Big(\frac{\tilde{m}{}_{n_{2}}^{(1,k+1)}}{\tilde{m}{}_{n_{1}}^{(1,k+1)}}\Big)-2\text{\textgreek{e}}^{1/2}\Big).\nonumber 
\end{align}

\medskip{}

\noindent \emph{Proof of (\ref{eq:UsefulBoundAlmostThere}).} For
any $1\le n\le n_{f}$, $1\le i\le k$ and $k+1\le j\le k+i$, integrating
(\ref{eq:DerivativeInUDirectionKappa}) from $u=U_{n}(v^{(i)})$ up
to $U_{n}(v^{(i)}+\frac{4}{\sqrt{-\Lambda}}h_{2}(\text{\textgreek{e}}))$
(and using (\ref{eq:DerivativeTildeUMass})), we infer that, for all
$V_{n}(v^{(j+1)}+\frac{4}{\sqrt{-\Lambda}}h_{2}(\text{\textgreek{e}}))\le\bar{v}\le V_{n}(v^{(j)})$:
\begin{align}
\log\Big(\frac{\partial_{v}r}{1-\frac{2m}{r}}\Big)\Big|_{(U_{n}(v^{(i)}),\bar{v})}-\log\Big(\frac{\partial_{v}r}{1-\frac{2m}{r}}\Big) & \Big|_{(U_{n}(v^{(i)}+\frac{4}{\sqrt{-\Lambda}}h_{2}(\text{\textgreek{e}})),\bar{v})}\label{eq:FormulaForComparisonMassDiffAndRChange-1}\\
= & 4\pi\int_{U_{n}(v^{(i)})}^{U_{n}(v^{(i)}+\frac{4}{\sqrt{-\Lambda}}h_{2}(\text{\textgreek{e}}))}\frac{rT_{uu}}{-\partial_{u}r}\Big|_{(u,\bar{v})}\, du\nonumber \\
= & 2\int_{U_{n}(v^{(i)})}^{U_{n}(v^{(i)}+\frac{4}{\sqrt{-\Lambda}}h_{2}(\text{\textgreek{e}}))}\frac{-\partial_{u}\tilde{m}}{r(1-\frac{2m}{r})}\Big|_{(u,\bar{v})}\, du.\nonumber 
\end{align}
In view of the monotonicity properties (\ref{eq:NonTrappingQualitativ}),
(\ref{eq:NonTrappingMassSign}) of $r$ and $\tilde{m}$, we can estimate:
\begin{align}
2\int_{U_{n}(v^{(i)})}^{U_{n}(v^{(i)}+\frac{4}{\sqrt{-\Lambda}}h_{2}(\text{\textgreek{e}}))}\frac{-\partial_{u}\tilde{m}}{r(1-\frac{2m}{r})}\Big|_{(u,\bar{v})}\, du & \le\sup_{U_{n}(v^{(i)})\le\bar{u}\le U_{n}(v^{(i)}+\frac{4}{\sqrt{-\Lambda}}h_{2}(\text{\textgreek{e}}))}\Big(\frac{2}{r(1-\frac{2m}{r})}\Big)\Big|_{(u,\bar{v})}\int_{U_{n}(v^{(i)})}^{U_{n}(v^{(i)}+\frac{4}{\sqrt{-\Lambda}}h_{2}(\text{\textgreek{e}}))}(-\partial_{u}\tilde{m})\Big|_{(u,\bar{v})}\, du\label{eq:OneMoreTrivialBound}\\
 & \le\frac{2}{r(U_{n}(v^{(i)}+\frac{4}{\sqrt{-\Lambda}}h_{2}(\text{\textgreek{e}})),\bar{v})}\sup_{U_{n}(v^{(i)})\le\bar{u}\le U_{n}(v^{(i)}+\frac{4}{\sqrt{-\Lambda}}h_{2}(\text{\textgreek{e}}))}\Bigg(\frac{1}{1-\frac{2m}{r}\big|_{(\bar{u},\bar{v})}}\Bigg)(\mathfrak{D}_{-}\tilde{m})_{n}^{(i,j)}.\nonumber 
\end{align}
From (\ref{eq:FormulaForComparisonMassDiffAndRChange-1}), (\ref{eq:OneMoreTrivialBound})
and the bound (\ref{eq:UpperBoundForTrappingBetween TwoBeams}) for
$1-\frac{2m}{r}$, we infer that
\begin{align}
\log\Big(\frac{\partial_{v}r}{1-\frac{2m}{r}}\Big)\Big|_{(U_{n}(v^{(i)}),\bar{v})}- & \log\Big(\frac{\partial_{v}r}{1-\frac{2m}{r}}\Big)\Big|_{(U_{n}(v^{(i)}+\frac{4}{\sqrt{-\Lambda}}h_{2}(\text{\textgreek{e}})),\bar{v})}\label{eq:FormulaForComparisonMassDiffAndRChange-1-1}\\
\le & 8C_{0}(h_{0}(\text{\textgreek{e}}))^{-1}\frac{\bar{r}_{n}^{(i,j)}}{r(U_{n}(v^{(i)}+\frac{4}{\sqrt{-\Lambda}}h_{2}(\text{\textgreek{e}})),\bar{v})}\cdot\frac{(\mathfrak{D}_{-}\tilde{m})_{n}^{(i,j)}}{\bar{r}_{n}^{(i,j)}}.\nonumber 
\end{align}
Notice that, in view of the bound (\ref{eq:UsefulEstimate}), we can
estimate: 
\begin{equation}
\frac{(\mathfrak{D}_{-}\tilde{m})_{n}^{(i,j)}}{\bar{r}_{n}^{(i,j)}}\le5C_{0}\log\Big(\frac{(\overline{\mathfrak{D}}_{+}\tilde{m})_{n}^{(i,j)}}{(\overline{\mathfrak{D}}_{-}\tilde{m})_{n}^{(i,j)}}\Big).\label{eq:ComeOn!!}
\end{equation}
Thus, from (\ref{eq:FormulaForComparisonMassDiffAndRChange-1-1})
and (\ref{eq:ComeOn!!}), we deduce that, for any $1\le n\le n_{f}$,
$1\le i\le k$ and $k+1\le j\le k+i$ and any $V_{n}(v^{(j+1)}+\frac{4}{\sqrt{-\Lambda}}h_{2}(\text{\textgreek{e}}))\le\bar{v}\le V_{n}(v^{(j)})$: 

\begin{align}
\log\Big(\frac{\partial_{v}r}{1-\frac{2m}{r}}\Big)\Big|_{(U_{n}(v^{(i)}),\bar{v})}- & \log\Big(\frac{\partial_{v}r}{1-\frac{2m}{r}}\Big)\Big|_{(U_{n}(v^{(i)}+\frac{4}{\sqrt{-\Lambda}}h_{2}(\text{\textgreek{e}})),\bar{v})}\le\label{eq:FormulaForComparisonMassDiffAndRChange}\\
\le & 40C_{0}^{2}(h_{0}(\text{\textgreek{e}}))^{-1}\frac{\bar{r}_{n}^{(i,j)}}{r(U_{n}(v^{(i)}+\frac{4}{\sqrt{-\Lambda}}h_{2}(\text{\textgreek{e}})),\bar{v})}\log\Big(\frac{(\overline{\mathfrak{D}}_{+}\tilde{m})_{n}^{(i,j)}}{(\overline{\mathfrak{D}}_{-}\tilde{m})_{n}^{(i,j)}}\Big).\nonumber 
\end{align}

Applying the relation (\ref{eq:FormulaForComparisonMassDiffAndRChange})
successively for $i=1,\ldots,k$ and $\bar{v}=v$, using also the
fact that $\partial_{u}\Big(\frac{\partial_{v}r}{1-\frac{2m}{r}}\Big)=0$
on each $\mathcal{R}_{\text{\textgreek{e}}n}^{(i,j)}$, we obtain:
\begin{align}
\frac{\partial_{v}r}{1-\frac{2m}{r}}\Big|_{(U_{n}(v^{(1)}+\frac{4}{\sqrt{-\Lambda}}h_{2}(\text{\textgreek{e}})),\bar{v})}\ge & \frac{\partial_{v}r}{1-\frac{2m}{r}}\Big|_{(U_{n-1}(v^{(0)}+\frac{4}{\sqrt{-\Lambda}}h_{2}(\text{\textgreek{e}})),\bar{v})}\times\label{eq:SecondRelationForRDifference}\\
 & \hphantom{\int_{B}^{B}}\times\exp\Bigg(-40C_{0}^{2}(h_{0}(\text{\textgreek{e}}))^{-1}\sum_{i=1}^{k}\frac{\bar{r}_{n}^{(i,k+1)}}{r(U_{n}(v^{(i)}+\frac{4}{\sqrt{-\Lambda}}h_{2}(\text{\textgreek{e}})),\bar{v})}\log\Big(\frac{(\overline{\mathfrak{D}}_{+}\tilde{m})_{n}^{(i,k+1)}}{(\overline{\mathfrak{D}}_{-}\tilde{m})_{n}^{(i,k+1)}}\Big)\Bigg).\nonumber 
\end{align}

For any $i=1,\ldots,k$, integrating (\ref{eq:DerivativeInUDirectionKappa})
in $u$ from $u=U_{n}(v^{(i)})$ up to $U_{n}(v^{(1)}+\frac{4}{\sqrt{-\Lambda}}h_{2}(\text{\textgreek{e}}))$
for $v=v_{*}\in[\bar{v},V_{n}(v^{(j)})]$ and using (\ref{eq:DerivativeTildeUMass})
and the fact that $\partial_{u}\tilde{m}=0$ on $\mathcal{R}_{\text{\textgreek{e}}n}^{(i,j)}$,
we infer: 
\begin{equation}
\frac{\partial_{v}r}{1-\frac{2m}{r}}\Big|_{(U_{n}(v^{(1)}+\frac{4}{\sqrt{-\Lambda}}h_{2}(\text{\textgreek{e}})),v_{*})}=\frac{\partial_{v}r}{1-\frac{2m}{r}}\Big|_{(U_{n}(v^{(i)}+\frac{4}{\sqrt{-\Lambda}}h_{2}(\text{\textgreek{e}})),v_{*})}\exp\Bigg(-2\int_{U_{n}(v^{(i)})}^{U_{n}(v^{(1)}+\frac{4}{\sqrt{-\Lambda}}h_{2}(\text{\textgreek{e}}))}\frac{-\partial_{u}\tilde{m}}{r-2\tilde{m}-\frac{1}{3}\Lambda r^{3}}\Big|_{(u,v_{*})}\, du\Bigg).\label{eq:BoundForDvRFirst}
\end{equation}
Using the fact that $r\ge r_{0}$, from (\ref{eq:BoundForDvRFirst})
we infer (in view of the monotonicity property (\ref{eq:NonTrappingMassSign})
for $\tilde{m}$) that 
\begin{align}
\frac{\partial_{v}r}{1-\frac{2m}{r}}\Big|_{(U_{n}(v^{(1)}+\frac{4}{\sqrt{-\Lambda}}h_{2}(\text{\textgreek{e}})),v_{*})}\ge & \frac{\partial_{v}r}{1-\frac{2m}{r}}\Big|_{(U_{n}(v^{(i)}+\frac{4}{\sqrt{-\Lambda}}h_{2}(\text{\textgreek{e}})),v_{*})}\exp\Bigg(-2\int_{U_{n}(v^{(i)})}^{U_{n}(v^{(1)}+\frac{4}{\sqrt{-\Lambda}}h_{2}(\text{\textgreek{e}}))}\frac{-\partial_{u}\tilde{m}|_{(u,v_{*})}}{r_{0}-2\tilde{m}|_{(u,v_{*})}-\frac{1}{3}\Lambda r_{0}^{3}}\, du\Bigg)\label{eq:BoundForDvRSecond}\\
 & =\frac{\partial_{v}r}{1-\frac{2m}{r}}\Big|_{(U_{n}(v^{(i)}+\frac{4}{\sqrt{-\Lambda}}h_{2}(\text{\textgreek{e}})),v_{*})}\exp\Bigg(-\int_{U_{n}(v^{(i)})}^{U_{n}(v^{(1)}+\frac{4}{\sqrt{-\Lambda}}h_{2}(\text{\textgreek{e}}))}\partial_{u}\Big(\log\big(1-\frac{2\tilde{m}|_{(u,v_{*})}}{r_{0}}-\frac{1}{3}\Lambda r_{0}^{2}\big)\Big)\, du\Bigg)\nonumber \\
 & =\frac{\partial_{v}r}{1-\frac{2m}{r}}\Big|_{(U_{n}(v^{(i)}+\frac{4}{\sqrt{-\Lambda}}h_{2}(\text{\textgreek{e}})),v_{*})}\cdot\Bigg(\frac{1-\frac{2\tilde{m}|_{(U_{n}(v^{(i)}),v_{*})}}{r_{0}}-\frac{1}{3}\Lambda r_{0}^{2}}{1-\frac{2\tilde{m}|_{(U_{n}(v^{(1)}+\frac{4}{\sqrt{-\Lambda}}h_{2}(\text{\textgreek{e}})),v_{*})}}{r_{0}}-\frac{1}{3}\Lambda r_{0}^{2}}\Bigg).\nonumber 
\end{align}
In view of the bounds (\ref{eq:UpperBoundForAxisInteractionProp})
and (\ref{eq:AwayFromTrappingBehindTheFirstBeam}), (\ref{eq:BoundForDvRSecond})
yields: 
\begin{equation}
\frac{\partial_{v}r}{1-\frac{2m}{r}}\Big|_{(U_{n}(v^{(1)}+\frac{4}{\sqrt{-\Lambda}}h_{2}(\text{\textgreek{e}})),v_{*})}\ge\frac{\partial_{v}r}{1-\frac{2m}{r}}\Big|_{(U_{n}(v^{(i)}+\frac{4}{\sqrt{-\Lambda}}h_{2}(\text{\textgreek{e}})),v_{*})}\frac{h_{0}(\text{\textgreek{e}})}{4C_{0}}.\label{eq:AtLast!!}
\end{equation}
Integrating (\ref{eq:AtLast!!}) in $v_{*}\in[\bar{v},V_{n}(v^{(j)})]$
and using (\ref{eq:UpperBoundForTrappingBetween TwoBeams}) (and (\ref{eq:UpperBoundForAxisInteractionProp}))
for the $\frac{1}{1-\frac{2m}{r}}$ factors, we thus obtain: 
\begin{equation}
\bar{r}_{n}^{(1,k+1)}-r_{0}\ge\Big(\bar{r}_{n}^{(i,k+1)}-r(U_{n}(v^{(i)}+\frac{4}{\sqrt{-\Lambda}}h_{2}(\text{\textgreek{e}})),\bar{v})\Big)\frac{(h_{0}(\text{\textgreek{e}}))^{2}}{16C_{0}^{2}}
\end{equation}
and, thus (in view of (\ref{eq:RoughBoundGeometry}), (\ref{eq:h_2definition})
and the fact that $r_{0}\le\min\{r_{n}^{(1,k+1)},r(U_{n}(v^{(i)}+\frac{4}{\sqrt{-\Lambda}}h_{2}(\text{\textgreek{e}})),\bar{v})\}$):
\begin{equation}
\frac{\bar{r}_{n}^{(i,k+1)}}{r(U_{n}(v^{(i)}+\frac{4}{\sqrt{-\Lambda}}h_{2}(\text{\textgreek{e}})),\bar{v})}\le16C_{0}^{2}(h_{0}(\text{\textgreek{e}}))^{-2}\frac{r_{n}^{(1,k+1)}}{r_{0}}.\label{eq:ForQuotientOfR}
\end{equation}

From (\ref{eq:ForQuotientOfR}) and (\ref{eq:EqualMassDifferenceWhennoIntraction}),
it follows that: 
\begin{align}
\sum_{i=1}^{k}\frac{\bar{r}_{n}^{(i,k+1)}}{r(U_{n}(v^{(i)}+\frac{4}{\sqrt{-\Lambda}}h_{2}(\text{\textgreek{e}})),v)} & \log\Big(\frac{(\overline{\mathfrak{D}}_{+}\tilde{m})_{n}^{(i,k+1)}}{(\overline{\mathfrak{D}}_{-}\tilde{m})_{n}^{(i,k+1)}}\Big)\label{eq:EstimateForTheExponentialInChange}\\
\le & 16C_{0}^{2}(h_{0}(\text{\textgreek{e}}))^{-2}\frac{r_{n}^{(1,k+1)}}{r_{0}}\sum_{i=1}^{k}\log\Big(\frac{(\overline{\mathfrak{D}}_{+}\tilde{m})_{n}^{(i,k+1)}}{(\overline{\mathfrak{D}}_{-}\tilde{m})_{n}^{(i,k+1)}}\Big)\nonumber \\
= & 16C_{0}^{2}(h_{0}(\text{\textgreek{e}}))^{-2}\frac{r_{n}^{(1,k+1)}}{r_{0}}\Big\{\sum_{i=1}^{k-1}\log\Big(\frac{(\overline{\mathfrak{D}}_{+}\tilde{m})_{n}^{(i,k+1)}}{(\overline{\mathfrak{D}}_{+}\tilde{m})_{n}^{(i+1,k+1)}}\Big)+\log\Big(\frac{(\overline{\mathfrak{D}}_{+}\tilde{m})_{n}^{(k,k+1)}}{\tilde{m}|_{\mathcal{I}}-\tilde{m}_{n-1}^{(0,1)}}\Big)\Big\}\nonumber \\
= & 16C_{0}^{2}(h_{0}(\text{\textgreek{e}}))^{-2}\frac{r_{n}^{(1,k+1)}}{r_{0}}\log\Big(\frac{(\overline{\mathfrak{D}}_{+}\tilde{m})_{n}^{(1,k+1)}}{\tilde{m}|_{\mathcal{I}}-\tilde{m}_{n-1}^{(0,1)}}\Big)\nonumber \\
= & 16C_{0}^{2}(h_{0}(\text{\textgreek{e}}))^{-2}\frac{r_{n}^{(1,k+1)}}{r_{0}}\log\Big(\frac{(\overline{\mathfrak{D}}_{+}\tilde{m})_{n}^{(1,k+1)}}{(\mathfrak{D}_{+}\tilde{m})_{n-1}^{(0,1)}}\Big)\nonumber \\
\le & 16C_{0}^{2}(h_{0}(\text{\textgreek{e}}))^{-2}\frac{r_{n}^{(1,k+1)}}{r_{0}}\Big\{\log\Big(\frac{(\overline{\mathfrak{D}}_{+}\tilde{m})_{n}^{(1,k+1)}}{(\mathfrak{D}_{-}\tilde{m})_{n-1}^{(0,k)}}\Big)+\text{\textgreek{e}}^{3/2}\Big\}\nonumber \\
= & 16C_{0}^{2}(h_{0}(\text{\textgreek{e}}))^{-2}\frac{r_{n}^{(1,k+1)}}{r_{0}}\Big\{\log\Big(\frac{\tilde{m}{}_{n}^{(1,k+1)}}{\tilde{m}{}_{n-1}^{(1,k+1)}}\Big)+\text{\textgreek{e}}^{3/2}\Big\}\nonumber 
\end{align}
(where the inequality at the sixth line of (\ref{eq:EstimateForTheExponentialInChange})
follows from (\ref{eq:TotalMassDecreaseTopInteraction-1})). Therefore,
(\ref{eq:SecondRelationForRDifference}) and (\ref{eq:EstimateForTheExponentialInChange})
yield for any $1\le n\le n_{f}$ and any $V_{n}(v^{(k+2)}+\frac{4}{\sqrt{-\Lambda}}h_{2}(\text{\textgreek{e}}))\le v\le V_{n}(v^{(k+1)})$:
\begin{equation}
\frac{\partial_{v}r}{1-\frac{2m}{r}}\Big|_{(U_{n}(v^{(1)}+\frac{4}{\sqrt{-\Lambda}}h_{2}(\text{\textgreek{e}})),v)}\ge\frac{\partial_{v}r}{1-\frac{2m}{r}}\Big|_{(U_{n-1}(v^{(0)}+\frac{4}{\sqrt{-\Lambda}}h_{2}(\text{\textgreek{e}})),v)}\exp\Big(-C_{0}^{5}(h_{0}(\text{\textgreek{e}}))^{-3}\frac{r_{n}^{(1,k+1)}}{r_{0}}\Big\{\log\Big(\frac{\tilde{m}{}_{n}^{(1,k+1)}}{\tilde{m}{}_{n-1}^{(1,k+1)}}\Big)+\text{\textgreek{e}}^{3/2}\Big\}\Big).\label{eq:LowerBoundDvRBetweenBeams}
\end{equation}

In view of (\ref{eq:DerivativeInVDirectionKappaBar}), we can bound
for any $U_{n-1}(v^{(1)}+\frac{4}{\sqrt{-\Lambda}}h_{2}(\text{\textgreek{e}}))\le u\le U_{n-1}(v^{(0)})$:
\begin{equation}
\frac{-\partial_{u}r}{1-\frac{2m}{r}}\Big|_{(u,V_{n-1}(v^{(1)}))}\ge\frac{-\partial_{u}r}{1-\frac{2m}{r}}\Big|_{(u,V_{n-1}(v^{(k+1)})}.\label{eq:LowerBoundDuRBetweenBeams}
\end{equation}
Hence, using the fact that 

\begin{itemize}

\item{ From (\ref{eq:ZeroMassNearAxis}), (\ref{eq:UpperBoundForAxisInteractionProp}):
\begin{equation}
\frac{\partial_{v}r}{1-\frac{2m}{r}}\Big|_{(U_{n}(v^{(1)}+\frac{4}{\sqrt{-\Lambda}}h_{2}(\text{\textgreek{e}})),v)}=\frac{\partial_{v}r}{1-\frac{1}{3}\Lambda r^{2}}\Big|_{(U_{n}(v^{(1)}+\frac{4}{\sqrt{-\Lambda}}h_{2}(\text{\textgreek{e}})),v)}=(1+O(\text{\textgreek{e}}))\partial_{v}r|_{(U_{n}(v^{(1)}+\frac{4}{\sqrt{-\Lambda}}h_{2}(\text{\textgreek{e}})),v)},
\end{equation}
}

\item{ From (\ref{eq:EquationRForProof}), (\ref{eq:MassInfinity}),
(\ref{eq:BoundForRAwayInteractionProp}) and (\ref{eq:RoughBoundGeometry})
\begin{equation}
\frac{\partial_{v}r}{1-\frac{2m}{r}}\Big|_{(U_{n-1}(v^{(0)}+\frac{4}{\sqrt{-\Lambda}}h_{2}(\text{\textgreek{e}})),v)}=\frac{\partial_{v}r}{1-\frac{2m}{r}}\Big|_{(U_{n-1}(v^{(0)}),v)}(1+O(\text{\textgreek{e}})),
\end{equation}
}

\item{ From (\ref{eq:DerivativeInVDirectionKappaBar}), (\ref{eq:DerivativeInUDirectionKappa})
and the gauge condition (\ref{eq:GaugeInfinityMaximal}):
\begin{equation}
\frac{\partial_{v}r}{1-\frac{2m}{r}}\Big|_{\mathcal{R}_{n-1}^{(1,1)}\cap\{v=\bar{v}\}}=\frac{-\partial_{u}r}{1-\frac{2m}{r}}\Big|_{\mathcal{R}_{n-1}^{(1,1)}\cap\{u=\bar{v}-v_{0}\}},
\end{equation}
}

\end{itemize}

\noindent the bounds (\ref{eq:LowerBoundDvRBetweenBeams}) and (\ref{eq:LowerBoundDuRBetweenBeams})
yield for any $V_{n}(v^{(k+2)}+\frac{4}{\sqrt{-\Lambda}}h_{2}(\text{\textgreek{e}}))\le v\le V_{n}(v^{(k+1)})$:
\begin{equation}
\partial_{v}r\big|_{(U_{n}(v^{(1)}+\frac{4}{\sqrt{-\Lambda}}h_{2}(\text{\textgreek{e}})),v)}\ge\frac{-\partial_{u}r}{1-\frac{2m}{r}}\Big|_{(v-v_{0},V_{n-1}(v^{(k+1)})}\exp\Big(-C_{0}^{5}(h_{0}(\text{\textgreek{e}}))^{-3}\frac{r_{n}^{(1,k+1)}}{r_{0}}\log\Big(\frac{\tilde{m}{}_{n}^{(1,k+1)}}{\tilde{m}{}_{n-1}^{(1,k+1)}}\Big)-\text{\textgreek{e}}^{1/2}\Big).\label{eq:AlmostUsefulBoundForBeamSeperation}
\end{equation}

In view of (\ref{eq:ZeroMassNearAxis}), (\ref{eq:UpperBoundForAxisInteractionProp})
and the fact that 
\begin{equation}
\frac{-\partial_{u}r}{1-\frac{2m}{r}}\Big|_{\mathcal{R}_{n-1}^{(1,k+2)}\cap\{u=\bar{u}\}}=\frac{\partial_{v}r}{1-\frac{2m}{r}}\Big|_{\mathcal{R}_{n-1}^{(1,k+2)}\cap\{v=\bar{u}\}}
\end{equation}
(following from the gauge condition (\ref{eq:GaugeMirrorMaximal})),
from (\ref{eq:AlmostUsefulBoundForBeamSeperation}) we infer for any
$V_{n}(v^{(k+2)}+\frac{4}{\sqrt{-\Lambda}}h_{2}(\text{\textgreek{e}}))\le v\le V_{n}(v^{(k+1)})$:
\begin{equation}
\partial_{v}r\big|_{(U_{n}(v^{(1)}+\frac{4}{\sqrt{-\Lambda}}h_{2}(\text{\textgreek{e}})),v)}\ge\partial_{v}r\big|_{(U_{n-1}(v^{(1)}+\frac{4}{\sqrt{-\Lambda}}h_{2}(\text{\textgreek{e}})),v-v_{0})}\exp\Big(-C_{0}^{5}(h_{0}(\text{\textgreek{e}}))^{-3}\frac{r_{n}^{(1,k+1)}}{r_{0}}\log\Big(\frac{\tilde{m}{}_{n}^{(1,k+1)}}{\tilde{m}{}_{n-1}^{(1,k+1)}}\Big)-2\text{\textgreek{e}}^{1/2}\Big).\label{eq:UsefulInductiveBoundForBeamSeperation}
\end{equation}
Iterating (\ref{eq:UsefulInductiveBoundForBeamSeperation}) for $n_{1}<n\le n_{2}$,
we thus obtain (\ref{eq:UsefulBoundAlmostThere}).

\medskip{}

Let $2\le n_{1}\le n_{f}$ be such so that 
\begin{equation}
r_{n_{1}}^{(1,k+1)}\le2r_{0}
\end{equation}
and 
\begin{equation}
r_{n_{1}-1}^{(1,k+1)}>2r_{0}.\label{eq:PreviousN1}
\end{equation}
Note that if no such $n_{1}$ exists, then (\ref{eq:BoundSecondBeamchanged})
is automatically true ((\ref{eq:PreviousN1}) holds for $r_{1}^{(1,k+1)}$
as a corollary of the Cauchy stability estimates of Proposition \ref{prop:CauchyStabilityOfAdS}
and the choice of the initial data). 

Let us also define 
\begin{equation}
n_{2}=\max\big\{ n_{1}\le n\le n_{f}:\, r_{l}^{(1,k+1)}\le2r_{0}\mbox{ for all }n_{1}\le l\le n\big\}.
\end{equation}
In order to establish (\ref{eq:BoundSecondBeamchanged}), it suffices
to establish that, for all $n_{1}\le n\le n_{2}$: 
\begin{equation}
\frac{r_{n}^{(1,k+1)}}{r_{0}}-1\ge\exp\Big(-C_{0}^{7}(h_{0}(\text{\textgreek{e}}))^{-3}\log\big((h_{0}(\text{\textgreek{e}}))^{-1}\big)\Big).\label{eq:BoundToShowSecondBeam}
\end{equation}

In view of the fact that 
\begin{equation}
\sup_{1\le l_{1}<l_{2}\le n_{*}}\frac{\tilde{m}_{l_{2}}^{(1,k+1)}}{\tilde{m}_{l_{1}}^{(1,k+1)}}\le C_{0}(h_{0}(\text{\textgreek{e}}))^{-1}\label{eq:ControlInMassration}
\end{equation}
(following from (\ref{eq:TheIngoingVlasovInitially}), (\ref{eq:UpperBoundMassFromR})
and the fact that the sequence $\tilde{m}_{n}^{(1,k+1)}$ is increasing
in $n$ as a consequence of (\ref{eq:BoundToShowForMass})), from
(\ref{eq:UsefulBoundAlmostThere}) we infer that, for any $n_{1}\le n\le n_{2}$
and any $V_{n_{2}}(v^{(k+2)}+\frac{4}{\sqrt{-\Lambda}}h_{2}(\text{\textgreek{e}}))\le v\le V_{n_{2}}(v^{(k+1)})$:
\begin{equation}
\partial_{v}r\big|_{(U_{n}(v^{(1)}+\frac{4}{\sqrt{-\Lambda}}h_{2}(\text{\textgreek{e}})),v)}\ge\partial_{v}r\big|_{(U_{n_{1}-1}(v^{(1)}+\frac{4}{\sqrt{-\Lambda}}h_{2}(\text{\textgreek{e}})),v-(n-n_{1})v_{0})}\exp\Big(-C_{0}^{6}(h_{0}(\text{\textgreek{e}}))^{-3}\log\big((h_{0}(\text{\textgreek{e}}))^{-1}\big)\Big).\label{eq:UsefulBoundAlmostThere-1}
\end{equation}
Thus, integrating (\ref{eq:UsefulBoundAlmostThere-1}) from $v=V_{n}(v^{(k+2)}+\frac{4}{\sqrt{-\Lambda}}h_{2}(\text{\textgreek{e}}))$
up to $v=V_{n}(v^{(k+1)})$ and using (\ref{eq:PreviousN1}), we immediately
infer (\ref{eq:BoundToShowSecondBeam}).

\paragraph*{\noindent Proof of (\ref{eq:BoundForMaxBeamSeparation}).\emph{ }}

\noindent In view of (\ref{eq:h_2definition}), (\ref{eq:KappaChangeAdjacentRegions}),
(\ref{eq:KappaBarChangeAdjacentDomains}), as well as the boundary
condition (\ref{eq:GaugeInfinityMaximal}) and the bounds (\ref{eq:BoundForRAwayInteractionProp})
and (\ref{eq:UpperBoundForAxisInteractionProp}), the following one-sided
bound holds for all $2\le n\le n_{f}$: 
\begin{align}
r_{n}^{(k,k+1)}-r_{0} & =\int_{V_{n}(v^{(2k)})}^{V_{n}(v^{(k+1)}+\frac{4}{\sqrt{-\Lambda}}h_{2}(\text{\textgreek{e}}))}\partial_{v}r\Big|_{(U_{n}(v^{(k)}),v)}\, dv\label{eq:IntegralForChangeOfTotalBeamWidth}\\
 & \le\int_{V_{n}(v^{(2k)})}^{V_{n}(v^{(k+1)}+\frac{4}{\sqrt{-\Lambda}}h_{2}(\text{\textgreek{e}}))}\frac{\partial_{v}r}{1-\frac{2m}{r}}\Big|_{(U_{n}(v^{(k)}),v)}\Big(1+O(\text{\textgreek{e}})\Big)\, dv\nonumber \\
 & =\int_{U_{n-1}(v^{(k)})}^{U_{n-1}(v^{(0)}+\frac{4}{\sqrt{-\Lambda}}h_{2}(\text{\textgreek{e}}))}\frac{-\partial_{u}r}{1-\frac{2m}{r}}\Big|_{(u,V_{n-1}(v^{(k+1)}+\frac{4}{\sqrt{-\Lambda}}h_{2}(\text{\textgreek{e}})))}\Big(1+O(\text{\textgreek{e}})\Big)\, du.\nonumber 
\end{align}

We can readily compute (using also (\ref{eq:h_2definition}), (\ref{eq:UpperBoundForAxisInteractionProp})
and (\ref{eq:ImprovedRoughBoundBootstrap})): 
\begin{align}
\int_{U_{n-1}(v^{(1)}+\frac{4}{\sqrt{-\Lambda}}h_{2}(\text{\textgreek{e}}))}^{U_{n-1}(v^{(0)})}\frac{-\partial_{u}r}{1-\frac{2m}{r}} & \Big|_{(u,V_{n-1}(v^{(k+1)}+\frac{4}{\sqrt{-\Lambda}}h_{2}(\text{\textgreek{e}})))}\, du\label{eq:TermThatWillGiveTheLogarithm}\\
= & \int_{U_{n-1}(v^{(1)}+\frac{4}{\sqrt{-\Lambda}}h_{2}(\text{\textgreek{e}}))}^{U_{n-1}(v^{(0)})}\Big(1-\frac{2\tilde{m}_{n-1}^{(1,k+1)}}{r|_{(u,V_{n-1}(v^{(k+1)}+\frac{4}{\sqrt{-\Lambda}}h_{2}(\text{\textgreek{e}})))}}-\frac{1}{3}\Lambda r^{2}|_{(u,V_{n-1}(v^{(k+1)}+\frac{4}{\sqrt{-\Lambda}}h_{2}(\text{\textgreek{e}})))}\Big)^{-1}\times\nonumber \\
 & \hphantom{\int_{U_{n-1}(v^{(1)}+\frac{4}{\sqrt{-\Lambda}}h_{2}(\text{\textgreek{e}}))}^{U_{n-1}(v^{(0)})}\Big(1-\frac{2\tilde{m}_{n-1}^{(1,k+1)}}{r|_{(u,V_{n-1}(v^{(k+1)}+\frac{4}{\sqrt{-\Lambda}}h_{2}(\text{\textgreek{e}})))}}-\frac{1}{3}\Lambda}\times(-\partial_{u}r)|_{(u,V_{n-1}(v^{(k+1)}+\frac{4}{\sqrt{-\Lambda}}h_{2}(\text{\textgreek{e}})))}\, du\nonumber \\
= & \int_{r_{0}+O((h_{2}(\text{\textgreek{e}}))^{1/2})}^{r_{n-1}^{(1,k+1)}}\Big(1-\frac{2\tilde{m}_{n-1}^{(1,k+1)}}{r}-\frac{1}{3}\Lambda r^{2}\Big)^{-1}\, dr\nonumber \\
\le & r_{n-1}^{(1,k+1)}-r_{0}+C_{0}\tilde{m}_{n-1}^{(1,k+1)}\Big|\log\big(1-\frac{2\tilde{m}_{n-1}^{(1,k+1)}}{r_{0}}\big)\Big|.\nonumber 
\end{align}
From (\ref{eq:BoundMirror}) and (\ref{eq:BoundSecondBeamchanged})
we can similarly estimate: 
\begin{align}
\int_{U_{n-1}(v^{(k)})}^{U_{n-1}(v^{(1)})}\frac{-\partial_{u}r}{1-\frac{2m}{r}} & \Big|_{(u,V_{n-1}(v^{(k+1)}+\frac{4}{\sqrt{-\Lambda}}h_{2}(\text{\textgreek{e}})))}\, du\le\label{eq:TermThatWillGiveTheLogarithm-1}\\
\le & \int_{r_{n-1}^{(1,k+1)}+O((h_{2}(\text{\textgreek{e}}))^{1/2})}^{r_{n-1}^{(k,k+1)}}\Big(1-\frac{2\tilde{m}|_{\mathcal{I}}}{r}+O(\text{\textgreek{e}})\Big)^{-1}\, dr\le\nonumber \\
\le & r_{n-1}^{(k,k+1)}+C_{0}\tilde{m}|_{\mathcal{I}}\Big|\log\big(\exp\big(h_{0}(\text{\textgreek{e}})\big)^{-4}\big)+1\Big|\le\nonumber \\
\le & r_{n-1}^{(k,k+1)}-r_{n-1}^{(1,k+1)}+C_{0}\tilde{m}|_{\mathcal{I}}\big(h_{0}(\text{\textgreek{e}})\big)^{-4}.\nonumber 
\end{align}
From (\ref{eq:h_2definition}), (\ref{eq:ImprovedRoughBoundBootstrap}),
(\ref{eq:IntegralForChangeOfTotalBeamWidth}), (\ref{eq:TermThatWillGiveTheLogarithm})
and (\ref{eq:TermThatWillGiveTheLogarithm-1}) one readily obtains
the bound (\ref{eq:BoundForMaxBeamSeparation}).\qed

\subsection{\label{sub:NearlyTrapped}Formation of a nearly-trapped sphere}

In this section, we will establish (\ref{eq:NearTrappingIsAchieved}),
using the bounds (\ref{eq:RoughBoundGeometry})--(\ref{eq:BoundForMaxBeamSeparation})
of Proposition \ref{prop:TheMainBootstrapBeforeTrapping}.

Let us set 
\begin{equation}
n_{max}\doteq\max\big\{ n_{*}\in\mathbb{N}:\,\mathcal{R}_{n}^{(1,k+1)}\subset\mathcal{U}_{\text{\textgreek{e}}}^{+}\mbox{ for all }n\le n_{*}\big\}.\label{eq:DefinitionNmax}
\end{equation}
Note that, in view of (\ref{eq:DefinitionUntrappedRegion}) and (\ref{eq:defNf}),
$n_{max}$ satisfies 
\begin{equation}
n_{f}\le n_{max}\le n_{f}+1.\label{eq:doubleBoundNmax}
\end{equation}
Thus, (\ref{eq:UpperUNonTrapping}) implies that 
\begin{equation}
n_{max}\le(h_{1}(\text{\textgreek{e}}))^{-2}.\label{eq:InitialUpperBoundForNmax}
\end{equation}

Notice that (\ref{eq:doubleBoundNmax}) and the definition (\ref{eq:DefinitionUntrappedRegion})
imply that, if 
\begin{equation}
n_{max}<\frac{1}{2}(h_{1}(\text{\textgreek{e}}))^{-2},\label{eq:NonTrivalBoundNmax}
\end{equation}
then, necessarily, (\ref{eq:NearTrappingIsAchieved}) holds. Thus,
in order to establish (\ref{eq:NearTrappingIsAchieved}), it suffices
to show (\ref{eq:NonTrivalBoundNmax}). 

We will show (\ref{eq:NonTrivalBoundNmax}) by applying Lemma \ref{lem:ForMassIncrease}
(see Section \ref{sub:Auxiliary-lemmas}). In particular, setting
for any $1\le n\le n_{max}+1$ 
\begin{align}
\text{\textgreek{m}}_{n} & \doteq\frac{2\tilde{m}_{n-1}^{(1,k+1)}}{r_{0}},\label{eq:mu_n}\\
\text{\textgreek{r}}_{n} & \doteq\frac{r_{n-1}^{(k,k+1)}}{r_{0}},\label{eq:rho_n}
\end{align}
the inductive bounds (\ref{eq:BoundForMassIncrease}) and (\ref{eq:BoundForMaxBeamSeparation})
imply that $\text{\textgreek{m}}_{n},\text{\textgreek{r}}_{n}$ satisfy
\begin{align}
\text{\textgreek{r}}_{n+1} & \le\text{\textgreek{r}}_{n}+C_{1}\log\big((1-\text{\textgreek{m}}_{n})^{-1}+1\big),\label{eq:InductiveRelationSequence-1}\\
\text{\textgreek{m}}_{n+1} & \ge\text{\textgreek{m}}_{n}\exp\big(\frac{c_{1}}{\text{\textgreek{r}}_{n+1}}\big),\nonumber 
\end{align}
 for any $1\le n\le n_{max}+1$, with 
\begin{equation}
C_{1}=(h_{0}(\text{\textgreek{e}}))^{-4}
\end{equation}
and 
\begin{equation}
c_{1}=\frac{1}{16}\exp\big(-2(h_{0}(\text{\textgreek{e}}))^{-4}\big).
\end{equation}
Furthermore, setting 
\begin{equation}
\text{\textgreek{d}}=h_{3}(\text{\textgreek{e}})
\end{equation}
(where $h_{3}(\cdot)$ is defined by (\ref{eq:h_3definition})), the
definition (\ref{eq:DefinitionNmax}) immediately implies that 
\begin{equation}
\max_{0\le n\le n_{max}}\text{\textgreek{m}}_{n}<1-\text{\textgreek{d}}.\label{eq:NerTrappedMu-1}
\end{equation}

Note that, in view of Definition \ref{def:ThefamilyOfInitialData}
and Proposition \ref{prop:CauchyStabilityOfAdS}, we have 
\begin{equation}
\text{\textgreek{m}}_{0}\sim h_{0}(\text{\textgreek{e}})
\end{equation}
and 
\begin{equation}
\text{\textgreek{r}}_{0}\sim(h_{1}(\text{\textgreek{e}}))^{-1}.
\end{equation}
Hence, as a consequence of (\ref{eq:h_1_h_0_definition}) and (\ref{eq:h_3definition}),
\begin{equation}
\Big(\frac{C_{1}}{c_{1}}\Big)^{8}\ll\frac{\text{\textgreek{r}}_{0}}{\text{\textgreek{m}}_{0}}\label{eq:LargenessRho0-1}
\end{equation}
and 
\begin{equation}
\text{\textgreek{d}}<\big(\frac{\text{\textgreek{m}}_{0}}{\text{\textgreek{r}}_{0}}\big)^{(C_{1}/c_{1})^{4}}.\label{eq:SmallnessDelta-1}
\end{equation}

The relations (\ref{eq:InductiveRelationSequence-1})--(\ref{eq:SmallnessDelta-1})
allow us to apply Lemma \ref{lem:ForMassIncrease} (see Section \ref{sub:Auxiliary-lemmas})
with $n_{*}=n_{max}+1$ for the sequence $\text{\textgreek{r}}_{n},\text{\textgreek{m}}_{n}$.
Thus, in view of Lemma \ref{lem:ForMassIncrease}, we obtain the following
upper bound for $n_{max}$: 
\begin{equation}
n_{max}+1\le\exp\Big(\exp\big(2(h_{0}(\text{\textgreek{e}}))^{-4}\big)\Big)(h_{1}(\text{\textgreek{e}}))^{-1}.\label{eq:UpperBoundNmax}
\end{equation}
In particular, (\ref{eq:NonTrivalBoundNmax}) (and, thus, (\ref{eq:NearTrappingIsAchieved}))
holds.

\subsection{\label{sub:FinalStep}The final step of the evolution}

\noindent In this section, we will complete the proof of Theorem \ref{thm:TheTheorem},
using the near-trapping bound (\ref{eq:NearTrappingIsAchieved}),
the bounds (\ref{eq:RoughBoundGeometry})--(\ref{eq:BoundForMaxBeamSeparation})
of Proposition \ref{prop:TheMainBootstrapBeforeTrapping}, as well
a backwards-in-time Cauchy stability estimate (see Lemma \ref{lem:PerturbationInitialData}
in Section \ref{sub:Cauchy-Stability-Backwards}).

The bound (\ref{eq:NearTrappingIsAchieved}), combined with the estimates
(\ref{eq:NotEnoughMassBehind}) and (\ref{eq:BoundSecondBeamchanged})
of Proposition \ref{prop:TheMainBootstrapBeforeTrapping}, imply that,
necessarily (in view also of (\ref{eq:h_1_h_0_definition}), (\ref{eq:h_2definition}),
(\ref{eq:RoughBoundGeometry}), (\ref{eq:BoundForMassIncrease}) and
(\ref{eq:DefinitionNmax})): 
\begin{equation}
\frac{2\tilde{m}_{n_{max}+1}^{(1,k+1)}}{r_{0}}\ge1-2h_{3}(\text{\textgreek{e}}).\label{eq:NearTrappedLastIteration}
\end{equation}
 Therefore, applying again Lemma \ref{lem:ForMassIncrease} for $\text{\textgreek{m}}_{n},\text{\textgreek{r}}_{n}$
(defined by (\ref{eq:mu_n}), (\ref{eq:rho_n})) and $n_{*}=n_{max}+1$
yields, in view of (\ref{eq:NearTrappedLastIteration}), that, either
\begin{equation}
\text{\textgreek{m}}_{n_{max}+1}>1+h_{3}(\text{\textgreek{e}}),\label{eq:TheCaseOfTrappedSurface}
\end{equation}
or 
\begin{equation}
1-2h_{3}(\text{\textgreek{e}})\le\text{\textgreek{m}}_{n_{max}+1}\le1+h_{3}(\text{\textgreek{e}})\label{eq:TheCaseOfNearTrappedSurface}
\end{equation}
and
\begin{align}
\text{\textgreek{m}}_{n_{max}} & \le1-\exp\Big(-\exp\big(2(h_{0}(\text{\textgreek{e}}))^{-4}\big)\Big)(h_{1}(\text{\textgreek{e}}))^{2}\label{eq:AlmostTrappedLastStep}\\
\max\{\text{\textgreek{r}}_{n_{max}+1},\text{\textgreek{r}}_{n_{max}}\} & \le\exp\Big(\exp\big(2(h_{0}(\text{\textgreek{e}}))^{-4}\big)\Big)(h_{1}(\text{\textgreek{e}}))^{-1}\log\big((h_{1}(\text{\textgreek{e}}))^{-1}\big).\label{eq:MaxSeperationLastStep}
\end{align}

Let us set 
\begin{equation}
\bar{v}_{*}\doteq V_{n_{max}+1}\big(v^{(k+1)}+\frac{4}{\sqrt{-\Lambda}}h_{2}(\text{\textgreek{e}})\big)\label{eq:vbar*}
\end{equation}
(recall that (\ref{eq:vbar*}) equals $V_{n_{max}}\big(v^{(0)}+\frac{4}{\sqrt{-\Lambda}}h_{2}(\text{\textgreek{e}})\big)$,
in view of our conventions on the indices). The proof of Theorem \ref{thm:TheTheorem}
will follow by showing that
\begin{itemize}
\item Either 
\begin{equation}
\inf_{\mathcal{U}_{\text{\textgreek{e}}}\cap\{v=\bar{v}_{*}\}}\big(1-\frac{2m}{r}\big)<0\label{eq:TrappedSurfaceFormed}
\end{equation}
(in which case $(r_{/}^{(\text{\textgreek{e}})},(\text{\textgreek{W}}_{/}^{(\text{\textgreek{e}})})^{2},\bar{f}_{in/}^{(\text{\textgreek{e}})},\bar{f}_{out/}^{(\text{\textgreek{e}})})=(r_{/\text{\textgreek{e}}},\text{\textgreek{W}}_{/\text{\textgreek{e}}}^{2},\bar{f}_{in/\text{\textgreek{e}}},\bar{f}_{out/\text{\textgreek{e}}})$
in the statement of Theorem \ref{thm:TheTheorem}).
\item Or 
\begin{equation}
\inf_{\mathcal{U}_{\text{\textgreek{e}}}\cap\{v=\bar{v}_{*}\}}\big(1-\frac{2m'}{r'}\big)<0,\label{eq:PerturbedTrappedSurfaceFormed}
\end{equation}
where $(r',(\text{\textgreek{W}}')^{2},\bar{f}_{in}^{\prime},\bar{f}_{out}^{\prime})$
is a (possibly different) smooth solution to the system (\ref{eq:RequationFinal})--(\ref{eq:OutgoingVlasovFinal})
arising as a future development of an asymptotically AdS boundary-characteristic
initial data set $(r_{/\text{\textgreek{e}}}^{\prime},(\text{\textgreek{W}}_{/\text{\textgreek{e}}}^{\prime})^{2},\bar{f}_{in/\text{\textgreek{e}}}^{\prime},\bar{f}_{out/\text{\textgreek{e}}}^{\prime})$
on $\{u=0\}\cap\{0\le v\le v_{0\text{\textgreek{e}}}\}$ (satisfying
the reflecting gauge condition at $r=r_{0},+\infty$) which is $(h_{1}(\text{\textgreek{e}}))^{2}$
close to $(r_{/\text{\textgreek{e}}},\text{\textgreek{W}}_{/\text{\textgreek{e}}}^{2},\bar{f}_{in/\text{\textgreek{e}}},\bar{f}_{out/\text{\textgreek{e}}})$
with respect to the norm (\ref{eq:GeometricNormForCauchyStability}),
i.\,e.~satisfies, in particular, (\ref{eq:GaugeDifferenceBoundCauchystability-1})
and (\ref{eq:DifferenceBoundCauchyStability-1}) (in which case $(r_{/}^{(\text{\textgreek{e}})},(\text{\textgreek{W}}_{/}^{(\text{\textgreek{e}})})^{2},\bar{f}_{in/}^{(\text{\textgreek{e}})},\bar{f}_{out/}^{(\text{\textgreek{e}})})=(r_{/\text{\textgreek{e}}}^{\prime},(\text{\textgreek{W}}_{/\text{\textgreek{e}}}^{\prime})^{2},\bar{f}_{in/\text{\textgreek{e}}}^{\prime},\bar{f}_{out/\text{\textgreek{e}}}^{\prime})$
in the statement of Theorem \ref{thm:TheTheorem}). 
\end{itemize}
Notice that, in both cases, (\ref{eq:DecayInInitialDataNorm}) follows
readily from (\ref{eq:CSnormFamily}) and (\ref{eq:h_1_h_0_definition}).
To this end, we will proceed to treat the cases (\ref{eq:TheCaseOfTrappedSurface})
and (\ref{eq:TheCaseOfNearTrappedSurface}) separately. 

\medskip{}

\noindent \emph{Case I.} Assume that (\ref{eq:TheCaseOfTrappedSurface})
holds. Then, we will show that (\ref{eq:TrappedSurfaceFormed}) also
holds. We will argue by contradiction, assuming that 
\begin{equation}
\inf_{\mathcal{U}_{\text{\textgreek{e}}}\cap\{v=\bar{v}_{*}\}}\big(1-\frac{2m}{r}\big)\ge0.\label{eq:UntrappedForContradiction}
\end{equation}

Let us set 
\begin{equation}
\mathcal{C}_{*}\doteq\{U_{n_{max}+1}(v^{(1)}+\frac{4}{\sqrt{-\Lambda}}h_{2}(\text{\textgreek{e}}))\le u<U_{n_{max}+1}(v^{(0)})\}\cap\{v=\bar{v}_{*}\}\cap\mathcal{U}_{\text{\textgreek{e}}}.
\end{equation}
The renormalised mass $\tilde{m}$ is constant on $\mathcal{C}_{*}$,
satisfying in particular 
\begin{equation}
\tilde{m}|_{\mathcal{C}_{*}}=\tilde{m}_{n_{max}+1}^{(1,k+1)}.\label{eq:MassOnIngoingFinalLine}
\end{equation}
Since $\partial_{u}r<0$ on $\mathcal{U}_{\text{\textgreek{e}}}$
(see (\ref{eq:NegativeDerivativeRMaximal})), from (\ref{eq:UntrappedForContradiction})
and the fact that $\mathcal{C}_{*}$ does not contain its future endpoint,
we infer the following stronger bound: 
\begin{equation}
1-\frac{2m}{r}\big|_{\mathcal{C}_{*}}>0.\label{eq:CompletelyUntrappedForContradiction}
\end{equation}
Thus, we also have 
\begin{equation}
\partial_{v}r|_{\mathcal{C}_{*}}>0.\label{eq:PositiveDvrForLater}
\end{equation}

We will now show that the future endpoint of $\mathcal{C}_{*}$ is
exactly $(U_{n_{max}+1}(v^{(0)}),\bar{v}_{*})$. If there existed
some $(u_{b},\bar{v}_{*})\in(\partial\mathcal{U}_{\text{\textgreek{e}}}\backslash\mathcal{I})$
such that $U_{n_{max}+1}(v^{(1)}+\frac{4}{\sqrt{-\Lambda}}h_{2}(\text{\textgreek{e}}))\le u_{b}<U_{n_{max}+1}(v^{(0)})$,
then Theorem \ref{thm:maximalExtension} on the strucure of the maximal
future development would imply that $r$ extends continuously on $(u_{b},\bar{v}_{*})$
with 
\begin{equation}
r(u_{b},\bar{v}_{*})=r_{0\text{\textgreek{e}}}.\label{eq:R0OnFutureBoundary}
\end{equation}
However, in that case, (\ref{eq:TheCaseOfTrappedSurface}), (\ref{eq:MassOnIngoingFinalLine})
and (\ref{eq:R0OnFutureBoundary}) would imply that, for some $u_{b*}$
close enough to $u_{b}$ 
\begin{equation}
1-\frac{2m}{r}\big|_{(u_{b*},\bar{v}_{*})}<0,
\end{equation}
which is a contradiction in view of (\ref{eq:UntrappedForContradiction}).
Therefore, 
\[
\{U_{n_{max}+1}(v^{(1)}+\frac{4}{\sqrt{-\Lambda}}h_{2}(\text{\textgreek{e}}))\le u<U_{n_{max}+1}(v^{(0)})\}\cap\{v=\bar{v}_{*}\}\cap(\partial\mathcal{U}_{\text{\textgreek{e}}}\backslash\mathcal{I})=\emptyset,
\]
and, thus 
\begin{equation}
\mathcal{C}_{*}=\{U_{n_{max}+1}(v^{(1)}+\frac{4}{\sqrt{-\Lambda}}h_{2}(\text{\textgreek{e}}))\le u<U_{n_{max}+1}(v^{(0)})\}\cap\{v=\bar{v}_{*}\}.\label{eq:TheWholeSegmentInTheDevelopment}
\end{equation}

In order to complete the proof in the case when (\ref{eq:TheCaseOfTrappedSurface})
holds, it suffices to establish that 
\begin{equation}
\limsup_{\bar{u}\rightarrow U_{n_{max}+1}(v^{(0)})}\frac{r|_{(\bar{u},\bar{v}_{*})}}{r_{0}}\le1+O\big((h_{2}(\text{\textgreek{e}}))^{1/2}\big).\label{eq:AlmostOnMirror}
\end{equation}
Assuming that (\ref{eq:AlmostOnMirror}) holds, from (\ref{eq:TheCaseOfTrappedSurface}),
(\ref{eq:MassOnIngoingFinalLine}) and (\ref{eq:AlmostOnMirror})
(in view also of (\ref{eq:h_2definition}), (\ref{eq:h_3definition}))
we readily obtain 
\begin{equation}
\liminf_{\bar{u}\rightarrow U_{n_{max}+1}(v^{(0)})}\Big(1-\frac{2m}{r}\Big)\Big|_{(\bar{u},\bar{v}_{*})}<-\frac{1}{2}h_{3}(\text{\textgreek{e}})<0,
\end{equation}
which is a contradiction in view of (\ref{eq:UntrappedForContradiction}).

Let us set 
\begin{align}
\mathcal{B}_{*}\doteq\{U_{n_{max}+1}(v^{(1)}+ & \frac{4}{\sqrt{-\Lambda}}h_{2}(\text{\textgreek{e}}))\le u<U_{n_{max}+1}(v^{(0)})\}\cap\{V_{n_{max}+1}(v^{(k)})\le v\le\bar{v}_{*}\}.
\end{align}
From (\ref{eq:TheWholeSegmentInTheDevelopment}) and the structure
of the maximal future development of general initial data sets for
(\ref{eq:RequationFinal})--(\ref{eq:OutgoingVlasovFinal}) (see Theorem
\ref{thm:maximalExtension}), we infer that 
\[
\mathcal{B}_{*}\subset\mathcal{U}_{\text{\textgreek{e}}}.
\]
Furthermore, in view of (\ref{eq:ConstrainVFinal}) and (\ref{eq:PositiveDvrForLater}),
we infer that 
\begin{equation}
\partial_{v}r\big|_{\mathcal{B}_{*}}>0
\end{equation}
 and, thus (in view of (\ref{eq:NegativeDerivativeRMaximal})): 
\begin{equation}
1-\frac{2m}{r}\big|_{\mathcal{B}_{*}}>0.\label{eq:UntrappedInRectangle}
\end{equation}

In view of (\ref{eq:DefinitionNmax}) and the bounds (\ref{eq:NotEnoughMassBehind})
and (\ref{eq:BoundSecondBeamchanged}), we have 
\begin{equation}
\big\{ u\le U_{n_{max}+1}(v^{(1)})\big\}\cap\mathcal{U}_{\text{\textgreek{e}}}\subset\mathcal{U}_{\text{\textgreek{e}}}^{+}.
\end{equation}
Therefore, as a consequence of (\ref{eq:RoughBoundGeometry}), we
can estimate 
\begin{equation}
\log\Big(\frac{\partial_{v}r}{1-\frac{2m}{r}}\Big)\Big|_{\big\{ u=U_{n_{max}+1}(v^{(1)})\big\}\cap\mathcal{U}_{\text{\textgreek{e}}}}\le\big(h_{1}(\text{\textgreek{e}})\big)^{-4}\log\big((h_{3}(\text{\textgreek{e}}))^{-1}\big).\label{eq:FromRoughBound}
\end{equation}
Since (\ref{eq:DerivativeInUDirectionKappa}) implies that 
\begin{equation}
\partial_{u}\log\Big(\frac{\partial_{v}r}{1-\frac{2m}{r}}\Big)\le0,
\end{equation}
from (\ref{eq:FromRoughBound}) (and (\ref{eq:UpperBoundForAxisInteractionProp})
) we infer the one sided bound: 
\begin{equation}
\partial_{v}r\big|_{\mathcal{B}_{*}}\le2\big(h_{1}(\text{\textgreek{e}})\big)^{-4}\log\big((h_{3}(\text{\textgreek{e}}))^{-1}\big).\label{eq:BoundForRseperationAlreadyTrapped}
\end{equation}

Integrating (\ref{eq:BoundForRseperationAlreadyTrapped}) from $v=V_{n_{max}+1}(v^{(k)})$
up to $V_{n_{max}+1}(v^{(k+1)}+\frac{4}{\sqrt{-\Lambda}}h_{2}(\text{\textgreek{e}}))$
using (\ref{eq:h_2definition}), we finally obtain (\ref{eq:AlmostOnMirror}).
Thus, the proof in the case when (\ref{eq:TheCaseOfTrappedSurface})
holds is complete.

\medskip{}

\noindent \emph{Case II.} Assume that (\ref{eq:TheCaseOfNearTrappedSurface})
holds. Then, (\ref{eq:AlmostTrappedLastStep}) and (\ref{eq:MaxSeperationLastStep})
also hold. 

As a consequence of (\ref{eq:NotEnoughMassBehind}), (\ref{eq:BoundSecondBeamchanged})
and (\ref{eq:BoundForMassIncrease}), the bound (\ref{eq:AlmostTrappedLastStep})
implies that 
\begin{equation}
\inf_{\{u\le U_{n_{max}}(v^{(0)}+\frac{4}{\sqrt{-\Lambda}}h_{2}(\text{\textgreek{e}}))\}\cap\mathcal{U}_{\text{\textgreek{e}}}}\Big(1-\frac{2\tilde{m}}{r}\Big)\ge\frac{1}{2}\exp\Big(-\exp\big(2(h_{0}(\text{\textgreek{e}}))^{-4}\big)\Big)(h_{1}(\text{\textgreek{e}}))^{2}.
\end{equation}
Therefore, using (\ref{eq:NotEnoughMassBehind}), (\ref{eq:BoundSecondBeamchanged})
and (\ref{eq:BoundForMassIncrease}) to estimate $(1-\frac{2m}{r})$
in the region 
\[
\{U_{n_{max}}(v^{(0)}+\frac{4}{\sqrt{-\Lambda}}h_{2}(\text{\textgreek{e}}))\le u\le U_{n_{max}+1}(v^{(0)})\}\backslash\mathcal{R}_{n_{max}+1}^{(1,k+1)},
\]
we infer that 
\begin{equation}
\inf_{\{u\le U_{n_{max}+1}(v^{(1)}+\frac{4}{\sqrt{-\Lambda}}h_{2}(\text{\textgreek{e}}))\}\cap\mathcal{U}_{\text{\textgreek{e}}}}\Big(1-\frac{2\tilde{m}}{r}\Big)\ge\frac{1}{2}\exp\Big(-\exp\big(2(h_{0}(\text{\textgreek{e}}))^{-4}\big)\Big)(h_{1}(\text{\textgreek{e}}))^{2}.\label{eq:AwayFromTrappedBeforePerurbation}
\end{equation}

\begin{rem*}
Notice that, while $1-\frac{2\tilde{m}}{r}$ becomes $\sim h_{3}(\text{\textgreek{e}})$
in $\{u\le U_{n_{max}+1}(v^{(0)})\}\cap\mathcal{U}_{\text{\textgreek{e}}}$
(in view of (\ref{eq:TheCaseOfNearTrappedSurface})), when restricting
to the subregion $\{u\le U_{n_{max}+1}(v^{(1)}+\frac{4}{\sqrt{-\Lambda}}h_{2}(\text{\textgreek{e}}))\}\cap\mathcal{U}_{\text{\textgreek{e}}}$,
the improved bound (\ref{eq:AwayFromTrappedBeforePerurbation}) holds.
\end{rem*}

Let us set 
\begin{equation}
u_{*}\doteq U_{n_{max}+1}(v^{(1)}+\frac{4}{\sqrt{-\Lambda}}h_{2}(\text{\textgreek{e}})),\label{eq:u_*}
\end{equation}
noticing that 
\begin{equation}
supp(r^{2}T_{vv})\cap\{u=u_{*}\}\subset\{r\le\text{\textgreek{e}}^{1/2}\}\label{eq:SupportNotAtInfinity}
\end{equation}
as a consequence of (\ref{eq:BoundForRAwayInteractionProp}). Let
us also fix a smooth cut-off function $\text{\textgreek{q}}_{\text{\textgreek{e}}}:[u_{*},u_{*}+v_{0\text{\textgreek{e}}})\rightarrow[0,1]$
such that 
\begin{equation}
\text{\textgreek{q}}_{\text{\textgreek{e}}}(v)=1\mbox{ for }v\in[V_{n_{max}+1}(v^{(k+1)}),V_{n_{max}+1}(v^{(k+1)}+\frac{4}{\sqrt{-\Lambda}}h_{2}(\text{\textgreek{e}}))]\label{eq:SupportOfCutOff}
\end{equation}
 and 
\[
\text{\textgreek{q}}_{\text{\textgreek{e}}}(v)=0\mbox{ for }v\in[u_{*},u_{*}+v_{0\text{\textgreek{e}}})\backslash[V_{n_{max}+1}(v^{(k+1)}-\frac{1}{\sqrt{-\Lambda}}h_{2}(\text{\textgreek{e}})),V_{n_{max}+1}(v^{(k+1)}+\frac{5}{\sqrt{-\Lambda}}h_{2}(\text{\textgreek{e}}))].
\]
 We will then define the function $\widetilde{T}_{vv}:[u_{*},u_{*}+v_{0\text{\textgreek{e}}})\rightarrow\mathbb{R}$
by the relation 
\begin{equation}
\widetilde{T}_{vv}(v)\doteq\exp\big(-2C_{\text{\textgreek{e}}}^{2}\frac{u_{*}}{v_{0}}\big)(h_{1}(\text{\textgreek{e}}))^{2}\text{\textgreek{q}}_{\text{\textgreek{e}}}(v)T_{vv}(u_{*},v),\label{eq:WidetildeT}
\end{equation}
where $C_{\text{\textgreek{e}}}$ is defined by (\ref{eq:Cepsilon}).
Notice that, since 
\begin{equation}
2\pi\int_{V_{n_{max}+1}(v^{(k+1)})}^{V_{n_{max}+1}(v^{(k+1)}+\frac{4}{\sqrt{-\Lambda}}h_{2}(\text{\textgreek{e}}))}\frac{(1-\frac{2m}{r})}{\partial_{v}r}r^{2}T_{vv}\Big|_{(u_{*},v)}\, dv=\tilde{m}_{n_{max}+1}^{(1,k+1)}-\tilde{m}_{n_{max}+1}^{(1,k+2)}=\tilde{m}_{n_{max}+1}^{(1,k+1)},\label{eq:MassDifferenceForPerturbation}
\end{equation}
we can readily bound in view of (\ref{eq:WidetildeT}), (\ref{eq:BoundSecondBeamchanged}),
(\ref{eq:TheCaseOfNearTrappedSurface}) and (\ref{eq:MassDifferenceForPerturbation}):
\begin{equation}
\sup_{u_{*}\le\bar{v}\le u_{*}+v_{0\text{\textgreek{e}}}}(-\Lambda)\int_{u_{*}}^{u_{*}+v_{0\text{\textgreek{e}}}}\frac{r^{2}(u_{*},v)\frac{|\tilde{T}_{vv}(v)|}{\partial_{v}\text{\textgreek{r}}(u_{*},v)}}{|\text{\textgreek{r}}(u_{*},v)-\text{\textgreek{r}}(u_{*},\bar{v})|+\text{\textgreek{r}}(u_{*},u_{*})}dv\le\exp\big(-C_{\text{\textgreek{e}}}^{2}\frac{u_{*}}{v_{0\text{\textgreek{e}}}}\big)(h_{1}(\text{\textgreek{e}}))^{2},\label{eq:SmallnessPerturbation-1}
\end{equation}
where $\text{\textgreek{r}}$ is defined in terms of $r$ by the relation
\begin{equation}
\text{\textgreek{r}}\doteq\tan^{-1}\Big(\sqrt{-\frac{\Lambda}{3}}r\Big).
\end{equation}

Applying the backwards-in-time Cauchy stability lemma \ref{lem:PerturbationInitialData}
(see Section \ref{sub:Auxiliary-lemmas}), for $u_{*}$ given by (\ref{eq:u_*})
and $\widetilde{T}_{vv}$ given by (\ref{eq:WidetildeT}) (in view
of (\ref{eq:AwayFromTrappedBeforePerurbation}), (\ref{eq:SupportNotAtInfinity})
and (\ref{eq:SmallnessPerturbation-1})), we infer that there exists
an smooth asymptotically AdS boundary-characteristic initial data
set $(r_{/\text{\textgreek{e}}}^{\prime},(\text{\textgreek{W}}_{/\text{\textgreek{e}}}^{\prime})^{2},\bar{f}_{in/\text{\textgreek{e}}}^{\prime},\bar{f}_{out/\text{\textgreek{e}}}^{\prime})$
on $\{u=0\}$ for the system (\ref{eq:RequationFinal})--(\ref{eq:OutgoingVlasovFinal})
satisfying the reflecting gauge condition at $r=r_{0\text{\textgreek{e}}},+\infty$
with the following properties:

\begin{enumerate}

\item The initial data sets $(r_{/\text{\textgreek{e}}},\text{\textgreek{W}}_{/\text{\textgreek{e}}}^{2},\bar{f}_{in/\text{\textgreek{e}}},\bar{f}_{out/\text{\textgreek{e}}})$
and $(r_{/\text{\textgreek{e}}}^{\prime},(\text{\textgreek{W}}_{/\text{\textgreek{e}}}^{\prime})^{2},\bar{f}_{in/\text{\textgreek{e}}}^{\prime},\bar{f}_{out/\text{\textgreek{e}}}^{\prime})$
satisfy (\ref{eq:GaugeDifferenceBoundCauchystability-1}) and (\ref{eq:DifferenceBoundCauchyStability-1}). 

\item The maximal development $(\mathcal{U}_{\text{\textgreek{e}}}^{\prime};r^{\prime},(\text{\textgreek{W}}^{\prime})^{2},\bar{f}_{in}^{\prime},\bar{f}_{out}^{\prime})$
of $(r_{/\text{\textgreek{e}}},\text{\textgreek{W}}_{/\text{\textgreek{e}}}^{2},\bar{f}_{in/\text{\textgreek{e}}},\bar{f}_{out/\text{\textgreek{e}}})$
satisfies (\ref{eq:ComparableDevelopments}), (\ref{eq:EqualROnTheSupport})
and (\ref{eq:NewIngoingEnergyMomentum}). 

\end{enumerate}

Using primes to denote quantities associated to $(r^{\prime},(\text{\textgreek{W}}^{\prime})^{2},\bar{f}_{in}^{\prime},\bar{f}_{out}^{\prime})$,
we can readily estimate in view of (\ref{eq:ComparableDevelopments}),
(\ref{eq:EqualROnTheSupport}), (\ref{eq:NewIngoingEnergyMomentum})
and (\ref{eq:WidetildeT}): 
\begin{align}
\tilde{m}^{\prime}\big|_{(u_{*},\bar{v}_{*})} & =\int_{V_{n_{max}+1}(v^{(k+1)})}^{V_{n_{max}+1}(v^{(k+1)}+\frac{4}{\sqrt{-\Lambda}}h_{2}(\text{\textgreek{e}}))}\frac{(1-\frac{2m^{\prime}}{r^{\prime}})}{\partial_{v}r^{\prime}}(r^{\prime})^{2}T_{vv}^{\prime}\Big|_{(u_{*},v)}\, dv\label{eq:NewMass}\\
 & =\int_{V_{n_{max}+1}(v^{(k+1)})}^{V_{n_{max}+1}(v^{(k+1)}+\frac{4}{\sqrt{-\Lambda}}h_{2}(\text{\textgreek{e}}))}\frac{(1-\frac{2m^{\prime}}{r})}{\partial_{v}r}r^{2}(T_{vv}+\widetilde{T}_{vv})\Big|_{(u_{*},v)}\, dv\nonumber \\
 & \ge(1+\exp\big(-2C_{\text{\textgreek{e}}}^{2}\frac{u_{*}}{v_{0}}\big)(h_{1}(\text{\textgreek{e}}))^{2})\tilde{m}\big|_{(u_{*},\bar{v}_{*})}.\nonumber 
\end{align}
Therefore, since $\tilde{m}\big|_{(u_{*},\bar{v}_{*})}=\tilde{m}_{n_{max}+1}^{(1,k+1)}$,
the bound (\ref{eq:TheCaseOfNearTrappedSurface}) (in view also of
(\ref{eq:h_3definition}) and (\ref{eq:Cepsilon})) implies that 
\begin{equation}
\frac{2\tilde{m}^{\prime}\big|_{(u_{*},\bar{v}_{*})}}{r_{0}}\ge(1+\exp\big(-2C_{\text{\textgreek{e}}}^{2}\frac{u_{*}}{v_{0}}\big)(h_{1}(\text{\textgreek{e}}))^{2})(1-2h_{3}(\text{\textgreek{e}}))\ge1+h_{3}(\text{\textgreek{e}}).\label{eq:PerturbationIsTrapped}
\end{equation}

Since $\tilde{m}\big|_{(u_{*},\bar{v}_{*})}$ is constant on 
\[
\mathcal{C}_{*}^{\prime}\doteq\{U_{n_{max}+1}(v^{(1)}+\frac{4}{\sqrt{-\Lambda}}h_{2}(\text{\textgreek{e}}))\le u<U_{n_{max}+1}(v^{(0)})\}\cap\{v=\bar{v}_{*}\}\cap\mathcal{U}_{\text{\textgreek{e}}}^{\prime}
\]
 and satisfies (\ref{eq:PerturbationIsTrapped}), we can now repeat
the same arguments as in Case I (i.\,e.~the case when (\ref{eq:TheCaseOfTrappedSurface})
holds) in order to infer that (\ref{eq:PerturbedTrappedSurfaceFormed})
holds. 

\medskip{}

Thus, the proof of Theorem \ref{thm:TheTheorem} is complete. \qed

\subsection{\label{sub:Auxiliary-lemmas}Some auxiliary lemmas}

In this section, we will prove some lemmas necessary for the proof
of Proposition \ref{prop:TheMainBootstrapBeforeTrapping} and Theorem
\ref{thm:TheTheorem}.

\subsubsection{\label{sub:MaximumPrinciple} A maximum principle for $1+1$ wave-type
equations}

The following lemma provides a comparison inequality for certain $1+1$
equations of wave type, and is used in the proof of Proposition \ref{prop:TheMainBootstrapBeforeTrapping}.
\begin{lem}
\label{lem:HyperbolicMaximumPrinciple}For any $u_{0}<u_{1}$, $v_{0}<v_{1}$
and $a\in\mathbb{R}$, let $F_{1},F_{2}:[u_{0},u_{1}]\times[v_{0},v_{1}]\times(-\infty,a]\rightarrow(0,+\infty)$
be smooth functions so that 
\begin{equation}
\max_{(u,v)\in[u_{0}u_{1}]\times[v_{0},v_{1}]}F_{1}(u,v,z)<\min_{(u,v)\in[u_{0}u_{1}]\times[v_{0},v_{1}]}F_{2}(u,v,z)\label{eq:RelationF_1F_2}
\end{equation}
for any $z\in(-\infty,a]$ and 
\begin{equation}
\partial_{z}F_{1}(u,v,z),\partial_{z}F_{2}(u,v,z)\ge0\label{eq:IncreasingInLastVariable}
\end{equation}
for any $(u,v,z)\in[u_{0},u_{1}]\times[v_{0},v_{1}]\times(-\infty,a]$.
Suppose also $z_{1},z_{2}:[u_{0},u_{1}]\times[v_{0},v_{1}]\rightarrow(-\infty,a]$
are smooth solutions to the equations 
\begin{equation}
\partial_{v}\partial_{u}z_{1}=-F_{1}(u,v,z_{1})\partial_{u}z_{1}\partial_{v}z_{1}\label{eq:Equation1}
\end{equation}
and 
\begin{equation}
\partial_{v}\partial_{u}z_{2}=-F_{2}(u,v,z_{2})\partial_{u}z_{2}\partial_{v}z_{2},\label{eq:Equation2}
\end{equation}
satisfying the same characteristic initial data 
\begin{equation}
z_{1}(u,v_{0})=z_{2}(u,v_{0})=z_{\backslash}(u),\label{eq:InData1}
\end{equation}
\begin{equation}
z_{1}(u_{0},v)=z_{2}(u_{0},v)=z_{/}(v).\label{eq:InData2}
\end{equation}
 where $z_{/}:[v_{0},v_{1}]\rightarrow(-\infty,a)$ and $z_{\backslash}:[u_{0},u_{1}]\rightarrow(-\infty,a)$
are smooth functions so that 
\begin{equation}
z_{/}(v_{0})=z_{\backslash}(v_{1}),
\end{equation}
\begin{equation}
\partial_{v}z_{/}|_{(v_{0},v_{1})}>0\label{eq:IncreasingRight}
\end{equation}
and 
\begin{equation}
\partial_{u}z_{\backslash}|_{(u_{0},u_{1})}<0.\label{eq:DecreasingLeft}
\end{equation}
Then, the functions $z_{1},z_{2}$ satisfy 
\begin{equation}
\partial_{u}z_{i}<0<\partial_{v}z_{i},\mbox{ }i=1,2\label{eq:Monotonicity}
\end{equation}
in $(u_{0},u_{1})\times(v_{0},v_{1})$ and 
\begin{equation}
z_{1}\le z_{2}\label{eq:InequalityZ1,2}
\end{equation}
everywhere on $[u_{0},u_{1}]\times[v_{0},v_{1}]$.\end{lem}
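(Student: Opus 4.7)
The proof divides naturally into establishing the monotonicity \eqref{eq:Monotonicity} and the comparison \eqref{eq:InequalityZ1,2}. I would introduce the quantities $P_i = -\partial_u z_i$ and $Q_i = \partial_v z_i$, which convert \eqref{eq:Equation1}--\eqref{eq:Equation2} into the pair of transport equations
\begin{equation*}
\partial_v P_i = -F_i(u,v,z_i)\, P_i Q_i, \qquad \partial_u Q_i = F_i(u,v,z_i)\, P_i Q_i,
\end{equation*}
with initial data $P_i|_{v = v_0} = -z_\backslash'(u) > 0$ (from \eqref{eq:DecreasingLeft}) and $Q_i|_{u = u_0} = z_/'(v) > 0$ (from \eqref{eq:IncreasingRight}).

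For Step 1, I would integrate these transport equations along the associated characteristic lines to obtain explicit exponential representations for $P_i$ and $Q_i$ in terms of the initial data and the integrated source. Since $F_i > 0$ and the integrating factors are exponentials, the initial positive sign of $P_i, Q_i$ is preserved throughout $(u_0,u_1) \times (v_0,v_1)$, yielding \eqref{eq:Monotonicity}.

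For Step 2, set $\tilde z = z_2 - z_1$; note that $\tilde z$ vanishes on $\{u=u_0\} \cup \{v=v_0\}$. Subtracting \eqref{eq:Equation1} from \eqref{eq:Equation2} and using the identity $\partial_u z_2 \partial_v z_2 - \partial_u z_1 \partial_v z_1 = (\partial_u \tilde z)(\partial_v z_2) + (\partial_u z_1)(\partial_v \tilde z)$ yields the linear-type hyperbolic equation
\begin{equation*}
\partial_u\partial_v \tilde z + F_2(u,v,z_2)\, \partial_v z_2\, \partial_u \tilde z + F_2(u,v,z_2)\, \partial_u z_1\, \partial_v \tilde z = -\bigl[F_2(u,v,z_2) - F_1(u,v,z_1)\bigr] \partial_u z_1\, \partial_v z_1 .
\end{equation*}
By Step 1 the coefficient of $\partial_u \tilde z$ is positive and that of $\partial_v \tilde z$ is negative; moreover, wherever $z_1 \le z_2$, the right-hand side is nonnegative, since $\partial_u z_1\, \partial_v z_1 \le 0$ and, by \eqref{eq:IncreasingInLastVariable} and \eqref{eq:RelationF_1F_2}, $F_2(u,v,z_2) \ge F_2(u,v,z_1) > F_1(u,v,z_1)$. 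I would then close the argument via a continuity/bootstrap on the closed, nonempty set $\mathcal{B} = \{U_* \in [u_0,u_1] : \tilde z \ge 0 \text{ on } [u_0,U_*] \times [v_0,v_1]\}$: supposing $U_* := \sup \mathcal{B} < u_1$, the hypothesis $z_1 \le z_2$ holds on $[u_0,U_*] \times [v_0,v_1]$, so the linear equation above has nonnegative source on this rectangle, and a Riemann-function representation would give $\tilde z \ge 0$ on the rectangle; combining this with the strict initial-boundary positivity of $-\partial_u \tilde z$ on $\{u=u_0\} \cap \{v > v_0\}$ and of $\partial_v \tilde z$ on $\{v=v_0\} \cap \{u > u_0\}$ (both of which follow from $F_1 < F_2$ in the boundary transport ODEs for $P_i|_{u=u_0}$ and $Q_i|_{v=v_0}$) allows an extension $U_* + \delta \in \mathcal{B}$ for some $\delta > 0$, contradicting the maximality of $U_*$.

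The hard part is justifying the Riemann-function step, since the first-order coefficients $a = F_2 \partial_v z_2 > 0$ and $b = F_2 \partial_u z_1 < 0$ have opposite signs, so the positivity of the Riemann function for the operator $L = \partial_u \partial_v + a \partial_u + b \partial_v$ is not a priori obvious. The plan for handling this is to recast the adjoint Goursat problem for the Riemann function as a Volterra integral equation and to verify its positivity via Picard iteration, using the a priori bounds on $a, b$ coming from Step 1 together with the boundedness of $F_2$ and of the derivatives of $z_1, z_2$; the nonnegative source term and the vanishing Goursat data for $\tilde z$ then force $\tilde z \ge 0$ via the standard representation formula.
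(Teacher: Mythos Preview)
Your Step~1 coincides with the paper's: both exploit the factorised structure $\partial_v\log(-\partial_u z_i)=-F_i\,\partial_v z_i$ (and its $u$-analogue) to propagate the strict signs of the initial data.

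For Step~2, your route diverges from the paper's and carries an unresolved gap. The paper does \emph{not} linearise: it enlarges the bootstrap hypothesis to include, in addition to $z_1\le z_2$, the two derivative inequalities $\partial_v z_1\le\partial_v z_2$ and $\partial_u z_1\le\partial_u z_2$ on $[u_0,u_*]\times[v_0,v_1]$, and then compares the \emph{exact} exponential formulas
\[
\log(-\partial_u z_i)(u_*,\bar v)=\log(-\partial_u z_\backslash)(u_*)-\int_{v_0}^{\bar v}F_i(u_*,v,z_i)\,\partial_v z_i\,dv.
\]
Since $F_1(u_*,v,z_1)<F_2(u_*,v,z_2)$ (from \eqref{eq:RelationF_1F_2}, \eqref{eq:IncreasingInLastVariable} and $z_1\le z_2$) and $\partial_v z_1\le\partial_v z_2$, the integrand for $i=2$ strictly dominates that for $i=1$, giving $\partial_u z_1(u_*,\bar v)<\partial_u z_2(u_*,\bar v)$ \emph{strictly} for $\bar v>v_0$; this strictness is what allows the extension past $u_*$. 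A symmetric argument (integrating in $u$) improves $\partial_v z_1\le\partial_v z_2$, and $z_1\le z_2$ then follows by integration. No Riemann function, no positivity question.

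By contrast, your argument hinges on the positivity of the Riemann function for $L=\partial_u\partial_v+a\partial_u+b\partial_v$ with $a>0$, $b<0$, which you yourself flag as ``the hard part'' and leave to a Picard-iteration sketch; this is a genuine gap. Two smaller issues: (i) your sign is flipped---on $\{u=u_0\}$ one has $\partial_u\tilde z>0$, not $-\partial_u\tilde z>0$; (ii) your extension step is circular as written: the Riemann representation on $[u_0,U_*]\times[v_0,v_1]$ only recovers $\tilde z\ge0$ there, which is already the bootstrap hypothesis; to push past $U_*$ you would need strictness of $\tilde z$ or a sign on $\partial_u\tilde z$ \emph{at} $u=U_*$, and your proposal does not explain how the boundary information at $u=u_0$ propagates that far. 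The paper's trick of bootstrapping the derivative inequalities alongside $z_1\le z_2$ is precisely what furnishes this strictness without extra machinery.
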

\begin{proof}
We will first establish (\ref{eq:Monotonicity}). By applying a standard
continuity argument, rewriting equation (\ref{eq:Equation1}) as 
\begin{equation}
\partial_{v}\log(-\partial_{u}z_{1})=-\partial_{v}z_{1}F_{1}(u,v,z_{1})\label{eq:DvDerivativeZ1}
\end{equation}
 and integrating in $v$, using also the property (\ref{eq:DecreasingLeft})
of the initial data, we obtain that 
\begin{equation}
\partial_{u}z_{1}<0
\end{equation}
everywhere on $(u_{0},u_{1})\times(v_{0},v_{1})$. Similarly, rewriting
(\ref{eq:Equation1}) as 
\begin{equation}
\partial_{u}\log(\partial_{v}z_{1})=-\partial_{u}z_{1}F_{1}(u,v,z_{1})
\end{equation}
and integrating in $u$, using (\ref{eq:DecreasingLeft}), and then
repeating the same procedure for $z_{2}$, we finally obtain (\ref{eq:Monotonicity}). 

In order to establish (\ref{eq:InequalityZ1,2}), we will argue by
continuity: Let $u_{*}\in[u_{0},u_{1})$ be such that (\ref{eq:InequalityZ1,2}),
\begin{equation}
\partial_{v}z_{1}\le\partial_{v}z_{2}\label{eq:ContinuityDv}
\end{equation}
and 
\begin{equation}
\partial_{u}z_{1}\le\partial_{u}z_{2}\label{eq:ContinuityDu}
\end{equation}
hold on $[u_{0},u_{*}]\times[v_{0},v_{1}]$. Note that $u_{*}=u_{0}$
satisfies this condition: In this case, (\ref{eq:InequalityZ1,2})
and (\ref{eq:ContinuityDv}) follow directly from (\ref{eq:InData2}),
while (\ref{eq:ContinuityDu}) follows by integrating (\ref{eq:DvDerivativeZ1})
(and its analogue for $z_{2}$) and using (\ref{eq:RelationF_1F_2}).
We will show that there exists a $\text{\textgreek{d}}>0$, such that
(\ref{eq:InequalityZ1,2}), (\ref{eq:ContinuityDv}) and (\ref{eq:ContinuityDu})
hold on $[u_{0},u_{*}+\text{\textgreek{d}})\times[v_{0},v_{1}]$.

For any $\bar{v}\in(v_{0},v_{1}]$, integrating (\ref{eq:Equation1})
and (\ref{eq:Equation2}) in $v$ along $\{u_{*}\}\times[v_{0},\bar{v}]$,
we obtain: 
\begin{equation}
\log(-\partial_{u}z_{1})(u_{*},\bar{v})=-\int_{v_{0}}^{\bar{v}}F_{1}(u_{*},v,z_{1})\partial_{v}z_{1}\, dv+\log(-\partial_{u}z_{\backslash})(u_{*})\label{eq:LogDuZ1}
\end{equation}
and 
\begin{equation}
\log(-\partial_{u}z_{2})(u_{*},\bar{v})=-\int_{v_{0}}^{\bar{v}}F_{2}(u_{*},v,z_{2})\partial_{v}z_{2}\, dv+\log(-\partial_{u}z_{\backslash})(u_{*}).\label{eq:LogDuZ2}
\end{equation}

Let us define the auxiliary functions $F_{1;u_{*}\bar{v}},F_{2;u_{*}\bar{v}}:(-\infty,a]\rightarrow(0,+\infty)$
by the relations 
\begin{equation}
F_{1;u_{*}\bar{v}}(z)=\max_{v\in[v_{0},\bar{v}]}F_{1}(u_{*},v,z)\label{eq:Aux1}
\end{equation}
and 
\begin{equation}
F_{2;u_{*}\bar{v}}(z)=\min_{v\in[v_{0},\bar{v}]}F_{2}(u_{*},v,z).\label{eq:Aux2}
\end{equation}

In view of (\ref{eq:RelationF_1F_2}), (\ref{eq:IncreasingInLastVariable})
and the fact that (\ref{eq:InequalityZ1,2}) holds on $\{u_{*}\}\times[v_{0},\bar{v}]$,%
\footnote{Note that we can immediately restrict from $[u_{0},u_{1}]\times[v_{0},v_{1}]$
to $\{u_{*}\}\times[v_{0},\bar{v}]$ in (\ref{eq:RelationF_1F_2}).%
} we can bound for any $v\in[v_{0},\bar{v}]$: 
\begin{equation}
F_{1;u_{*}\bar{v}}(z_{1}(u_{*},v))<F_{2;u_{*}\bar{v}}(z_{1}(u_{*},v))\le F_{2;u_{*}\bar{v}}(z_{2}(u_{*},v)).\label{eq:InequalityAuxiliary}
\end{equation}
 Thus, subtracting (\ref{eq:LogDuZ1}) and (\ref{eq:LogDuZ2}) and
using (\ref{eq:InequalityAuxiliary}) and (\ref{eq:ContinuityDv})
(and the fact that $\partial_{v}z_{2}>0$, $\bar{v}>v_{0}$), we readily
infer that 
\begin{align}
\log(-\partial_{u}z_{1}) & (u_{*},\bar{v})-\log(-\partial_{u}z_{2})(u_{*},\bar{v})\label{eq:ComparisonDu}\\
 & \ge\int_{v_{0}}^{\bar{v}}F_{2;u_{*}\bar{v}}(z_{2}(u_{*},v))\partial_{v}z_{2}(u_{*},v)\, dv-\int_{v_{0}}^{\bar{v}}F_{1;u_{*}\bar{v}}(z_{1}(u_{*},v))\partial_{v}z_{1}(u_{*},v)\, dv\nonumber \\
 & >0.\nonumber 
\end{align}
 From (\ref{eq:ComparisonDu}) we thus infer that, for any $v_{0}<\bar{v}\le v_{1}$:
\begin{equation}
\partial_{u}z_{1}(u_{*},\bar{v})<\partial_{u}z_{2}(u_{*},\bar{v}).\label{eq:AlmostThereComparison}
\end{equation}
Therefore, since $z_{1},z_{2}$ are smooth, there exists a continuous
function $\text{\textgreek{d}}_{u}:[v_{0},v_{1}]\rightarrow[0,1)$
with $\text{\textgreek{d}}_{u}|_{(v_{0},v_{1}]}>0$, such that 
\begin{equation}
\partial_{u}z_{1}(u,v)\le\partial_{u}z_{2}(u,v)\mbox{ for }\{v_{0}\le v\le v_{1}\}\cap\{u_{*}\le u\le u_{*}+\text{\textgreek{d}}_{u}(v))\}.\label{eq:AlmostThereComparison-2}
\end{equation}

Similarly, by integrating equations (\ref{eq:Equation1}) and (\ref{eq:Equation2})
in $u$ along $[u_{0},u_{1}]\times\{v_{0}\}$ and repeating a similar
procedure (using (\ref{eq:InData1})), we also obtain that there exists
a continuous function $\text{\textgreek{d}}_{v}:[u_{0},u_{1}]\rightarrow[0,1)$
with $\text{\textgreek{d}}_{v}|_{(u_{0},u_{1}]}>0$, such that 
\begin{equation}
\partial_{v}z_{1}(\bar{u},v_{0})\le\partial_{v}z_{2}(\bar{u},v_{0})\mbox{ for }\{u_{0}\le u\le u_{1}\}\cap\{v_{0}\le v\le v_{0}+\text{\textgreek{d}}_{v}(u))\}.\label{eq:AlmostThereComparison-1}
\end{equation}

From (\ref{eq:AlmostThereComparison}) and (\ref{eq:AlmostThereComparison-1}),
we infer that there exists some $\text{\textgreek{d}}>0$, such that
\begin{equation}
z_{1}\le z_{2}\mbox{ on }(u_{*},u_{*}+\text{\textgreek{d}})\times[v_{0},v_{1}].\label{eq:DoneFirstInequality}
\end{equation}
In particular, (\ref{eq:InequalityZ1,2}) holds on $[u_{0},u_{*}+\text{\textgreek{d}})\times[v_{0},v_{1}]$.
Furthermore, for any $\bar{u}\in(u_{*},u_{*}+\text{\textgreek{d}})$
and any $\bar{v}\in(v_{0},v_{0}+\text{\textgreek{d}}_{v}(\bar{u}))$,
repeating the procedure leading to (\ref{eq:ComparisonDu}) with $\bar{u}$
in place of $u_{*}$ and using (\ref{eq:AlmostThereComparison-1})
and (\ref{eq:DoneFirstInequality}) in place of (\ref{eq:ContinuityDv})
and (\ref{eq:InequalityZ1,2}), respectively, we infer that: 
\begin{equation}
\partial_{u}z_{1}(\bar{u},\bar{v})\le\partial_{u}z_{2}(\bar{u},\bar{v}).\label{eq:OneMoreComparison}
\end{equation}
Thus, combining (\ref{eq:AlmostThereComparison-2}) and (\ref{eq:OneMoreComparison}),
we infer that (\ref{eq:ContinuityDu}) holds on $[u_{0},u_{*}+\text{\textgreek{d}}')\times[v_{0},v_{1}]$,
for some $0<\text{\textgreek{d}}'\le\text{\textgreek{d}}$. Finally,
the bound (\ref{eq:ContinuityDv}) on $[u_{0},u_{*}+\text{\textgreek{d}}')\times[v_{0},v_{1}]$
follows in a similar way as the proof of (\ref{eq:ComparisonDu}),
by integrating equations (\ref{eq:Equation1}) and (\ref{eq:Equation2})
in $u\in[u_{0},u_{*}+\text{\textgreek{d}}')$ for any $\bar{v}\in(v_{0},v_{1})$
and using (\ref{eq:RelationF_1F_2}), (\ref{eq:InequalityZ1,2}) and
(\ref{eq:ContinuityDu}) (which we have shown that they hold on $[u_{0},u_{*}+\text{\textgreek{d}}')\times[v_{0},v_{1}]$).
We will omit the details.
\end{proof}

\subsubsection{\label{sub:TheInductionLemma}A lemma for a system of inductive inequalities}

The following lemma is used to show that the inductive bounds (\ref{eq:BoundForMassIncrease})
and (\ref{eq:BoundForMaxBeamSeparation}) for $\tilde{m}_{n}^{(1,k+1)}$
and $r_{n}^{(k,k+1)}$ indeed lead to the formation of an almost-trapped
surface.
\begin{lem}
\label{lem:ForMassIncrease}Let $0<c_{1}\ll1\ll C_{1}$, and $0<\text{\textgreek{m}}_{0}\ll1\ll\text{\textgreek{r}}_{0}$,
$0<\text{\textgreek{d}}\ll1$ be given variables, such that 
\begin{equation}
\Big(\frac{C_{1}}{c_{1}}\Big)^{8}\ll\frac{\text{\textgreek{r}}_{0}}{\text{\textgreek{m}}_{0}}\label{eq:LargenessRho0}
\end{equation}
and 
\begin{equation}
\text{\textgreek{d}}<\big(\frac{\text{\textgreek{m}}_{0}}{\text{\textgreek{r}}_{0}}\big)^{(C_{1}/c_{1})^{4}}.\label{eq:SmallnessDelta}
\end{equation}
Let also $\text{\textgreek{m}}_{n},\text{\textgreek{r}}_{n}>0$, be
sequences of positive numbers, with $\text{\textgreek{m}}_{n}$ increasing
in $n$, such that for $0\le n\le n_{*}$ they satisfy 
\begin{align}
\text{\textgreek{r}}_{n+1} & \le\text{\textgreek{r}}_{n}+C_{1}\log\big((1-\text{\textgreek{m}}_{n})^{-1}+1\big),\label{eq:InductiveRelationRho}\\
\text{\textgreek{m}}_{n+1} & \ge\text{\textgreek{m}}_{n}\exp\big(\frac{c_{1}}{\text{\textgreek{r}}_{n+1}}\big),\label{eq:InductiveRelationMu}
\end{align}
and 
\begin{equation}
\max_{0\le n\le n_{*}-1}\text{\textgreek{m}}_{n}<1-\text{\textgreek{d}}.\label{eq:NerTrappedMu}
\end{equation}
Then, 
\begin{equation}
n_{*}\le\big(\frac{C_{1}}{c_{1}}\big)^{3}\text{\textgreek{r}}_{0}\text{\textgreek{m}}_{0}^{-(C_{1}/c_{1})^{2}}.\label{eq:UpperBoundN0}
\end{equation}
Furthermore, if $1-\text{\textgreek{d}}\le\text{\textgreek{m}}_{n_{*}}\le1+\text{\textgreek{d}}$,
we can bound: 
\begin{equation}
\text{\textgreek{m}}_{n_{*}-1}\le1-\big(\frac{c_{1}}{C_{1}}\big)^{3}\text{\textgreek{r}}_{0}^{-2}\text{\textgreek{m}}_{0}^{2(C_{1}/c_{1})^{2}}\label{eq:UpperBoundMassN_*}
\end{equation}
and 
\begin{equation}
\max\{\text{\textgreek{r}}_{n_{*}},\text{\textgreek{r}}_{n_{*}-1}\}\le\big(\frac{C_{1}}{c_{1}}\big)^{4}\frac{\text{\textgreek{r}}_{0}}{\text{\textgreek{m}}_{0}^{(C_{1}/c_{1})^{2}}}\log\big(\frac{\text{\textgreek{r}}_{0}}{\text{\textgreek{m}}_{0}}\big).\label{eq:UpperBoundRhoN_*}
\end{equation}
\end{lem}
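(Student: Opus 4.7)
The plan is to extract from \eqref{eq:InductiveRelationRho} and \eqref{eq:InductiveRelationMu} a decoupled bound on how $\rho_n$ grows as a function of $\mu_n$, and then use this to control the total number of iterations. Using the elementary inequality $x-y\ge y\log(x/y)$ for $x\ge y>0$, the recursion \eqref{eq:InductiveRelationRho} gives
\[
\log(\rho_{n+1}/\rho_n)\le (\rho_{n+1}-\rho_n)/\rho_n\le (C_1/\rho_n)\log((1-\mu_n)^{-1}+1).
\]
A straightforward induction using \eqref{eq:LargenessRho0} and \eqref{eq:SmallnessDelta} shows that as long as $\mu_n<1-\delta$, the a priori bound $\rho_{n+1}\le 2\rho_n$ propagates, and hence by \eqref{eq:InductiveRelationMu} one has $\log(\mu_{n+1}/\mu_n)\ge c_1/\rho_{n+1}\ge c_1/(2\rho_n)$. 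Dividing yields the discrete-differential comparison
\[
\log(\rho_{n+1}/\rho_n)\le\frac{2C_1}{c_1}\log((1-\mu_n)^{-1}+1)\cdot\log(\mu_{n+1}/\mu_n).
\]

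Summing in $n$ and recognizing the right-hand side as a lower Riemann sum for the integral $\int_{\mu_0}^{\mu_n}\log((1-s)^{-1}+1)/s\,ds$ (valid since $\mu_k$ is monotone and the integrand is increasing), one obtains
\[
\log(\rho_n/\rho_0)\le\frac{C_2C_1}{c_1}\int_{\mu_0}^{\mu_n}\frac{\log((1-s)^{-1}+1)}{s}\,ds.
\]
Splitting this integral at $s=1/2$ and estimating each piece elementarily gives the bound $C_3[\log(1/\mu_0)+(1-\mu)|\log(1-\mu)|]$, and for $\mu\le 1-\delta$ the smallness hypothesis \eqref{eq:SmallnessDelta} renders the second summand negligible, yielding the central estimate
\[
\rho_n\le\rho_0\mu_0^{-C_4(C_1/c_1)}\qquad\text{whenever }\mu_n\le 1-\delta,
\]
for some absolute constant $C_4$.

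The bound \eqref{eq:UpperBoundN0} then follows by summing $\log(\mu_{k+1}/\mu_k)\ge c_1/\rho_{k+1}$ from $k=0$ to $n_*-2$ and inserting the uniform upper bound on $\rho_k$: $(n_*-1)c_1/\max_{k\le n_*-1}\rho_k\le\log(\mu_{n_*-1}/\mu_0)\le\log(1/\mu_0)$, which after routine arithmetic (absorbing $\log(1/\mu_0)\le\mu_0^{-1}$ into a slightly larger exponent) gives the claimed bound. For the refined estimates under the assumption $1-\delta\le\mu_{n_*}\le 1+\delta$, the central estimate applies at index $n_*-1$ (since $\mu_{n_*-1}<1-\delta$ by \eqref{eq:NerTrappedMu}), and one further step via \eqref{eq:InductiveRelationRho} and \eqref{eq:SmallnessDelta} gives $\rho_{n_*}\le\rho_{n_*-1}+C_1\log(1/\delta+1)$, yielding \eqref{eq:UpperBoundRhoN_*}. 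Finally, inverting \eqref{eq:InductiveRelationMu} at $n=n_*-1$,
\[
\mu_{n_*-1}\le\mu_{n_*}e^{-c_1/\rho_{n_*}}\le(1+\delta)\bigl(1-c_1/(2\rho_{n_*})\bigr)\le 1-c_1/(4\rho_{n_*}),
\]
which combined with the upper bound on $\rho_{n_*}$ yields \eqref{eq:UpperBoundMassN_*}.

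The principal obstacle is the discrete-to-continuous transition underlying the combined inequality — in particular, verifying the a priori bound $\rho_{n+1}\le 2\rho_n$ uniformly for all $n\le n_*$, which is what permits the passage from the two coupled inductive inequalities to a single comparable integral estimate. This requires the smallness condition \eqref{eq:SmallnessDelta} to absorb the otherwise dangerous $|\log(1-\mu)|$ contribution as $\mu\to 1$, and constitutes the essential purpose of the hierarchy \eqref{eq:LargenessRho0}--\eqref{eq:SmallnessDelta}; the remaining bookkeeping of constants of the form $(C_1/c_1)^C$ is routine.
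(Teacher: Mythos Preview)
Your approach --- deriving a discrete Gronwall-type comparison $\log(\rho_{n+1}/\rho_n)\le (2C_1/c_1)\,L_n\,\log(\mu_{n+1}/\mu_n)$ (with $L_n=\log((1-\mu_n)^{-1}+1)$) and summing to a Riemann integral --- is appealing, but the step you yourself flag as the principal obstacle is a genuine gap. The claim that $\rho_{n+1}\le 2\rho_n$ ``propagates'' cannot be justified from the hypotheses: the recursion \eqref{eq:InductiveRelationRho} bounds $\rho_{n+1}$ only from \emph{above} by $\rho_n+C_1L_n$, so $\rho_{n+1}\le 2\rho_n$ would require the \emph{lower} bound $\rho_n\ge C_1L_n$, and no lower bound on $\rho_n$ is available for $n\ge 1$. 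Replacing $\rho_n$ by the running maximum $\tilde\rho_n=\max_{k\le n}\rho_k\ge\rho_0$ does not close the induction either: when $\mu_n$ approaches $1-\delta$ one has $L_n\sim\log\delta^{-1}$, and \eqref{eq:SmallnessDelta} gives only a \emph{lower} bound $\log\delta^{-1}>(C_1/c_1)^4\log(\rho_0/\mu_0)$ --- precisely the wrong direction for the ``absorption'' you describe. Your central estimate is an \emph{upper} bound on $\tilde\rho_n$, so it cannot supply the needed lower bound, and the bootstrap fails. The same misreading of \eqref{eq:SmallnessDelta} recurs in your derivation of \eqref{eq:UpperBoundRhoN_*}: the one-step bound $\rho_{n_*}\le\rho_{n_*-1}+C_1\log(\delta^{-1}+1)$ cannot yield a $\delta$-independent estimate, since $\delta$ may be arbitrarily small.

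The paper circumvents this by working \emph{additively} rather than multiplicatively. One writes $\rho_n\le\rho_0+C_1\sum_{k<n}L_k$ and controls the sum via a dyadic decomposition of the range of $\mu$: setting $n_j=\max\{n\le n_*:\mu_n\le 1-2^{-j}\}$, on each dyadic level $L_k\lesssim j$ and the number of steps $n_j-n_{j-1}$ is bounded using \eqref{eq:InductiveRelationMu} together with the \emph{upper} bound on $\rho$ just obtained (which gives a lower bound on $1/\rho$, hence on the growth of $\log\mu$). No control on the ratio $\rho_{n+1}/\rho_n$ is ever required. For the $\delta$-independent final bounds \eqref{eq:UpperBoundMassN_*}--\eqref{eq:UpperBoundRhoN_*}, the paper identifies a dyadic level $k_2$ (defined without reference to $\delta$) past which the process must terminate within two further steps, and treats those last steps by a direct case analysis.
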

\begin{rem*}
Notice that the right hand side of (\ref{eq:UpperBoundN0}) is independent
of $\text{\textgreek{d}}$. \end{rem*}
\begin{proof}
Let us define for any integer $k\ge1$
\begin{equation}
n_{k}=\max\Big\{0\le n\le n_{*}:\mbox{ }\text{\textgreek{m}}_{l}\le1-\frac{1}{2^{k}}\mbox{ for all }0\le l\le n\Big\},\label{eq:DefinitionNk}
\end{equation}
using the convention 
\begin{equation}
n_{0}=0.
\end{equation}
Notice that, in view of the fact that the sequence $\text{\textgreek{m}}_{n}$
is increasing, for all $k\ge1$ and all $n_{k-1}<n\le n_{k}$ we can
estimate: 
\begin{equation}
1-\frac{1}{2^{k-1}}\le\text{\textgreek{m}}_{n}\le1-\frac{1}{2^{k}}\label{eq:TrivialBoundMuN}
\end{equation}
(note that, in the case $n_{k-1}=n_{k}$, there is no $n$ satisfying
$n_{k-1}<n\le n_{k}$ and (\ref{eq:TrivialBoundMuN})).

Using (\ref{eq:TrivialBoundMuN}), from (\ref{eq:InductiveRelationRho})
we can bound for any $k\ge1$ such that $n_{k-1}<n_{k}$ and any $n_{k-1}<n\le n_{k}$:
\begin{equation}
\text{\textgreek{r}}_{n}\le\text{\textgreek{r}}_{n_{k-1}}+2C_{1}(\log2)k(n-n_{k-1})\label{eq:InductiveRelationSeperation}
\end{equation}
and, therefore, for any $0\le n\le n_{k}$ we have: 
\begin{equation}
\text{\textgreek{r}}_{n}\le2C_{1}(\log2)\Big(\sum_{l=1}^{k-1}l(n_{l}-n_{l-1})+k(n-n_{k-1})\Big)+\text{\textgreek{r}}_{0}\label{eq:BestEstimateForRhoN}
\end{equation}
(note that $\eqref{eq:BestEstimateForRhoN}$ holds for all $0\le n\le n_{k}$,
while the bounds (\ref{eq:TrivialBoundMuN}) and (\ref{eq:InductiveRelationSeperation})
are non-trivial only for those values of $k$ for which $n_{k}>n_{k-1}$). 

Let us set 
\begin{equation}
k_{1}\doteq32\lceil\log\frac{C_{1}}{c_{_{1}}}\rceil.\label{eq:k_1}
\end{equation}
Then, (\ref{eq:InductiveRelationSeperation}) implies that, for all
$0\le n\le n_{k_{1}}$ 
\begin{equation}
\text{\textgreek{r}}_{n}\le\text{\textgreek{r}}_{0}+2C_{1}(\log2)k_{1}n\label{eq:BoundRhoBeforeN_k1}
\end{equation}
and, thus, (\ref{eq:InductiveRelationMu}) implies that 
\begin{equation}
\log\big(\frac{\text{\textgreek{m}}_{n_{k_{1}}}}{\text{\textgreek{m}}_{0}}\big)\ge c_{1}\sum_{n=1}^{n_{k_{1}}}\text{\textgreek{r}}_{n}^{-1}\ge c_{1}\sum_{n=1}^{n_{k_{1}}}\frac{1}{\text{\textgreek{r}}_{0}+2C_{1}(\log2)k_{1}n}\ge\frac{c_{1}\log\Big(\frac{\text{\textgreek{r}}_{0}+2C_{1}(\log2)k_{1}n_{k_{1}}}{\text{\textgreek{r}}_{0}+2C_{1}(\log2)k_{1}}\Big)}{4C_{1}(\log2)k_{1}}.\label{eq:AlmostBoundNk_1}
\end{equation}
From (\ref{eq:DefinitionNk}) and (\ref{eq:AlmostBoundNk_1}) we readily
infer that 
\begin{equation}
\frac{c_{1}\log\Big(\frac{\text{\textgreek{r}}_{0}+2C_{1}(\log2)k_{1}n_{k_{1}}}{\text{\textgreek{r}}_{0}+2C_{1}(\log2)k_{1}}\Big)}{4C_{1}(\log2)k_{1}}\le-\log(\text{\textgreek{m}}_{0})
\end{equation}
and, therefore (using also (\ref{eq:LargenessRho0})): 
\begin{equation}
n_{k_{1}}\le\frac{\text{\textgreek{r}}_{0}}{\text{\textgreek{m}}_{0}^{(C_{1}/c_{1})^{2}}}.\label{eq:FinalBoundNk_1}
\end{equation}

For any $k\ge2$ such that $n_{k}>n_{k-1}+1$, from (\ref{eq:InductiveRelationMu}),
(\ref{eq:DefinitionNk}), (\ref{eq:TrivialBoundMuN}) and (\ref{eq:BestEstimateForRhoN})
we readily infer: 
\begin{equation}
\frac{1}{2^{k-2}}\ge\log\frac{\text{\textgreek{m}}_{n_{k}}}{\text{\textgreek{m}}_{n_{k-1}+1}}\ge c_{1}\sum_{n=n_{k-1}+2}^{n_{k}}\text{\textgreek{r}}_{n}^{-1}\ge\frac{c_{1}}{4C_{1}(\log2)}\frac{1}{(k-1)+\sum_{l=2}^{k-1}(l-1)\frac{(n_{l}-n_{l-1})}{n_{k}-n_{k-1}-1}}
\end{equation}
and, hence: 
\begin{align}
n_{k}-n_{k-1}-1 & \le\frac{\sum_{l=2}^{k-1}(l-1)(n_{l}-n_{l-1})}{\frac{c_{1}}{4C_{1}(\log2)}2^{k-2}-(k-1)}\label{eq:InductiveRelationNk}\\
 & \le\frac{k(k-1)}{2}\frac{\max_{2\le l\le k-1}(n_{l}-n_{l-1})}{\frac{c_{1}}{4C_{1}(\log2)}2^{k-2}-(k-1)}\nonumber \\
 & \le\frac{k(k-1)}{2}\frac{1}{\frac{c_{1}}{4C_{1}(\log2)}2^{k-2}-(k-1)}n_{k-1}.\nonumber 
\end{align}
The relation (\ref{eq:InductiveRelationNk}) also holds (trivially)
when $n_{k}\le n_{k-1}+1$. Thus, for any $k\ge k_{1}$, the bound
(\ref{eq:InductiveRelationNk}) yields:
\begin{equation}
n_{k}\le\big(1+\frac{C_{1}}{c_{1}}2^{-\frac{k-2}{4}}\big)n_{k-1}+1
\end{equation}
and, therefore, for any $k\ge2$: 
\begin{equation}
n_{k}\le16\frac{C_{1}}{c_{1}}\big(n_{k_{1}}+\max\{k-k_{1},0\}\big).\label{eq:FirstTotalUpperBoundNk}
\end{equation}
In view of (\ref{eq:FinalBoundNk_1}), we thus obtain for any $k\ge2$:
\begin{equation}
n_{k}\le16\frac{C_{1}}{c_{1}}\big(\frac{\text{\textgreek{r}}_{0}}{\text{\textgreek{m}}_{0}^{(C_{1}/c_{1})^{2}}}+\max\{k-k_{1},0\}\big).\label{eq:RoughGrowthRateNk}
\end{equation}

Let us set 
\begin{equation}
k_{2}\doteq4k_{1}+2\frac{\log\big(\text{\textgreek{r}}_{0}/\text{\textgreek{m}}_{0}^{(C_{1}/c_{1})^{2}}\big)}{\log2}.\label{eq:DefinitionK2}
\end{equation}
Note that (\ref{eq:SmallnessDelta}), (\ref{eq:NerTrappedMu}) and
(\ref{eq:DefinitionNk}) implies that 
\begin{equation}
n_{k_{2}}\le n_{*}-1.
\end{equation}
In view of (\ref{eq:RoughGrowthRateNk}), we have 
\begin{equation}
n_{k_{2}}\le\big(\frac{C_{1}}{c_{1}}\big)^{3}\frac{\text{\textgreek{r}}_{0}}{\text{\textgreek{m}}_{0}^{(C_{1}/c_{1})^{2}}}\label{eq:UpperBoundNk_2}
\end{equation}
and, for all $k\ge k_{2}$ (in view of (\ref{eq:InductiveRelationNk})
and (\ref{eq:RoughGrowthRateNk})):
\begin{equation}
n_{k}-n_{k-1}\le1.
\end{equation}
Furthermore, (\ref{eq:BestEstimateForRhoN}) implies (in view of (\ref{eq:DefinitionK2})
and (\ref{eq:UpperBoundNk_2})) that 
\begin{equation}
\max\{\text{\textgreek{r}}_{n_{k_{2}}+1},\text{\textgreek{r}}_{n_{k_{2}}}\}\le\big(\frac{C_{1}}{c_{1}}\big)^{3}\frac{\text{\textgreek{r}}_{0}}{\text{\textgreek{m}}_{0}^{(C_{1}/c_{1})^{2}}}\log\big(\frac{\text{\textgreek{r}}_{0}}{\text{\textgreek{m}}_{0}}\big).\label{eq:UpperBoundRhok_2}
\end{equation}

In view of (\ref{eq:DefinitionNk}), we have 
\begin{equation}
\text{\textgreek{m}}_{n_{k_{2}}}\le1-2^{-k_{2}}<\text{\textgreek{m}}_{n_{k_{2}}+1}.\label{eq:SeperationMn_k_2}
\end{equation}
We will consider two cases, depending on whether $\text{\textgreek{m}}_{n_{k_{2}}+1}$
is larger than $1-\text{\textgreek{d}}$ or not.

\begin{enumerate}

\item In the case $\text{\textgreek{m}}_{n_{k_{2}}+1}\ge1-\text{\textgreek{d}}$,
(\ref{eq:NerTrappedMu}) implies that $n_{k_{2}}+1=n_{*}$. Thus,
(\ref{eq:UpperBoundN0}) follows from (\ref{eq:UpperBoundNk_2}).
Furthermore, (\ref{eq:UpperBoundRhoN_*}) follows from (\ref{eq:UpperBoundRhok_2}),
while (\ref{eq:UpperBoundMassN_*}) follows from (\ref{eq:SeperationMn_k_2}).

\item In the case $\text{\textgreek{m}}_{n_{k_{2}}+1}<1-\text{\textgreek{d}}$,
we can assume without loss of generality that $n_{k_{2}}\le n_{*}-2$
(otherwise, (\ref{eq:UpperBoundN0}) follows from (\ref{eq:UpperBoundNk_2})).
From (\ref{eq:InductiveRelationRho}), (\ref{eq:UpperBoundRhok_2})
and (\ref{eq:SeperationMn_k_2}), we thus infer that 
\begin{equation}
\text{\textgreek{r}}_{n_{k_{2}}+2}\le\big(\frac{C_{1}}{c_{1}}\big)^{3}\frac{\text{\textgreek{r}}_{0}}{\text{\textgreek{m}}_{0}^{(C_{1}/c_{1})^{2}}}\log\big(\frac{\text{\textgreek{r}}_{0}}{\text{\textgreek{m}}_{0}}\big)+C_{1}\log\big((1-\text{\textgreek{m}}_{n_{k_{2}}+1})^{-1}\big)\label{eq:UpperBoundNextRho}
\end{equation}
Hence, setting 
\begin{equation}
M\doteq\big(\frac{C_{1}}{c_{1}}\big)^{3}\frac{\text{\textgreek{r}}_{0}}{\text{\textgreek{m}}_{0}^{(C_{1}/c_{1})^{2}}}\log\big(\frac{\text{\textgreek{r}}_{0}}{\text{\textgreek{m}}_{0}}\big),\label{eq:DefinitionM}
\end{equation}
from (\ref{eq:InductiveRelationMu}) and (\ref{eq:SeperationMn_k_2})
we calculate: 
\begin{align}
\text{\textgreek{m}}_{n_{k_{2}}+2} & \ge\text{\textgreek{m}}_{n_{k_{2}}+1}\exp\Big(\frac{c_{1}}{\text{\textgreek{r}}_{n_{k_{2}}+2}}\Big)\label{eq:LowerBoundMnBiggerThan1}\\
 & \ge\begin{cases}
(1-2^{-k_{2}})e^{\frac{c_{1}}{2M}}, & \mbox{if }\log\big((1-\text{\textgreek{m}}_{n_{k_{2}}+1})^{-1}\big)\le\frac{M}{C_{1}}\\
\text{\textgreek{m}}_{n_{k_{2}}+1}\Big(1+\frac{c_{1}}{C_{1}\log\big((1-\text{\textgreek{m}}_{n_{k_{2}}+1})^{-1}+1\big)}\Big), & \mbox{if }\log\big((1-\text{\textgreek{m}}_{n_{k_{2}}+1})^{-1}\big)>\frac{M}{C_{1}}.
\end{cases}\nonumber 
\end{align}
If $\log\big((1-\text{\textgreek{m}}_{n_{k_{2}}+1})^{-1}\big)\le\frac{M}{C_{1}}$,
in view of (\ref{eq:DefinitionK2}) and (\ref{eq:DefinitionM}) we
can bound (using also (\ref{eq:LargenessRho0})) 
\begin{equation}
(1-2^{-k_{2}})e^{\frac{c_{1}}{2M}}\ge\Big(1-\frac{\text{\textgreek{m}}_{0}^{2(C_{1}/c_{1})^{2}}}{\text{\textgreek{r}}_{0}^{2}}\Big)\Big(1+\frac{c_{1}}{2}\big(\frac{c_{1}}{C_{1}}\big)^{3}\frac{\text{\textgreek{m}}_{0}^{(C_{1}/c_{1})^{2}}}{\text{\textgreek{r}}_{0}\log\big(\frac{\text{\textgreek{r}}_{0}}{\text{\textgreek{m}}_{0}}\big)}\Big)>1+\text{\textgreek{d}}.
\end{equation}
If $\log\big((1-\text{\textgreek{m}}_{n_{k_{2}}+1})^{-1}\big)>\frac{M}{C_{1}}\gg(C_{1}/c_{1})^{3}$,
we can also estimate: 
\begin{equation}
\text{\textgreek{m}}_{n_{k_{2}}+1}\Big(1+\frac{c_{1}}{C_{1}\log\big((1-\text{\textgreek{m}}_{n_{k_{2}}+1})^{-1}+1\big)}\Big)\ge\Big(1-e^{-\log\big((1-\text{\textgreek{m}}_{n_{k_{2}}+1})^{-1}\big)}\Big)\Big(1+\frac{c_{1}}{C_{1}\log\big((1-\text{\textgreek{m}}_{n_{k_{2}}+1})^{-1}+1\big)}\Big)>1+\text{\textgreek{d}}.
\end{equation}
Therefore, (\ref{eq:LowerBoundMnBiggerThan1}) implies that 
\begin{equation}
\text{\textgreek{m}}_{n_{k_{2}}+2}>1+\text{\textgreek{d}}\label{eq:MuBiggerThan1SecondCase}
\end{equation}
and, hence, $n_{k_{2}+2}=n_{*}$. Thus, (\ref{eq:UpperBoundN0}) follows
again from (\ref{eq:UpperBoundNk_2}).

\end{enumerate}
\end{proof}

\subsubsection{\label{sub:Cauchy-Stability-Backwards}Cauchy stability backwards
in time}

The following lemma, which is essentially a backwards-in-time Cauchy
stability estimate for late time perturbations of $(\mathcal{U}_{\text{\textgreek{e}}};r,\text{\textgreek{W}}^{2},\bar{f}_{in},\bar{f}_{out})$,
is an easy corollary of Theorem \ref{prop:CauchyStability}.
\begin{lem}
\label{lem:PerturbationInitialData} For any $0<\text{\textgreek{e}}<\text{\textgreek{e}}_{0}$
(provided $\text{\textgreek{e}}_{0}$ is sufficiently small), any
$r_{0}>0$ satisfying (\ref{eq:BoundMirror}), let $(\mathcal{U}_{\text{\textgreek{e}}};r,\text{\textgreek{W}}^{2},\bar{f}_{in},\bar{f}_{out})$
be the maximal future development of $(r_{/\text{\textgreek{e}}},\text{\textgreek{W}}_{/\text{\textgreek{e}}}^{2},\bar{f}_{in/\text{\textgreek{e}}},\bar{f}_{out/\text{\textgreek{e}}})$,
and let us set 
\begin{equation}
C_{\text{\textgreek{e}}}\doteq\exp\Big(\exp\big(-2(h_{0}(\text{\textgreek{e}}))^{-4}\big)(h_{1}(\text{\textgreek{e}}))^{-4}\Big).\label{eq:Cepsilon}
\end{equation}
Then, for any $0\le u_{*}\le(h_{1}(\text{\textgreek{e}}))^{-2}v_{0\text{\textgreek{e}}}$
such that 
\begin{equation}
\mathcal{W}_{u_{*}}\doteq\{0<u\le u_{*}\}\cap\{u<v<u+v_{0\text{\textgreek{e}}}\}\subset\mathcal{U}_{\text{\textgreek{e}}},
\end{equation}
\begin{equation}
\sup_{\mathcal{W}_{u_{*}}}\big(1-\frac{2\tilde{m}}{r}\big)^{-1}\le C_{\text{\textgreek{e}}}\label{eq:UpperBoundTrappingInTheRegionOfPerturbation}
\end{equation}
and 
\begin{equation}
u_{*}+v_{0\text{\textgreek{e}}}\notin supp\big(r^{2}T_{vv}(u_{*},\cdot)\big),\label{eq:EbdsNotInSupport}
\end{equation}
and for any $\tilde{T}_{vv}:(u_{*},u_{*}+v_{0\text{\textgreek{e}}})\rightarrow\mathbb{R}$
smooth and compactly supported satisfying $\tilde{T}_{vv}(\cdot)\ge-T_{vv}(u_{*},\cdot)$
and 
\begin{equation}
\sup_{u_{*}\le\bar{v}\le u_{*}+v_{0\text{\textgreek{e}}}}(-\Lambda)\int_{u_{*}}^{u_{*}+v_{0\text{\textgreek{e}}}}\frac{r^{2}(u_{*},v)\frac{|\tilde{T}_{vv}(v)|}{\partial_{v}\text{\textgreek{r}}(u_{*},v)}}{|\text{\textgreek{r}}(u_{*},v)-\text{\textgreek{r}}(u_{*},\bar{v})|+\text{\textgreek{r}}(u_{*},u_{*})}dv\le\exp\big(-C_{\text{\textgreek{e}}}^{2}\frac{u_{*}}{v_{0\text{\textgreek{e}}}}\big)(h_{1}(\text{\textgreek{e}}))^{2}\label{eq:SmallnessPerturbation}
\end{equation}
with 
\begin{equation}
\text{\textgreek{r}}(u,v)\doteq\tan^{-1}\Big(\sqrt{-\frac{\Lambda}{3}}r\Big),\label{eq:RhoFunctionAuxiliary}
\end{equation}
the following statement holds: There exists a smooth asymptotically
AdS boundary-characteristic initial data set $(r_{/\text{\textgreek{e}}}^{\prime},(\text{\textgreek{W}}_{/\text{\textgreek{e}}}^{\prime})^{2},\bar{f}_{in/\text{\textgreek{e}}}^{\prime},\bar{f}_{out/\text{\textgreek{e}}}^{\prime})$
on $\{u=0\}$ for the system (\ref{eq:RequationFinal})--(\ref{eq:OutgoingVlasovFinal})
satisfying the reflecting gauge condition at $r=r_{0\text{\textgreek{e}}},+\infty$
with the following properties:

\begin{enumerate}

\item The initial data sets $(r_{/\text{\textgreek{e}}},\text{\textgreek{W}}_{/\text{\textgreek{e}}}^{2},\bar{f}_{in/\text{\textgreek{e}}},\bar{f}_{out/\text{\textgreek{e}}})$
and $(r_{/\text{\textgreek{e}}}^{\prime},(\text{\textgreek{W}}_{/\text{\textgreek{e}}}^{\prime})^{2},\bar{f}_{in/\text{\textgreek{e}}}^{\prime},\bar{f}_{out/\text{\textgreek{e}}}^{\prime})$
are $(h_{1}(\text{\textgreek{e}}))^{2}$ close in the (\ref{eq:GeometricNormForCauchyStability})
norm, and in particular: 
\begin{align}
\sup_{v\in[0,v_{0\text{\textgreek{e}}})}\Bigg\{\Big|\log\big(\frac{\text{\textgreek{W}}_{/\text{\textgreek{e}}}^{2}}{1-\frac{1}{3}\Lambda r_{/\text{\textgreek{e}}}^{2}}\big)-\log\big(\frac{(\text{\textgreek{W}}_{/\text{\textgreek{e}}}^{\prime})^{2}}{1-\frac{1}{3}\Lambda(r_{/\text{\textgreek{e}}}^{\prime})^{2}}\big)\Big|+\Big|\log\Big(\frac{2\partial_{v}r_{/\text{\textgreek{e}}}}{1-\frac{2m_{/\text{\textgreek{e}}}}{r_{/\text{\textgreek{e}}}}}\Big)-\log\Big(\frac{2\partial_{v}r_{/\text{\textgreek{e}}}^{\prime}}{1-\frac{2m_{/\text{\textgreek{e}}}^{\prime}}{r_{/\text{\textgreek{e}}}^{\prime}}}\Big)\Big|+\label{eq:GaugeDifferenceBoundCauchystability-1}\\
+\Big|\log\Big(\frac{1-\frac{2m_{/_{\text{\textgreek{e}}}}}{r_{/\text{\textgreek{e}}}}}{1-\frac{1}{3}\Lambda r_{/\text{\textgreek{e}}}^{2}}\Big)-\log\Big(\frac{1-\frac{2m_{/\text{\textgreek{e}}}^{\prime}}{r_{/\text{\textgreek{e}}}^{\prime}}}{1-\frac{1}{3}\Lambda(r_{/\text{\textgreek{e}}}^{\prime})}\Big)\Big|+\sqrt{-\Lambda}|\tilde{m}_{/\text{\textgreek{e}}}-\tilde{m}_{/\text{\textgreek{e}}}^{\prime}|\Bigg\}(v) & \le(h_{1}(\text{\textgreek{e}}))^{2}\nonumber 
\end{align}
and 
\begin{equation}
\sup_{\bar{v}\in[0,v_{0\text{\textgreek{e}}}]}\int_{0}^{v_{0\text{\textgreek{e}}}}\frac{\big|r_{/\text{\textgreek{e}}}^{2}\frac{(T_{vv})_{/\text{\textgreek{e}}}}{\partial_{v}\text{\textgreek{r}}_{/\text{\textgreek{e}}}}(v)-(r_{/\text{\textgreek{e}}}^{\prime})^{2}\frac{(T_{vv})_{/\text{\textgreek{e}}}^{\prime}}{\partial_{v}\text{\textgreek{r}}_{/\text{\textgreek{e}}}^{\prime}}(v)\big|}{|\text{\textgreek{r}}_{/\text{\textgreek{e}}}(\bar{v})-\text{\textgreek{r}}_{/\text{\textgreek{e}}}(v)|+\text{\textgreek{r}}_{/\text{\textgreek{e}}}(0)}\, dv\le(h_{1}(\text{\textgreek{e}}))^{2}.\label{eq:DifferenceBoundCauchyStability-1}
\end{equation}

\item The maximal future development $(\mathcal{U}_{\text{\textgreek{e}}}^{\prime};r^{\prime},(\text{\textgreek{W}}^{\prime})^{2},\bar{f}_{in}^{\prime},\bar{f}_{out}^{\prime})$
of $(r_{/\text{\textgreek{e}}}^{\prime},(\text{\textgreek{W}}_{/\text{\textgreek{e}}}^{\prime})^{2},\bar{f}_{in/\text{\textgreek{e}}}^{\prime},\bar{f}_{out/\text{\textgreek{e}}}^{\prime})$
satisfies 
\begin{equation}
\mathcal{W}_{u_{*}}\subset\mathcal{U}_{\text{\textgreek{e}}}^{\prime},\label{eq:ComparableDevelopments}
\end{equation}
\begin{equation}
r^{\prime}|_{\{u=u_{*}\}\cap supp(T_{vv})}=r|_{\{u=u_{*}\}\cap supp(T_{vv})}\label{eq:EqualROnTheSupport}
\end{equation}
 and 
\begin{equation}
T_{vv}^{\prime}|_{\{u=u_{*}\}}=T_{vv}|_{\{u=u_{*}\}}+\tilde{T}_{vv}.\label{eq:NewIngoingEnergyMomentum}
\end{equation}

\end{enumerate}\end{lem}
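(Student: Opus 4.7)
The plan is to construct the modified data at $u=0$ by first prescribing a perturbed characteristic initial data set at the late-time surface $u=u_*$ and then solving \eqref{eq:RequationFinal}--\eqref{eq:OutgoingVlasovFinal} backwards in time to produce $(r_{/\text{\textgreek{e}}}^{\prime},(\text{\textgreek{W}}_{/\text{\textgreek{e}}}^{\prime})^{2},\bar{f}_{in/\text{\textgreek{e}}}^{\prime},\bar{f}_{out/\text{\textgreek{e}}}^{\prime})$ on $\{u=0\}$, and finally applying the backwards-in-time version of Theorem \ref{prop:CauchyStability} to $(r,\text{\textgreek{W}}^{2},\bar{f}_{in},\bar{f}_{out})$ (the unperturbed solution) and the new solution to control their difference on $\{u=0\}$ by the size of $\tilde{T}_{vv}$.

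Concretely, my first step is to define the perturbed outgoing characteristic data on $\{u=u_{*}\}\cap\{u_{*}<v<u_{*}+v_{0\text{\textgreek{e}}}\}$. I set $r^{\prime}(u_{*},\cdot)=r(u_{*},\cdot)$ on the support of the original $T_{vv}(u_{*},\cdot)$ together with on its complement in order to guarantee \eqref{eq:EqualROnTheSupport}, and I prescribe $(T_{vv})^{\prime}(u_{*},\cdot)=T_{vv}(u_{*},\cdot)+\tilde{T}_{vv}$. Because $\tilde T_{vv}\ge -T_{vv}$, we have $(T_{vv})^{\prime}\ge 0$, so this data is realized by a nonnegative $\bar{f}_{out}^{\prime}(u_{*},\cdot)$ via the ansatz fixing the $p^{v}$-profile to be that of $\bar{f}_{out}(u_{*},\cdot)$ where it is nonzero and a smooth fixed template elsewhere. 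The corresponding $(\text{\textgreek{W}}^{\prime})^{2}(u_{*},\cdot)$ and $\partial_{u}r^{\prime}(u_{*},\cdot)$ are then obtained from the constraints \eqref{eq:ConstrainVFinal}--\eqref{eq:ConstraintUFinal} with $r^{\prime},(T_{vv})^{\prime}$ in place of $r,T_{vv}$ and the gauge condition at the endpoint; assumption \eqref{eq:EbdsNotInSupport} ensures the $\tilde{T}_{vv}$ piece does not interfere with the asymptotic compatibility at $\mathcal{I}$.

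Next, I solve backwards in time on $\mathcal{W}_{u_{*}}$ with this perturbed data on $\{u=u_{*}\}$, together with the unchanged ingoing data from $\gamma_{0}$ and $\mathcal{I}$ (i.e.~the reflecting boundary conditions), obtaining $(r^{\prime},(\text{\textgreek{W}}^{\prime})^{2},\bar{f}_{in}^{\prime},\bar{f}_{out}^{\prime})$ on $\mathcal{W}_{u_{*}}$. The a priori bound \eqref{eq:UpperBoundTrappingInTheRegionOfPerturbation} on the original solution, together with the smallness \eqref{eq:SmallnessPerturbation} of $\tilde{T}_{vv}$, allow me to run a standard bootstrap argument for the backwards evolution that keeps $(r^{\prime},\ldots)$ close to $(r,\ldots)$ on all of $\mathcal{W}_{u_{*}}$ and uniformly non-trapped. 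Restricting to $\{u=0\}$ defines $(r_{/\text{\textgreek{e}}}^{\prime},(\text{\textgreek{W}}_{/\text{\textgreek{e}}}^{\prime})^{2},\bar{f}_{in/\text{\textgreek{e}}}^{\prime},\bar{f}_{out/\text{\textgreek{e}}}^{\prime})$, after which the relations \eqref{eq:ComparableDevelopments}--\eqref{eq:NewIngoingEnergyMomentum} are automatic from the construction.

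The hard part is the quantitative bound \eqref{eq:GaugeDifferenceBoundCauchystability-1}--\eqref{eq:DifferenceBoundCauchyStability-1}. I will derive it from the backwards analogue of Theorem \ref{prop:CauchyStability} applied on the rectangle $\mathcal{W}_{u_{*}}$. The non-trapping assumption \eqref{eq:UpperBoundTrappingInTheRegionOfPerturbation} yields, together with the estimates \eqref{eq:RoughBoundGeometry}--\eqref{eq:UpperBoundForAxisInteractionProp}, a bound on the constant $C_{0}$ in \eqref{eq:UpperBoundNonTrappingForCauchyStability} of order $\log C_{\text{\textgreek{e}}}\lesssim \exp\bigl(-2(h_{0}(\text{\textgreek{e}}))^{-4}\bigr)(h_{1}(\text{\textgreek{e}}))^{-4}$. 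The backwards Cauchy stability statement then propagates the $L^{1}$-type bound \eqref{eq:SmallnessPerturbation} on the $u=u_{*}$ slice back to $u=0$ with multiplicative constant $\exp\bigl(\exp\bigl(C_{1}(1+C_{0})\bigr)\tfrac{u_{*}}{v_{0\text{\textgreek{e}}}}\bigr)$. The whole proof therefore reduces to the verification that the choice \eqref{eq:Cepsilon} of $C_{\text{\textgreek{e}}}$ is large enough so that
\begin{equation*}
\exp\Bigl(\exp\bigl(C_{1}(1+C_{0})\bigr)\tfrac{u_{*}}{v_{0\text{\textgreek{e}}}}\Bigr)\cdot\exp\bigl(-C_{\text{\textgreek{e}}}^{2}\tfrac{u_{*}}{v_{0\text{\textgreek{e}}}}\bigr)(h_{1}(\text{\textgreek{e}}))^{2}\le (h_{1}(\text{\textgreek{e}}))^{2},
\end{equation*}
which holds because $C_{\text{\textgreek{e}}}^{2}$ dominates $\exp(C_{1}(1+C_{0}))$ by definition \eqref{eq:Cepsilon}. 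This is the main obstacle, and once it is settled, \eqref{eq:GaugeDifferenceBoundCauchystability-1} and \eqref{eq:DifferenceBoundCauchyStability-1} follow.
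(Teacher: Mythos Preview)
Your overall strategy---perturb the characteristic data on $\{u=u_*\}$, solve backwards, and invoke the backwards-in-time version of Theorem~\ref{prop:CauchyStability}---is exactly what the paper does. There are, however, two issues.

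First, a minor slip: the component $T_{vv}$ is associated to the \emph{ingoing} field $\bar f_{in}$ (see \eqref{eq:T_vvComponent}), not $\bar f_{out}$; your perturbed $T_{vv}'$ must be realised by modifying $\bar f_{in}'(u_*,\cdot)$, with $\bar f_{out}'(u_*,\cdot)=\bar f_{out}(u_*,\cdot)$ left unchanged. Relatedly, you cannot simply set $r'(u_*,\cdot)=r(u_*,\cdot)$ everywhere and then solve the constraint for $\Omega'^2$: the reflecting gauge at $r=+\infty$ may fail. The paper only fixes $r'_{/}=r$ on $\mathrm{supp}(T_{vv})$ and deforms $r'_{/}$ near $v=u_*+v_{0\epsilon}$ (using \eqref{eq:EbdsNotInSupport}) to restore the gauge condition.

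The more serious gap is in your bound on $C_0$. You assert $C_0\lesssim\log C_\epsilon$ from \eqref{eq:UpperBoundTrappingInTheRegionOfPerturbation} and \eqref{eq:RoughBoundGeometry}--\eqref{eq:UpperBoundForAxisInteractionProp}, but this is not immediate: the terms $\big|\log\big(\tfrac{\partial_v r}{1-2m/r}\big)\big|$ in \eqref{eq:UpperBoundNonTrappingForCauchyStability} are obtained by iterating \eqref{eq:TotalChangeKappaBarEachIteration}--\eqref{eq:TotalChangeKappaEachIteration} over $\sim(h_1(\epsilon))^{-3}$ steps, and if each step contributes a factor $\log C_\epsilon$ (the worst case allowed by \eqref{eq:UpperBoundTrappingInTheRegionOfPerturbation} alone) you end up with $C_0\sim(h_1(\epsilon))^{-3}\log C_\epsilon$, which makes $\exp(C_1(1+C_0))\gg C_\epsilon^2$ and your displayed inequality fails. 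The paper avoids this by first proving the structural bound
\[
\sup_{\mathcal{W}_{u_*}\setminus\bigcup_n\mathcal{R}_n^{(1,k+1)}}\Big(1-\frac{2\tilde m}{r}\Big)^{-1}\le 2\exp\big((h_0(\epsilon))^{-4}\big)
\]
(a consequence of \eqref{eq:NotEnoughMassBehind} and \eqref{eq:BoundSecondBeamchanged}), so that in each reflection only the single region $\mathcal{R}_n^{(1,k+1)}$ contributes the large factor $\log C_\epsilon$; all other factors are of size $(h_0(\epsilon))^{-4}$. This yields $C_0\lesssim(h_1(\epsilon))^{-3}\exp\big((h_0(\epsilon))^{-4}\big)$, which, by \eqref{eq:h_1_h_0_definition}, is $\ll\log C_\epsilon$ and makes your final inequality hold. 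You need to invoke this refined estimate rather than just \eqref{eq:UpperBoundTrappingInTheRegionOfPerturbation}.
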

\begin{proof}
In view of (\ref{eq:BoundMirror}), (\ref{eq:MassInfinity}), (\ref{eq:NotEnoughMassBehind})
and (\ref{eq:BoundSecondBeamchanged}), we can readily estimate 
\begin{equation}
\sup_{\mathcal{W}_{u_{*}}\backslash\cup_{n}\mathcal{R}_{n}^{(1,k+1)}}\Big(1-\frac{2\tilde{m}}{r}\Big)^{-1}\le2\exp\Big((h_{0}(\text{\textgreek{e}}))^{-4}\Big).\label{eq:TotalDistanceFromTrappingAway}
\end{equation}
Therefore, using (\ref{eq:UpperBoundTrappingInTheRegionOfPerturbation})
for $\cup_{n}\mathcal{R}_{n}^{(1,k+1)}$ and (\ref{eq:TotalDistanceFromTrappingAway})
for $\mathcal{W}_{u_{*}}\backslash\cup_{n}\mathcal{R}_{n}^{(1,k+1)}$,
the relations (\ref{eq:TotalChangeKappaBarEachIteration}) and (\ref{eq:TotalChangeKappaEachIteration})
imply (in view of (\ref{eq:h_2definition}), (\ref{eq:BoundForRAwayInteractionProp})
and the fact that $u_{*}\le(h_{1}(\text{\textgreek{e}}))^{-2}v_{0\text{\textgreek{e}}}$)
that 
\begin{equation}
\sup_{\mathcal{W}_{u_{*}}}\Bigg(\Big|\log\Big(\frac{-\partial_{u}r}{1-\frac{2m}{r}}\Big)\Big|+\Big|\log\Big(\frac{\partial_{v}r}{1-\frac{2m}{r}}\Big)\Big|\Bigg)\le(h_{1}(\text{\textgreek{e}}))^{-3}\exp\big((h_{0}(\text{\textgreek{e}}))^{-4}\big).\label{eq:BoundDuRDvRAwayFromSevereTrapping}
\end{equation}
Similarly, equations (\ref{eq:DerivativeInUDirectionKappa}) and (\ref{eq:DerivativeInVDirectionKappaBar}),
in view of the relations (\ref{eq:TotalChangeKappaBarEachIteration}),
(\ref{eq:TotalChangeKappaEachIteration}) (using again the bounds
(\ref{eq:UpperBoundTrappingInTheRegionOfPerturbation}) (\ref{eq:TotalDistanceFromTrappingAway}))
imply that 
\begin{equation}
\sup_{\bar{u}}\int_{\{u=\bar{u}\}\cap\mathcal{W}_{u_{*}}}rT_{vv}\, dv+\sup_{\bar{v}}\int_{\{v=\bar{v}\}\cap\mathcal{W}_{u_{*}}}rT_{uu}\, du\le(h_{1}(\text{\textgreek{e}}))^{-1}\exp\big((h_{0}(\text{\textgreek{e}}))^{-4}\big).\label{eq:BoundrTAwayFromSevereTrapping}
\end{equation}

Let us fix a set of smooth functions $r_{/}^{*},(\text{\textgreek{W}}_{/}^{*})^{2}:[u_{*},u_{*}+v_{0\text{\textgreek{e}}})\rightarrow(0,+\infty)$
and $\bar{f}_{in/}^{*},\bar{f}_{out/}^{*}:[u_{*},u_{*}+v_{0\text{\textgreek{e}}})\times(0,+\infty)\rightarrow[0,+\infty)$
satisfying the following requirements:

\begin{enumerate}

\item $(r_{/}^{*},(\text{\textgreek{W}}_{/}^{*})^{2},\bar{f}_{in/}^{*},\bar{f}_{out/}^{*})$
is a smooth asymptotically AdS boundary-characteristic initial data
set for for the system (\ref{eq:RequationFinal})--(\ref{eq:OutgoingVlasovFinal})
on $\{u_{*}\}\times[u_{*},u_{*}+v_{0\text{\textgreek{e}}})$ satisfying
the reflecting gauge condition at $r^{*}=r_{0\text{\textgreek{e}}},+\infty$.

\item The function $r_{/}^{*}$ satisfies for any $v$ such that
$(u_{*},v)\in supp(T_{vv})$ 
\begin{equation}
r_{/}^{*}(v)=r|_{supp(T_{vv})\cap\{u=u_{*}\}}.\label{eq:SameRLateTime}
\end{equation}

\item The function $\bar{f}_{in/}^{*}$ satisfies for all $v\in[u_{*},u_{*}+v_{0\text{\textgreek{e}}})$:
\begin{equation}
\int_{0}^{+\infty}\big((\text{\textgreek{W}}_{/}^{*})^{2}(v)p^{v}\big)^{2}\bar{f}_{in/}^{*}(v;p^{v})\,(r_{/}^{*})^{2}(v)\frac{dp^{v}}{p^{v}}=T_{vv}(u_{*},v)+\tilde{T}_{vv}(v).\label{eq:IngoingEnergyMomentumDeformed}
\end{equation}

\item The function $\bar{f}_{out/}^{*}$ satisfies for all $(v,p^{v})\in[u_{*},u_{*}+v_{0\text{\textgreek{e}}})\times(0,+\infty)$:
\begin{equation}
\bar{f}_{out/}^{*}(v;p^{v})=\bar{f}_{out}(u_{*},v;p^{v}).\label{eq:SameOutgoingF}
\end{equation}

\item The initial data sets $(r,\text{\textgreek{W}}^{2},\bar{f}_{in},\bar{f}_{out})|_{u=u_{*}}$
and $(r_{/}^{*},(\text{\textgreek{W}}_{/}^{*})^{2},\bar{f}_{in/}^{*},\bar{f}_{out/}^{*})$
satisfy 
\begin{align}
\sup_{v\in[u_{*},u_{*}+v_{0\text{\textgreek{e}}})}\Bigg\{\Big|\log\big(\frac{\text{\textgreek{W}}^{2}}{1-\frac{1}{3}\Lambda r^{2}}\big)\Big|_{u=u_{*}}-\log\big(\frac{(\text{\textgreek{W}}_{/}^{*})^{2}}{1-\frac{1}{3}\Lambda(r_{/}^{*})^{2}}\big)\Big|+\Big|\log\Big(\frac{2\partial_{v}r}{1-\frac{2m}{r}}\Big)\Big|_{u=u_{*}}-\log & \Big(\frac{2\partial_{v}r_{/}^{*}}{1-\frac{2m_{/}^{*}}{r_{/}^{*}}}\Big)\Big|+\label{eq:GaugeDifferenceBoundCauchystability-2}\\
+\Big|\log\Big(\frac{1-\frac{2m}{r}}{1-\frac{1}{3}\Lambda r^{2}}\Big)\Big|_{u=u_{*}}-\log\Big(\frac{1-\frac{2m_{/}^{*}}{r_{/}^{*}}}{1-\frac{1}{3}\Lambda(r_{/}^{*})}\Big)\Big|+\sqrt{-\Lambda}\big|\tilde{m}|_{u=u_{*}}-\tilde{m}_{/}^{*}\big|\Bigg\}(v) & \le\exp\big(-\frac{1}{2}C_{\text{\textgreek{e}}}^{2}\frac{u_{*}}{v_{0}}\big)(h_{1}(\text{\textgreek{e}}))^{2}\nonumber 
\end{align}
and 
\begin{equation}
\sup_{v\in[v_{1},v_{2}]}\int_{v_{1}}^{v_{2}}\frac{\big|r^{2}T_{vv}(u_{*},\bar{v})-(r_{/}^{*})^{2}(T_{vv})_{/}^{*}(\bar{v})\big|}{|\text{\textgreek{r}}_{/}(v)-\text{\textgreek{r}}_{/}(\bar{v})|+\text{\textgreek{r}}_{/}(v_{1})}\, d\bar{v}\le\exp\big(-\frac{1}{2}C_{\text{\textgreek{e}}}^{2}\frac{u_{*}}{v_{0}}\big)(h_{1}(\text{\textgreek{e}}))^{2}.\label{eq:DifferenceBoundCauchyStability-2}
\end{equation}

\end{enumerate}

\noindent \emph{Remark.} As a consequence of (\ref{eq:EbdsNotInSupport}),
by suitably deforming $r_{/}^{*}$ near $v=u_{*}+v_{0\text{\textgreek{e}}}$,
we can always arrange that (\ref{eq:GaugeInfinityInitialData}) and
(\ref{eq:SameRLateTime}) are satisfied simultaneously. Furthermore,
since $\tilde{T}_{vv}$ is compactly supported in $(u_{*},u_{*}+v_{0\text{\textgreek{e}}})$,
we can always choose $\bar{f}_{in/}^{*}=f_{in}|_{\{u=u_{*}\}}$ in
a neighborhood of $v=u_{*},u_{*}+v_{0\text{\textgreek{e}}}$, so that
(\ref{eq:LeftBoundaryConditionInitialData}) and (\ref{eq:RightBoundaryConditionInitialData})
are satisfied. Finally, $(r_{/}^{*},(\text{\textgreek{W}}_{/}^{*})^{2},\bar{f}_{in/}^{*},\bar{f}_{out/}^{*})$
can be chosen so that (\ref{eq:GaugeDifferenceBoundCauchystability-2})
and (\ref{eq:DifferenceBoundCauchyStability-2}) are satisfied because
of (\ref{eq:SmallnessPerturbation}) and the relations (\ref{eq:RelationHawkingMass}),
(\ref{eq:DerivativeInVDirectionKappaBar}) and (\ref{eq:DerivativeTildeVMass}).

\medskip{}

Let us now consider the two sets of initial data $(r,\text{\textgreek{W}}^{2},\bar{f}_{in},\bar{f}_{out})|_{u=u_{*}}$
and $(r_{/}^{*},(\text{\textgreek{W}}_{/}^{*})^{2},\bar{f}_{in/}^{*},\bar{f}_{out/}^{*})$
on $\{u=u_{*}\}\cap\{u_{*}\le v<u_{*}+v_{0\text{\textgreek{e}}}\}$.
The maximal \underline{past} development of $(r,\text{\textgreek{W}}^{2},\bar{f}_{in},\bar{f}_{out})|_{u=u_{*}}$
(see the remark below Theorem \ref{thm:maximalExtension}) coincides
with $(\mathcal{W}_{u_{*}};r,\text{\textgreek{W}}^{2},\bar{f}_{in},\bar{f}_{out})$
when restricted on $\{u\ge0\}$ and, in view of (\ref{eq:BoundDuRDvRAwayFromSevereTrapping})
and (\ref{eq:BoundrTAwayFromSevereTrapping}), satisfies 
\begin{align}
\sup_{\mathcal{W}_{u_{*}}}\Bigg\{\Big|\log\big(\frac{\text{\textgreek{W}}^{2}}{1-\frac{1}{3}\Lambda r^{2}}\big)\Big|+\Big|\log\Big(\frac{2\partial_{v}r}{1-\frac{2m}{r}}\Big)\Big|+\Big|\log\Big(\frac{1-\frac{2m}{r}}{1-\frac{1}{3}\Lambda r^{2}}\Big)\Big|+\sqrt{-\Lambda}|\tilde{m}|\Bigg\}+\label{eq:UpperBoundNonTrappingForCauchyStability-2}\\
+\sup_{\bar{u}}\int_{\{u=\bar{u}\}\cap\mathcal{W}_{u_{*}}}rT_{vv}\, dv+\sup_{\bar{v}}\int_{\{v=\bar{v}\}\cap\mathcal{W}_{u_{*}}}rT_{uu}\, du & \le4(h_{1}(\text{\textgreek{e}}))^{-3}\exp\big(-(h_{0}(\text{\textgreek{e}}))^{-4}\big).\nonumber 
\end{align}
Therefore, in view of (\ref{eq:UpperBoundNonTrappingForCauchyStability-2}),
(\ref{eq:GaugeDifferenceBoundCauchystability-2}) and (\ref{eq:DifferenceBoundCauchyStability-2}),
Theorem \ref{prop:CauchyStability} applied for the past development
of $(r,\text{\textgreek{W}}^{2},\bar{f}_{in},\bar{f}_{out})|_{u=u_{*}}$
on $\mathcal{W}_{u_{*}}$ (see the remark below Theorem \ref{prop:CauchyStability})
implies that the maximal past development $(\mathcal{U}^{*};r^{*},(\text{\textgreek{W}}^{*})^{2},\bar{f}_{in}^{*},\bar{f}_{out}^{*})$
of $(r_{/}^{*},(\text{\textgreek{W}}_{/}^{*})^{2},\bar{f}_{in/}^{*},\bar{f}_{out/}^{*})$
satsfies 
\begin{equation}
\mathcal{W}_{u_{*}}\subset\mathcal{U}^{*}
\end{equation}
and 
\begin{align}
\sup_{\mathcal{W}_{u_{*}}}\Bigg\{\Big|\log\big(\frac{\text{\textgreek{W}}^{2}}{1-\frac{1}{3}\Lambda r^{2}}\big)-\log\big(\frac{(\text{\textgreek{W}}^{*})^{2}}{1-\frac{1}{3}\Lambda(r^{*})^{2}}\big)\Big|+\Big|\log\Big(\frac{2\partial_{v}r}{1-\frac{2m}{r}}\Big)-\log\Big(\frac{2\partial_{v}r^{*}}{1-\frac{2m^{*}}{r^{*}}}\Big)\Big|+\label{eq:UpperBoundNonTrappingForCauchyStability-1-1}\\
+\Big|\log\Big(\frac{1-\frac{2m}{r}}{1-\frac{1}{3}\Lambda r^{2}}\Big)-\log\Big(\frac{1-\frac{2m^{*}}{r^{*}}}{1-\frac{1}{3}\Lambda(r^{*})^{2}}\Big)\Big|+\sqrt{-\Lambda}|\tilde{m}-\tilde{m}^{*}|\Bigg\}+\nonumber \\
+\sup_{\bar{u}}\int_{\{u=\bar{u}\}\cap\mathcal{W}_{u_{*}}}\big|rT_{vv}-r^{*}(T_{vv})^{*}\big|\, dv+\sup_{\bar{v}}\int_{\{v=\bar{v}\}\cap\mathcal{W}_{u_{*}}}\big|rT_{uu}-r^{*}(T_{uu})^{*}\big|\, du & \le(h_{1}(\text{\textgreek{e}}))^{3}.\nonumber 
\end{align}
Thus, the proof of the lemma concludes by setting 
\begin{equation}
(r_{/}^{\prime},(\text{\textgreek{W}}_{/}^{\prime})^{2},\bar{f}_{in/}^{\prime},\bar{f}_{out/}^{\prime})\doteq(r^{*},(\text{\textgreek{W}}^{*})^{2},\bar{f}_{in}^{*},\bar{f}_{out}^{*})|_{u=0}.
\end{equation}

\end{proof}
\bibliographystyle{plain}
\bibliography{DatabaseExample}

\end{document}